\PassOptionsToPackage{table}{xcolor}

\documentclass[review=false]{jfp-epi} 

\setcopyright{none}

\usepackage[english]{babel}
\usepackage{listings}
\usepackage{url}
\usepackage{csquotes}
\usepackage{thmtools}
\usepackage{marginnote}
\usepackage{stmaryrd}
\usepackage{placeins}
\usepackage{tabularx}
\usepackage{mathpartir}
\usepackage{ifthen}
\usepackage{enumerate}
\usepackage[inline]{enumitem}
\usepackage{fancyvrb}

\definecolor{GrayBgColor}{rgb}{0.9, 0.9, 0.9}
\definecolor{GrayFgColor}{rgb}{0.4, 0.4, 0.4}
\definecolor{StringColor}{rgb}{0.0, 0.38039, 0.141176}
\usepackage{pgfplots, pgfplotstable}
\usetikzlibrary{patterns}

\AtEndPreamble{%
  \usepackage{cleveref}
  \newtheorem{property}[theorem]{Property}

}

\renewcommand{\theequation}{\arabic{equation}}

\newcommand{\eqwalizer}{eqWAlizer}
\newcommand{\gradualizer}{Gradualizer}
\newcommand{\ety}{Etylizer}
\newcommand{\dialyzer}{Dialyzer}
\newcommand\MinErl{\ensuremath{\lambda_{\textsf{erl}}}}

\newcommand*{\medcup}{\mathop{\scalebox{0.8}{\ensuremath{\bigcup}}}}%
\newcommand*{\medwedge}{\mathop{\scalebox{0.8}{\ensuremath{\bigwedge}}}}%
\newcommand*{\medvee}{\mathop{\scalebox{0.8}{\ensuremath{\bigvee}}}}%
\newcommand{\doublewedge}{%
  \mathbin{
    \mathchoice{\wedge\mkern-15mu\wedge}
               {\wedge\mkern-15mu\wedge}
               {\wedge\mkern-9mu\wedge}
               {\wedge\mkern-8mu\wedge}
    }
}
\newcommand{\bigdoublewedge}{%
  \mathop{
    \mathchoice{\bigwedge\mkern-15mu\bigwedge}
               {\bigwedge\mkern-15mu\bigwedge}
               {\bigwedge\mkern-12.5mu\bigwedge}
               {\bigwedge\mkern-11mu\bigwedge}
    }
}

\newcommand{\figureboxSingle}[1]
        {\fbox{\begin{minipage}{\columnwidth} #1 \end{minipage}}}

\newcommand{\boxfigSingle}[3]
           {\begin{figure}\figureboxSingle{#3}\caption{\label{#1}#2}\end{figure}}

\newcommand\Multi[2][!*NEVER USED ARGUMENT*!]{%
  \ifthenelse{\equal{#1}{!*NEVER USED ARGUMENT*!}}{#2_{i \in I}}{#2_{#1}}%
}
\newcommand{\kw}[1]{\ensuremath{\mathtt{#1}}}
\newcommand\Name[1]{\ensuremath{\mathtt{#1}}}

\newcommand\Rule[1]{\TirName{#1}}
\def\RuleForm#1{{\setlength{\fboxrule}{0.5pt}\fbox{\normalsize \ensuremath{#1}}}}
\newcommand\RuleSection[2]{%
  \begin{tabularx}{\textwidth}{Xr}%
    #1\hfill & {\small \it #2}%
  \end{tabularx}%
}

\makeatother
\lstdefinelanguage{erlang}{%
   morekeywords={after,and,case,catch,div,end,exit,export,if,import,module,of,or,%
     receive,rem,-spec,throw,when,fun,opaque%
   },
   otherkeywords={-spec,-type},%
   sensitive,%
   basicstyle=\lst@ifdisplaystyle\scriptsize\linespread{0.9}\footnotesize\fi\ttfamily,
   keywordstyle=\color{black}\bf\ttfamily,
   numberstyle=\tiny\color{GrayFgColor}\sffamily\raisebox{0.6pt},
   morecomment=[l]{\%},%
   morestring=[b]",%
  }[keywords,comments,strings]%
\makeatletter

\lstdefinestyle{InlineDefault}{
  basicstyle=\small\ttfamily, 
}

\makeatletter
\let\lstinline@orig\lstinline
\DeclareRobustCommand{\lstinline}{%
  \@ifnextchar[{\lstinline@with}{\lstinline@without}%
}
\newcommand{\lstinline@with}[1]{%
  \lstinline@orig[style=InlineDefault,#1]%
}
\newcommand{\lstinline@without}{%
  \lstinline@orig[style=InlineDefault]%
}
\makeatother

\newcommand\IsSubty{\leq}

\newcommand\ValMatches{\mathbin{\#}}
\newcommand\TyEquiv{\simeq}
\newcommand\MoreGeneral\sqsubseteq
\newcommand\Inter{\wedge}
\newcommand\Union{\vee}
\newcommand\cupdisjoint{\mathbin{\stackrel{.}{\cup}}}
\newcommand\medcupdisjoint{\mathop{\stackrel{.}{\medcup}}}

\newcommand\InterBig{\medwedge}
\newcommand\UnionBig{\medvee}
\newcommand\WithoutTy\setminus

\newcommand\Pairty[2]{#1 \times #2}

\newcommand\ErlTop{\top}
\newcommand\ErlBot{\bot}
\newcommand\Neg{\neg}
\newcommand\Int{\Name{int}}
\newcommand\Float{\Name{float}}

\newcommand\AnyPair{\Name{pair}}
\newcommand\AnyFun{\Name{fun}}

\newcommand\Polyty{\sigma}
\newcommand\Monoty{t}
\newcommand\MonotyAlt{u}
\newcommand\Tyvar{\alpha}
\newcommand\TyvarAux{\beta}
\newcommand\TyvarSet{A}
\newcommand\TyScm[1]{\forall #1\,.\,}
\newcommand\AllConst{\mathbb{K}}
\newcommand\AllConstty{\mathbb{S}}
\newcommand\Constty{\iota}
\newcommand\Basety{b}

\newcommand\AllVar{\mathbb{X}}
\newcommand\AlltyVar{\mathbb{V}}
\newcommand\GuardTy{\mathit{gt}}
\newcommand\Gt{\GuardTy}

\newcommand\LetrecSym{\kw{letrec}}
\newcommand\Letrec[2]{\LetrecSym~#1~\kw{in}~#2}
\newcommand\Case[1]{\kw{case}~#1~\kw{of}~}

\newcommand\DefSym{\mathit{def}}
\newcommand\Prog{\mathit{prog}}
\newcommand\Wildcard{\_}
\newcommand\EqSynSym{\mathbin{\texttt{=}}}
\newcommand\Def[3]{#1 \mathop{\texttt{:}} #2 \EqSynSym #3}
\newcommand\DefNt[2]{#1 \EqSynSym #2}
\newcommand\Abs[1]{\lambda #1 \texttt{.}}
\newcommand\App[1]{#1\,}
\newcommand\Pair[2]{\texttt{(}#1\texttt{,} #2\texttt{)}}
\newcommand\When{\mathrel{\kw{when}}}
\newcommand\Cls{\mathit{cls}}
\newcommand\Pg{\mathit{pg}}

\newcommand\CaptSym{\scalebox{0.8}{\textsf{\normalshape \$}}}
\newcommand\Capt[1]{\CaptSym{}#1}

\newcommand\WithGuard[2]{#1 \When #2}
\newcommand\PatCls[2]{#1 \to #2}
\newcommand\Is[2]{\Name{is}_{#1}(#2)}
\newcommand\Const{\kappa}
\newcommand\GuardOracle{\kw{oracle}}
\newcommand\GuardAndSym{\mathop{\kw{and}}}
\newcommand\GuardAnd[2]{#1 \GuardAndSym #2}
\newcommand\GuardTrue{\kw{true}}

\newcommand\EmptyEnv{\emptyset}
\newcommand\Dom[1]{\mathsf{dom}(#1)}
\newcommand\Forall[1]{(\forall #1)}
\newcommand\FreeSym[1]{\mathsf{fv_{#1}}}
\newcommand\Free[1]{\mathsf{fv}(#1)}
\newcommand\FreeP[1]{\mathsf{fv_p}(#1)}
\newcommand\FreeE[1]{\mathsf{fv_e}(#1)}
\newcommand\FreeG[1]{\mathsf{fv_g}(#1)}
\newcommand\FreePg[1]{\mathsf{fv_{pg}}(#1)}
\newcommand\PatBound[1]{\mathsf{pbv}(#1)}
\newcommand\FreeTyVarsName{\mathsf{ftv}}
\newcommand\FreeTyVars[1]{\FreeTyVarsName(#1)}
\newcommand\TyVars[1]{\FreeTyVars{#1}}
\newcommand\TyVarsFunc[3]{\FreeTyVarsName_{#1}(#2,#3)}
\newcommand\EnvsSym{\mathsf{envs}}
\newcommand\Envs[1]{\EnvsSym(#1)}

\newcommand\Angle[1]{\langle#1\rangle}
\newcommand\MetaPair[2]{\Angle{#1; #2}}
\newcommand\Tysubst{\theta}
\newcommand\TysubstStar{\Tysubst^\star}
\newcommand\TysubstTild{\tilde{\Tysubst}}
\newcommand\TysubstTop{\Tysubst^\top}

\newcommand\Valsubst{\eta}
\newcommand\GenSym{\mathsf{gen}}
\newcommand\Gen[1]{\GenSym(#1)}
\newcommand\EquivSym{\mathsf{equiv}}
\newcommand\Equiv[1]{\EquivSym(#1)}

\newcommand\ReduceSym{\leadsto}
\newcommand\Reduce[3]{#1 \vdash #2 \ReduceSym #3}
\newcommand\ReduceBase[3]{#1 \vdash #2 \ReduceSym_{\mathsf{b}} #3}
\newcommand\PatSubst[2]{\GenPatSubst{#1}{\FunEnv}{#2}}
\newcommand\GenPatSubst[3]{#1 \mathbin{/_{#2}} #3}
\newcommand\FunEnv{\Delta}
\newcommand\PatSubstFail{\Omega}
\newcommand\EvalCtx{\mathcal E}
\newcommand\Hole{\scalebox{0.8}{\ensuremath{\square}}}

\newcommand\ValEq{\sim}
\newcommand\SubstUnion\oplus
\newcommand\Plug[2]{#1[#2]}

\newcommand\EvalGuard[1]{\GenEvalGuard{\FunEnv}{#1}}
\newcommand\GenEvalGuard[2]{\mathcal{G}_{#1}(#2)}
\newcommand\True{\mathit{true}}
\newcommand\False{\mathit{false}}

\newcommand{\Accord}{\bowtie}

\newcommand\Venv{\Gamma}
\newcommand\VenvStar{\Venv^\star}
\newcommand\Senv{\Sigma}
\newcommand\Bothenv[2]{#1 \cdot #2}

\newcommand\Ok{\mathsf{ok}}
\newcommand\ExpSchemaTy[4]{\Bothenv{#1}{#2} \Turns #3 : #4}
\newcommand\ExpTy[3]{\ExpSchemaTy{\Senv}{#1}{#2}{#3}}

\newcommand\DefOk[2]{#1 \Turns #2 ~\Ok}
\newcommand\ProgTy[2]{\Turns #1 : #2}

\newcommand\Turns[1][!*NEVER USED ARGUMENT*!]{%
  \ifthenelse{\equal{#1}{!*NEVER USED ARGUMENT*!}}{\vdash}{\vdash_{\mathsf{#1}}}%
}
\newcommand\DirPatTy{\rho}
\newcommand\DirPatTyUp{\scalebox{0.6}{\ensuremath{\uparrow}}}
\newcommand\DirPatTyDown{\scalebox{0.6}{\ensuremath{\downarrow}}}
\newcommand\PgTy[2]{\GenPatTy{#1}{}{#2}}
\newcommand\PgUpperTy[1]{\PgTy{#1}{\DirPatTyUp}}
\newcommand\PgLowerTy[1]{\PgTy{#1}{\DirPatTyDown}}
\newcommand\PatTy[2]{\GenPatTy{#1}{#2}{\DirPatTy}}

\newcommand\GenPatTy[3]{\llbracket #1 \rrbracket_{{#2}}^{#3}}
\newcommand\PatUpperTy[2]{\GenPatTy{#1}{#2}{\DirPatTyUp}}
\newcommand\PatLowerTy[2]{\GenPatTy{#1}{#2}{\DirPatTyDown}}
\newcommand\SafeSym{\operatorname{\mathsf{safe}}}
\newcommand\SafeDown[2]{\SafeSym^{\DirPatTyDown}(#1, #2)}
\newcommand\SafeUp[1]{\SafeSym^{\DirPatTyUp}(#1)}

\newcommand\TyOfConstSym{\mathsf{ty}}
\newcommand\TyOfConst[1]{\TyOfConstSym(#1)}
\newcommand\TyOfGt[1]{\TyOfConstSym(#1)}
\newcommand\TyOf[1]{\TyOfConstSym(#1)}
\newcommand\PatEnv[2]{#1\negmedspace\sslash\negmedspace#2}
\newcommand\ProjL[1]{\pi_1(#1)}
\newcommand\ProjR[1]{\pi_2(#1)}
\newcommand\ProjI[1]{\pi_i(#1)}

\newcommand\InterEnv{\doublewedge}
\newcommand\BigInterEnv{\bigdoublewedge}

\newcommand\Inst[1]{\InstSym(#1)}
\newcommand\InstSym{\mathsf{inst}}
\newcommand\Env[1]{\mathsf{env}(#1)}

\newcommand\showfreshtyvars{}
\newcommand\IfShowFresh[1]{\ifnum\pdfstrcmp{\showfreshtyvars}{-fresh}=0 #1 \else \fi}
\newcommand\IfShowFreshElse[2]{\ifnum\pdfstrcmp{\showfreshtyvars}{-fresh}=0 #1 \else #2 \fi}
\newcommand\Constr{C}
\newcommand\ConstrAlt{D}
\newcommand\ProgConstr{P}
\newcommand\SiConstr{c}
\newcommand\SiConstrAux{d}
\newcommand\SiConstrAlt{d}
\newcommand\ConstrOr{\mathbin{+}}
\newcommand\ConstrAnd{\mathbin{\&}}
\newcommand\ConstrTrue{\mathtt{True}}
\newcommand\MedConstrAnd{\mathop{\scalebox{1.2}{\&}}}
\newcommand\IsSubtyConstr{\mathbin{\dot{\IsSubty}}}
\newcommand\SubtyConstr[2]{#1 \IsSubtyConstr #2}
\newcommand\DefConstr[2]{\kw{def}~#1~\kw{in}~#2}
\newcommand\LetConstr[3]{\kw{letrec}\,\MetaPair{#1}{#2}~\kw{in}~#3}
\newcommand\CaseConstr[2]{\kw{case}~#1~\kw{of}~#2}
\newcommand\InConstr[3]{#1~\kw{in}~#2~\kw{unless}~#3}
\newcommand\ConstrGen[3]{#1 : #2 \Rightarrow #3}
\newcommand\ConstrGenV[4]{#1 : #2 \mathbin{\Rightarrow_{\IfShowFresh{#4}}} #3}
\newcommand\DefEnvGen[2]{#1 \Rightarrow #2}
\newcommand\DefConstrGen[3]{#1 \Rightarrow \MetaPair{#2}{#3}}
\newcommand\DefConstrGenV[4]{#1 \mathbin{\Rightarrow_{\IfShowFresh{#4}}} \MetaPair{#2}{#3}}
\newcommand\ConstrRew[3]{\ConstrRewV{#1}{#2}{#3}{}}
\newcommand\ConstrRewV[4]{\ConstrRewFull{\Senv}{#1}{#2}{#3}{#4}}
\newcommand\ConstrRewFull[5]{\Bothenv{#1}{#2} \Turns #3 \mathbin{\leadsto_{\IfShowFresh{#5}}} #4}
\newcommand\ConstrRewProgV[3]{\Turns #1 \mathbin{\leadsto_{\IfShowFresh{#3}}} #2}
\newcommand\ConstrRewProg[2]{\ConstrRewProgV{#1}{#2}{}}
\newcommand\PatTyEnvConstr[4]{#1 \sslash #2 \Rightarrow \MetaPair{#3}{#4}}
\newcommand\PatTyEnvConstrV[5]{#1 \sslash #2 \mathbin{\Rightarrow_{\IfShowFresh{#5}}} \MetaPair{#3}{#4}}

\newcommand\TallyBase{\mathsf{tally}}
\newcommand\TallySym{\TallyBase^{\star}}
\newcommand\Tally[2][!*NEVER USED ARGUMENT*!]{%
  \ifthenelse{\equal{#1}{!*NEVER USED ARGUMENT*!}}{\TallySym(#2)}{\TallySym_{#1}(#2)}%
}

\newcommand\SubstSolves[2]{#1 \Vdash #2}

\newcommand\EmptyVarSet{\emptyset}
\newcommand\VarSet[1][A]{#1}

\newcommand\Model{\ensuremath{\mathcal M}}
\newcommand\Filtermap{\mathit{filtermap}}
\newcommand\ListTy[1]{\mathit{list}\,#1}
\newcommand\BoolTy{\mathit{bool}}

\renewcommand\Bot\bot

\newcommand\PowersetFin{\mathcal{P}_{\mathsf{fin}}}
\newcommand\DefinedAs{\triangleq}
\newcommand\RestrictDomain[2]{#1{\restriction_{#2}}}

\makeatletter
\newlength{\CaseIndent}
\newcommand{\listmargincase}{
  \ifnum\@listdepth=1
    \setlength{\leftmargin}{\CaseIndent}%
  \else
    \ifnum\@listdepth=2
      \setlength{\leftmargin}{2\CaseIndent}%
    \else
      \ifnum\@listdepth=3
        \setlength{\leftmargin}{3\CaseIndent}%
      \else
        \ifnum\@listdepth=4
        \setlength{\leftmargin}{4\CaseIndent}%
        \fi
      \fi
    \fi
  \fi}

\newsavebox{\CDName}
\newcommand\CaseFormat[1]{\textit{#1}}
\newenvironment{CaseDistinction}[2][no]{%
  \begin{CaseDistinctionExplicit}{#2}{\ifthenelse{\equal{#1}{repeat}}{ #2}{}}%
}{%
  \end{CaseDistinctionExplicit}%
}
\newenvironment{CaseDistinctionExplicit}[2]
  {\par\noindent\CaseFormat{Case distinction} #1.%
   \sbox{\CDName}{\CaseFormat{End case distinction}#2.}%
   \begin{itemize}[topsep=0pt]{}{\listmargincase\setlength\listparindent\parindent}%
  }
  {\end{itemize}\usebox{\CDName}%
  }
\newcommand{\CdCase}[1]{\item\CaseFormat{Case} #1:}

\newenvironment{igather}{\collect@body\@igather}{\global\@ignoretrue}

\newcommand\@igather[1]{{%
\let\old@label@in@display\label@in@display%
\renewcommand\label@in@display[1]{\stepcounter{equation}\Labelnote{##1}\tag{\theequation}\old@label@in@display{##1}}%
\begin{gather*}
#1
\end{gather*}
}}
\makeatother

\newenvironment{EnumThm}{%
  
  \renewcommand\theenumi{\roman{enumi}}
  \begin{enumerate}[topsep=0pt]
}{\end{enumerate}}

\newcommand\LabelQualifier{}
\newcommand\QualifyLabel[1]{#1::\LabelQualifier}
\newcommand\QLabel[1]{\label{\QualifyLabel{#1}}}
\newcommand\QRef[1]{\eqref{\QualifyLabel{#1}}}

\makeatletter
\newcommand{\TextLabel}[1]{\refstepcounter{equation}\label{\QualifyLabel{#1}}\Labelnote{\QualifyLabel{#1}}\textup{\tagform@{\theequation}}}
\makeatother

\renewenvironment{proof}[1][**empty**]{%
        \setcounter{equation}{0}\par\addvspace{6pt}\noindent%
        \ifthenelse{\equal{#1}{**empty**}}{\textbf{Proof.}}{\textbf{#1.}}%
        \hskip5.5pt%
        }{\par\addvspace{6pt}}

\newcommand\ReasonAbove[2]{\overset{#2}{#1}}

\makeatletter
\def\smallunderbrace#1{\mathop{\vtop{\m@th\ialign{##\crcr
   $\hfil\displaystyle{#1}\hfil$\crcr
   \noalign{\kern3\p@\nointerlineskip}%
   \tiny\upbracefill\crcr\noalign{\kern3\p@}}}}\limits}
\makeatother
\newcommand\BraceBelow[2]{\smallunderbrace{#1}_{#2}}

\newcommand\IH{I.H.\@}
\newcommand\Wlog{W.l.o.g.\@}
\newcommand\WlogLower{w.l.o.g.\@}

\newcommand\dstar{{\star\star}}
\newcommand\estar{e^{\star}}

\newcommand\AndText{~\textrm{and}~}

\newcommand\Labelnote[1]{\marginnote{\footnotesize \textcolor{StringColor}{\StrBefore{#1}{::}}}}
\renewcommand\Labelnote[1]{} 

\begin{document}

\title{Set-Theoretic Types for Erlang: Theory, Implementation, and Evaluation}

\author{Albert Schimpf}
\affiliation{%
  \institution{RPTU University Kaiserslautern-Landau}
  \country{Germany}
  \authoremail{albert.schimpf@rptu.de}
}

\author{Stefan Wehr}
\affiliation{%
  \institution{Offenburg University of Applied Sciences}
  \country{Germany}
  \authoremail{stefan.wehr@hs-offenburg.de}
}

\author{Annette Bieniusa}
\affiliation{%
  \institution{RPTU University Kaiserslautern-Landau}
  \country{Germany}
  \authoremail{bieniusa@rptu.de}
}

\begin{abstract}
  Erlang's dynamic typing discipline can lead to runtime errors that
  persist even after process restarts. Some of these runtime errors
  could be prevented through static type checking.
  While Erlang provides a type specification
  language, the compiler does not enforce these types, thereby limiting their
  role to documentation purposes. Type checking Erlang code is challenging
  due to language features such as dynamic type tests, subtyping,
  equi-recursive types, polymorphism, intersection types in
  signatures, and untagged union types.

  This work presents a set-theoretic type system for Erlang which
  captures the core features of Erlang's existing type language. The
  formal type system guarantees type soundness, and ensures that type checking
  remains decidable. Additionally, an implementation of a type checker
  is provided, supporting all features of the Erlang type language and
  most term-level language constructs. A case study with modules from
  Erlang's standard library, an external project, and the type checker itself demonstrates
  its effectiveness in verifying real-world Erlang code.
\end{abstract}

\maketitle

\section{Introduction}
\label{sec:intro}

Erlang is a functional language designed for highly concurrent, distributed systems that provide soft real-time guarantees and are resilient to failures. Since its introduction in the 1980s~\citep{conf/hopl/Armstrong07}, it has been widely adopted across industries, including telecommunications, messaging, finance, and gaming.

In production, handling errors safely is critical. Erlang's most frequent and promoted approach is to let faulty processes crash and restart. While effective for many failures, some errors --- such as type mismatches --- persist regardless of restarts. Static type systems prevent such issues before execution~\citep{TAPL}, but Erlang, being dynamically typed, detects them only at runtime.
While Erlang provides a type specification language for function signatures~\citep{erlang-typespec, 10.1145/1292520.1292523}, the compiler does not enforce these specifications, limiting them to (sometimes incorrect) documentation rather than verification.

Static type checking in Erlang is challenging due to features such as dynamic type tests, subtyping, equi-recursive types~\citep{conf/pldi/CraryHP99,conf/lics/AbadiF96}, polymorphism, intersection types in signatures, and untagged union types~\citep{Pierce1992}.
Several efforts explored the introduction of static type checking for Erlang~\citep{dialyzer,gradualizer,conf/icfp/MarlowW97,conf/erlang/ValliappanH18,conf/erlang/RajendrakumarB21,conf/coordination/MostrousV11,whatsapp-whatsappeqwalizer-nodate}.
Traditional subtyping, defined by syntactic deduction rules, often struggles to accommodate intersection and union types, as distributivity laws do not always hold and least upper/greatest lower bounds are not generally expressible~\citep{Pierce1992,conf/tacs/Damm94,journals/pacmpl/MuehlboeckT18}.

Semantic subtyping~\citep{journals/jacm/FrischCB08,conf/icfp/CastagnaX11,conf/popl/Castagna0XILP14,conf/popl/Castagna0XA15} provides an alternative approach by interpreting types as sets, thereby reducing subtyping to set inclusion. This approach naturally accommodates intersection, union, and negation types but requires a carefully designed type model and sophisticated subtyping algorithms.

This work presents a set-theoretic type system for Erlang, formalized as both a declarative and an algorithmic system. Unlike state-of-the-art Erlang type checkers~\citep{whatsapp-whatsappeqwalizer-nodate,svenningsson-josefsgradualizer-2024} and static analyzers~\citep{dialyzer}, our system is founded on a sound and expressive theory, supporting a wide range of language features, with non-regular polymorphic types being the only notable exception.

Our work builds upon and supersedes a prior publication~\citep{conf/ifl/SchimpfWB22} both theoretically and practically.
On the theoretical side, we formally establish type soundness and prove the decidability of type checking.
On the practical side, we implemented a prototype type checker, which handles a wide range of Erlang language features and demonstrates its usefulness with a real-world case study.

\paragraph{Contributions}
We present the following contributions:
\begin{itemize}
\item We formalize a type system for Erlang based on set-theoretic types (\Cref{sec:formal-type-system}).
\item We establish the soundness of the type system (\Cref{sec:soundness} and \Cref{sec:appendix-type-safety-declarative-system}) and prove the decidability of type checking within our formalism (\Cref{sec:algor-typing-rules} and \Cref{sec:appendix-algorithmic}).
\item We develop and implement the \ety{} type checker in Erlang. \ety{} supports all features of Erlang's
  type language and a comprehensive subset of the term-level language constructs to enable its application in real-world projects (\Cref{sec:implementation}).
\item We empirically evaluate \ety{} by applying it to modules from the Erlang standard library (\Cref{sec:evaluation-erlang-standard-library}), a third-party FOSS project and \ety{}'s own codebase (\Cref{sec:evaluation-project}).
\item With \ety{} we present the first fully independent implementation of a set-theoretic type system, distinct from and not derived from the CDuce implementation~\citep{cduce}.
\end{itemize}

\paragraph{On reproducibility}
Prior work on set-theoretic types has been conducted by a single
research group, with all existing implementations supporting
corecursive types built upon the CDuce framework~\citep{cduce}.  Our work is, to our best knowledge,
the first in decades to theoretically and practically validate this prior
research.  By independently developing a complete type checker that
integrates all aspects required for a real-world codebase, we also provide
a new foundation for further research in the field — making the whole
greater than the sum of its parts.

\paragraph{Roadmap} \Cref{sec:examples} introduces our type system using examples. \Cref{sec:formal-type-system} presents
the formal type system and a typing algorithm, together with the metatheoretical results. \Cref{sec:implementation} discusses
the implementation and a real-world case study. \Cref{sec:related-work} gives insights into related work
and \Cref{sec:conclusion} presents a conclusion. All proofs can be found in \Cref{sec:appendix-type-safety-declarative-system} and
\Cref{sec:appendix-algorithmic}, and \Cref{sec:empirical-patterns,sec:empirical-dynamic-functions} present the results of an empirical evaluation
on the use of type tests and value-dependent types in real-world Erlang code.

\section{A Primer on Erlang and Set-Theoretic Types}
\label{sec:examples}

To introduce the principles of set-theoretic types, we start by
discussing selected aspects of the Erlang (type) language that are
difficult to address in rule-based subtyping systems, but are a good fit for
set-theoretic types and semantic subtyping.  
The discussion builds upon and extends examples presented in a previous article \citep{conf/ifl/SchimpfWB22}, providing a foundation for understanding the benefits and applications of set-theoretic types in the context of Erlang.

\subsection{Singleton, Union, and Intersection types}
\label{sec:singl-union-inters}

As a first example, consider the function \lstinline|last_day_of_the_month/2| from the Erlang standard library that determines the number of days in a month.\footnote{\url{https://www.erlang.org/doc/man/calendar.html\#type-ldom}}

\begin{lstlisting}[language=Erlang,numbers=left]
last_day_of_the_month(_, 4) -> 30;
last_day_of_the_month(_, 6) -> 30;
last_day_of_the_month(_, 9) -> 30;
last_day_of_the_month(_,11) -> 30;
last_day_of_the_month(Y, 2) ->
   case is_leap_year(Y) of
      true -> 29;
      _    -> 28
   end;
last_day_of_the_month(_, M) -> 31.
\end{lstlisting}
   


In Erlang, a function consists of at least one function clause.
Each clause consists of the function name and the
arguments enclosed in parentheses, followed by the
arrow symbol~\lstinline|->| and the function
body. Clauses are separated by semicolons, and the last clause is terminated
with a period. Each function argument is a pattern; in the example above, the
patterns are constants (\lstinline{4}, \lstinline{6}, \ldots),
wildcards (\lstinline{_}), and variables (\lstinline{Y}). Constants match
only their exact value, whereas wildcards and variables match everything.

Type specifications are optional in Erlang.  They can be added to any
function with the \lstinline|-spec| annotation, using
the Erlang type specification language~\citep{erlang-typespec}.
The Erlang standard library provides the following type specification for the function:\footnote{The standard library actually introduces some type synonyms that we directly inline here.}

\begin{lstlisting}[language=Erlang,numbers=none]
-spec last_day_of_the_month(non_neg_integer(), 1..12) -> 28 | 29 | 30 | 31
\end{lstlisting}
Here, the type \lstinline{1..12} denotes the range of integers between 1 and 12.

Though the type specification is correct,
we can provide a more precise type specification for the function.
To refine a type for \lstinline|last_day_of_month|, we notice that 
the first parameter (the year) can be any integer greater or equal to zero, while
the second parameter (the month) is an integer between 1 and 12 such that
values from the set
$\{4,6,9,11\}$ are mapped to $30$, value 2 is mapped to either 29 or 28, and values in
$\{1, \dots, 12\} \setminus \{4,6,9,11,2\} = \{1,3,5,7,8,10,12\}$ are
mapped to 31.

The most precise type for the above function
--- without allowing arithmetic operations on type level --- is shown in \Cref{fig:spec-last-day}.

\begin{figure}[t]
\begin{lstlisting}[language=Erlang,numbers=none]
-spec last_day_of_the_month
    (non_neg_integer(), 2) -> 28 | 29
  ; (non_neg_integer(), 4 | 6 | 9 | 11) -> 30
  ; (non_neg_integer(), 1 | 3 | 5 | 7 | 8 | 10 | 12) -> 31.
\end{lstlisting}
\caption{Precise type specification for \lstinline|last_day_of_the_month|.}
\label{fig:spec-last-day}
\end{figure}

This type specification assigns three function types to \lstinline{last_day_of_month},
meaning the function is actually a function of three different types.
The symbol \lstinline{;} denotes \emph{intersection}
in Erlang's type language. Other features of the type language used in the example are
\emph{singleton types} such as \lstinline{2},
the type \lstinline{non_neg_integer()} representing the set of numbers
$\{0, 1, 2, \ldots\}$, the \emph{union} $t_1\,\mathtt{|}\,t_2$ of two types,
and function types $\mathtt{(}t_1,\ldots,t_n\texttt{)\,->\,}t$.
Erlang allows intersection types only at the top-level of a type specification, but not
nested inside a type. Union binds tighter than the function arrow, which in
turns binds tighter than intersection.
All these type operators are right associative.

Our static type checker \ety{} handles both type specifications
and can successfully check the function with either of
the two specifications. Hence, programmers are able to trade
readability with preciseness, without compromising type safety.

\subsection{Type Tests, Occurrence Typing, and Exhaustiveness}
\label{sec:type-tests-exha}

A common idiom of Erlang programs is the use of dynamic type tests to determine execution paths based on the runtime type of a value.  
Similar to the function clause guards used in the function \lstinline|last_day_of_month| above, we can employ guards in \lstinline|case| clauses to execute different branches depending on the type of some expression.  
Guards are a restricted subset of valid Erlang expressions, and their evaluation must be side-effect-free. 
To accurately type check such code, the type system must be able to refine the type of the scrutinee in different branches via pattern matching and type tests~(guards).

\Cref{fig:inter-function} shows an example illustrating a function with a type specification that is an intersection of two distinct function clauses.

\begin{figure}[t]
\begin{lstlisting}[language=Erlang,numbers=none]
-spec inter(integer()) -> integer();
  ({atom(), atom() | integer()} | {integer(), atom()}) -> atom().
inter(X) ->
  case X of
    _ when is_integer(X) -> X + 1;
    {Y, Z} when is_atom(Y) -> Y;
    {_, Z} -> Z
  end.
\end{lstlisting}
\caption{The \lstinline|inter| function illustrating occurrence typing with guards and patterns.}
\label{fig:inter-function}
\end{figure}

The specification establishes a disjoint input-output relationship: 
integer inputs yield integer outputs, 
while inputs comprising either form of the specified tuple variants yield atom outputs.

Occurrence typing is used to verify that each branch of the \lstinline|case| expression adheres to the type signature. 
In the first branch, the guard \lstinline|is_integer(X)| refines the type of \lstinline|X| from the full input union down to \lstinline|integer()|, making the operation \lstinline|X + 1| type safe. 
In the second branch, the pattern \lstinline|{Y, Z}| combined with the guard \lstinline|is_atom(Y)| selects the part of the input type where \lstinline|X| is a tuple whose first element is an atom, i.e., \lstinline|{atom(), atom() |$\mid$\lstinline| integer()}|.  
Consequently, the returned value \lstinline|Y| is known to be of type \lstinline|atom()|.

Finally, the third branch is only reached for tuples not captured by the previous clause. 
At this point, \lstinline|X|'s type is refined to \lstinline|{integer(), atom()}|, which is the remaining part of the input type. Therefore, the variable \lstinline|Z| is known to have type \lstinline|atom()|, making the return type correct. 
Furthermore, the type system can verify that the \lstinline|case| expression is \emph{exhaustive}, as the three branches collectively cover all possible types specified for the input \lstinline|X|.

In the \ety{} codebase, we leverage these capabilities to process the Erlang abstract syntax tree, which involves large case distinctions with up to 25 clauses. 
We can guarantee that all possible type representations are handled without relying on a catch-all clause. 
To our knowledge, no other system currently supports exhaustiveness checking for Erlang while also integrating guards with occurrence typing, type reconstruction, and the full range of set-theoretic operations (union, intersection, and negation).

\subsection{Recursive Data Types \& Polymorphism}
\label{sec:recursive-data-types}

Erlang supports custom type definitions and polymorphic functions through type variables. 
Consider the standard library's \lstinline|flatten| function from the \lstinline|lists| module:

\begin{lstlisting}[language=Erlang,numbers=none]
flatten([H|T], Tail) when is_list(H) -> flatten(H, flatten(T, Tail));
flatten([H|T], Tail) -> [H|flatten(T, Tail)];
flatten([], Tail) -> Tail.
\end{lstlisting}

The Erlang function \lstinline|flatten| recursively flattens nested lists, combining all elements into a single flat list and appending a given \lstinline|Tail|.
It is supplied with the following type specification:

\begin{lstlisting}[language=Erlang,numbers=none]
-spec flatten(DeepList, Tail) -> [term()] when
      DeepList :: [term() | DeepList], Tail :: [term()].
\end{lstlisting}

The \lstinline|when| clause introduces type constraints, restricting the allowed types for the type variables used in the signature. 
In this case, it specifies that \lstinline|DeepList| is a deep list, a recursive type of a list where each list element is either a term (\lstinline|term()|) or a deep list (\lstinline|DeepList|). 
Note that the vertical bar \lstinline{|} in type contexts denotes type union, distinct from the list constructor \lstinline{[H | T]} used in expression contexts.
The \lstinline|Tail| is constrained to be a regular list of terms. 
The recursive definition \lstinline|[term() |$\mid$ \lstinline|DeepList]| in the constraint attempts to describe a structure that may be nested to an arbitrary depth.

This specification has three significant limitations. 
First, even though the first argument input type is specified recursively, the type simplifies to \lstinline{[term() | DeepList] = [term()]}.
Second, the type loses the input-output relationship. 
The output type \lstinline|[term()]| is overly broad and doesn't reflect that the result contains element types which are related in some way to the input types. 
This is a problem when type checking an application of the function.
This can be addressed with polymorphism using type variables, which allow expressing relationships between input and output types. 
To capture this relation, we can define a custom recursive type:

\begin{lstlisting}[language=Erlang,numbers=none]
-type deepList(A) :: [A | deepList(A)].
-spec flatten(deepList(A), [A]) -> [A].
\end{lstlisting}

Type definitions introduce named aliases for type expressions, improving readability and enabling reuse. 
Third, this specification still fails to capture the semantic constraint that elements themselves are not lists in the flattened result. 
Consider the invocation \lstinline|do_flatten(H, do_flatten(T, Tail))| in the first function clause. 
Here, the list constructor \lstinline{[H|T]} decomposes a list into head and tail, while the guard \lstinline|is_list(H)| checks if the head is itself a list.
The type of \lstinline|H| is refined to be a list, but \lstinline|do_flatten| expects a \lstinline|deepList| as its first argument.
The implementation fails to type check with this generic type specification.

To type check \lstinline|flatten|, we need to make the specification more precise.
Erlang's type language cannot express set differences, yet.
We therefore need to use a custom set-theoretic type negation to express the type, as shown in \Cref{fig:deep-list-type}.

\begin{figure}[t]
\begin{lstlisting}[language=Erlang,numbers=none]
-type deep_list(A) :: [etylizer:without(A, list()) | deep_list(A)].
\end{lstlisting}
\caption{Precise recursive type for \lstinline|flatten| using set-theoretic negation.}
\label{fig:deep-list-type}
\end{figure}

With this refined type, the function type checks correctly.
Notably, all previous type system developments for Erlang have failed to successfully verify Erlang's \lstinline|flatten/2| function.

As an example combining recursive, polymorphic, and intersection types, consider the \lstinline|filtermap/2| function from Erlang's standard library, which has the following specification:

\begin{lstlisting}
-spec filtermap( fun((Elem) -> boolean() | {'true', Value}), [Elem] ) ->
   [Elem | Value] when Elem :: term(), Value :: term().
filtermap(F, [Hd|Tail]) ->
    case F(Hd) of
        true -> [Hd|filtermap(F, Tail)];
        {true,Val} -> [Val|filtermap(F, Tail)];
        false -> filtermap(F, Tail)
    end;
filtermap(F, []) when is_function(F, 1) -> [].
\end{lstlisting}

As the name and signature suggest, \lstinline|filtermap/2| either filters elements based on a Boolean function, or filters and applies an additional transformation on the filtered values (mapping from \lstinline|T| to  \lstinline|U|, or yields a list mixing transformed and unfiltered values).
Our type checker can efficiently decide the type safety of this higher-order, polymorphic, and recursive function. 
We argue that its implementation could be expressed with a more precise type:

\begin{lstlisting}
-spec filtermap(fun((T) -> boolean()), [T]) -> [T]
             ; (fun((T) -> {true, U} | false), [T]) -> [U]
             ; (fun((T) -> {true, U} | boolean()), [T]) -> [T | U].
\end{lstlisting}

Our system successfully also verifies this type specification.

Certain classes of advanced polymorphic data types are not supported, however.
Consider the following example of a \emph{nested datatype} \citep{conf/mpc/BirdM98, journals/jfp/Hinze00}:

\begin{lstlisting}[language=Erlang,numbers=none]
-type perfect(A) :: A | perfect({A, A}).
\end{lstlisting}

Essentially, \lstinline|perfect(A)| describes complete binary trees where the leaf nodes are of type \lstinline|A|.
The interesting aspect of the example is that on the right side of the equation, \lstinline|perfect({A,A})| is differently parametrized from the left side of the definition.
Our type system cannot handle such polymorphic recursion on types.

Additionally, our system does not support rank-2 polymorphism, which would be required to type functions that take polymorphic functions as arguments. 
As discussed in \Cref{sec:evaluation-project}, this limitation affects certain higher-order programming patterns, though they occur infrequently in practice.

\subsection{Type Reconstruction for Local Functions}
\label{sec:type-reconstr-local}

In Erlang, one-time-use helper functions are typically defined as anonymous functions. 
However, since Erlang does not support type annotations for anonymous functions, these functions must be refactored into separate named functions to enable type checking, which can make their type checking more complicated.
A notable example are computations on the internal representation of types for multi-arity functions in the implementation of \ety{}; \Cref{fig:intersect-function} shows the intersection operation for set-theoretic types represented in DNF.

\begin{figure}[t]
\begin{lstlisting}[language=Erlang,numbers=none]
-spec dnf_intersect(dnf(), dnf()) -> dnf().
-spec intersect(
	{dnf(), #{integer() => dnf()}},
	{dnf(), #{integer() => dnf()}}
	) -> {dnf(), #{integer() => dnf()}}.
intersect({DefaultT1, T1}, {DefaultT2, T2}) ->
  AllKeys = maps:keys(T1) ++ maps:keys(T2),
  IntersectKey = fun(Arity) ->
    dnf_intersect(
      maps:get(Arity, T1, DefaultT1),
      maps:get(Arity, T2, DefaultT2)
      ) end,
  L = lists:map(fun(Key) -> {Key, IntersectKey(Key)} end, AllKeys),
  {
    dnf_intersect(DefaultT1, DefaultT2),
    maps:from_list(L)
  }.
\end{lstlisting}
\caption{Intersection of multi-arity function types in DNF representation from the \ety{} codebase, with a reconstructed type for the anonymous function \lstinline|IntersectKey|.}
\label{fig:intersect-function}
\end{figure}

The representation consists of a tuple containing a default representation for all unspecified arities, along with an explicit mapping from 
arity to function representation.
The function \lstinline|dnf_intersect| implements the intersection for two function spaces, correctly handling both explicitly defined arities and the default case.
Within this function, \lstinline|IntersectKey| is an anonymous function without a type annotation, used to compute the intersection for a given arity in multi-arity functions.
\ety{} successfully reconstructs its type and verifies the function for type safety.

%

\subsection{Value-Dependent Types}

The crux of the difficulty of typing Erlang is that types sometimes depend on values. 
The standard library specifications highlight this challenge:

\begin{lstlisting}
-spec element(N :: pos_integer(), tuple()) -> term().
-spec ukeysort(N :: pos_integer(), [Tuple]) -> [Tuple] when Tuple :: tuple().
\end{lstlisting}

Both functions access tuple elements dynamically, \lstinline|element/2| for a single tuple and \lstinline|ukeysort/2| for sorting a list of tuples based on a dynamic index. 
Consider their usage in \lstinline|from_list/1| from Erlang's standard library and \lstinline|loc_exp/1| from \ety{}'s codebase:

\begin{lstlisting}[language=Erlang,numbers=none]
-spec from_list([{Key, Value}]) -> orddict(Key, Value).
from_list([]) -> [];
from_list([{_,_}] = Pair) -> Pair;
from_list(Pairs) -> lists:ukeysort(1, reverse_pairs(Pairs, [])).

-spec loc_exp(exp()) -> loc().
loc_exp(X) -> element(2, X).
\end{lstlisting}

The function \lstinline|from_list/1| relies on \lstinline|lists:ukeysort/2| to sort the tuple list for a specific tuple position, while \lstinline|loc_exp/1| extracts the second element from a tuple of unknown size. 
Both require knowing specific indices to determine precise types.

In our implementation, we handle these cases through \emph{type overlays} that provide specialized specifications for common patterns:

\begin{lstlisting}[language=Erlang,numbers=none]
% Overlay for lists:ukeysort with constant index
-spec 'lists:ukeysort'(1, [{K, V}]) -> [{K, V}].

% Overlay for element with constant indices  
-spec 'erlang:element'(2, {_A, B}) -> B;
  (2, {_A, B, _C}) -> B;
  (2, {_A, B, _C, _D}) -> B;
  ...;
  (_, tuple()) -> term().
\end{lstlisting}

These overlays effectively express pattern matching at the type level, verifying that inputs have sufficient size and correct structure for the given indices. 
Our empirical study~(\Cref{sec:empirical-dynamic-functions}) shows that despite the dynamic nature of value-dependent operations in Erlang, most uses (76\%) employ constant parameters, making them amenable to static typing with our system's features.
While general value-dependent types would be needed for any possible application of \lstinline|element/2| and \lstinline|ukeysort/2|, our combination of singleton and intersection types adequately addresses the common cases found in practice.

~\newline

The examples illustrate that set-theoretic types align naturally with
Erlang's type language.  Our experience suggests that while some
existing type signatures may require refinement, all Erlang language
constructs can be accommodated to accurately reflect the language's
semantics.  In particular, type checks and guarded expressions
commonly used in Erlang for function clauses and \lstinline|case|
expressions can be statically verified, preventing common mistakes
that typically lead to semantic errors.

\section{Formal Type System}
\label{sec:formal-type-system}

This section presents a formal foundation for our
type system for Erlang. It defines the language \MinErl{},
a core calculus designed to capture the
essential ingredients of sequential programming
in Erlang. It extends the $\lambda$-calculus with
constants, pairs, pattern matching with type tests, and top-level,
mutually recursive functions.
The type system of
\MinErl{} is based on subtyping, polymorphic set-theoretic
types~\citep{journals/jacm/FrischCB08,conf/icfp/CastagnaX11}, and
occurrence typing~\citep{conf/popl/Tobin-HochstadtF08}.
Its formulation is influenced
by the work of Castagna and colleagues on using set-theoretic types
to model polymorphic variants~\citep{conf/icfp/CastagnaP016}.

\subsection{Syntax}

\boxfigSingle{f:syntax}{Syntax of \MinErl{}}{
	\small
	\[
	\begin{array}{c}
          \begin{array}{r@{\quad}l@{\qquad}r@{\quad}l}
	    \textrm{Type variables}       & \Tyvar, \TyvarAux \in \AlltyVar&
            \textrm{Expression variables} &  x, y \in \AllVar \\
	    \textrm{Finite sets of type variables} & \TyvarSet \in \PowersetFin(\AlltyVar) &
	    \textrm{Constants (ints, floats)} & \Const \in \AllConst \supseteq \AllConstty\\
	    \textrm{Singleton types (ints)} & \Constty \in \AllConstty \\
          \end{array}\\
	\end{array}
	\]

	\[
          \begin{array}[t]{@{}l@{~}r}
            \begin{array}[t]{r@{~~}r@{~}c@{~}l@{~}l}
		\textrm{Type schemes} &
		\Polyty & ::= & \TyScm{\TyvarSet} \Monoty
		\\
		\textrm{Mono types} &
		t, u  & ::= &
		\Monoty \Union \Monoty
                \mid \Neg\Monoty  \mid
		\Monoty \to \Monoty\\
                \textrm{(coinductive)}&&&
		\mid \Pairty{\Monoty}{\Monoty}
		\mid \Tyvar
		\mid \Basety\\
		\textrm{Base types} & \Basety & ::= & \Constty \mid \Int \mid \Float\\[\smallskipamount]
              \textrm{Abbreviations} &
              \Monoty_1 \Inter \Monoty_1 & \DefinedAs & \Neg (\Neg\Monoty_1 \Union \Neg\Monoty_2)\\
              & \Monoty_1 \WithoutTy \Monoty_2 & \DefinedAs & \Monoty_1 \Inter \Neg \Monoty_2 \\
              & \ErlTop &\DefinedAs&  \Int \Union \Float \\
                                     &&& {}\Union (\Pairty{\ErlTop}{\ErlTop})\\
                                     &&& {}\Union (\ErlBot \to \ErlTop) \\
              & \ErlBot & \DefinedAs &  \Neg \ErlTop
            \end{array}
            &
            \begin{array}[t]{r@{~~}r@{~}r@{~}l@{~}l}
		\textrm{Expressions}
		& e & ::= & x
		\mid \Const
		\mid \Abs{x} e
		\mid \App{e}{e}
                \mid \Pair{e}{e} \\
                &&& \mid \Case{e}{\Multi{\Cls}} \\
                \textrm{Values}
                & v  & ::= & \Const \mid \Abs{x} e \mid \Pair{v}{v} \\
		\textrm{Pattern clauses}
		& \Cls & ::= & \PatCls{\Pg}{e}                \\
		\textrm{Guarded patterns}
		& \Pg & ::= & \WithGuard{p}{g}                \\
		\textrm{Patterns}
		& p & ::= & v \mid \Wildcard \mid x \mid \Capt{x} \mid \Pair{p}{p}                 \\
		\textrm{Guards}
		& g & ::= & \Is{\GuardTy}{x} \mid \Is{\GuardTy}{v} \mid \GuardOracle \\
                &&& \mid \GuardAnd{g}{g} \mid \GuardTrue \\
		\textrm{Guard types}
		& \GuardTy & ::= & \Int \mid \Float \mid \AnyPair \mid \AnyFun \\
		\textrm{Definitions}
		& \DefSym  & ::= & \Def{x}{\Polyty}{\Abs{y}{e}} \mid \DefNt{x}{\Abs{y}{e}} \\
		& \FunEnv & ::= & \Multi{\DefSym} \\
		\textrm{Programs}
		& \Prog & ::= &
		\Letrec{\FunEnv}{e} \\
            \end{array}
          \end{array}
	\]
}

\Cref{f:syntax} defines the syntax of \MinErl{}.
We use $\Tyvar, \TyvarAux \in \AlltyVar$ for type variables;
$x, y \in \AllVar$ for expression variables;
$\Const \in \AllConst$ for constants;
and $\Constty \in \AllConstty$ for singleton types.
We assume that the sets $\AlltyVar, \AllVar, \AllConst$, and $\AllConstty$ are countably infinite.
The set of constants $\AllConst$ contains ints and floats, whereas the set of singleton types
$\AllConstty \subseteq \AllConst$ contains only ints.
Metavariable $\TyvarSet$ ranges over finite sets of type variables.

The notation $\mathfrak{s}_{i \in I}$ for some syntactic
construct $\mathfrak{s}$ and some index set
$I = \{1, \ldots, n\}$ is short for the
sequence $\mathfrak{s}_1 \dots \mathfrak{s}_n$.

\subsubsection{Types}

A type scheme $\Polyty = \forall A . t$ quantifies a monomorphic type $\Monoty$ over
a finite set of type variables $A$.
At some places, we identify the type scheme $\TyScm{\emptyset}{\Monoty}$
with the monomorphic type $\Monoty$.

Monomorphic types $\Monoty$ encompass set-theoretic connectives union
$t_1 \Union t_2$ and negation $\Neg t$ (to be explained shortly),
as well as type constructors for functions $t_1 \to t_2$,
pairs $\Pairty{t_1}{t_2}$, type variables $\alpha$, and base types $b$.
Base types are singleton types $\Constty$ for individual ints, and types for
the whole set of ints and floats.

Set-theoretic types require somewhat more explanation.
We only sketch the main ideas here and refer to the literature
for
details~\citep{journals/jacm/FrischCB08,conf/icfp/CastagnaX11,conf/popl/Castagna0XILP14,conf/popl/Castagna0XA15,conf/icfp/CastagnaP016}.
With set-theoretic types, we interpret types as subsets of some model
\Model{}. Subtyping between types then becomes set inclusion.
Intuitively, \Model{} could be the set of all well-typed values of the
programming language under investigation; a type $t$ then should
denote the set of all values of this type. For technical reasons, such
a definition of \Model{} does not work well in practice. Fortunately,
there are alternative definitions that allow us to recover
the intuitive model~\citep{conf/ppdp/CastagnaF05,journals/jacm/FrischCB08}.

In general, the definition of a model for set-theoretic types is
tied to a specific programming language, in particular to its mode of
evaluation (strict or not) and to the available type constructors.
However, Erlang is close enough to the
languages considered in the aforementioned literature, so the model
and algorithms developed in the work of \cite{conf/popl/Castagna0XILP14,conf/popl/Castagna0XA15,conf/icfp/CastagnaP016}
directly transfer to our setting.

Coming back to the syntax of monomorphic types,
$\Monoty_1 \Union \Monoty_2$ denotes the union of the sets associated with $\Monoty_1$
and $\Monoty_2$, whereas $\Neg \Monoty$ denotes the complement of $\Monoty$ with respect to
the full model.
Intersection $\Monoty_1 \Inter \Monoty_2$, set difference $t_1 \WithoutTy t_2$, as well
as the top type $\ErlTop$ and bottom type $\ErlBot$
are defined as abbreviations, see \Cref{f:syntax}.
The top type $\ErlTop$ represents the full model; that is,
the union of $\Int$, $\Float$, the top type for pairs
$\Pairty{\ErlTop}{\ErlTop}$, and the top type for functions $\ErlBot \to \ErlTop$.
The bottom type $\ErlBot$, representing the empty set, is defined as the complement
of $\ErlTop$.
We employ the convention that the type connectives $\Union, \Inter, \Neg$ and
the type constructors $\to, \Pairty{}{}$ associate to the right,
while the (ascending) order of precedence is $\Union, \Inter, \Neg, \to, \Pairty{}{}$.

To allow for recursive types, the definition of monomorphic types $t$
is to be read coinductively. Hence, a monomorphic type is a potentially
infinite tree.
For example, the type for lists can be defined as the type
fulfilling the equation
$\ListTy{\Tyvar} = 0 \Union (\Pairty{\Tyvar}{\ListTy{\Tyvar}})$. 
Here, the singleton $0$ denotes the empty list. 
The notation \enquote{$\ListTy{\Tyvar} =$} is not part
of the type syntax, but allows us to write the infinite tree
$0 \Union \Pairty{\Tyvar}{(0 \Union \Pairty{\Tyvar}{\ldots})}$
with a finite description.

We require the potentially infinite trees
to be \emph{regular} and \emph{contractive}~\citep{conf/icfp/CastagnaP016, TAPL}.
Regularity requires that a tree contains only finitely many distinct subtrees.
This condition is crucial for establishing decidability of the subtyping algorithm.
Contractiveness states that every infinite branch has infinitely many occurrences
of the type constructors $\to$ and $\Pairty{}{}$. This condition rules out
nonsensical types fulfilling equations such as $t = t \Union t$ or $t = \Neg t$.

Erlang's type syntax supports intersection types only in limited fashion; that is,
only as the outermost connective of a type specification for a top-level function. Further,
Erlang provides no syntax for negation types.
However, our formalization supports full set-theoretic types because
several typing rules (see \ref*{sec:static-semantics}) require their expressiveness.

\subsubsection{Expressions, definitions, and programs}

Expressions $e$ include variables $x$, constants $\Const$ for ints and floats,
lambda abstractions $\lambda x.e$, function application $e_1\,e_2$, pairs $\Pair{e_1}{e_2}$,
as well as \kw{case}-expressions for pattern matching.
By convention,
function application associates to the left and
the body of a lambda abstraction extends as far to the right as possible.
Values $v$ are constants, lambda abstractions, and pairs of values.

The expression $\Case{e}{\Multi{\Cls}}$ matches scrutiny $e$
against pattern clauses $\Cls_i$. A pattern clause has the form
$\PatCls{\WithGuard{p}{g}}{e'}$, where $p$ is a pattern, $g$ a guard, and
$e'$ the body expression.
Patterns consist of values $v$, wildcards $\Wildcard$,
binding variables $x$, bound variables $\Capt{x}$, and pair patterns $\Pair{p}{p}$.
Values and variables in patterns require some further explanation.

In Erlang, pattern variables
have a somewhat unusual scoping rule. If the variable is already
bound in the outer scope, it acts like a constant pattern and only matches values equal
to the variable's value.
If the variable is not bound in the outer scope, it always matches and
binds the value being matched.

In our calculus, we handle these two variants with explicit constructs.
A bound pattern variable
$\Capt{x}$ requires $x$ to be in scope; matching against $\Capt{x}$ matches
against the value $x$ is bound to.
A binding pattern variable $x$ always matches and binds the value to $x$, even
if $x$ is already defined in the outer scope. The scope of (the inner) $x$ then comprises the guard
and the body of the corresponding pattern clause.

The semantics intended by bound pattern variables motivates the inclusion
of values (and not only constants) into the pattern syntax. During
evaluation, a bound pattern variable $\Capt x$ should be substituted
by some value $v$.

A guard $g$ further constrains a match. It is essentially a conjunction of $\GuardTrue$,
$\GuardOracle$, and type tests against
a guard type $\GuardTy$. Type tests, written $\Is{\GuardTy}{x}$ or $\Is{\GuardTy}{v}$,
correspond to guard functions in Erlang such as
\lstinline|is_integer|.\footnote{%
  We support only variables and values in guard tests because an empirical evaluation of
  real-world Erlang code showed that nearly all types tests are on variables (\Cref{sec:empirical-patterns}).
  Other sorts of type tests need to be modeled as \GuardOracle.
  }
Similar to Erlang, guard types are limited
to $\Int$, $\Float$, $\AnyPair$, and $\AnyFun$.
Guards also include $\GuardOracle$ for modeling guard expressions that cannot
be analyzed statically (e.g.\ \verb|tuple_size| in Erlang).
At some places, we identify
a pattern $p$ with the guarded pattern $\WithGuard{p}{\GuardTrue}$.

A top-level definition $\DefSym$ binds a variable to a function, optionally
equipped with a type annotation $\Polyty$. A sequence of definitions is denoted by $\FunEnv$,
assuming that $\FunEnv$ contains no duplicate definitions for the same variable.
We write $\Dom{\FunEnv}$ for the set of variables defined in $\FunEnv$, and $\FunEnv(x)$
to access the right-hand side of $x$'s definition in $\FunEnv$, assuming $x \in \Dom{\FunEnv}$.
A program $\Prog$ is then a sequence of
mutually recursive function definitions $\FunEnv$ and a main expression $e$.

\subsubsection{Miscellaneous}

We write $\FreeTyVars{t}$ to denote the set of type variables
free in type $t$. The coinductive definition of $\FreeTyVars{t}$ can be
found in \Cref{sec:appendix-type-safety-declarative-system}, \Cref{def:free-tyvars}.
We write $\Free{e}$ and $\Free{\Pg}$ to denote the set of expression variables free in
expression $e$ and guarded pattern $\Pg$, respectively.
These definitions are also contained in
\Cref{sec:appendix-type-safety-declarative-system}, \Cref{def:free-exp}.
We consider all syntactic constructs equal up to renaming of bound expression
and type variables.

\subsection{Dynamic Semantics}

\boxfigSingle{f:dynamic-semantics}{Dynamic semantics of \MinErl}{
  \[
  \begin{array}{rr@{~}r@{~}ll}
    \textrm{Evaluation contexts}
       & \EvalCtx & ::= & \Hole \mid \App{\EvalCtx}e \mid \App{v}{\EvalCtx} \mid
                          \Pair{\EvalCtx}{e} \mid \Pair{v}{\EvalCtx}
                          \mid \Case{\EvalCtx}{\Cls_{i \in I}} \\
    \textrm{Value substitutions} & \Valsubst & ::= & [v_i/x_i]_{i \in I}
  \end{array}
  \]
  \RuleSection{
    \RuleForm{
      \ReduceBase{\FunEnv}{e}{e} \qquad
      \Reduce{\FunEnv}{e}{e}
    }
  }{
    Reductions
  }
  \begin{mathpar}
    \inferrule[red-abs]{}{
      \ReduceBase{\FunEnv}{(\App{\Abs{x}{e})}{v}}{e[v/x]}
    }

    \inferrule[red-var]{
      \FunEnv(x) = v
    }{
      \ReduceBase{\FunEnv}{x}{v}
    }

    \inferrule[red-case]{
      \PatSubst{v}{\Pg_j} = \Valsubst \\
      \Forall{i < j}~\PatSubst{v}{\Pg_i} = \PatSubstFail
    }{
      \ReduceBase{\FunEnv}{\Case{v}{\Multi{(\Pg_i \to e_i)}}}{e_j\Valsubst}
    }

    \inferrule[red-context]{
      \ReduceBase{\FunEnv}{e_1}{e_2}
    }{
      \Reduce{\FunEnv}{\Plug{\EvalCtx}{e_1}}{\Plug{\EvalCtx}{e_2}}
    }
  \end{mathpar}

\RuleSection{
    \RuleForm{
      \Reduce{}{\Prog}{\Prog}
    }
  }{
    Program reductions
  }
  \begin{mathpar}
    \inferrule[red-prog]{
      \Reduce{\FunEnv}{e}{e'}
    }{
      \Reduce{}{\Letrec{\FunEnv}{e}}{\Letrec{\FunEnv}{e'}}
    }
  \end{mathpar}

}

The formalization of the dynamic semantics spreads out over
two figures.
\Cref{f:dynamic-semantics} defines the reduction relation and
\Cref{f:dynamic-patterns} defines auxiliaries for evaluating
pattern matching.

\Cref{f:dynamic-semantics} defines a standard, call-by-value, small-step reduction relation
for \MinErl{}.
A call-by-value evaluation context $\EvalCtx$ is an expression with a hole $\Hole$ such that
the hole marks the point where the next evaluation step should happen.
We write $\EvalCtx[e]$ to denote the replacement of the hole in $\EvalCtx$ with expression $e$.

Metavariable $\Valsubst = [v_i/x_i]_{i \in I}$ denotes the capture-avoiding
substitution of expression variables $x_i$ with value $v_i$.
We write $e\Valsubst$ for applying substitution $\Valsubst$ to expression $e$
(similar for other syntactic constructs),
$\Dom{\Valsubst}$ for
the domain of $\Valsubst$, and
$\Valsubst(x)$ for accessing the value for variable $x$, implicitly
assuming $x \in \Dom{\Valsubst}$.

Relation $\ReduceBase{\FunEnv}{e}{e'}$ is the basic evaluation relation that
performs a single reduction step from $e$ to $e'$ at the top-level of $e$.
Rule \Rule{red-abs} reduces a lambda expression in the usual way. Rule \Rule{red-var}
requires
$\FunEnv$ to access $\LetrecSym$-bound variables because
substitutions for letrec-bound variables
are performed lazily to allow for mutually recursive functions.
Rule \Rule{red-case} reduces a case expression to the body of the first branch that
matches its guarded pattern.
It uses $\PatSubst{v}{\Pg}$ for matching value $v$ against guarded pattern $\Pg$, yielding
a substitution $\Valsubst$ on success or $\PatSubstFail$ on failure. We explain $\PatSubst{v}{\Pg}$
below (also see \Cref{f:dynamic-patterns}).

Relation $\Reduce{\FunEnv}{e}{e'}$ performs a single reduction step from $e$ to $e'$
guided by an evaluation context, see rule \Rule{red-context}.
Relation $\Reduce{}{\Prog}{\Prog}$ takes one step from program $\Prog$ to $\Prog'$ by
reducing the program's main expression, see rule \Rule{red-prog}.

\boxfigSingle{f:dynamic-patterns}{Dynamic semantics of patterns}{
  \RuleSection{
    \RuleForm{
      \PatSubst{v}{\Pg} = \Valsubst \mid \PatSubstFail \qquad
      \PatSubst{v}{p} = \Valsubst \mid \PatSubstFail
    }
  }{
    Matching of patterns
  }
  \begin{mathpar}
    \inferrule[match-false-pat]{
      \PatSubst{v}{p} = \PatSubstFail
    }{
      \PatSubst{v}{(\WithGuard{p}{g})} = \PatSubstFail
    }

    \inferrule[match-false-guard]{
      \PatSubst{v}{p} = \Valsubst \\
      \EvalGuard{g\Valsubst} = \False
    }{
      \PatSubst{v}{(\WithGuard{p}{g })} = \PatSubstFail
    }

    \inferrule[match-true]{
      \PatSubst{v}{p} = \Valsubst \\
      \EvalGuard{g\Valsubst} = \True
    }{
      \PatSubst{v}{(\WithGuard{p}{g})} = \Valsubst
    }

    \inferrule[match-val]{}{
      \PatSubst{v_1}{v_2} = {
        \begin{cases}
          [\,] & \textrm{if}~ v_1 = v_2\\
          \PatSubstFail & \textrm{otherwise}
        \end{cases}
      }
    }

    \inferrule[match-wild]{}{
      \PatSubst{v}{\Wildcard} = [\,]
    }

    \inferrule[match-var]{}{
      \PatSubst{v}{x} = [v/x]
    }

    \inferrule[match-capture]{}{
      \PatSubst{v}{\Capt{x}} = \PatSubst{v}{\FunEnv(x)}
    }

    \inferrule[match-pair]{}{
      \PatSubst{v}{(p_1, p_2)} = {
        \begin{cases}
          \Valsubst_1 \cup \Valsubst_2 &
          \parbox[t]{8cm}{
            if $v = (v_1, v_2)$ and $\PatSubst{v_i}{p_i} = \Valsubst_i$
            for $i=1,2$ and $\Valsubst_1 \ValEq \Valsubst_2$
          } \\
          \PatSubstFail & \textrm{otherwise}
        \end{cases}
      }
    }
  \end{mathpar}

  \RuleSection{
    \RuleForm{
      \EvalGuard{g}= \True \mid \False \qquad
    }
  }{
    Evaluation of guards
  }
  \begin{mathpar}
    \inferrule[eval-guard-value]{}{
      \EvalGuard{\Is{\GuardTy}{v}} = {
        \begin{cases}
          \True   & \textrm{if}~ v \ValMatches \GuardTy\\
          \False  & \textrm{otherwise}
        \end{cases}
      }
    }

    \inferrule[eval-guard-var]{}{
      \EvalGuard{\Is{\GuardTy}{x}} = {
        \begin{cases}
          \True   & \textrm{if}~ \FunEnv(x) = v \AndText v \ValMatches \GuardTy\\
          \False  & \textrm{otherwise}
        \end{cases}
      }
    }

    \inferrule[eval-guard-oracle]{}{
      \EvalGuard{\GuardOracle} ={} ??
    }

    \inferrule[eval-guard-true]{}{
      \EvalGuard{\GuardTrue} = \True
    }

    \inferrule[eval-guard-and]{}{
      \EvalGuard{\GuardAnd{g_1}{g_2}} = \EvalGuard{g_1} \textrm{~and~} \EvalGuard{g_2}
    }
  \end{mathpar}

  \RuleSection{
    \RuleForm{
      \Valsubst \ValEq \Valsubst \quad
      v \ValMatches \GuardTy
    }
  }{
    Similarity and matching
  }
  \begin{mathpar}
    \inferrule[eq-subst]{
      \Forall{x \in \Dom{\Valsubst_1} \cap \Dom{\Valsubst_2}}~~\Valsubst_1(x) = \Valsubst_2(x)
    }{
      \Valsubst_1 \ValEq \Valsubst_2
    }
    \\

    \inferrule[is-int]{
      \Const~\textrm{is an int}
    }{
      \Const \ValMatches \Int
    }

    \inferrule[is-float]{
      \Const~\textrm{is a float}
    }{
      \Const \ValMatches \Float
    }

    \inferrule[is-pair]{}{
      \Pair{v_1}{v_2} \ValMatches \AnyPair
    }

    \inferrule[is-fun]{}{
      (\Abs{x}{e}) \ValMatches \AnyFun
    }
  \end{mathpar}
}

\Cref{f:dynamic-patterns} defines the functions $\PatSubst{v}{\Pg}$ and $\PatSubst{v}{p}$
to model the semantics of pattern matching.
If value $v$ matches the guarded pattern $\Pg$
or the pattern $p$, then the result is a substitution $\Valsubst$ for
the variables bound by the pattern. Otherwise, the result is
an error $\PatSubstFail$.

The first three rules match a guarded pattern $\WithGuard p g$ against a value. All three rules
first match the pattern. If this succeeds, rules
\Rule{match-false-guard} and \Rule{match-true} use $\EvalGuard{g\Valsubst}$
to check if the guard evaluates to $\True$.

Rule \Rule{match-val} matches a value $v_1$ against a pattern value $v_2$. The match
only succeeds if $v_1$ and $v_2$ are equal (modulo renaming of bound variables).
Rules \Rule{match-wild} and \Rule{match-var} are standard for matching wildcards and
variables. Rule \Rule{match-capture} deals with bound pattern variables $\Capt x$. Such variables
must be bound at the top-level because $\lambda$-bound variables are substituted during evaluation.
Hence, the rule obtains the definition of $x$ in $\FunEnv$, written $\FunEnv(x)$.
Rule \Rule{match-pair} handles pair patterns $\Pair{p_1}{p_2}$. If $p_1$ and $p_2$
both bind the same variable, then the values being matched must be equal, expressed
as $\Valsubst_1 \ValEq \Valsubst_2$.
The notation $\Valsubst_1 \cup \Valsubst_2$ denotes the union of $\Valsubst_1$ and $\Valsubst_2$,
which is well-defined if $\Valsubst_1 \ValEq \Valsubst_2$.
Similarity $\ValEq$
of substitutions is defined in the lower part of \Cref{f:dynamic-patterns}.

The result of evaluating a pattern guard $g$ is written $\EvalGuard{g}$.
Rule \Rule{eval-guard-value} evaluates
a type test $\Is{\GuardTy}{v}$ by checking value $v$
against some guard type $\GuardTy$, written $v \ValMatches \GuardTy$
(see bottom of \Cref{f:dynamic-patterns}).
Rule \Rule{eval-guard-var} handles type tests on variables.
Rule \Rule{eval-guard-oracle} leaves the result of oracles unspecified, written $??$,
to stress the fact that the type system must not depend on its result.
\footnote{While
Erlang's full guard system includes arity-specific function checks (e.g., \texttt{is\_fun/2}),
this complexity is omitted from our formalization as it adds no theoretical insight. In practice, extending our implementation with arity-aware function
guards is straightforward.}

\subsection{Static Semantics}
\label{sec:static-semantics}
This section formalizes a declarative type system for \MinErl{}. We rely on a subtyping relation
between set-theoretic types, written $t_1 \IsSubty t_2$. This relation
has been defined by~\cite{conf/icfp/CastagnaX11} and~\cite{conf/popl/Castagna0XA15}.
Two monomorphic types $t$ and $t'$ are equivalent, written $t \TyEquiv t'$,
iff $t \IsSubty t'$ and $t' \IsSubty t$.

A type environment $\Venv = \{ x_i : \Monoty_i \mid i \in I \}$
is a partial mapping from expression variables to (monomorphic) types. We let $\EmptyEnv$
denote the empty type environment,
$\Dom{\Venv}$ the domain of $\Venv$, and
$\Venv(x)$ the lookup of $x$'s type in $\Venv$, implicitly assuming
$x \in \Dom{\Venv}$.
The extension of a variable environment is written $\Venv, x : \Monoty$
where the new binding $x : \Monoty$ hides a potential binding for $x$ in $\Venv$.
Similarly, in the concatenation $\Venv,\Venv'$ bindings in $\Venv'$ take precedence over
those in $\Venv$. The notation $\Venv \neq \ErlBot$ requires that there is no $x \in \Dom{\Venv}$
with $\Venv(x) = \ErlBot$.

Akin to type environments, we define scheme environments $\Senv = \{ x_i : \Polyty_i \mid i \in I \}$
as partial mappings from expression variables to type schemes; all other definitions and notations are analogous.
Further, we assume that a type scheme $\forall A . t$ only contains meaningful type variables;
i.e., for all type variables $\Tyvar \in \TyvarSet$ we have
$\Monoty[\ErlBot/\Tyvar] \not\TyEquiv \Monoty$.
A formal treatment of meaningful type variables can be found in~\cite{conf/icfp/CastagnaP016}.

\subsubsection{Typing Guarded Patterns}

\boxfigSingle{f:aux-pattern-types}{Environments from guarded patterns}{
  \RuleSection{
    \RuleForm{
      \PatEnv{\Monoty}{\Pg} = \Venv \qquad
      \PatEnv{\Monoty}{p} = \Venv
    }
  }{Pattern environments}
  \begin{mathpar}
    \PatEnv{\Monoty}{(\WithGuard{p}{g})} = (\PatEnv{\Monoty}{p}) \InterEnv \Env{g}

    \PatEnv{\Monoty}{v} =
    \PatEnv{\Monoty}{\Wildcard} =
    \PatEnv{\Monoty}{\Capt{x}} = \EmptyEnv

    \PatEnv{\Monoty}{x} = \{ x : t \}

    \PatEnv{\Monoty}{(p_1, p_2)} =
    (\PatEnv{\ProjL{\Monoty}}{p_1}) \InterEnv (\PatEnv{\ProjR{\Monoty}}{p_2})
  \end{mathpar}

  \RuleSection{
    \RuleForm{
      \Env{g} = \Venv \qquad
      \TyOfGt{\GuardTy} = t
    }
  }{Guard environments and types for guard types}
  \begin{mathpar}
    \Env{\Is{\GuardTy}{x}} = \{ x : \TyOfGt{\GuardTy} \}

    \Env{\Is{\GuardTy}{v}} =
    \Env{\GuardOracle} =
    \Env{\GuardTrue} = \EmptyEnv

    \Env{\GuardAnd{g_1}{g_2}} = \Env{g_1} \InterEnv \Env{g_2}

    \\

    \TyOfGt{\Int} = \Int

    \TyOfGt{\Float} = \Float

    \TyOfGt{\AnyPair} = \Pairty{\ErlTop}{\ErlTop}

    \TyOfGt{\AnyFun} = \ErlBot \to \ErlTop
  \end{mathpar}

  \RuleSection{
    \RuleForm{\Venv \InterEnv{} \Venv = \Venv}
  }{
    Intersection of environments
  }
  \begin{mathpar}
    (\Venv_1 \InterEnv \Venv_2)(x) =
    \begin{cases}
      \Venv_1(x)  & \textrm{if}~ x \in \Dom{\Venv_1} ~\textrm{and}~ x \notin \Dom{\Venv_2} \\
      \Venv_2(x)  & \textrm{if}~ x \notin \Dom{\Venv_1} ~\textrm{and}~ x \in \Dom{\Venv_2} \\
      \Venv_1(x) \Inter \Venv_2(x)  & \textrm{if}~ x \in \Dom{\Venv_1} ~\textrm{and}~ x \in \Dom{\Venv_2}
    \end{cases}
  \end{mathpar}
}

\Cref{f:aux-pattern-types} contains definitions for computing the type environment
resulting from a guarded pattern:
$\PatEnv{\Monoty}{\Pg}$ and $\PatEnv{\Monoty}{p}$ compute
an environment $\Venv$ for the variables bound in $\Pg$ and $p$, respectively, when being
matched against a value of type $\Monoty$, and $\Env{g}$ computes
the environment resulting from guard $g$.
The definitions are straightforward except for three situations.
First, if two environments
$\Venv_1$ and $\Venv_2$
contain a type for the same variable, we intersect the two types. Intersection
for environments is written
$\Venv_1 \InterEnv \Venv_2$ and defined at the bottom of \Cref{f:aux-pattern-types}.
Second, $\TyOfGt{\GuardTy}$ yields the type $t$ corresponding to guard type $\GuardTy$. In its definition,
$\Pairty{\ErlTop}{\ErlTop}$ and $\ErlBot \to \ErlTop$ are the supertypes of all pair and function
types, respectively.
Third, the definition of $\PatEnv{\Monoty}{(p_1, p_2)}$
uses $\ProjL{\Monoty}$ and $\ProjR{\Monoty}$
to project on the left and right component of a pair type $\Monoty \IsSubty \Pairty{\ErlTop}{\ErlTop}$
(see \Cref{sec:appendix-type-safety-declarative-system}, \Cref{prop:projection}).

\boxfigSingle{f:aux-pattern-types2}{Typing guarded patterns}{
  \[
  \begin{array}{rr@{~}r@{~}ll}
    \textrm{Direction of pattern types} & \DirPatTy & ::= & {\DirPatTyUp} \mid {\DirPatTyDown}
  \end{array}
  \]
  \RuleSection{
    \RuleForm{
      \PgTy{\Pg}{\DirPatTy} = \Monoty
    }
  }{
    Potential and accepting types of guarded patterns
  }
  \begin{mathpar}
    \PgUpperTy{\WithGuard{p}{g}} = {
      \begin{cases}
        \PatUpperTy{p}{\Env{g}} & \textrm{if}~ \SafeUp{g} ~\textrm{and}~\Env{g} \neq \ErlBot\\
        \ErlBot & \textrm{otherwise}
      \end{cases}
    }
    \\
    \PgLowerTy{\WithGuard{p}{g}} = {
      \begin{cases}
        \PatLowerTy{p}{\Env{g}}
        & \textrm{if}~ \SafeDown{g}{\PatBound{p}}
          \AndText \Env{g} \neq \ErlBot
          \AndText p~\mathrm{linear} \\
        \ErlBot & \textrm{otherwise}
      \end{cases}
    }
  \end{mathpar}

  \RuleSection{
    \RuleForm{
      \PatTy{p}{\Venv} = \Monoty
    }
  }{
    Potential and accepting types of patterns
  }
  \begin{mathpar}
    \PatUpperTy{v}{\Venv} = {
      \begin{cases}
        \TyOfConst{\Const} & \textrm{if}~ v = \Const\\
        \ErlTop            & \textrm{otherwise}
      \end{cases}
    }

    \PatUpperTy{\Capt{x}}{\Venv} = \ErlTop

    \PatLowerTy{v}{\Venv} = {
      \begin{cases}
        \Const & \textrm{if}~ v = \Const \AndText \TyOfConst{\Const} = \Const\\
        \ErlBot     & \textrm{otherwise}
      \end{cases}}

    \PatLowerTy{\Capt{x}}{\Venv} = \ErlBot

    \PatTy{\Wildcard}{\Venv} = \ErlTop

    \PatTy{x}{\Venv} = {
      \begin{cases}
        \Venv(x) & \textrm{if}~ x \in \Dom{\Venv}\\
        \ErlTop     & \textrm{otherwise}
      \end{cases}
    }

    \PatTy{\Pair{p_1}{p_2}}{\Venv} = \Pairty{\PatTy{p_1}{\Venv}}{\PatTy{p_2}{\Venv}}
  \end{mathpar}

  \RuleSection{
    \RuleForm{
      \SafeUp{g} \quad
      \SafeDown{g}{X}
    }
  }{
    Safety of guards
  }
  \begin{mathpar}
    \SafeUp{\Is{\GuardTy}{x}} = \SafeUp{\GuardOracle} = \SafeUp{\GuardTrue} = \True

    \SafeUp{\Is{\GuardTy}{v}} = {
      \begin{cases}
        \True  & \textrm{if}~v \ValMatches \GuardTy \\
        \False   & \textrm{otherwise}
      \end{cases}
    }

    \SafeUp{\GuardAnd{g_1}{g_2}} = \SafeUp{g_1} \AndText \SafeUp{g_2}

    \\

    \SafeDown{\Is{\GuardTy}{x}}{X} = {
      \begin{cases}
        \True  & \textrm{if}~ x \in X \\
        \False   & \textrm{otherwise}
      \end{cases}
    }

    \SafeDown{\Is{\GuardTy}{v}}{X} = {
      \begin{cases}
        \True  & \textrm{if}~ v \ValMatches \GuardTy \\
        \False   & \textrm{otherwise}
      \end{cases}
    }

    \SafeDown{\GuardTrue}{X} = \True

    \SafeDown{\GuardOracle}{X} = \False

    \SafeDown{\GuardAnd{g_1}{g_2}}{X} =
    \SafeDown{g_1}{X} \AndText \SafeDown{g_2}{X}
  \end{mathpar}

  \RuleSection{
    \RuleForm{
      \PatBound{p} \in \PowersetFin(\AllVar) \quad
      \TyOfConst{\Const} = b
    }
  }{
    Variables bound by patterns, types of constants
  }
  \begin{mathpar}
    \PatBound{p} =
    \begin{cases}
      \{x\}  & \textrm{if}~ p = x \\
      \PatBound{p_1} \cup \PatBound{p_2} & \textrm{if}~ p = (p_1, p_2)\\
      \emptyset & \textrm{otherwise}
    \end{cases}

    \TyOfConst{\Const} =
    \begin{cases}
      \Const  & \textrm{if}~\Const~\textrm{is an int}\\
      \Float  & \textrm{otherwise}
    \end{cases}
  \end{mathpar}
}


\Cref{f:aux-pattern-types2} defines two types for a guarded pattern $\Pg$:
\begin{itemize}
\item $\PgUpperTy{\Pg}$ is the \emph{potential type}:
  if some expression matches $\Pg$,
  then the expression must be of this type.
\item $\PgLowerTy{\Pg}$ is the \emph{accepting type}:
  if some expression has this type, then $\Pg$ definitely matches the expression.
\end{itemize}

Before explaining these two kinds of types, we require some more
auxiliaries. First, defined at the bottom of \Cref{f:aux-pattern-types2},
$\PatBound{p}$ denotes the set of expression variables bound by
pattern $p$ and $\TyOfConst{\Const}$ denotes the type of some constant $\Const$.

Second, $\PatUpperTy{p}{\Venv}$ and $\PatLowerTy{p}{\Venv}$ denote the potential and accepting type, respectively,
of some pattern $p$ with respect to type environment $\Venv$. In practice, $\Venv$ is the type
environment resulting from the guard of the pattern.

The potential type of some pattern $p$ has the same intuition as for guarded patterns: if an expression
matches $p$, then the matching expression has type $\PatUpperTy{p}{\Venv}$. Thus,
if $p$ is a (numeric) constant $\Const$, than a matching expression must be of type $\TyOfConst{\Const}$.
If $p$ is a function value or a captured variable, we overapproximate its potential type with the top type $\ErlTop$.
Note that pattern matching on functions is not very useful in practice, but we include it for completeness.

For the accepting type, if some expression is of type $\PatLowerTy{p}{\Venv}$, then
$p$ definitely matches the expression. Thus, if $p$ is an int constant $\Const$, then the accepting
type is $\Const$. For float constants, there are no singleton types, so we cannot
give a precise accepting type. In this case, as well as for functions and captured variables,
we conservatively use the bottom type $\ErlBot$ as the accepting type.

Wildcard patterns always match, so both the potential and the accepting type are $\ErlTop$. Variable
patterns also match unconditionally; if the variable is bound in $\Venv$, we use the type from $\Venv$
as the potential and accepting type, otherwise $\ErlTop$. For pair types, we recurse.

The third and last auxiliary is the predicate $\SafeSym$, which
provides a static approximation on the result of some guard.
\begin{itemize}
\item $\SafeUp{g}$ is $\False$ if $g$ will never succeed, otherwise $\True$. Essentially, the
  outcome $\False$ is only possible if $g$ contains a type test $\Is{\GuardTy}{v}$ such
  that $v \ValMatches \Gt$ does not hold. (See \Cref{f:dynamic-patterns} for the definition of $\ValMatches$).
\item $\SafeDown{g}{X}$ is $\True$ if $g$ will always succeed, otherwise $\False$. Here, $X$
  is the set of variables bound by the pattern $p$ in the pattern $\Pg$. The interesting case in its
  definition is where $g$ is of the form $\Is{\Gt}{x}$. If $x \in X$,
  then the result is $\True$ because $x$ is bound by the pattern belonging to $g$ and
  the accepting type of that pattern restricts
  the type of $x$ such that the guard will succeed. Otherwise $x \notin X$, so we need
  to be conservative and the result is $\False$.
\end{itemize}

Finally, we come back to the potential and accepting type of some guarded
pattern $\Pg = \WithGuard p g$ (see top of \Cref{f:aux-pattern-types2}).
The potential type $\PgUpperTy{\Pg}$ is the potential type of $p$ with respect to environment
$\Env{g}$, provided the guard could possibly match. Otherwise, the potential type is $\ErlBot$.
For guard $g$ to possibly match, we require $\SafeUp{g}$ and $\Env{g} \neq \ErlBot$.
The latter ensures that environment $\Env{g}$ binds all variables to non-empty types;
otherwise, $g$ contains conflicting type tests for some variable $x$,
leading to $\Env{g}(x) = \ErlBot$.

The accepting type $\PgLowerTy{\Pg}$ is the accepting type of $p$ with respect to environment
$\Env{g}$, provided the guard always succeeds for the set of variables bound by $p$.
For this, we require $\SafeDown{g}{\PatBound{p}}$ and $\Env{g} \neq \ErlBot$ and
$p$ must be linear. (A linear pattern contains no duplicate bindings for the same variable.)
Two things are noteworthy here. First, guard $g$ is marked as safe, even if it contains a type test
$\Is{\Gt}{x}$ for a variable bound by $p$. This is ok because $\PatLowerTy{p}{\Env{g}}$ then
restricts the type of $x$ to $\TyOfGt{\Gt}$. Second, linearity of $p$ is required because
otherwise values matched against the same variable must be equal (see \Cref{f:dynamic-patterns},
rule \Rule{match-equal}). However, deciding such an equality statically is beyond the scope
of our type system.

We finish the explanation of the typing rules for guarded patterns with three examples.
Consider the following case-expression, where $x$ and $y$ are variables of unknown type.
$$
\Case{(x, y)}{\PatCls{\WithGuard{(1,z)}{\Is{\Int}{z}}}{\ldots}}
$$
Define $g_1 := \Is{\Int}{z}$. The environment of the guard is $\Env{g_1} = \{z : \Int\}$.
Further, we have $\SafeUp{g_1} = \True = \SafeDown{g_1}{\{z\}}$. Thus, the potential
and the accepting types are equal:
\[
\PgTy{\WithGuard{(1, z)}{g_1}}{\rho} =
\PatTy{(1, z)}{\{z: \Int\}} = \Pairty{1}{\Int} \qquad (\textrm{for}~\rho = \DirPatTyUp,\DirPatTyDown)
\]

We now change the guard to a type test on $y$:
$$
\Case{(x, y)}{\PatCls{\WithGuard{(1,z)}{\Is{\Int}{y}}}{\ldots}}
$$
Define $g_2 := \Is{\Int}{y}$. The environment of the guard is $\Env{g_2} = \{y : \Int\}$.
Further, $\SafeUp{g_2} = \True$ and $\SafeDown{g_2}{\{z\}} = \False$ because there is
a type test on a variable not bound by the pattern. Thus, we get
\[
  \begin{array}{l@{~}l@{~}l}
    \PgUpperTy{\WithGuard{(1, z)}{g_2}} &=
    \PatUpperTy{(1, z)}{\{y: \Int\}} &= \Pairty{1}{\ErlTop} \\
    \PgLowerTy{\WithGuard{(1, z)}{g_2}} &= \ErlBot
  \end{array}
\]
Our type system does not provide more precise types for such situations because
an empirical evaluation demonstrates that type tests in real-world Erlang programs
are almost always on variables bound by the pattern (see \Cref{sec:empirical-patterns}).

For the last example, we again change the guard. It now includes $\GuardOracle$, representing
a part of the guard that cannot be analyzed statically.
$$
\Case{(x, y)}{\PatCls{\WithGuard{(1,z)}{\GuardAnd{\Is{\Int}{z}}{\GuardOracle}}}{\ldots}}
$$
Define $g_3 := \GuardAnd{\Is{\Int}{z}}{\GuardOracle} $. The environment is $\Env{g_3} = \{z : \Int\}$,
and we have
$\SafeUp{g_3} = \True$ and $\SafeDown{g_3}{\{z\}} = \False$. Thus
\[
  \begin{array}{l@{~}l@{~}l}
    \PgUpperTy{\WithGuard{(1, z)}{g_3}} &=
    \PatUpperTy{(1, z)}{\{z: \Int\}} &= \Pairty{1}{\Int} \\
    \PgLowerTy{\WithGuard{(1, z)}{g_3}} &= \ErlBot
  \end{array}
\]

All expressions matching the pattern and passing the guard must hence be of type $\Pairty{1}{\Int}$
(potential type).
But due to the oracle, we cannot statically guarantee whether any expression of this type will
actually pass the guard (accepting type $\ErlBot$).

\subsubsection{Typing Expressions and Programs}

\boxfigSingle{f:decl-typing}{Typing rules for \MinErl}{
  \RuleSection{
    \RuleForm{
      \ExpTy{\Venv}{e}{\Monoty}
    }
  }{
    Typing expressions
  }
  \begin{mathpar}
    \inferrule[d-var]{
      \Venv(x) = \Monoty\\
      x \notin \Dom{\Senv}
    }{
      \ExpTy{\Venv}{x}{\Monoty}
    }

    \inferrule[d-var-poly]{
      \Monoty \in \Inst{\Senv(x)}\\
      x \notin \Dom{\Venv}
    }{
      \ExpTy{\Venv}{x}{\Monoty}
    }

    \inferrule[d-const]{}{
      \ExpTy{\Venv}{\Const}{\TyOfConst{\Const}}
    }

    \inferrule[d-abs]{
      \ExpTy{\Venv, x:\Monoty_1}{e}{\Monoty_2}
    }{
      \ExpTy{\Venv}{\Abs{x} e}{\Monoty_1 \to \Monoty_2}
    }

    \inferrule[d-app]{
      \ExpTy{\Venv}{e_1}{\Monoty' \to \Monoty}\\
      \ExpTy{\Venv}{e_2}{\Monoty'}
    }{
      \ExpTy{\Venv}{\App{e_1}{e_2}}{\Monoty}
    }

    \inferrule[d-pair]{
      \ExpTy{\Venv}{e_i}{\Monoty_i}\quad(i=1,2)
    }{
      \ExpTy{\Venv}{\Pair{e_1}{e_2}}{\Pairty{\Monoty_1}{\Monoty_2}}
    }

    \inferrule[d-case]{
      \ExpTy{\Venv}{e}{\Monoty}\\
      \Monoty \IsSubty \UnionBig\nolimits_{i \in I} \PgLowerTy{\Pg_i}\\\\
      \Forall{i \in I}\quad
      \Free{\Pg_i} \subseteq \Dom{\Senv} \cup \Dom{\Venv}\\
      \Monoty_i = (\Monoty \WithoutTy \UnionBig\nolimits_{j < i} \PgLowerTy{\Pg_j})
      \Inter \PgUpperTy{\Pg_i}\\
      \Venv_i = \Venv \InterEnv (\PatEnv{\Monoty_i}{\Pg_i})
      \\\\
      {
        \begin{cases}
          \Monoty_i' = \ErlBot ~\textrm{and}~
          \Free{e_i} \subseteq \Dom{\Senv} \cup \Dom{\Venv_i}
          & \textrm{if}~ \Monoty_i \IsSubty \ErlBot
          \\
          \ExpTy{\Venv_i}{e_i}{\Monoty_i'}
          & \textrm{otherwise}
        \end{cases}
      }
    }{
      \ExpTy{\Venv}{\Case{e}\Multi{(\Pg_i \to e_i)}}{\UnionBig\nolimits_{i \in I}{\Monoty_i'}}
    }

    \inferrule[d-sub]{
      \ExpTy{\Venv}{e}{\Monoty'}\\\\
      \Monoty' \IsSubty \Monoty
    }{
      \ExpTy{\Venv}{e}{\Monoty}
    }
  \end{mathpar}

  \RuleSection{
    \RuleForm{
      \DefOk{\Senv}{\DefSym}
    }
  }{
    Typing definitions
  }
  \begin{mathpar}
    \inferrule[d-def-no-annot]{
      \Senv(x) = \Monoty\\
      \ExpTy{\EmptyEnv}{\Abs{y}{e}}{t}
    }{
      \DefOk{\Senv}{\DefNt{x}{\Abs{y}{e}}}
    }

    \inferrule[d-def-annot]{
      \Senv(x) = \Polyty \\
      \FreeTyVars{\Polyty} = \emptyset\\
      \Polyty = \TyScm{\TyvarSet}{\InterBig\nolimits_{i \in I}{(\Monoty_i' \to \Monoty_i)}}\\
      \Forall{i \in I}\quad
      \ExpSchemaTy{\Senv}{\{y:\Monoty_i'\}}{e}{\Monoty_i}
    }{
      \DefOk{\Senv}{\Def{x}{\Polyty}{\Abs{y}{e}}}
    }

  \end{mathpar}

  \RuleSection{
    \RuleForm{
      \Envs{\DefSym} = \MetaPair{\Senv}{\Senv}
    }
  }{
    Environments from definitions
  }
  \begin{mathpar}
    \inferrule[d-envs-no-annot]{
      \TyvarSet = \FreeTyVars{\Monoty}
    }{
      \Envs{\DefNt{x}{e}} = \MetaPair{\{x : \Monoty\}}{\{x : \TyScm{\TyvarSet}\Monoty\}}
    }

    \inferrule[d-envs-annot]{
      \Polyty = \TyScm{\TyvarSet}{t}\\
      \TyvarSet = \FreeTyVars{\Monoty}
    }{
      \Envs{\Def{x}{\Polyty}{e}} = \MetaPair{\{x : \Polyty\}}{\{x : \Polyty\}}
    }
  \end{mathpar}

  \RuleSection{
    \RuleForm{
      \ProgTy{\Prog}{\Monoty}
    }
  }{
    Typing programs
  }
  \begin{mathpar}
    \inferrule[d-prog]{
      \Forall{i \in I}~\Envs{\DefSym_i} = \MetaPair{\Senv_i}{\Senv_i'}\\
      \DefOk{\Senv_{i \in I}}{\DefSym_i}\\
      \ExpSchemaTy{\Senv'_{i \in I}}{\EmptyEnv}{e}{\Monoty}
    }{
      \ProgTy{\Letrec{\Multi{\DefSym}}{e}}{\Monoty}
    }
  \end{mathpar}
}

\Cref{f:decl-typing} shows the typing rules for \MinErl{}.
Judgment $\ExpTy{\Venv}{e}{\Monoty}$ asserts that under environments $\Senv$ and $\Venv$ expression $e$
has type $\Monoty$. The motivation for having two environments is that $\Senv$ contains polymorphic
bindings from top-level $\kw{letrec}$ definitions, whereas $\Venv$ contains monomorphic bindings for
variables bound by $\lambda$ or pattern matching.

The rules for constants \Rule{d-const}, abstractions \Rule{d-abs}, applications \Rule{d-app},
pairs \Rule{d-pair}, and the subsumption rule \Rule{d-sub} are standard.
There are two rules for variables.
Rule \Rule{d-var} looks up the type of a variable in $\Venv$, whereas rule \Rule{d-var-poly} retrieves
a type schema from $\Senv$ and instantiates it. The two rules require that a variable is contained in either
$\Senv$ or $\Venv$ but never in both. We can always satisfy this requirement by renaming bound variables.
The set of instances of a type schema is defined as follows:
$$
    \Inst{\TyScm{\TyvarSet}{\Monoty}} =
    \{ \Monoty\Tysubst \mid \Dom{\Tysubst} = \TyvarSet \}
$$

Here, $\Tysubst = \Multi{[\Monoty_i/\Tyvar_i]}$ denotes the capture-avoiding type substitution
of type variables $\Tyvar_i$ with types $\Monoty_i$.
We write $\Monoty\Tysubst$ for the application of $\Tysubst$ to some type $\Monoty$,
$\Dom{\Tysubst}$ for its domain, and $\FreeTyVars{\Tysubst}$ for the set of type variables
free in the range of $\Tysubst$.

Rule \Rule{d-case} types \kw{case}-expressions. The first premise assigns type $t$ to the scrutiny $e$.
The second premise checks that the branches of the case are exhaustive: $t$ has to be a subtype
of the union $\UnionBig\nolimits_{i \in I}\PgLowerTy{\Pg_i}$ of the accepting types of all branches.

The remaining premises have to hold for all branches $i \in I$.
The restriction on $\Free{\Pg_i}$ ensures that all variables used in $\Pg_i$ are defined somewhere.
The type $t_i$ is
the input type $t_i$ for branch $i$; that is, the type we can assume for the scrutiny if the $i$-th branch
matches. To compute $t_i$,
we subtract the accepting types $\UnionBig\nolimits_{j < i} \PgLowerTy{\Pg_j}$
of all preceding branches from $t$ (the full type of the scrutiny),
as these branches could not have been matched when matching the $i$-th branch. Then
we intersect with the potential type $\PgUpperTy{\Pg_i}$
of the branch itself because matching
implies that $t_i$ is a subtype of the potential type.

$\Venv_i$ is the environment for typing the body of the $i$-th branch. It is the intersection of $\Venv$ with the
environment obtained by matching $\Pg_i$ against a value of type $t_i$, written
$\Venv \InterEnv \PatEnv{\Monoty_i}{\Pg_i}$.
We intersect the two environments
so that the type of a variable already in $\Venv$ can be further refined by pattern matching.
See \Cref{f:aux-pattern-types} for the definition of $\InterEnv$ and $\PatEnv{t_i}{\Pg_i}$.


The last premise checks the body of the $i$-th branch by assigning it the output type $t_i'$.
The type of the whole \kw{case}-expression is then the union of all output types.
If input type $t_i$ is bottom,
then the branch can never match, so the output type $t_i'$ is also bottom.
When typing a function against an intersection type,
skipping unmatched branches is essential. Otherwise, type errors in some branch
would be reported, although the branch can never match for a specific part of the intersection,
see below for an example.\footnote{%
  Our implementation reports an error if it detects that a branch does not match for
  all variants of an intersection or if there is no intersection at all.}
If input type $t_i$ is not bottom, then we type the body $e_i$ of the branch in the refined environment
$\Venv_i$.

The auxiliary judgment $\DefOk{\Senv}{\DefSym}$ checks correctness of definition $\DefSym$
under scheme environment $\Senv$. Type reconstruction with intersection types is undecidable in
general~\citep{journals/tcs/Bakel95},
so our system never guesses such types. Consequently, rule \Rule{d-def-no-annot} for definitions
without type annotations simply checks that right-hand side of the definition
has the expected type from the environment. Rule \Rule{d-def-annot} supports
intersection types in annotations by checking the function body against each type
of the intersection. As $\Senv$ only contains bindings for top-level functions,
the type scheme $\Polyty$ must be closed.

Rule \Rule{d-prog} type checks whole programs. For each definition $\DefSym_i$, it computes
two scheme environments $\MetaPair{\Senv_i}{\Senv_i'}$. For definitions with type annotations
both environments are identical. But for definitions without type annotations, $\Senv_i$ contains
only a monomorphic type because
type reconstruction with polymorphic recursion is undecidable~\citep{journals/toplas/Henglein93}.
The potentially monomorphic types $\Senv_{i \in I}$ are then used to type the definitions,
whereas the generalization $\Senv'_{i \in I}$ is used to type the main expression $e$.

\subsubsection{Example}
\label{ex:formal-filtermap}
We conclude the explanation of the typing rules of \MinErl{} by sketching how typing
proceeds for the \lstinline{filtermap} example from \Cref{sec:recursive-data-types}.
We suggested the following precise type for \lstinline{filtermap}:
\begin{lstlisting}[language=Erlang,numbers=none]
-spec filtermap(fun((T) -> boolean()), [T]) -> [T]
    ; (fun((T) -> {true, U} | false), [T]) -> [U]
    ; (fun((T) -> {true, U} | boolean()), [T]) -> [T | U].
\end{lstlisting}


We now show an encoding of \lstinline{filtermap} in \MinErl{}.
Lists are represented as nested pairs with the singleton int
$0$ as terminator. The type for lists then
satisfies the recursive equation $\ListTy{\Tyvar} = 0 \Union (\Pairty{\Tyvar}{\ListTy{\Tyvar}})$.
Integers $0$ and $1$ act as booleans false and true, respectively, so $\BoolTy$ abbreviates the type $0 \Union 1$.

\noindent
\scalebox{0.9}{%
$$
\begin{array}[t]{@{}l@{}}
\kw{letrec}\\\phantom{x}
  \Filtermap \\\phantom{x}
  \begin{array}[t]{@{~}r@{~}l@{}}
    :& \forall \Tyvar, \TyvarAux .
      \begin{array}[t]{@{}l@{}l@{}}
      (      & (\alpha \to \BoolTy) \to \ListTy{\alpha} \to \ListTy{\alpha} \\
      ~\Inter~ & (\Tyvar \to ((\Pairty{1}{\TyvarAux}) \Union 0)) \to \ListTy{\Tyvar} \to \ListTy{\TyvarAux} \\
      ~\Inter~ & (\Tyvar \to ((\Pairty{1}{\TyvarAux}) \Union \BoolTy)) \to \ListTy{\Tyvar} \to \ListTy{(\alpha \Union \beta)})
      \end{array}
  \\
  =& \lambda f\, l.\,
  \kw{case}~{l}~\kw{of}~
  {
  \begin{array}[t]{@{}l@{}}
    0 \to 0\\
    (x, l') \to~\\\quad
    {
    \begin{array}[t]{@{}l@{}}
      \kw{case}~{f~x}~\kw{of}~
      {
      \begin{array}[t]{@{}l@{}}
        0 \to \Filtermap~f~l'\\
        1 \to (x, \Filtermap~f~l')\\
        (1, y) \to (y, \Filtermap~f~l')
      \end{array}
      }
    \end{array}
    }
  \end{array}
  }
  \end{array}
\\\kw{in}\,\ldots
\end{array}
$$
}

Type checking \emph{filtermap} verifies its body against each of the three arrow
types in the intersection. Here, we concentrate on the second arrow type:
$$(\Tyvar \to ((\Pairty{1}{\TyvarAux}) \Union 0)) \to \ListTy{\Tyvar} \to \ListTy{\TyvarAux}$$

The initial environment is
$\Venv_0 = \{ f : \Tyvar \to ((\Pairty{1}{\TyvarAux}) \Union 0), l : \ListTy{\Tyvar} \}$.
Checking the first clause of the outer case-expression is straightforward.
After performing pattern matching for the second clause, we
arrive at the environment
$\Venv = \Venv_0 \cup \{x : \Tyvar, l' : \ListTy{\Tyvar} \}$.

We show how rule \Rule{d-case} type checks the inner case-expression
in this environment. Scrutiny $e = f~x$ has type $t = (\Pairty{1}{\TyvarAux}) \Union 0$.
The potential and accepting types of the three patterns of the inner case are:
\[
\begin{array}{r@{~}l@{~}l}
\PgLowerTy{0} & = \PgUpperTy{0} & = 0 \\
\PgLowerTy{1} & = \PgUpperTy{1}  & = 1 \\
\PgLowerTy{(1, y)} & = \PgUpperTy{(1, y)} & = \Pairty{1}{\ErlTop}
\end{array}
\]
Thus, $t \IsSubty 0 \Union 1 \Union \Pairty{1}{\ErlTop}$,
so the branches of the case are exhaustive. The input types of the branches are as follows:
\[
\begin{array}{r@{~}l@{~}l@{~}l}
  t_1 & = t \Inter 0 & = 0\\
  t_2 & = (t \WithoutTy 0) \Inter 1 & = \ErlBot\\
  t_3 & = (t \WithoutTy (0 \Union 1)) \Inter (\Pairty{1}{\ErlTop})
      & = (\Pairty{1}{\beta}) \Inter (\Pairty{1}{\ErlTop}))
      & = \Pairty{1}{\TyvarAux}
\end{array}
\]

We see that $t_2 = \ErlBot$, which reflects our intuition that the second branch
can never be taken when type checking against the second variant of the intersection.
Thus, output type $t_2' = \ErlBot$. For the two other output types, we have
$t_1' = \ListTy{\TyvarAux} = t_3'$. Here, the third branch is checked under the extended
environment $\Venv,y:\TyvarAux$.

Finally, we have as the type for the whole case-expression
$$t_1' \Union t_2' \Union t_3' = \ListTy{\TyvarAux} \Union \ErlBot \Union \ListTy{\TyvarAux} = \ListTy{\TyvarAux}$$
as required.


\subsection{Type Soundness}
\label{sec:soundness}

Type soundness is an important property of any type system.
Our system builds on the type system for polymorphic variants by~\citet{conf/icfp/CastagnaP016},
which has been proven to be sound.
We prove type soundness using the common syntactic approach of~\citet{journals/iandc/WrightF94}.
We refer to \Cref{sec:appendix-type-safety-declarative-system} for the full proofs.

\begin{theorem}[Progress for programs]
  Let $\Prog = \Letrec{\FunEnv}{e}$ be a program such that $\ProgTy{\Prog}{\Monoty}$
  holds for some $\Monoty$.
  Then, either $e$ is a value or there exists an expression $e'$
  such that $\Reduce{}{\Letrec{\FunEnv}{e}}{\Letrec{\FunEnv}{e'}}$.
\end{theorem}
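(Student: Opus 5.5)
We reduce the statement to a progress lemma for expressions and prove that lemma by induction on the typing derivation. From $\ProgTy{\Prog}{\Monoty}$, rule \Rule{d-prog} gives $\ExpSchemaTy{\Senv'}{\EmptyEnv}{e}{\Monoty}$, where $\Senv' = \Senv'_{i\in I}$ has domain exactly $\Dom{\FunEnv}$. It also gives, for each definition, $\DefOk{\Senv}{\DefSym_i}$.

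\emph{Lemma (progress for expressions).} Suppose $\ExpSchemaTy{\Senv'}{\EmptyEnv}{e}{\Monoty}$ with $\Dom{\Senv'} = \Dom{\FunEnv}$, and every $\FunEnv(x)$ is a $\lambda$-abstraction. Then either $e$ is a value, or $\Reduce{\FunEnv}{e}{e'}$ for some $e'$.

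Applying \Rule{red-prog} to the lemma yields the theorem. The induction on the derivation of $e$'s type goes case by case.
\begin{itemize}
\item \Rule{d-sub} follows directly from the induction hypothesis.
\item \Rule{d-var} is impossible, since $\Venv = \EmptyEnv$.
\item \Rule{d-var-poly} gives $x \in \Dom{\FunEnv}$, so \Rule{red-var} fires.
\item Constants and abstractions are already values.
\item For pairs and applications, if a subexpression is not a value, the induction hypothesis gives a step, which lifts through the contexts $\App{\EvalCtx}{e}$, $\App{v}{\EvalCtx}$, $\Pair{\EvalCtx}{e}$ and $\Pair{v}{\EvalCtx}$ using \Rule{red-context}. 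If both pair components are values, the pair is a value.
\item For an application $\App{v_1}{v_2}$ we need a \emph{canonical forms} lemma: a closed value typed (under $\Senv'$) with a subtype of an arrow $\Monoty' \to \Monoty$ is a $\lambda$-abstraction. Then \Rule{red-abs} applies.
\end{itemize}

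The canonical forms lemma is proved by inversion, pushing \Rule{d-sub} through. A constant has type $\TyOfConst{\Const} \IsSubty \Int\Union\Float$. A pair has type $\le \Pairty{\ErlTop}{\ErlTop}$. In the set-theoretic model these are disjoint from $\ErlBot\to\ErlTop$. So if such a type were also below $\Monoty'\to\Monoty \IsSubty \ErlBot\to\ErlTop$, it would be empty. But the types of constants are non-empty, and we need a small auxiliary fact that a typable closed value never has an empty type. This fact is shown by induction on values; for abstractions, an arrow type is never empty. Both facts are standard properties of the subtyping relation of \cite{conf/popl/Castagna0XA15}, which we assume.

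The main obstacle is the \Rule{d-case} case. If the scrutinee is not a value, we step inside $\Case{\EvalCtx}{\Cls_{i\in I}}$. Otherwise the scrutinee is a closed value $v$ of type $\Monoty$ with $\Monoty \IsSubty \UnionBig_{i} \PgLowerTy{\Pg_i}$. We must show that some clause matches, i.e.\ that some $\PatSubst{v}{\Pg_j} = \Valsubst$ while all earlier $\PatSubst{v}{\Pg_i}$ are $\PatSubstFail$; note that matching always terminates with one of the two outcomes. It suffices that $v$ matches some clause, because then we take the least such $j$. We plan to prove an \emph{accepting-type lemma}: if $\ExpSchemaTy{\Senv'}{\EmptyEnv}{v}{\PgLowerTy{\Pg}}$, then $\PatSubst{v}{\Pg} \neq \PatSubstFail$. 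Combined with a value-typing fact, this finishes the case. That fact says that if a value has type $\le \Monoty_1\Union\Monoty_2$ then it has type $\Monoty_1$ or type $\Monoty_2$; for our restricted accepting types this can be proved by induction on the structure of $v$ using the fact that pair types are products. The accepting-type lemma itself is proved by induction on $p$.
\begin{itemize}
\item If $\PgLowerTy{\Pg} = \ErlBot$, the premise is impossible by non-emptiness.
\item Otherwise the guard is $\SafeSym^{\DirPatTyDown}$ with respect to $\PatBound{p}$, $\Env{g}\neq\ErlBot$, and $p$ is linear.
\begin{itemize}
\item An int constant pattern $\Const$ with $v$ of type $\Const$ forces $v = \Const$, since the singleton type is inhabited only by $\Const$.
\item Wildcards and variables always match.
\item For pairs we use projection to split the type, and linearity makes $\Valsubst_1 \ValEq \Valsubst_2$ trivial.
\item For guards, each $\Is{\GuardTy}{x}$ has $x \in \PatBound{p}$. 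The pattern type at $x$ is $\Env{g}(x) \IsSubty \TyOfGt{\GuardTy}$, so $\Valsubst(x) \ValMatches \GuardTy$ by canonical forms for guard types. Each $\Is{\GuardTy}{v'}$ succeeds by $\SafeSym^{\DirPatTyDown}$, and $\GuardOracle$ is excluded.
\end{itemize}
\end{itemize}
The delicate points are this guard bookkeeping and the value-level splitting of unions. Note that progress does not need the branch typings $\Monoty_i'$ at all.
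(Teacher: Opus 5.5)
Your proposal is correct and follows essentially the paper's route. The paper inverts \Rule{d-prog}, proves progress for expressions by induction on the typing derivation, and uses canonical forms for \Rule{d-app}. In the \Rule{d-case} case it combines union-splitting of a value over accepting types (\Cref{lem:value-has-one-union-type}) with the accepting-type lemma (\Cref{lem:gp-typing}(i), whose guard part is \Cref{lem:safe-implies-eval-guard}) before picking the least matching clause, exactly as you do.

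One small point: for the earlier clauses you need definedness of matching, not termination. For \Rule{match-capture} this relies on the premise $\Free{\Pg_i} \subseteq \Dom{\Senv}$ of \Rule{d-case} (with $\Venv = \EmptyEnv$) together with $\Dom{\Senv'} = \Dom{\FunEnv}$, cf.\ \Cref{lem:dyn-match-total}.
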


\begin{theorem}[Subject reduction for programs]
  Let $\Prog$ be a program and $\Monoty$ be a type such that $\ProgTy{\Prog}{\Monoty}$.
  If $\Reduce{}{\Prog}{\Prog'}$ for some program $\Prog'$, then $\ProgTy{\Prog'}{\Monoty}$.
\end{theorem}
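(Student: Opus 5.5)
The plan is to reduce subject reduction for programs to subject reduction for expressions under the fixed scheme environment given by the definitions. Write $\Prog = \Letrec{\FunEnv}{e}$. The only program rule is \Rule{red-prog}, so $\Prog' = \Letrec{\FunEnv}{e'}$ with $\Reduce{\FunEnv}{e}{e'}$. Inverting \Rule{d-prog} yields environments $\MetaPair{\Senv_i}{\Senv_i'}$ with $\DefOk{\Senv_{i\in I}}{\DefSym_i}$ and $\ExpSchemaTy{\Senv'}{\EmptyEnv}{e}{\Monoty}$, where $\Senv' = \Senv'_{i\in I}$. The definitions are unchanged, so it suffices to show $\ExpSchemaTy{\Senv'}{\EmptyEnv}{e'}{\Monoty}$ and reapply \Rule{d-prog}. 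I will prove the expression-level statement: if $\ExpSchemaTy{\Senv'}{\Venv}{e}{\Monoty}$ and $\Reduce{\FunEnv}{e}{e'}$, then $\ExpSchemaTy{\Senv'}{\Venv}{e'}{\Monoty}$. The induction is on the evaluation context. The context cases are routine, because typing is compositional: inversion up to \Rule{d-sub} isolates the subderivation for the hole. The \kw{case} context needs care only in that the scrutinee's type $t$ may simply be kept when re-deriving the rule.

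The supporting lemmas, proved first, are as follows.
\begin{enumerate*}[label=(\roman*)]
\item Generation/inversion lemmas modulo \Rule{d-sub}. For example, if $\Abs{x}e$ has type $t_1\to t_2$, then there are typings $\Monoty_j'\to\Monoty_j$ with $\bigwedge_j(\Monoty_j'\to\Monoty_j)\IsSubty t_1\to t_2$. Since \Rule{d-abs} gives only single arrows and there is no intersection-introduction rule, inversion yields one arrow $u_1\to u_2 \IsSubty t_1 \to t_2$. By semantic subtyping this gives $t_1\IsSubty u_1$ and $u_2\IsSubty t_2$ unless $t_1 \TyEquiv \ErlBot$. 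The degenerate case needs a separate small argument: a value of type $\ErlBot$ cannot exist.
\item Weakening and substitution of values for monomorphic variables.
\item Type substitution: if $\ExpSchemaTy{\Senv}{\Venv}{e}{t}$, then $\ExpSchemaTy{\Senv}{\Venv\Tysubst}{e}{t\Tysubst}$ for $\Tysubst$ avoiding the variables of $\Senv$. This also requires that $\PgTy{\Pg}{\rho}$ is closed, so substitution commutes with the pattern operations.
\item Canonical forms: a value of a type below $\Int$, $\Pairty{\ErlTop}{\ErlTop}$, or $\ErlBot\to\ErlTop$ has the corresponding shape, and a closed value never has type $\ErlBot$.
\end{enumerate*}

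For the base reductions, \Rule{red-abs} follows from inversion of the application, inversion of the abstraction, and then the substitution lemma. For \Rule{red-var}, $x\in\Dom{\FunEnv}$, and $x$ must be typed by \Rule{d-var-poly} with an instance $t\Tysubst$ of $\Senv'(x)$. For annotated definitions, \Rule{d-def-annot} checks each arrow of the intersection, and combining the abstraction typings gives $\bigwedge_i(\Monoty_i'\to\Monoty_i)$. This requires an auxiliary lemma that an abstraction typed at several arrows has their intersection type; this is sound semantically, but must be stated since there is no intersection rule. I would add it as a derived property or prove it by induction. Type substitution then transfers the typing to the instance. For unannotated definitions, $\ExpTy{\EmptyEnv}{\Abs{y}e}{t}$ holds under the monomorphic $\Senv_i$, which are instances of $\Senv'$ by the identity substitution. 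To move from $\Senv$ to $\Senv'$, I need the lemma that typing is preserved when an environment is replaced by a more general one.

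The main obstacle is \Rule{red-case}. Given $\PatSubst{v}{\Pg_j}=\Valsubst$ and failure of all earlier clauses, I must show the following.
\begin{enumerate*}[label=(\alph*)]
\item Soundness of potential types: if $v$ matches $\Pg$, then $v$ has type $\PgUpperTy{\Pg}$.
\item Soundness of accepting types: if $v$ has type $\PgLowerTy{\Pg}$, then $v$ matches $\Pg$; contrapositively, failure of $\Pg_i$ puts $v$ in $t\WithoutTy\PgLowerTy{\Pg_i}$.
\item Hence $v:t_j$, so $t_j\not\IsSubty\ErlBot$ by canonical forms, and branch $j$ was actually typed.
\item Soundness of pattern environments: if $v:t_j$ and $v$ matches $\Pg_j$ with $\Valsubst$, then each $\Valsubst(x)$ has type $(\PatEnv{t_j}{\Pg_j})(x)$, intersected with the existing $\Venv$ type.
\end{enumerate*}
Step (b) and the negation part of (c) are delicate. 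Knowing that $v$ fails to match does not syntactically give a typing at a difference type. I plan to prove a "value typing is semantic" lemma: a closed value $v$ has type $t$ iff $v$ lies in the interpretation of $t$. With this, intersection and negation become set operations on values. Step (d) additionally needs projection properties for $\ProjI{t}$ and the handling of guard environments via $\Env{g}$, where type tests bound in the pattern coincide with the $\ValMatches$ judgment. Once these hold, substitution gives $e_j\Valsubst : t_j' \IsSubty \bigvee_i t_i'$, and \Rule{d-sub} closes the case.
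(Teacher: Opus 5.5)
Your route is the paper's. You invert \Rule{red-prog} and \Rule{d-prog}, prove preservation for expressions under $\Senv'$ with $\FunEnv \Accord \Senv'$, and handle \Rule{red-case} through the characterizations of potential and accepting types and of $\PatEnv{t_j}{\Pg_j}$. The step that fails is the auxiliary lemma you invoke for \Rule{red-var} on annotated definitions: that an abstraction typed at several arrows also has their intersection. You need the same lemma again in step (d) for non-linear patterns such as $(x, x)$. Your own item (i) refutes it. Only \Rule{d-abs} and \Rule{d-sub} apply to $\Abs{y}{e}$, so every type of $\Abs{y}{e}$ lies above a single arrow $u_1 \to u_2$ with $\ExpSchemaTy{\Senv}{\Venv, y : u_1}{e}{u_2}$. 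For $\Abs{y}{y}$ this requires $u_1 \IsSubty u_2$. But $u_1 \to u_2 \IsSubty (\Int \to \Int) \Inter (\Float \to \Float)$ forces $\Int \Union \Float \IsSubty u_1$ and $u_2 \IsSubty \Int \Inter \Float \TyEquiv \ErlBot$, a contradiction. That the property is semantically valid does not help, because the declarative judgment is not closed under intersection; an induction gets stuck exactly at \Rule{d-abs}. For the same reason, your planned \enquote{value typing is semantic} lemma is false for arrow types. Steps (b) and (c) only need it relative to the types produced by $\PgTy{\cdot}{\DirPatTy}$, whose only function component is $\ErlBot \to \ErlTop$.

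The paper closes this case with \Cref{lem:value-has-both-types}, citing Frisch et al. There, however, abstractions carry an explicit interface that types them at the whole intersection, and \Rule{d-abs} provides nothing of the kind, so the paper's proof has the same hole. In fact the statement fails for the calculus as defined. Take $\Prog = \Letrec{\Def{k}{\TyScm{\emptyset}{(\Int \to \Int) \Inter (\Float \to \Float)}}{\Abs{y}{y}}}{k}$. It satisfies $\ProgTy{\Prog}{(\Int \to \Int) \Inter (\Float \to \Float)}$ by \Rule{d-def-annot}, \Rule{d-prog} and \Rule{d-var-poly}. One \Rule{red-var} step yields $\Letrec{\ldots}{\Abs{y}{y}}$, which by the computation above has no such type. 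So no proof plan can close this case; the system itself must change. One option is to let \Rule{red-var} produce an abstraction annotated with its declared interface, typed by a rule mirroring \Rule{d-def-annot}, as in CDuce. Another is to add intersection introduction, but then your single-arrow inversion (i) no longer holds, and the \Rule{red-abs} case must be redone by splitting on which declared domains contain the argument.
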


\begin{corollary}[Type soundness]
Let $\Prog = \Letrec{\FunEnv}{e}$ be a program such that $\ProgTy{\Prog}{\Monoty}$
holds for some $\Monoty$.
Then, either $\Prog$ diverges or reduces to $\Letrec{\FunEnv}{v}$ for some value $v$ such that
$\ProgTy{\Letrec{\FunEnv}{v}}{\Monoty}$.
\end{corollary}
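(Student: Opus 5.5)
This is a direct consequence of Progress and Subject Reduction, so no new case analysis is needed. The plan is to consider the (finite or infinite) reduction sequence starting from $\Prog$ and distinguish two cases.

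\emph{Case 1: the sequence is infinite.} Then $\Prog$ diverges, and the statement holds.

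\emph{Case 2: the sequence is finite.} Then there is a sequence $\Prog = \Prog_0 \ReduceSym \Prog_1 \ReduceSym \cdots \ReduceSym \Prog_n$ such that $\Prog_n$ admits no further reduction. The argument for this case has three steps.
\begin{enumerate}
\item Only rule \Rule{red-prog} reduces programs, and it leaves $\FunEnv$ unchanged. Hence every $\Prog_k$ has the form $\Letrec{\FunEnv}{e_k}$ with $e_0 = e$.
\item By induction on $k$, Subject Reduction for programs gives $\ProgTy{\Prog_k}{\Monoty}$ for all $k \le n$. In particular, $\ProgTy{\Letrec{\FunEnv}{e_n}}{\Monoty}$.
\item Applying Progress for programs to $\Prog_n$, either $e_n$ is a value or $\Prog_n$ can take a step. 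The second alternative contradicts the maximality of the sequence, so $e_n = v$ is a value. Together with step 2, this gives $\ProgTy{\Letrec{\FunEnv}{v}}{\Monoty}$.
\end{enumerate}

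The only point that needs care is the notion of ``diverges'': it should be read as the existence of an infinite reduction sequence. Reduction is not deterministic, because $\GuardOracle$ is left unspecified. Even so, the dichotomy above covers every reduction sequence from $\Prog$, since each sequence is either infinite or ends in a stuck program. The argument above shows that such a stuck program must be a well-typed value program. There is no real obstacle here; all of the substantive work lies in the two preceding theorems.
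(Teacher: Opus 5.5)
Your proposal is correct and follows the paper's route: the paper proves the corollary in one line, as a direct consequence of Progress for programs and Subject Reduction for programs. You spell out what that line leaves implicit, namely the induction along a maximal reduction sequence, the fact that the function definitions stay fixed, and the final appeal to Progress to rule out a stuck non-value.
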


\subsection{Algorithmic Typing Rules}
\label{sec:algor-typing-rules}

The typing rules in \Cref{f:decl-typing} do not lend themselves to a typing algorithm
because the rules are not syntax-directed. The culprit is rule \Rule{d-sub}, which can be
applied on any expression form to lift its type to some supertype.
Thus, we reformulate the typing rules
as a syntax-directed set of constraint-generation rules. The resulting constraints are
then simplified, yielding a set of subtyping constraints.
Finally, the tally algorithm by~\cite{conf/popl/Castagna0XA15} solves
the set of subtyping constraints.

\subsubsection{Constraints}
\renewcommand\showfreshtyvars{}
\boxfigSingle{f:constr-gen\showfreshtyvars}{
  Constraint generation\IfShowFresh{ with tracking of fresh type variables}
}{
  \small
  \[
    \begin{array}{r@{~~}r@{~}r@{~~}l}
      \textrm{constraints} &
      \Constr, \ConstrAlt
      & ::= &
              \SubtyConstr{\Monoty}{\Monoty} \mid
              \SubtyConstr{x}{\Monoty} \mid
              \DefConstr{\Venv}{\Constr} \mid
              \Constr \ConstrAnd \Constr\\
              &&&\mid \CaseConstr{\Constr}{\Multi{(\InConstr{\Venv_i}{\Constr_i}{\Constr_i})}}
      \\
      \textrm{program constraints} &
      \ProgConstr & ::= & \LetConstr{\Constr}{\Senv}{\Constr}
    \end{array}
  \]
  \RuleSection{
    \RuleForm{\ConstrGenV{e}{\Monoty}{\Constr}{\VarSet}}
  }{
    Constraint generation for expressions
  }
  \begin{mathpar}
    \inferrule[c-var]{}{
      \ConstrGenV{x}{\Monoty}{\SubtyConstr{x}{\Monoty}}{\EmptyVarSet}
    }

    \inferrule[c-const]{}{
      \ConstrGenV{\Const}{\Monoty}{\SubtyConstr{\TyOfConst{\Const}}{\Monoty}}{\EmptyVarSet}
    }

    \inferrule[c-abs]{
      \ConstrGenV{e}{\TyvarAux}{\Constr}{\TyvarSet'}\\
      \Tyvar, \TyvarAux ~\textrm{fresh}\\
      \IfShowFresh{\TyvarSet = \TyvarSet' \cupdisjoint \{\Tyvar, \TyvarAux\}}
    }{
      \ConstrGenV{\Abs{x}{e}}
      {\Monoty}
      {(\DefConstr{\{x : \Tyvar\}}{\Constr}) \ConstrAnd
                                       \SubtyConstr{\Tyvar \to \TyvarAux}{\Monoty}}
      {\TyvarSet}
    }

    \inferrule[c-app]{
      \ConstrGenV{e_1}{\Tyvar \to \TyvarAux}{\Constr_1}{A_1}\\
      \ConstrGenV{e_2}{\Tyvar}{\Constr_2}{A_2}\\\\
      \Tyvar, \TyvarAux ~\textrm{fresh} \\
      \IfShowFresh{\TyvarSet = \TyvarSet_1 \cupdisjoint \TyvarSet_2 \cupdisjoint \{\Tyvar, \TyvarAux\}}
    }{
      \ConstrGenV{\App{e_1}{e_2}}{\Monoty}
      {\Constr_1 \ConstrAnd \Constr_2 \ConstrAnd \SubtyConstr{\TyvarAux}{\Monoty}}{\TyvarSet}
    }

    \inferrule[c-pair]{
      \ConstrGenV{e_1}{\Tyvar_1}{\Constr_1}{A_1}\\
      \ConstrGenV{e_2}{\Tyvar_2}{\Constr_2}{A_2}\\\\
      \Tyvar_1, \Tyvar_2 ~\textrm{fresh}\\
      \IfShowFresh{\TyvarSet = \TyvarSet_1 \cupdisjoint \TyvarSet_2 \cupdisjoint \{\Tyvar_1, \Tyvar_2\}}
    }{
      \ConstrGenV{\Pair{e_1}{e_2}}{\Monoty}{
        \Constr_1 \ConstrAnd \Constr_2 \ConstrAnd \SubtyConstr{\Pairty{\Tyvar_1}{\Tyvar_2}}{\Monoty}
      }{\TyvarSet}
    }

    \inferrule[c-case]{
      \ConstrGenV{e}{\Tyvar}{\Constr}{\TyvarSet'}\\
      \ConstrAlt = \Constr \ConstrAnd (\SubtyConstr{\Tyvar}{\UnionBig\nolimits_{i \in I}{\PgLowerTy{\Pg_i}}})
          \ConstrAnd \MedConstrAnd\nolimits_{i \in I}\Constr_i \\
      \Tyvar,\TyvarAux ~\textrm{fresh}\\
      \IfShowFresh{\TyvarSet = \{\Tyvar, \TyvarAux\} \cupdisjoint \TyvarSet' \cupdisjoint
      \medcupdisjoint\nolimits_{i \in I} (\TyvarSet_i \cupdisjoint \TyvarSet_i')}\\
      \Forall{i \in I}\quad
      \Monoty_i = (\Tyvar \WithoutTy \UnionBig\nolimits_{j < i}{\PgLowerTy{\Pg_j}}) \Inter \PgUpperTy{\Pg_i}\\
      \PatTyEnvConstrV{\Monoty_i}{\Pg_i}{\Constr_i}{\Venv_i}{\TyvarSet_i}\\
      \ConstrGenV{e_i}{\TyvarAux}{\ConstrAlt_i}{\TyvarSet_i'} \\
      \MonotyAlt_i = \UnionBig\nolimits_{j < i}{\PgLowerTy{\Pg_j}} \Union \Neg \PgUpperTy{\Pg_i}
    }{
      {
        \begin{array}{@{}l@{}}
          \ConstrGenV{
            \Case{e}{\Multi{(\Pg_i \to e_i)}}
          }{
            \Monoty\\
          }{
          \big(\CaseConstr{\ConstrAlt}{\Multi{(\InConstr{\Venv_i}{\ConstrAlt_i}{\SubtyConstr{\Tyvar}{\MonotyAlt_i}})}}\big) \ConstrAnd
          \SubtyConstr{\TyvarAux}{\Monoty}
          }{\TyvarSet}
        \end{array}
      }
    }
  \end{mathpar}

  \RuleSection{
    \RuleForm{
      \PatTyEnvConstrV{\Monoty}{\Pg}{\Constr}{\Venv}{\TyvarSet}\qquad
      \PatTyEnvConstrV{\Monoty}{p}{\Constr}{\Venv}{\TyvarSet}
    }
  }{
    Environments for patterns
  }
  \begin{mathpar}
    \inferrule[c-pg-env]{
      \PatTyEnvConstrV{\Monoty}{p}{\Constr}{\Venv}{\TyvarSet}\\
      C' = \MedConstrAnd\nolimits_{x \in \Free{\WithGuard{p\,}{\,g}}} (\SubtyConstr{x}{\ErlTop})
    }{
      \PatTyEnvConstrV{\Monoty}{(\WithGuard{p}{g})}{\Constr \ConstrAnd C'}{\Venv \InterEnv \Env{g}}{\TyvarSet}
    }

    \inferrule[c-val-env]{}{
      \PatTyEnvConstrV{\Monoty}{v}{\ConstrTrue}{\EmptyEnv}{\emptyset}
    }

    \inferrule[c-wild-env]{}{
      \PatTyEnvConstrV{\Monoty}{\Wildcard}{\ConstrTrue}{\EmptyEnv}{\emptyset}
    }

    \inferrule[c-capture-env]{}{
      \PatTyEnvConstrV{\Monoty}{\Capt x}{\ConstrTrue}{\EmptyEnv}{\emptyset}
    }

    \inferrule[c-var-env]{}{
      \PatTyEnvConstrV{\Monoty}{x}{\ConstrTrue}{\{x : \Monoty \}}{\emptyset}
    }

    \inferrule[c-pair-env]{
      \Tyvar_1,\Tyvar_2 ~\textrm{fresh}\\
      \IfShowFresh{\TyvarSet = \TyvarSet_1 \cupdisjoint \TyvarSet_2 \cupdisjoint \{\Tyvar_1, \Tyvar_2\}}\\
      \PatTyEnvConstrV{\Tyvar_1}{p_1}{\Constr_1}{\Venv_1}{\TyvarSet_1} \\
      \PatTyEnvConstrV{\Tyvar_2}{p_2}{\Constr_2}{\Venv_2}{\TyvarSet_2}
    }{
      \PatTyEnvConstrV{\Monoty}{\Pair{p_1}{p_2}}{
        \Constr_1 \ConstrAnd \Constr_2 \ConstrAnd \SubtyConstr{\Monoty}{(\Pairty{\Tyvar_1}{\Tyvar_2})}
      }{
        \Venv_1 \InterEnv \Venv_2
      }{\TyvarSet}
    }
  \end{mathpar}

  \RuleSection{
    \RuleForm{\DefConstrGenV{\DefSym}{\Constr}{\Senv}{\TyvarSet}}
  }{
    Constraint generation for definitions
  }
  \begin{mathpar}
    \inferrule[c-def-annot]{
      \FreeTyVars{\Polyty} = \emptyset\\
      \Polyty = \TyScm{\TyvarSet'}{\InterBig\nolimits_{i \in I}{(\Monoty_i' \to \Monoty_i)}}\\\\
      \Forall{i \in I}\quad
      \ConstrGenV{e}{\Monoty_i}{\Constr_i}{\TyvarSet_i}\\
      \IfShowFresh{\TyvarSet = A' \cupdisjoint \medcupdisjoint\nolimits_{i \in I}\TyvarSet_i}
    }{
      \DefConstrGenV{\Def{x}{\Polyty}{\Abs{y}{e}}}{
        \MedConstrAnd\nolimits_{i \in I} (\DefConstr{ \{y : \Monoty_i'\} }{\Constr_i})
      }{
        \{x : \Polyty \}
      }{\TyvarSet}
    }

    \IfShowFreshElse{
      \inferrule[c-def-no-annot]{
        \Tyvar~\textrm{fresh}\\\\
        \IfShowFresh{\TyvarSet = \TyvarSet' \cupdisjoint \{\Tyvar\}}\\
        \ConstrGenV{\Abs{y}{e}}{\Tyvar}{\Constr}{\TyvarSet'}
      }{
        \DefConstrGenV{\DefNt{x}{\Abs{y}{e}}}{\Constr}{ \{ x : \Tyvar \}}{\TyvarSet}
      }
    }{
      \inferrule[c-def-no-annot]{
        \Tyvar~\textrm{fresh}\\
        \ConstrGenV{\Abs{y}{e}}{\Tyvar}{\Constr}{\TyvarSet'}
      }{
        \DefConstrGenV{\DefNt{x}{\Abs{y}{e}}}{\Constr}{ \{ x : \Tyvar \}}{\TyvarSet}
      }
    }
  \end{mathpar}

  \RuleSection{
    \RuleForm{
      \ConstrGenV{
        \Prog
      }{
        \Monoty
      }{
        \ProgConstr
      }{}
    }
  }{
    Constraint generation for programs
  }
  \begin{mathpar}
    \inferrule[c-prog]{
      \Forall{i \in I}~~ \DefConstrGenV{\DefSym_i}{\Constr_i}{\Senv_i}{\TyvarSet_i}\\
      \Senv = \Senv_{i \in  I}\\
      \Constr = \MedConstrAnd\nolimits_{i \in I}\Constr_i\\
      \ConstrGenV{e}{\Monoty}{\Constr'}{\TyvarSet'}
    }{
      \ConstrGenV{
        \Letrec{\Multi{\DefSym}}{e}
      }{
        \Monoty
      }{
        \LetConstr{\Constr}{\Senv}{\Constr'}
      }{}
    }
  \end{mathpar}
}

\Cref{f:constr-gen} defines the syntax of constraints $C$ and program constraints $P$.
The syntax of constraints is mostly standard \cite[Chapter 10]{PierceAdvancedTopics2024}.
The subtyping constraint $\SubtyConstr{t_1}{t_2}$ demands type $t_1$ to be a subtype
of type $t_2$; the variable constraint $\SubtyConstr{x}{t}$ requires the type of variable $x$
to be a subtype of $t$; the definition constraint $\DefConstr{\Venv}{C}$ defines the variables
in $\Venv$ for constraint $C$; the conjunction of two constraints $C_1 \ConstrAnd C_2$ requires
both constraints $C_1$ and $C_2$ to be satisfied.
The constraint resulting from a $\kw{case}$ expression requires further explanation. Such a constraint
has the form
$\CaseConstr{\Constr}{\Multi{(\InConstr{\Venv_i}{\Constr_i}{\hat{\Constr_i}})}}$.
Here, $C$ comprises the constraints for the scrutiny, $\Venv_i$ introduces variables resulting
from pattern matching for branch $i$, and $C_i$ denotes the constraints for the
body of the branch. The constraint $\hat{C_i}$ expresses the condition under which
the $i$-th branch can be taken: if $\hat{C_i}$ is satisfiable, then the branch is unreachable.
In other words: either $C_i$ must hold because the branch can be taken, or
$\hat{C_i}$ must hold because the branch is unreachable.

A program constraint $P$ is of the form $\LetConstr{\Constr}{\Senv}{\Constr'}$.
Here, $C$ is the constraint for the top-level functions resulting in schema environment $\Senv$, and $C'$
is the constraint for the main expression.

In some places, we use $\ConstrTrue$ as an abbreviation for the constraint $\SubtyConstr{\bot}{\top}$.

\subsubsection{Constraint Generation}

The judgment $\ConstrGenV{e}{\Monoty}{\Constr}{\TyvarSet}$ generates
constraint $\Constr$ for expression $e$ having type $\Monoty$ (\Cref{f:constr-gen}).
The rules for variables \Rule{c-var}, constants \Rule{c-const}, $\lambda$-expressions \Rule{c-abs},
applications \Rule{c-app}, and pairs \Rule{c-pair}
are straightforward. Rule \Rule{c-case} handles $\kw{case}$-expressions.
In this rule, constraint $C$ is the constraint for scrutiny $e$.
Constraint $D$ collects $C$, as well as the constraint for checking exhaustiveness and all constraints
$C_i$ resulting from matching guarded pattern $\Pg_i$ against $t_i$.
For the $i$-th branch, $t_i$ is the input type, $\Venv_i$ the environment, and $D_i$ the constraint of the body
(cf. $t_i$ and $\Venv_i$ in rule \Rule{d-case}, \Cref{f:decl-typing}).
Further, $u_i$ is the type for values that definitely do not match the $i$-th branch
because either the value has accepting type $\PgLowerTy{\Pg_j}$ and so matches a preceding branch or
the value does not have potential type $\PgUpperTy{\Pg_i}$ and so cannot match the $i$-th branch.
Hence, if $\alpha$ (the type variable for the scrutiny) is a subtype of $u_i$, then the branch
cannot be taken.
The $\kw{unless}$ condition of this
branch then becomes $\SubtyConstr{\alpha}{u_i}$.

The auxiliary judgment $\PatTyEnvConstrV{\Monoty}{\Pg}{\Constr}{\Venv}{\TyvarSet}$
is the constraint analogue of $\PatEnv{\Monoty}{\Pg} = \Venv$ from \Cref{f:aux-pattern-types}.
It computes the environment $\Venv$ resulting from matching guarded pattern $\Pg$ against a value
of type $t$ such that constraints $C$ restrict the freshly introduced type variables.
Rule \Rule{c-pg-env} defines this judgment; the constraints $\SubtyConstr{x}{\ErlTop}$ on
the variables $x \in \Free{\Pg}$ ensure that all $x$ are in scope.

Judgment $\DefConstrGenV{\DefSym}{\Constr}{\Senv}{\TyvarSet}$ generates constraint $C$
and scheme environment $\Senv$ for definition $\DefSym$ (cf.\ $\DefOk{\Senv}{\DefSym}$ and $\Envs{\DefSym}$ from \Cref{f:decl-typing}).
Finally,
$\ConstrGenV{\Prog}{\Monoty}{\ProgConstr}{}$ generates a constraint $P$
from program $\Prog$ with wanted type $t$. Usually, we set $t$ to some fresh type variable.

\subsubsection{Simple Constraints and Constraint Solving}
\label{sec:simple-constr-constr}

A simple constraint is either a subtype constraint $\SubtyConstr{}{}$,
a conjunction $\ConstrAnd$, or a disjunction $\ConstrOr$ (see \Cref{f:constr-rew}).
Disjunctions of constraints
are non-standard. They are required to encode the choice $C_i~\kw{unless}~\hat{C_i}$ in $\kw{case}$ constraints.

\begin{property}[Tallying algorithm]\label{prop:tally}
  There exists a terminating algorithm $\TallyBase$ such that for every simple constraint $c$ not containing
  disjunctions, $\TallyBase(c)$ is a finite set of type substitutions such that each substitution in
  $\TallyBase(c)$ is a solution of $c$.
\end{property}

Intuitively, $\TallyBase$ is to subtyping constraints what syntactic unification is to
equality constraints. However, there is no unique solution to a set of subtyping
constraints. Hence, $\TallyBase$ returns a set of solutions; the set is empty if no
solution exists. We refer to \citet[Property~A.46]{journals/corr/CastagnaP016} for details.

The algorithm $\TallySym$ extends $\TallyBase$ with support for disjunctions.
\begin{definition}[Tallying algorithm for disjunctions]
  Given some simple constraint $c$ possibly containing disjunctions, $\Tally{c}$ is a set of type substitutions.
  We compute $\Tally{c}$ as follows:
  \begin{enumerate}
  \item Bring $c$ into disjunctive normal form $c_1 \ConstrOr \ldots \ConstrOr c_n$
    such that none of the $c_i$ contains disjunctions.
  \item Solve each $c_i$ by the $\TallyBase$ algorithm from \Cref{prop:tally} to obtain
    sets of substitutions $\mathcal S_i$ for $i = 1,\ldots,n$.
  \item Define $\Tally{c} = \cup_{i = 1,\ldots,n} \mathcal S_i$.
  \end{enumerate}
\end{definition}

Each substitution in $\Tally{c}$ is a solution of $c$ that is more general than other solutions.
See \Cref{lem:props-tally} in \Cref{sec:appendix-algorithmic}.

\subsubsection{Constraint Rewriting}
\boxfigSingle{f:constr-rew\showfreshtyvars}{
  Constraint rewriting\IfShowFresh{ with tracking of fresh type variables}
}{
  \[
    \begin{array}{r@{~~}l@{~}r@{~~}l@{\quad}l}
      \textrm{simple constraints} &
         \SiConstr, \SiConstrAlt
                   & ::= & \SubtyConstr{\Monoty}{\Monoty}
                            \mid \SiConstr \ConstrAnd \SiConstr
                            \mid \SiConstr \ConstrOr \SiConstr
    \end{array}
  \]
  \RuleSection{
    \RuleForm{
      \ConstrRewV{\Venv}{\Constr}{\SiConstr}{\VarSet}
    }
  }{
    Constraint rewriting
  }
  \begin{mathpar}
    \inferrule[rc-and]{
      \ConstrRewV{\Venv}{\Constr_1}{\SiConstr_1}{\TyvarSet_1} \\
      \ConstrRewV{\Venv}{\Constr_2}{\SiConstr_2}{\TyvarSet_2} \\
      \IfShowFresh{\TyvarSet = \TyvarSet_1 \cupdisjoint \TyvarSet_2}
    }{
      \ConstrRewV{\Venv}{\Constr_1 \ConstrAnd \Constr_2}{\SiConstr_1 \ConstrAnd \SiConstr_2}{\TyvarSet}
    }

    \inferrule[rc-subty]{}{
      \ConstrRewV{\Venv}{\SubtyConstr{\Monoty}{\Monoty'}}{\SubtyConstr{\Monoty}{\Monoty'}}{\EmptyVarSet}
    }

    \inferrule[rc-var]{
      x \notin \Dom{\Senv}\\
      \Venv(x) = t'
    }{
      \ConstrRewV{\Venv}{\SubtyConstr{x}{\Monoty}}{\SubtyConstr{t'}{\Monoty}}{\EmptyVarSet}
    }

    \inferrule[rc-var-poly]{
      x \notin \Dom{\Venv}\\
      \Senv(x) = \TyScm{\TyvarSet}{\Monoty'}
    }{
      \ConstrRewV{\Venv}{\SubtyConstr{x}{\Monoty}}{\SubtyConstr{\Monoty'}{\Monoty}}{\VarSet}
    }

    \inferrule[rc-def]{
      \ConstrRewV{\Venv, \Venv'}{\Constr}{\SiConstr}{\TyvarSet} \\
    }{
      \ConstrRewV{\Venv}{\DefConstr{\Venv'}{\Constr}}{\SiConstr}{\TyvarSet}
    }

    \inferrule[rc-case]{
      \ConstrRewV{\Venv}{\Constr}{\SiConstr}{\TyvarSet'}\\
      \Forall{i \in I}~ \ConstrRewV{\Venv}{\hat{\Constr}_i}{\hat{\SiConstr}_i}{\hat{\TyvarSet}_i}\\
      \ConstrRewV{\Venv \InterEnv \Venv_i}{\Constr_i}{\SiConstr_i}{\TyvarSet_i}\\
      \IfShowFresh{\TyvarSet = \TyvarSet' \cupdisjoint
      \medcupdisjoint\nolimits_{i \in I}(\TyvarSet_i \cupdisjoint \hat{\TyvarSet_i})}
    }{
      \ConstrRewV{\Venv}{\CaseConstr{\Constr}{\Multi{(\InConstr{\Venv_i}{\Constr_i}{\hat{\Constr}_i})}}}{
        c \ConstrAnd \MedConstrAnd\nolimits_{i \in I}(\hat{\SiConstr}_i \ConstrOr \SiConstr_i)
      }{\TyvarSet}
    }
  \end{mathpar}

  \RuleSection{
    \RuleForm{
      \ConstrRewProgV{\ProgConstr}{\SiConstr}{}
    }
  }{
    Constraint rewriting for program constraints
  }
  \begin{mathpar}
    \inferrule[rc-prog]{
      \ConstrRewV{\EmptyEnv}{\Constr}{\SiConstr}{\TyvarSet}\\
      \Tysubst \in \Tally{\SiConstr}\\
      \ConstrRewFull{\Gen{\Senv\Tysubst}}{\EmptyEnv}{\Constr'}{\SiConstr'}{\TyvarSet'}
    }{
      \ConstrRewProgV{\LetConstr{\Constr}{\Senv}{\Constr'}}{
        \Equiv{\Tysubst} \ConstrAnd \SiConstr'
      }{}
    }
  \end{mathpar}

  \RuleSection{
    \RuleForm{
      \Gen{\Senv} = \Senv
      \qquad
      \Gen{\Polyty} = \Polyty
      \qquad
      \Equiv{\Tysubst} = \SiConstr
    }
  }{
    Auxiliaries
  }

  \begin{mathpar}
    \Gen{ \{ x_i : \Polyty_i \mid i \in I \}} = \{ x_i : \Gen{\Polyty_i} \mid i \in I \}

    \inferrule{
      \TyvarSet = \FreeTyVars{\Monoty}
    }{
      \Gen{\TyScm{\TyvarSet}{\Monoty}} = \TyScm{\TyvarSet}{\Monoty}
    }

    \inferrule{
      \TyvarSet = \FreeTyVars{\Monoty}
    }{
      \Gen{\TyScm{\emptyset}{\Monoty}} = \TyScm{\TyvarSet}{\Monoty}
    }

    \Equiv{\Tysubst} = \MedConstrAnd_{\Tyvar \in \Dom{\Tysubst}} (
      \SubtyConstr{\Tyvar}{\Tyvar \Tysubst} \ConstrAnd \SubtyConstr{\Tyvar \Tysubst}{\Tyvar}
    )
  \end{mathpar}
}

Constraint rewriting, also defined in \Cref{f:constr-rew}, turns a constraint $C$ into a simple constraints $c$.
The constraint rewriting rules for expressions have the form
$\ConstrRew{\Venv}{\Constr}{\SiConstr}$. Conjunction and subtyping constraints rewrite to themselves.
There are two rules \Rule{rc-var} and \Rule{rc-var-poly}
for variable constraints $\SubtyConstr{x}{t}$, depending on the environment containing $x$
(c.f rules \Rule{d-var} and \Rule{d-var-poly} from \Cref{f:decl-typing}).
Rule \Rule{rc-def} rewrites definition constraints under the extended environment.
Finally, rule \Rule{rc-case} rewrites case constraints
$\CaseConstr{\Constr}{\Multi{(\InConstr{\Venv_i}{\Constr_i}{\hat{\Constr}_i})}}$.
For each branch $i$, it rewrites $\hat{C_i}$ and $C_i$ into $c_i$ and
$\hat{c_i}$, respectively. (Rewriting $\hat{C_i}$ does not require the environment $\Venv_i$.)
The disjunction $\hat{c_i} \ConstrOr c_i$
then encodes that only one of $c_i$ and $\hat{c_i}$ must hold.\footnote{
  Our implementation reports an error \enquote{branch unreachable} if a top-level function has a type annotation
  with an intersection type but for
  all variants of the intersection $\hat{c_i}$ holds or if a top-level function has a type
  annotation without an intersection but $\hat{c_i}$ holds.
 }

For programs, we use the judgment $\ConstrRewProgV{\ProgConstr}{\SiConstr}{}$.
Given a program constraint of the form $\LetConstr{\Constr}{\Senv}{\Constr'}$, we first
rewrite $C$ to $c$. Then, we use $\TallySym$ to solve the simple constraint $c$.
For some type substitution $\Tysubst \in \Tally{c}$, we generalize the types
in $\Senv\Tysubst$ and rewrite $C'$ under this environment to $c'$. This step is
non-deterministic because $\Tally{c}$ is a \emph{set} of type substitutions.
The constraint for the whole program therefore contains not only $c'$ but also
$\Equiv{\Tysubst}$ to ensure that a solution is consistent with $\Tysubst$.

\subsubsection{Type Checking Programs}

We now describe how to type check a program $\Prog$.

\begin{enumerate}
\item Generate program constraint $P$ with $\ConstrGenV{\Prog}{\alpha}{\ProgConstr}{}$ for some fresh
  type variable $\alpha$.
\item Rewrite $\ProgConstr$ to simple constraint $\SiConstr$ with $\ConstrRewProgV{\ProgConstr}{\SiConstr}{}$. In this step,
  the premise of rule \Rule{rc-prog} might
  pick some arbitrary $\Tysubst$ from the set of type substitutions returned by $\TallySym$.
\item Solve the resulting simple constraint $c$ using $\TallySym$. If $\Tally{c} \neq \emptyset$, then $\Prog$ is type correct.
  Otherwise, we need to go back to step 2 and pick a different type substitution $\Tysubst$ there.
  If no type substitution is left, $\Prog$ does not type check.
\end{enumerate}

\subsubsection{Soundness and Completeness}

The type checking algorithm just given is
sound and complete with respect to the declarative type system (\Cref{f:decl-typing}).
See \Cref{sec:appendix-algorithmic} for full proofs.
In the following, $\SubstSolves{\Tysubst}{c}$ denotes that type substitution $\Tysubst$ is a solution to
the simple constraint $c$.

\begin{theorem}[Soundness of algorithmic typing for programs]
  \label{lem:soundness-algo-prog}
  Given program $\Prog$, some type $t$, program constraint $P$,
  simple constraint $c$, and some type substitution $\Tysubst$.
  Assume $\ConstrGen{\Prog}{\Monoty}{\ProgConstr}$ and
  $\ConstrRewProg{\ProgConstr}{c}$ and
  $\SubstSolves{\Tysubst}{c}$.
  Then $\ProgTy{\Prog}{t\Tysubst}$.
\end{theorem}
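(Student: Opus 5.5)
The proof reduces program-level soundness to an expression-level soundness lemma, proved by induction on constraint generation. That lemma reads: if $\ConstrGen{e}{\Monoty}{\Constr}$, $\ConstrRew{\Venv}{\Constr}{\SiConstr}$ and $\SubstSolves{\Tysubst}{\SiConstr}$, then $\ExpSchemaTy{\Senv}{\Venv\Tysubst}{e}{\Monoty\Tysubst}$. We need an auxiliary version for pattern environments: if $\PatTyEnvConstr{\Monoty}{\Pg}{\Constr}{\Venv'}$, the rewritten $\Constr$ is solved by $\Tysubst$, and $\Monoty\Tysubst \IsSubty \Pairty{\ErlTop}{\ErlTop}$ where needed, then $\Venv'\Tysubst$ is pointwise a subtype of $\PatEnv{\Monoty\Tysubst}{\Pg}$.

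The auxiliary claim goes by induction on $\Pg$. The pair case uses that $\Monoty\Tysubst \IsSubty \Pairty{\Tyvar_1\Tysubst}{\Tyvar_2\Tysubst}$ implies $\ProjI{\Monoty\Tysubst} \IsSubty \Tyvar_i\Tysubst$ (\Cref{prop:projection}). The guard part $\InterEnv \Env{g}$ commutes with $\Tysubst$ because guard types are closed.

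For the expression lemma, each syntax-directed case combines the corresponding declarative rule with \Rule{d-sub}.
\begin{itemize}
\item \Rule{c-var}: case split on \Rule{rc-var} versus \Rule{rc-var-poly}. In the polymorphic case, $\Monoty'\Tysubst$ is an instance of $\Senv(x)$, since the bound variables can be taken fresh and thus absorbed into $\Dom{\Tysubst}$; this gives \Rule{d-var-poly}.
\item \Rule{c-abs}: the induction hypothesis under $\Venv, x:\Tyvar$ gives a body of type $\TyvarAux\Tysubst$, then \Rule{d-abs}, then \Rule{d-sub} with $\Tyvar\Tysubst \to \TyvarAux\Tysubst \IsSubty \Monoty\Tysubst$.
\item Application and pairs are analogous.
\item \Rule{c-case} is the main obstacle. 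From the rewritten constraint we obtain $e : \Tyvar\Tysubst$ and exhaustiveness $\Tyvar\Tysubst \IsSubty \UnionBig_i \PgLowerTy{\Pg_i}$; the accepting and potential types are closed, so the substitution does not touch them. For each branch $\Tysubst$ solves either $\hat\SiConstr_i$ or $\SiConstr_i$.
\begin{itemize}
\item If $\Tysubst$ solves $\hat\SiConstr_i$, i.e. $\Tyvar\Tysubst \IsSubty \MonotyAlt_i$, then $\Monoty_i\Tysubst = (\Tyvar\Tysubst \WithoutTy \UnionBig_{j<i}\PgLowerTy{\Pg_j}) \Inter \PgUpperTy{\Pg_i} \IsSubty \ErlBot$, so the declarative premise takes $\Monoty_i' = \ErlBot$. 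Its free-variable side condition must still be discharged, presumably using the $\SubtyConstr{x}{\ErlTop}$ constraints from \Rule{c-pg-env} and the scoping of the body constraint.
\item Otherwise $\Tysubst$ solves $\SiConstr_i$. The pattern-environment lemma then gives a body environment that is a subenvironment of the declarative $\Venv\Tysubst \InterEnv \PatEnv{\Monoty_i\Tysubst}{\Pg_i}$. Here we need that typing is \emph{contravariant in the environment} (environment narrowing lemma, proved separately). This yields $e_i : \TyvarAux\Tysubst$.
\end{itemize}
Every branch output is at most $\TyvarAux\Tysubst$, so \Rule{d-case} gives the union and \Rule{d-sub} gives $\Monoty\Tysubst$. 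The premise of \Rule{d-case} uses the declarative environment $\Venv_i$ exactly, so the mismatch between the fresh-variable environment and the declared one must be bridged through narrowing. The potential-type intersection must also be handled: the generated $\Monoty_i$ already contains it, so it matches.
\end{itemize}

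The program theorem then follows by inverting \Rule{c-prog} and \Rule{rc-prog}. We have $\Tysubst_0 \in \Tally{\SiConstr_{\mathit{defs}}}$ with $\SubstSolves{\Tysubst_0}{\SiConstr_{\mathit{defs}}}$, and $\Tysubst$ solves $\Equiv{\Tysubst_0} \ConstrAnd \SiConstr'$. From $\Equiv{\Tysubst_0}$ we get $\Tyvar\Tysubst \TyEquiv \Tyvar\Tysubst_0\Tysubst$, so $\Tysubst_0;\Tysubst$ is equivalent to $\Tysubst$ and thus also solves the definition constraints. Using that solutions are closed under composition with substitutions, fresh-variable disjointness, and the expression lemma, each definition gets $\DefOk{\Senv_{i\in I}}{\DefSym_i}$ with the declarative environments.
\begin{itemize}
\item For annotated definitions the scheme is closed, so the substitution is invisible.
\item For unannotated ones we take monotype $\Tyvar\Tysubst_0\Tysubst$.
\end{itemize}
For the main expression, rewriting was done under $\Gen{\Senv\Tysubst_0}$. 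We must show that the declarative $\Senv'$, obtained as the generalization $\TyScm{\FreeTyVars{\Monoty}}{\Monoty}$, admits the instances used by \Rule{d-var-poly}. This is true up to $\TyEquiv$ between $\Tyvar\Tysubst_0$ and $\Tyvar\Tysubst_0\Tysubst$, but the exact-equality requirement in $\Inst{\cdot}$ needs care: we would pick the declarative monotype as $\Tyvar\Tysubst_0$ itself, closed under the substitution used in \Rule{rc-var-poly}. Finally, the expression lemma yields $e : \Monoty\Tysubst$ and \Rule{d-prog} concludes $\ProgTy{\Prog}{\Monoty\Tysubst}$.
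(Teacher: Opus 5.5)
Your overall architecture matches the paper's: expression-level soundness by induction on $e$, a pattern-environment lemma, a split on $\hat{\SiConstr}_i \ConstrOr \SiConstr_i$ closed off by weakening, then inversion of \Rule{c-prog} and \Rule{rc-prog}. However, your pattern-environment lemma is stated in the wrong direction, and this breaks the central step of the \Rule{c-case} branch.

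You claim $\Venv'\Tysubst$ is pointwise below $\PatEnv{\Monoty\Tysubst}{\Pg}$. That is false. Take $\Pg = \Pair{y}{z}$, $\Monoty = \Pairty{\Int}{\Int}$, and a $\Tysubst$ sending the fresh $\Tyvar_1, \Tyvar_2$ to $\ErlTop$. The generated constraint holds, yet $\Venv'\Tysubst$ assigns $\ErlTop$ where the declarative environment assigns $\Int$. Even granting your claim, narrowing only moves a typing to a \emph{smaller} environment. So from the induction hypothesis, a typing under $(\Venv \InterEnv \Venv_i)\Tysubst$, you cannot reach the larger declarative environment $\Venv\Tysubst \InterEnv \PatEnv{\Monoty_i\Tysubst}{\Pg_i}$ that \Rule{d-case} demands. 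The correct statement is the reverse, $\Venv\Tysubst \InterEnv \PatEnv{\Monoty\Tysubst}{\Pg} \IsSubty (\Venv \InterEnv \Venv')\Tysubst$ (\Cref{lem:pg-env}). This is exactly what your own pair-case fact $\ProjI{\Monoty\Tysubst} \IsSubty \Tyvar_i\Tysubst$ proves, and \Cref{lem:weakening} then closes the branch.

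Similarly, your expression lemma must conclude $\ExpSchemaTy{\Senv\Tysubst}{\Venv\Tysubst}{e}{\Monoty\Tysubst}$, with $\Senv\Tysubst$ rather than $\Senv$. Your \Rule{rc-var-poly} argument silently assumes that $\Tysubst$ fixes the free variables of $\Senv(x)$. That fails precisely for the definition constraints, which are rewritten under $\Senv$ with $x_j : \Tyvar_j$. For instance, with $\Senv = \{x : \Tyvar\}$, $d = \SubtyConstr{\Tyvar}{\TyvarAux}$ and $\Tysubst = [\Int/\Tyvar, \Int/\TyvarAux]$, your conclusion $x : \Int$ under $\Senv$ is false.

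At the program level, your first choice of monotype, $\Tyvar\Tysubst_0\Tysubst$, does not work. $\Equiv{\Tysubst_0}$ only gives $\Tyvar\Tysubst \TyEquiv \Tyvar\Tysubst_0\Tysubst$, not $\Tyvar\Tysubst_0 \TyEquiv \Tyvar\Tysubst_0\Tysubst$. Moreover, $\Tysubst$ may instantiate the free (fresh) variables of $\Tyvar\Tysubst_0$, since they occur in $\Equiv{\Tysubst_0}$. So $\Gen{\Tyvar\Tysubst_0\Tysubst}$ can be strictly less general than the scheme $\Gen{\Tyvar\Tysubst_0}$ under which $C'$ was rewritten. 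For example, take an identity function generalized to $\forall\gamma.\gamma\to\gamma$ and applied to a float in the main expression, with $\Tysubst$ sending $\gamma$ to $\Int$.

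Your fallback $\Tyvar\Tysubst_0$ is the right choice, and it removes all need for care:
\begin{itemize}
\item $\SubstSolves{\Tysubst_0}{\SiConstr}$ (\Cref{lem:props-tally}(i)) and the corrected lemma give $\DefOk{\Senv\Tysubst_0}{\DefSym_i}$ directly.
\item Since $\Gen{\Senv\Tysubst_0}$ is closed, the corrected lemma applied to $C'$ yields $e : \Monoty\Tysubst$ under $\Gen{\Senv\Tysubst_0}\Tysubst = \Gen{\Senv\Tysubst_0}$.
\item This is literally the $\Senv'_{i \in I}$ required by \Rule{d-prog}.
\end{itemize}
That is the paper's proof. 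The $\Equiv{\Tysubst_0}$ conjunct is simply discarded, and no reasoning up to $\TyEquiv$ is needed.
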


\begin{theorem}[Completeness of algorithmic typing for programs]
  \label{lem:completeness-algo-prog}
  Given program $\Prog$ and some type $t$.
  If $\ProgTy{\Prog}{t}$ then there exists program constraint $P$, simple constraint $c$,
  and a type substitution $\Tysubst$ such that
  $\ConstrGenV{\Prog}{t}{P}{}$ and $\ConstrRewProgV{P}{c}{}$ and $\SubstSolves{\Tysubst}{c}$.
\end{theorem}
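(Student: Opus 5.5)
We prove completeness by layering it: first for expressions, then for definitions, and finally for programs. The central auxiliary statement is an expression-level lemma. If $\ExpTy{\Venv}{e}{t}$, then for every constraint $C$ produced by $\ConstrGen{e}{t}{C}$ (unique up to the choice of fresh variables, which we take disjoint from $\FreeTyVars{\Venv}$, $\FreeTyVars{t}$ and $\Senv$) there are a simple constraint $c$ with $\ConstrRew{\Venv}{C}{c}$ and a substitution $\Tysubst$ with $\Dom{\Tysubst}$ contained in the fresh variables such that $\SubstSolves{\Tysubst}{c}$.

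The lemma is proved by induction on the typing derivation. Uses of \Rule{d-sub} are handled by first generalising the statement to: if $\ExpTy{\Venv}{e}{t'}$ and $t' \IsSubty t$, then the same conclusion holds for $\ConstrGen{e}{t}{C}$. Every generated constraint mentions the target type only in upper-bound position (for example $\SubtyConstr{\TyvarAux}{t}$ or $\SubtyConstr{\Tyvar\to\TyvarAux}{t}$), so subsumption is absorbed by transitivity of $\IsSubty$.

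The individual cases go as follows.
\begin{itemize}
\item Variables. For \Rule{d-var} we rewrite with \Rule{rc-var}. For \Rule{d-var-poly} we rewrite with \Rule{rc-var-poly}; the instantiation $t'\Tysubst_0 \IsSubty t$ is realised by extending the solution with $\Tysubst_0$ on the (renamed, fresh) quantified variables.
\item Abstraction, application and pair. We map each fresh variable to the type occurring in the declarative premise (e.g.\ $\Tyvar\mapsto t_1$, $\TyvarAux\mapsto t_2$ for \Rule{c-abs}). We then combine the substitutions from the induction hypotheses; their domains are disjoint by freshness, so the union is well defined.
\item Case. For \Rule{c-case} we set $\Tyvar\mapsto t$, the scrutinee type, and $\TyvarAux\mapsto\bigvee_i t_i'$.
\end{itemize}

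The case rule needs two further facts. The first is an auxiliary lemma on the pattern judgement: $\PatTyEnvConstr{t_i}{\Pg_i}{C_i}{\Venv_i'}$ is solved by a substitution $\Tysubst$ sending the pair variables to the projections $\pi_1,\pi_2$, and under it $\Venv_i'\Tysubst \TyEquiv \PatEnv{t_i\Tysubst}{\Pg_i}$. This follows by induction on $p$ using \Cref{prop:projection}, and the constraints $\SubtyConstr{x}{\ErlTop}$ are discharged by the free-variable premise of \Rule{d-case}. The second fact concerns the disjunction $\hat c_i \ConstrOr c_i$. If $t_i\IsSubty\ErlBot$, then $t \IsSubty \bigvee_{j<i}\PgLowerTy{\Pg_j}\Union\Neg\PgUpperTy{\Pg_i} = u_i$, so $\hat c_i$ holds. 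Otherwise the induction hypothesis on $e_i$ gives $c_i$, under the environment $\Venv\InterEnv\Venv_i$, which is equivalent to the declarative one.

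We also need a weakening/equivalence lemma: rewriting and solvability are invariant when an environment is replaced by one with pointwise equivalent types. We expect the bookkeeping of this case — the equivalence of environments together with the disjointness of fresh variables across branches — to be the main obstacle. We would isolate it as a separate lemma stating that solutions for constraints over disjoint fresh variable sets can be merged.

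For definitions, \Rule{d-def-annot} yields for each arrow $t_i'\to t_i$ a solution of the rewritten $\DefConstr{\{y:t_i'\}}{C_i}$, using the expression lemma. For \Rule{d-def-no-annot} we map the fresh $\Tyvar$ to the monomorphic type $\Senv(x)$. Merging these gives $\Tysubst_0 \Vdash c$ for the rewritten definition constraint $C$ in \Rule{rc-prog}.

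At the program level, the difficulty is that \Rule{rc-prog} requires some $\Tysubst\in\Tally{c}$, not an arbitrary solution. Here we invoke \Cref{lem:props-tally}: every solution, in particular $\Tysubst_0$, is an instance of some $\Tysubst\in\Tally{c}$, say $\Tysubst_0 \TyEquiv \Tysubst\circ\Tysubst'$. It follows that $\Gen{\Senv\Tysubst}$ is at least as general as the environment $\Senv'_{i\in I}$ used in \Rule{d-prog}. For the main expression we then apply the expression lemma with a scheme environment that is more general — a monotonicity property that holds because \Rule{d-var-poly} only needs instances — and obtain $c'$ and a solution. Finally we take as the overall solution $\Tysubst'$ combined with the solution for $c'$. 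This combined substitution satisfies $\Equiv{\Tysubst}\ConstrAnd c'$, because $\Tysubst'$ equates $\Tyvar$ with $\Tyvar\Tysubst$ on $\Dom{\Tysubst}$ (using idempotence of tally solutions), and the two domains are disjoint by freshness.
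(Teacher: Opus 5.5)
Your overall architecture is the paper's: expressions first, then definitions, then \Rule{rc-prog} via principality and idempotence of $\TallySym$. However, your expression-level lemma, as stated, cannot carry its own induction. It assumes $\ExpTy{\Venv}{e}{t}$ and speaks only about the constraint generated against that same $t$ and rewritten under that same $\Senv$ and $\Venv$. The algorithmic rules instead generate constraints for subterms against fresh type variables and rewrite them in environments that contain fresh type variables:
\begin{itemize}
\item in \Rule{c-abs} the body is generated against $\TyvarAux$ and rewritten under $\Venv, x:\Tyvar$, while the declarative premise is $\ExpTy{\Venv, x:t_1}{e'}{t_2}$;
\item in \Rule{c-case} the scrutinee is generated against $\Tyvar$, and each body against $\TyvarAux$ under $\Venv \InterEnv \Venv_i$, where $\Venv_i$ mentions $\Tyvar$ and the pair-pattern variables;
\item in \Rule{rc-prog} the definition bodies are rewritten under $x_i : \Tyvar_i$, whereas \Rule{d-prog} types them under $x_i : t_i$.
\end{itemize}
Your hypothesis only tells you about a different constraint, one generated against $t_2$ and rewritten under $\Venv, x:t_1$. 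It also cannot be invoked at $\TyvarAux$ and $\Venv, x:\Tyvar$, because $\ExpSchemaTy{\Senv}{\Venv, x:\Tyvar}{e'}{\TyvarAux}$ generally fails (take $e' = x$). Your \Rule{d-sub} generalisation does not rescue this: it only relaxes the target type, and $t_2 \IsSubty \TyvarAux$ is false. Your remark ``map $\Tyvar \mapsto t_1$, $\TyvarAux \mapsto t_2$'' has the right intuition, but the lemma itself must state it.

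The missing idea is to quantify over a substitution, as \Cref{lem:completeness-algo-exp} does. If $\ExpSchemaTy{\Senv\Tysubst}{\Venv\Tysubst}{e}{t\Tysubst}$, then the constraint generated for $e$ against $t$ and rewritten under $\Senv, \Venv$ is solved by $\Tysubst \cupdisjoint \Tysubst'$, with $\Dom{\Tysubst'}$ the fresh variables. (Alternatively, prove separately that generation and rewriting commute with substitution.) With this version:
\begin{itemize}
\item \Rule{c-abs} uses the hypothesis with $\Tysubst \cupdisjoint [t_1/\Tyvar, t_2/\TyvarAux]$;
\item \Rule{c-case} uses $\Tysubst \cupdisjoint [t_0/\Tyvar]$, with $t_0$ the declarative scrutinee type, and for the bodies an extension $\hat{\Tysubst}$ that also fixes $\TyvarAux$ and the pattern variables;
\item definitions use $[t_i/\Tyvar_i]$.
\end{itemize}
The environment-equivalence lemma you expect to be the main obstacle then becomes unnecessary. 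A one-sided bound $(\Venv \InterEnv \Venv_i)\hat{\Tysubst} \IsSubty \Venv\Tysubst \InterEnv (\PatEnv{t_i}{\Pg_i})$, obtained from \Cref{lem:pg-constr-satisfy}, together with the declarative \Cref{lem:weakening}, transports the body typing before the hypothesis is applied.

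Two smaller points. For an unreachable branch, \Rule{rc-case} still requires some rewriting of the body constraint; you get it from the free-variable side condition of \Rule{d-case} (\Cref{lem:always-constr-rew}). Your program-level step transfers generality on the declarative side: typing is monotone in the scheme environment with respect to $\MoreGeneral$, since \Rule{d-var-poly} followed by \Rule{d-sub} only needs a smaller instance. This is a valid and somewhat simpler alternative to the paper's algorithmic \Cref{lem:rew-more-general}.
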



\FloatBarrier
\section{Implementation and Evaluation}
\label{sec:implementation}

We validated our approach by implementing \emph{\ety{}}\footnote{
\url{https://github.com/etylizer/etylizer}}, a static type checker for Erlang, written in Erlang.
\ety{}'s implementation builds on the typing algorithm developed in \Cref{sec:formal-type-system} 
extended to handle multi-arity functions,
tuples of arbitrary size, records, bitstrings, lists, maps, and binary pattern matching.
This section evaluates the tool on real-world Erlang code.
For now, we restrict our evaluation to sequential Erlang, as \ety{} does not yet support message-passing primitives or behaviors.

We begin by describing our methodology (\Cref{sec:case-study-methodology}).
Next, we present the main findings of applying \ety{} to
five modules from Erlang's standard library
(\Cref{sec:evaluation-erlang-standard-library}).
Then, we examine the results of running \ety{} on its own code base as well as on an open-source
Erlang project (\Cref{sec:evaluation-project}).
We then discuss the performance of our tool (\Cref{sec:performance}).
Finally, we conclude with a comparison to other type analysis tools for Erlang (\Cref{sec:comparison-to-other-type-checkers}).

\subsection{Case study: Methodology}
\label{sec:case-study-methodology}

We selected code from three sources to evaluate different aspects of \ety{}.
To demonstrate applicability to well-tested, idiomatic Erlang code,
we used \ety{} to type check five modules from Erlang's standard library.
To stress-test for the type system's expressiveness and performance on code written with prior knowledge of the type checker's limitations, we checked 14 modules  from \ety{} itself.
Finally, we type check the open-source JSON library JSONE\footnote{\url{https://github.com/sile/jsone}}
to provide independent benchmarks and tests on a code base that we do not control.

\subsubsection{Module Selection}
\label{sec:module-selection}
Our selection of modules from the Erlang standard library (\lstinline|stdlib-7.0.2|) was guided by two criteria.
First, we targeted purely functional modules implementing data structures as they fall within the scope of our supported feature set.
Second, we limited our selection to modules with fewer than 20 top-level functions lacking type specification, ensuring that the effort required to add the missing annotations would be manageable.

For the JSONE library, we analyzed the entire project.
Its compact size and consistent use of type specifications for all top-level functions made it a suitable case study. 
 
 From our own \ety{} compiler, we selected a self-contained core component that is representative of the tool's functionality. 
 We intentionally excluded parts that rely on language features not yet supported by our type checker, such as try/catch expressions.

In principle, \ety{} can reconstruct types for top-level functions without type specifications, but doing so incurs a substantial performance cost, which we aimed to avoid.
The type system lacks principal types because a set of subtyping constraints may have multiple, mutually incompatible
solutions.
Hence, the type checker would first need to reconstruct the types of all unannotated functions, and then determine whether the
remaining functions are type-safe under any of these solutions, potentially requiring backtracking.

That said,
\ety{} still performs type reconstruction for locally defined functions, since
Erlang provides no syntax for annotating such functions with a type.
To maintain efficiency, the inferred types of local functions are not generalized.
Empirical studies of type checking in Haskell suggest that type generalization for local functions is rarely needed in practice~\citep{journals/jfp/VytiniotisJSS11}.


\subsubsection{Type Checker Execution}
\label{sec:type-check-exec}
For the evaluation, each module was processed by \ety{} in a separate run.
We categorized the outcome for every function:
\begin{itemize}
	\item \textbf{Type safe:} \ety{} successfully verified the function's body against its specification.
	\item \textbf{Type error:} The type checker found an issue. This could either mean that the function can potentially crash while providing a valid input, return a value outside its specified type, or that the type system is not capable to verify type safety for a function that is safe to execute. Instances arising from the type system's limitations are explicitly identified and discussed, with the ideal outcome being that no such cases occur.
	\item \textbf{Pending:} The result is neither a clear pass nor a definitive fail.Type checking the function requires a feature not yet implemented in \ety{}. 
	\item \textbf{Timeout:} Solving subtyping constraints failed to terminate within a predefined time limit; this is sometimes the case for very complex expressions (e.g., long constant lists, large case expressions).
\end{itemize}

\subsubsection{Code Modifications}
\label{sec:code-modifications}

In some cases, we had to adjust the existing code before passing it to \ety{}. 
We made two sorts of modifications: adding missing type specifications, and specifying which function clauses deliberately violate exhaustiveness.
No other kind of code modifications were performed.

\paragraph{Adding type specifications}
Modern Erlang code is often written with types in mind, and it is common practice for exported functions to include type specifications. 
All top-level functions in both the \ety{} code base used for the case study and the JSONE library are already annotated with such a specification.
In contrast, Erlang’s standard library frequently omits specifications for private helper functions.
As discussed in \Cref{sec:module-selection}, we opted to add type specifications to all top-level functions for performance reasons, deferring efficient type reconstruction for top-level functions and gradual typing to future work. 
In total, we added 47 type specifications.


\paragraph{Exhaustiveness Violations}

In the presence of deliberate non-exhaustive patterns, we disable exhaustiveness checks for specific functions rather than adding trivial catch-all clauses. 
This approach is applied in two situations: when a function or \lstinline|case| expression is intentionally non-exhaustive, or when patterns are semantically exhaustive due to data invariants even though they do not syntactically cover all cases.
For example, \lstinline|orddict:fetch/2| from Erlang's standard library performs lookup of a specific key in a list of key-value pairs. 
It is designed to crash when called with an empty list.
Instead of complicating its type specification or adding a redundant clause, we disable checking for exhaustiveness pattern matches with a custom flag only for those functions.
This preserves the original, simple specification and serves as a formal directive for the type checker and as an explicit, inline documentation for developers reading or modifying the module.

\subsection{Evaluation: Erlang Standard Library}
\label{sec:evaluation-erlang-standard-library}

\begin{table}[tp]
	\centering
\begin{tabular}{|c|c|c|c|c|c|c|c|c|}
\hline
\rowcolor{GrayBgColor}\textbf{module} & \textbf{LOC} & \textbf{funs} & \textbf{type safe} & \textbf{type error} & \textbf{pending} & \textbf{avg} & \textbf{max} & \textbf{total}  \\
\hline
ordsets & 265 & 23 & 21 & 2 & 0 & 0,9s & 9s & 20s \\
\hline
queue & 750 & 50  & 47 & 3 & 0 & 1,5s & 39s & 75s \\
\hline
orddict & 300 & 23  & 19 & 4 & 0 & 1,1s & 5s & 26s \\
\hline
proplists & 500 & 35 & 17 & 0 & 18 & 0,9s & 2s & 32s \\
\hline
sets & 1000 & 55 & 36 & 9 & 10 & 1,1s & 8s & 62s \\
\hline
\end{tabular}
\vspace{.05cm}
\caption{Evaluation results for Erlang standard library modules (version 7.0.2).\\
The first three columns give the module name, the lines of
code, and the number of functions in the module.
The next columns state how many functions \ety{} classified as \enquote{type safe}, \enquote{type error}, or
\enquote{pending} (c.f. \Cref{sec:type-check-exec}). The two columns \enquote{avg} and \enquote{max}
display the average and maximum time
required to type check a function. The last columns show the total time required for type checking the module.
}
\label{table:case-study-stdlib}
\end{table}

\Cref{table:case-study-stdlib} summarizes the results of running \ety{} on five modules
from Erlang's standard library.

\paragraph{ordsets.erl}

Erlang's \lstinline|ordsets| are sets where elements are stored in a sorted order.
The module contains 265 lines of code with 23 functions, type checked with an average of about a second of time needed per function.
After adding two missing type annotations, 21 functions were verified for type safety and two issue were encountered. 
Specifically, \lstinline|is_set/1| does not handle its type specification exhaustively, resulting in a runtime crash for certain inputs covered by its specification, while \lstinline|filtermap/2| had a typo in its type specification where the atom \lstinline|boolean| was used instead of the type \lstinline|boolean()|.

\paragraph{queue.erl}

The Erlang module implementing queues consists of approximately 750 lines of code and contains 50 functions.
We disabled exhaustiveness checks for three functions.
For type checking two functions of the queue module, we had to define an overlay (an alternative type specification used by \ety{}) for \lstinline{lists:reverse/1}.
 In Erlang, \lstinline{lists:reverse/1} has type \lstinline{[term()] -> [term()]}.
With the overlay, we added the case \lstinline{nonempty_list(T) -> nonempty_list(T)} so that \ety{} knows that \lstinline{reverse} preserves
non-emptiness of lists and modified the base case to have type \lstinline{[A] -> [A]} to preserve the input-output relationship.

We added 11 missing type annotations for helper functions and 47 out of 50 functions were successfully type checked in under 80 seconds and three issues were found.
The type specification for \lstinline|get/2| is not precise enough to type check its implementation.
 It uses generic \lstinline|list()| and \lstinline|term()| types for its input and output, resulting in transitive type errors for the functions \lstinline|get/1| and \lstinline|head/1| where applying the function \lstinline|get/2| results in a loss of type information. 

\paragraph{orddict.erl}

The \lstinline|orddict| module implements ordered dictionaries as sorted key-value lists, comprising 23 functions across approximately 300 lines of code. 
We successfully type checked 19 functions with an average verification time of about a second per function, after adding two missing type specifications.
We disabled exhaustiveness checks for two functions, \lstinline|fetch/2| and \lstinline|update/3|.
Further, we define the overlay \lstinline|(1, [{K, V}]) -> [{K, V}]| for an invocation of \lstinline|list:ukeysort/2| with a constant index. 

For functions \lstinline|append/3|, \lstinline|append_list/3|, \lstinline|update_counter/3|, and \lstinline|merge/3|, the original type specification was incorrect. 
These are specialized functions designed to operate on instantiations of the \lstinline|orddict| type, but their specifications were written in a way that suggested they could handle any polymorphic dictionary.

\paragraph{proplists.erl}

The \lstinline|proplists| module implements property list manipulation utilities.
The data structure underneath uses lists with tuples of varying arities as elements.
The module comprises 35 functions across approximately 500 lines of code. 
17 functions could be verified for type safety in about 30 seconds, while 18 are categorized as pending due to a limitation of the Erlang type system related to tuple types (c.f. \Cref{sec:tuple-types}).

\paragraph{sets.erl}

The sets module, implementing both version 1 (hash table-based) and version 2 (map-based) set representations, comprises 55 functions across approximately 1,000 lines of code. 
36 functions were successfully verified in about six minutes, while 8 could not be verified due to a limitation of the Erlang type system (again tuple types, c.f. \Cref{sec:tuple-types}), and two functions due to missing support for map generator comprehension in \ety{}, resulting in 10 functions in the pending category.
Further, we found 9 issues.
The functions \lstinline|update_bucket/3|, \lstinline|get_slot/2|, \lstinline|get_bucket/2|, \lstinline|maybe_expand/1|, \lstinline|maybe_expand_segs/1|, \lstinline|maybe_contract/2|, and \lstinline|maybe_contract_segs/1|, were specified to handle both versions of the sets data structure, but their implementations assumed a specific version.
The type checker correctly identified this mismatch. 
Further, the function \lstinline|filtermap/2| produced a type error because it initializes an expression with \lstinline|new([{version, 1}])|, which has type \lstinline|set(none())|.
Ideally, \lstinline|new/0| should have a polymorphic type \lstinline|new() -> set(A)| to serve as an empty set for any element type \lstinline|A|, but Erlang's compiler considers variables that appear only in the return type unbounded and treats this as a compilation error.
Finally, the type errors for \lstinline|is_element/2|, \lstinline|add_element/2|, and \lstinline|del_element/2| were transitive, stemming from imprecise type specifications in functions they call.

\subsubsection{Analysis of Type Errors}

In the following, we discuss the four distinct categories of issues encountered while type checking the standard library modules. Each category represents a different class of software quality concern.

\paragraph{Imprecise Type Specifications}
We identified 17 violations where type specifications were too imprecise to accurately reflect function semantics. A representative example occurs in \lstinline{sets.erl}, where the module supports two internal implementations but the type specification fails to distinguish them:

\begin{lstlisting}
-opaque set(Element) ::
    #set{segs :: segs(Element)}   % record-based
  | #{Element => ?VALUE}.         % map-based

-spec get_slot(set(E), E) -> non_neg_integer().
get_slot(T, Key) ->
  H = erlang:phash(Key, T#set.maxn), ... % Crashes if T is a map
\end{lstlisting}

Here, \lstinline{get_slot/2} expects a \lstinline|#set{}| record but the specification permits both record and map representations.
While the surrounding code might ensure correct usage, \ety{} flags this as a potential runtime error.
This finding suggests that refining the specification to separate the two implementation types would allow the type system to formally verify the intended usage patterns.

Another example of an invalid specification was recently introduced in the function \lstinline|ordsets:filtermap/2|:\footnote{commit \texttt{be1d1f71a18796f083aaa2a852dd01e653083186} in \url{https://github.com/erlang/otp}}

\begin{lstlisting}
-doc(#{since => <<"OTP 27.0">>}).
-spec filtermap(Fun, Ordset1) -> Ordset2 when
      Fun :: fun((Element1 :: T1) -> boolean | ({true, Element2 :: T2})),
      Ordset1 :: ordset(T1),
      Ordset2 :: ordset(T1 | T2).
\end{lstlisting}

Even without examining the function body, the specification reveals a clear issue: the return type of the callback function includes the atom \lstinline|boolean| rather than the type \lstinline|boolean()|. 
This subtle (syntactic) distinction between a specific atom and a boolean type would be flagged by more rigorous tooling, demonstrating how such errors could be caught before inclusion in the standard library.

\paragraph{Implementation bugs}

We identified one function in the \lstinline|ordsets.erl| module containing a genuine bug that could lead to runtime crashes.
An ordered set is represented by an ordered list. The function \lstinline{is_set/1} checks if some value is indeed such an ordered set.

\begin{lstlisting}
-spec is_set(Ordset) -> boolean() when Ordset :: term().
is_set([E|Es]) -> is_set(Es, E);
is_set([]) -> true;
is_set(_) -> false.

-spec is_set(list(term()), term()) -> boolean().
is_set([E2|Es], E1) when E1 < E2 -> is_set(Es, E2);
is_set([|], _) -> false;
is_set([], _) -> true.
\end{lstlisting}

The main function \lstinline|is_set/1| is specified for all terms, and correctly has a catch-all function clause for non-list inputs.
The helper function \lstinline|is_set/2| is specified for lists only.
The issue is that the input to \lstinline|is_set/1| might be a non-empty, improper list.\footnote{An improper list is a list where the tail is not a list.}
For example, calling \lstinline|is_set/1| with argument \lstinline|[1|$\mid$\lstinline|1]| leads \lstinline|is_set/2| to crash.

This bug was not detected by any other type checker, including Dialyzer, Gradualizer, eqWAlizer, and the type inference algorithm of \citet{conf/erlang/ValliappanH18}.

\paragraph{Parametric polymorphism}

A notable class of specification issues involved incorrect parametricity, where type variables suggested broader applicability than the implementation actually supported.
Here is an example from the orddict module, which implements dictionaries with sorted key-value lists.
The function \lstinline{append_list} appends a list of values to the current list of values associated with some key.

\begin{lstlisting}
-spec append_list(Key, ValList, Orddict) -> Orddict 
    when ValList :: [Value], Orddict :: orddict(Key, Value).
append_list(...) -> ...;
append_list(Key, NewList, [{_K,Old}|Dict]) ->	%Key == K
    [{Key,Old ++ NewList}|Dict];
...
\end{lstlisting}

The specification suggests the value type of \lstinline|Orddict| can be any type \lstinline|Value|.
As the \lstinline|++| operator suggests, the implementation actually requires the value to be of type \lstinline|ValList|.
Here is the corrected type specification:

\begin{lstlisting}
-spec append_list(Key, ValList, Orddict) -> Orddict
    when ValList :: [Value], Orddict :: orddict(Key, ValList).
\end{lstlisting}

Here, \lstinline|orddict| has the value type \lstinline|ValList|, whereas before, it was \lstinline|Value|. 
With this specification, \ety{} is able to type check the function.

\paragraph{Idiomatic crash behavior}

Some function applications intentionally implement only the "happy path" using idiomatic Erlang patterns. 
For example:

\begin{lstlisting}
-spec filelib:ensure_dir(term()) -> ok | {error, term()}.

-spec mkdirs(file:filename()) -> ok | {error, string()}.
mkdirs(D) ->
  ok = filelib:ensure_dir(filename:join(D, "temp")).
\end{lstlisting}

Here, the pattern match on \lstinline|ok| explicitly assumes success, treating any error result as a crash-worthy exception. 
By default, we also check for exhaustiveness in \lstinline|A = B| expressions.
 
When \ety{} reports non-exhaustiveness, the following things need to be assessed:
\begin{enumerate}
    \item The developer must check whether the missing cases represent genuine bugs or intentional design.
    \item If it is a bug (missing error handling), this constitutes a type error that should be fixed.
    \item If it is intentional (early crash on errors), the developer can disable exhaustiveness checking for that specific function via a custom flag.
      (Currently, it is not possible to disable exhaustiveness for individual expressions.)
\end{enumerate}

In our case study, we applied this reasoning process and disabled exhaustiveness checking for functions where crash behavior was clearly intentional.

\paragraph{Tuple types} 
\label{sec:tuple-types}

For the \lstinline|proplists.erl| module, our analysis revealed that 18 functions could be supported by our type theory, but resulted in type errors due to the current limitation of tuple type specifications.

A proplist in Erlang is a list such that each element is either a non-empty tuple or an atom \lstinline{A} abbreviating the tuple \lstinline|{A, true}|.
The first element of each tuple serves as the key. (The list might contain other elements, but these are ignored.)
Here is the function \lstinline|lookup/2| that searches for a given key:

\begin{lstlisting}
-spec lookup(term(),  [term()]) -> none | tuple().
lookup(Key, [P | Ps]) ->
    if is_atom(P), P =:= Key -> {Key, true};
       tuple_size(P) >= 1, element(1, P) =:= Key -> P;
       true -> lookup(Key, Ps)
    end;
lookup(_Key, []) -> none.
\end{lstlisting}

The given type specification \lstinline|term()| for \lstinline|P| is too generic to be able to type check the implementation. 
Instead, this pattern requires specifying that a tuple contains elements of the type of the key at the first position, a concept not currently expressible in Erlang's type language.
Similarly, functions in the \lstinline|sets.erl| module categorized as pending could not be verified due to their reliance on processing tuples of varying sizes.
The implementation in version 1 extensively uses tuples for hash segments as dynamic arrays. 
The absence of this feature does not represent a fundamental limitation as set-theoretic types can in principle express such properties.

\subsubsection{More Modules From Erlang's Standard Library}

Checking larger and more complex modules in the standard library requires handling additional language features, most notably \lstinline|try| expressions, as well as support for gradual typing or improved performance of type reconstruction. 
We are currently working on incorporating gradual types, as described by~\cite{castagna2024guardanalysissafeerasure}, and enhancing the performance of type reconstruction.

\subsection{Evaluation: Erlang Projects}

\label{sec:evaluation-project}

\begin{table}[tp]
	\centering
  \small
  \begin{tabular}{|c|c|c|c|c|c|c|c|c|c|c|}
	\hline
  \rowcolor{GrayBgColor}
	\textbf{project} & \textbf{mods} & \textbf{LOC} & \textbf{funs} & \shortstack{\textbf{type}\\\textbf{safe}} & \shortstack{\textbf{type}\\\textbf{error}} & \textbf{pending} & \textbf{timeout} & \textbf{avg} & \textbf{max} & \textbf{total} \\
	\hline
	jsone & 4 & 1045 & 65 & 34 & 9 & 20 & 2 & 8,0s & 189s & 502s \\
	\hline
	etylizer & 14 & 2770 & 231 & 204 & 6 & 13 & 8 & 8,2s & 235s & 1830s \\
	\hline
	stdlib & 5 & 2815 & 186 & 140 & 18 & 28 & 0 & 1,2s & 39s & 215s \\
	\hline
\end{tabular}
\vspace{.05cm}
\caption{Case study: Erlang Projects.}
\label{table:case-study-projects}
\end{table}

\Cref{table:case-study-projects} presents the result of running \ety{} on parts of Etylizer's own code
and the open-source library JSONE. For completeness, the table also summarizes the results for
Erlang's standard library, which we discussed in great detail in the preceding section.

We selected some core modules of \ety{} itself, in total 14 modules with 231 functions across 2770 LOC.
Type checking these modules resulted in 6 type errors; 8 functions lead to a timeout.
For 13 functions \ety{} could not decide type safety because of unsupported expressions not yet implemented (try-catch, nominal types).
Further, we disabled exhaustiveness checking for parts of \ety{} for a total of 3 functions for similar reasons as explained in \Cref{sec:code-modifications}.

From the JSONE project, we include all four modules with 65 functions in total.
Here, we encountered 9 type errors, and 20 are pending. For two functions \ety{} timed out.


\paragraph{Code Modifications}

In addition to the modifications mentioned in \Cref{sec:code-modifications}, we had to massage the code slightly more before passing it to \ety{}.
\begin{enumerate*}[label=(\arabic*)] 
\item The type checker does not support default values for record fields, so we had to pass values for such fields explicitly.
\item One function used heterogeneous lists only to give special treatment to the first element.
  We rewrote the function header to accept the first element as an extra parameter.
\item We had to fix one incorrect type specification.
\item  Retrieving values from the Erlang term storage yields only the \lstinline|term()| type and thus type information is lost. Therefore, we had to add guards for one function to restore the expected type before further processing.
\end{enumerate*}

\subsubsection{Type Errors}

\paragraph{Rank-2 Types}

The evaluation revealed one function in \ety{} that cannot be type checked:
a generic traversal function that recursively transforms arbitrary nested data structures (\Cref{fig:everywhere}).

\begin{figure}[t]
\begin{lstlisting}
% everywhere(F, X) recursively applies F to X.
% F is a function that returns
%   * {rec, X} when X should be traversed recursively,
%   * {ok, X} when X should not be processed any further,
%   * error if a default traversal should take place for lists, tuples,
%     and maps.
% X is the value to be traversed.
-spec everywhere(
    fun((term()) -> {rec, term()} | {ok, term()} | error,
    term()
) -> term().
everywhere(F, T) ->
    TransList =
        fun(L) -> lists:map(fun(X) -> everywhere(F, X) end, L) end,
    case F(T) of
        error ->
            case T of
                X when is_list(X) -> TransList(X);
                X when is_tuple(X) ->
                    list_to_tuple(TransList(tuple_to_list(X)));
                X when is_map(X) ->
                    maps:from_list(TransList(maps:to_list(X)));
                X -> X
            end;
        {ok, X} -> X;
        {rec, X} -> everywhere(F, X)
    end.
\end{lstlisting}
\caption{The \lstinline|everywhere| function requiring rank-2 polymorphism.}
\label{fig:everywhere}
\end{figure}

The function is inspired by the generic \lstinline{everywhere} traversal from
the Haskell library \enquote{scrap your boilerplate}.
Type checking the \lstinline{everywhere} function in Haskell requires rank-2 polymorphism~\citep{DBLP:conf/tldi/LammelJ03}.
Our type system lacks this feature, making this the only type error that could not be resolved through improvements
of the type checker.

In idiomatic Erlang, this pattern is uncommon; developers typically write type-specific traversals for each data structure rather than using generic transformations.

\paragraph{Overlapping Types}

All other type errors in the \ety{} codebase stem from overlapping tagged union types. 
Consider the specification of \lstinline|form()|, where the third element of the tuple serves as a discriminator between union members:

\begin{lstlisting}
-type import() :: {attribute, anno(), import, {atom(), [fun_with_arity()]}}.
-type wild_attr() :: {attribute, anno(), atom(), term()}.
-type form() :: import() | wild_attr() | ...
\end{lstlisting}

To refine the type \lstinline|form()| to an \lstinline|import()| via a case expression, we pattern match against the structure of the \lstinline|import()| type:

\begin{lstlisting}
  case Form of % Form is an element of the Forms list
    F = {attribute, Anno, import, {ModName, Funs}} -> ...
\end{lstlisting}

Thus, we expect that in the matching branch, \lstinline|F| has the type \lstinline|import()|.
However, the \lstinline|wild_attr| type prevents unique matching because its third component \lstinline|atom()| overlaps with the more specific \lstinline|import| atom of the \lstinline|import()| type. 
The same is true for the fourth component of the tuple.

Apart of changing the data structure itself,
an elegant solution would be to extend the type language with type difference.
This would allow excluding the \lstinline|import| atom from the third position of \lstinline|wild_attr|, using a specification like \lstinline|ety:without(atom(), import())|. 
This would resolve the overlap without changing the data structure. 
However, to maintain compatibility with Erlang's standard type language, we refrained from changing the type language.

%

\paragraph{Precise Number Types}

Two functions in the JSONE library required disabling precise number type verification due to limitations in tracking integer ranges through arithmetic operations.

\begin{lstlisting}
-type utc_offset_seconds() :: -86399..86399.
-spec local_offset() -> utc_offset_seconds().
local_offset() ->
  UTC = {{1970, 1, 2}, {0, 0, 0}},
  Local = calendar:universal_time_to_local_time({{1970, 1, 2}, {0, 0, 0}}),
  calendar:datetime_to_gregorian_seconds(Local) 
    - calendar:datetime_to_gregorian_seconds(UTC).
\end{lstlisting}

The subtraction operation loses the precise range information from the input values, preventing verification that the result falls within the specified bounds. 
However, the narrow return type remains valuable for documentation and API clarity. 
\ety{} supports a flag to selectively disable precise number checking for specific functions while preserving the precise specifications, acknowledging that the implementation upholds the range invariant through external guarantees rather than static verification.

\paragraph{Dependent Types}
\label{sec:dependent-types}

Some Erlang operations are dynamic in nature and their precise type would require value-dependent types.
Here are two examples together with their types in Erlang's standard library.
\begin{itemize}
\item \lstinline|erlang:element/2| of type \lstinline|(pos_integer(), tuple()) -> term()| accesses the $N$-th
  element of a tuple. 
 \item \lstinline|erlang:setelement/3| of type \lstinline|(pos_integer(), tuple(), term()) -> tuple()|
   sets the $N$-th element of a tuple.
\end{itemize}

Giving a more precise type would require the length and element types of the tuple to be reflected on the type level.
Since we cannot modify the official type specifications in the standard library, we employ type overlays.
These allow us to refine the types of such functions by overloading them with cases for statically known indices (singleton types) and specific tuple shapes.

\begin{lstlisting}
% Overlay for element
-spec 'erlang:element' (1, {A}) -> A;
  'erlang:element' (2, {_, B}) -> B;  
  'erlang:element' (2, {_, B, _}) -> B;  
  ...; 
  (_, tuple()) -> term().
\end{lstlisting}

These overlays encode the semantic behavior of functions with specialized cases,
effectively expressing pattern matching at the type level. 

To assess the practical impact of value-dependent typing, we performed an empirical analysis across six Erlang projects. 
Our results show that while functions like \lstinline|element/2| and \lstinline|setelement/3| are inherently dynamic, the majority of their usage (76\% of 1820 calls) employs literal constant indices. 
These cases are precisely handled by our type overlays, which provide specifications for each constant index. 
The remaining 24\% of calls use non-constant expressions, representing the genuinely dynamic cases that would require full dependent typing. (See \Cref{sec:empirical-dynamic-functions} for the full study.)

This demonstrates that value-dependent typing, while theoretically challenging, has limited practical impact on verifying typical Erlang code, as the common cases are adequately handled by our approach.

For runtime-dependent indices, our current approach of overloading on singleton types is insufficient. 
A potential future extension to the type language, such as introducing tuple arity polymorphism (e.g., by providing a type like \lstinline|tuple(T)| to denote tuples of any size where all elements are of type \lstinline|T|), would allow us to verify more dynamic cases. 
For example, the \lstinline|hex/1| function from the JSONE library, which indexes into a homogeneous tuple of integers, could then be verified.

\begin{lstlisting}
-spec 'erlang:element' (integer(), tuple(T)) -> T.
-spec hex(byte()) -> 0..16#FFFF.
hex(X) ->
  element(X + 1,
  {16#3030, 16#3031, ...}).
\end{lstlisting} 

Currently, this specific pattern remains unverified and is categorized as pending and left for future work.


\subsection{Performance \& Timeouts}
\label{sec:performance}


\ety{} offers an expressive and powerful type system.
It supports features such as redundancy and exhaustiveness checks for pattern matching, a limited form of type narrowing, polymorphic functions with union and intersection types, as well as type reconstruction for local and top-level functions (even though the real-world case studies
only relied on type reconstruction for local functions).
This allows for fine-grained static guarantees about program behavior and makes already existing Erlang code fragments typeable.

The computational cost of this expressiveness is substantial. 
Subtyping in set-theoretic type systems is EXPTIME-complete~\citep{conf/icfp/CastagnaX11}, 
and while CDuce~\citep{conf/icfp/BenzakenCF03} demonstrates practical performance for type checking, 
\ety{} faces the additional complexity of type reconstruction for locally defined functions.
It is still an open challenge how to significantly improve the performance of such a constraint system with type reconstruction for locally defined functions in the presence of both set-theoretic and polymorphic types.

While the average verification time remains in the low seconds, certain functions require minutes to complete type checking. 
The performance impact is mitigated by caching verification results, ensuring this cost is incurred only once per function during development. 
The distribution shows that the majority of functions verify quickly, with prolonged verification times affecting only a small subset of particularly complex cases.

In total, for 2\% of the functions in the whole case study, \ety{} could not type check a function and exceeded the timeout limit of 5 minutes.
Most of these functions perform recursive transformations on deeply nested, recursively defined types, such as the types for the Erlang AST, which results in case expressions with over 40 branches.
Improving performance for such functions is a key challenge in advancing the state of the art of set-theoretic types further.

\subsection{Comparison to other static type analysis tools}
\label{sec:comparison-to-other-type-checkers}

Several static type checkers and analyzers have been developed for Erlang, including \dialyzer{} \citep{dialyzer}, \eqwalizer{} \citep{whatsapp-whatsappeqwalizer-nodate}, and \cite{gradualizer}. While each tool has distinct characteristics, they exhibit specific limitations in their type systems. For instance, consider distributivity of union and product types:

\begin{lstlisting}
-spec dist_left({ok | err, nil}) -> {ok, nil} | {err, nil}.
dist_left(X) -> X.

-spec dist({ok | err, arg | nil}) -> 
  {ok, arg} | {err, arg} | {ok, nil} | {err, nil}.
dist(X) -> X.
\end{lstlisting}

The distributivity law for product types over unions states that for any types $A$, $B$, and $C$:

\begin{align*}
(A \cup B) \times C &\equiv (A \times C) \cup (B \times C) &\quad \text{(left distributivity)} \\
A \times (B \cup C) &\equiv (A \times B) \cup (A \times C) &\quad \text{(right distributivity)}
\end{align*}


While applying distributivity once in isolation is accepted, nested distributivity causes type errors in both \eqwalizer{} and \gradualizer{} due to their ad-hoc implementations, whereas \ety{} handles distributivity through its formal foundation.
In our example, \lstinline|dist_left| type checks in all three tools, while \lstinline|dist| type checks only with \ety{}.

Similarly, intersection types present challenges:

\begin{lstlisting}
-spec inter (integer()) -> integer(); (atom()) -> atom().
inter(X) -> case X of
  _ when is_integer(X) -> X + 1;
  _ -> X
end.
\end{lstlisting}

Both \eqwalizer{} and \gradualizer{} fail to properly handle occurrence typing in case branches with intersection types, resulting in unexpected type errors, thus failing to type check the function.
In contrast, 
for \ety{} the interaction between cases expressions, guards, and intersection types work as expected, with the limitation of linear patterns.
\gradualizer{} faces additional challenges as it is no longer actively maintained.

\dialyzer{}, by design, omits certain classes of type discrepancies from detection. 
Its success typing approach guarantees the absence of false positives, but at the cost of potentially missing real errors.
As such, in our comparative evaluations, \dialyzer{} does not report several type errors detected by other tools.
However, it is important to note that \dialyzer{} had already been used during the development process of the analyzed codebases, meaning many issues it would have caught had already been corrected. 
This establishes \dialyzer{} as a useful baseline rather than a fair point of comparison for assessing detection capabilities on previously unchecked code.
While \ety{} provides more principled handling of these typing scenarios, its enhanced expressiveness comes with increased computational costs. 
These comparative observations are substantiated by both quantitative analysis \citep{conf/ifl/SchimpfWB22} and qualitative
assessments \citep{conf/erlang/BergerSBW24}.


All code examined in the case study was developed with \dialyzer{} in mind and successfully passes its analysis.
\gradualizer{} proved of limited comparative value,
as it produced a large number of type errors for functions that are, in fact, type correct.

The comparison with \eqwalizer{} reveals more substantive differences: of the 28 issues identified by \ety{}, 14 could not be found by \eqwalizer{}.
Performance characteristics differ substantially as well: \eqwalizer{} delivers near-instantaneous feedback, but at the cost of reduced issue detection. 
This trade-off reflects the fundamental difference between set-theoretic verification and alternative approaches.
\ety{}'s higher computational investment yields stronger guarantees and uncovers type errors that escape faster, less precise analyses.

\section{Related Work}
\label{sec:related-work}

\paragraph{Type checkers for Erlang}

There is a history of work on type systems aimed at formally taming Erlang's type system. 
This effort began with Marlow and Wadler's work~\citeyear{conf/icfp/MarlowW97}; we use their type specification system as a baseline~\citep{conf/hopl/Armstrong07}.
It expands further to \dialyzer{}~\citep{conf/aplas/LindahlS04,dialyzer,conf/ppdp/LindahlS06}, which demonstrated the practical considerations essential for the successful adoption of a static type system within the community. 
Consequently, this may require certain trade-offs.

Several type systems have been developed as research prototypes to study and explore various aspects and approaches within the context of Erlang, such as gradual typing \citep{gradualizer}, type inference in the style Hindley-Milner~\citep{conf/popl/DamasM82}
with overloading for data constructors \citep{conf/erlang/ValliappanH18}, and
bidirectional typing \citep{conf/erlang/RajendrakumarB21, journals/csur/DunfieldK21}.
However, these projects were either limited to a subset of Erlang or required extensive modifications making them impractical for adoption.
Each approach lacked key components necessary to provide a comprehensive solution.
Our system builds upon the formalization of \cite{conf/ifl/SchimpfWB22}, integrating all type aspects that are necessary for practical adoption in real-world use.

In parallel with our work, efforts are underway to introduce a static type system to Elixir, a language closely related to Erlang. This type system is based on set-theoretic types~\citep{journals/programming/CastagnaDV23, castagna2024guardanalysissafeerasure}, similar to \ety{}. It offers features comparable to ours, with a particular focus on gradual typing of strong arrows and a more refined analysis of guarded patterns. Some of these features have already been integrated into Elixir as of version 1.17.\footnote{https://hexdocs.pm/elixir/main/gradual-set-theoretic-types.html}
We view our work as complementary to these efforts. While their approach emphasizes individual features, we aim to assess the feasibility of a comprehensive formal framework and implementation that integrates all aspects. Further, their work develops a new
type language, whereas our type checker builds on the well-established Erlang type language.
A promising direction for future work would be incorporating gradual types
\citep{journals/programming/CastagnaDV23,castagna2024guardanalysissafeerasure} into \ety{}.
This would allow to type check projects where many functions do not carry type specifications.

\paragraph{Set-theoretic types}

In this work presented here, we combine and extend the work based on set-theoretic types~\citep{journals/jacm/FrischCB08,conf/icfp/CastagnaX11,conf/popl/Castagna0XILP14,conf/popl/Castagna0XA15}. 
We extend this body of work by incorporating guarded pattern matching, which enables dynamic conditions on expression types, facilitating type narrowing.
While this method only approximates the set of values captured by a guard, it is sufficient for type narrowing and exhaustiveness checking in pattern matching.

\cite{DBLP:journals/pacmpl/CastagnaLN24} propose a way to leverage the full power of occurrence typing.
However, challenges remain in designing and applying the system in the presence of side effects, and the impact on performance is yet to be fully evaluated in a more comprehensive case study. 
Nonetheless, qualitative and quantitative studies~\citep{conf/erlang/BergerSBW24,conf/ifl/SchimpfWB22} indicate that our approach to type narrowing is sufficient and practical for real-world applications in Erlang.

Similar to \cite{conf/icfp/CastagnaP016} and \cite{DBLP:phd/hal/Petrucciani19}, we perform type checking and type reconstruction
by solving subtyping constraints with the tallying algorithm.
This allows us to infer types for locally defined functions.
We further extend their work with pattern guards and exhaustiveness checking.

\paragraph{Retrofitting static type systems for dynamically-typed languages}

There are various languages that are dynamically typed but were later equipped with a static type system.
For instance, Flow and TypeScript serve as typed extensions for JavaScript~\citep{journals/pacmpl/ChaudhuriVGRL17, conf/popl/RastogiSFBV15}.
Both have been widely adopted in practice, supporting union and intersection types, though negation types are missing.
Similar to Erlang, Python provides a syntax for type annotations\footnote{\url{https://peps.python.org/pep-0484/}}, but does not enforce static type checking.
Tools like Mypy\footnote{\url{https://mypy-lang.org/}} and pyright\footnote{\url{https://github.com/microsoft/pyright}}
attempt to detect type inconsistencies by inferring function parameters in an ad-hoc manner. 
Notably and in contrast to \ety{}, both JavaScript and Python's type systems lack a formal foundation.

MLstruct~\citep{journals/pacmpl/ParreauxC22} supports records, equi-recursive types, and union, intersection, and negation types, with subtyping defined algebraically rather than semantically. 
The language features type inference with principal types. 
\ety{} also includes a limited form of inference, excluding intersection types, but does not guarantee principal types. 
However, several design choices in MLstruct are not well suited for Erlang. 
For example, while Erlang allows arithmetic operators to be overloaded, MLstruct does not support encoding overloading using intersection types. 
Additionally, MLstruct treats a union of a record and a function type as equivalent to the top type, lacking a mechanism to distinguish between the two parts of such a union. 
In contrast, Erlang provides this distinction using the \lstinline|is_function| predicate.

Typed Racket~\citep{conf/popl/Tobin-HochstadtF08} introduces \emph{occurrence typing}, allowing custom type tests that can refine variable types based on the context in which they appear. 
Similar to our approach, the type of a variable can change depending on the type tests it encounters. 
In \ety{}, the set of type tests is fixed following the Erlang language. 
Further, both Erlang and Racket support encoding precise numeric types within their type language~\citep{DBLP:conf/padl/St-AmourTFF12}, though Erlang offers only a subset of the numeric types available in Racket.

\section{Conclusion}
\label{sec:conclusion}

This article presented a formalization of sequential Erlang based on set-theoretic types and semantic subtyping. 
With this work, we make significant progress in advancing static type safety of Erlang code in a feasible and practical manner.
On the theory, we have successfully proven type soundness and decidability of type checking. 
Our system generates informative and user-friendly error messages, aiding developers in diagnosing and resolving type-related issues more effectively.
While we have extended the type checker to support maps and records, addressing one of the limitations mentioned by \cite{conf/ifl/SchimpfWB22}, we need to leave out a detailed presentation due to space limitations. 
The performance of the type checker has been substantially improved, surpassing the state-of-the-art performance of CDuce in certain cases, particularly for the internal constraint solver. 
We have also implemented type inference for local functions and type checking for programs consisting of multiple modules, significantly enhancing the system's usability and expressiveness. 

However, several challenges and opportunities for future work remain. 
Some language constructs, such as try-catch and opaque types, are not yet supported. 
Additionally, as Erlang continues to evolve, new constructs like maybe expressions and map comprehensions are being introduced, which could benefit greatly from precise typing using set-theoretic types. 
The refinement of types for certain language constructs and comparison operators can also be improved to enhance accuracy and expressiveness. 
While type inference for local functions is now supported, inference for non-local functions remains slow, requiring further research to optimize performance. 
Lastly, although the tally algorithm has become more efficient, solving large constraint sets remains a challenge, and additional research is needed to improve its scalability and efficiency. 

The migration of larger programs to static type checking typically occurs incrementally, meaning that some parts of the codebase are statically checked while others remain dynamically typed. 
Given the progress we have made, gradual typing as outlined by \cite{castagna2024guardanalysissafeerasure} would be a valuable extension to our system.
Gradual typing allows statically and dynamically typed code to coexist seamlessly, which is particularly useful as it is common that modules consist of both exported and typed functions, and private untyped ones.
It could also provide a practical solution for parts of the program that are inherently difficult to type statically, such as dynamically loaded code or constructs that are not yet fully supported.


\section*{Conflict of interest}
None.

\bibliographystyle{ACM-Reference-Format}
\bibliography{references}

\newpage
\renewcommand\thesection{\Alph{section}}
\appendix

\section{Type Safety: Declarative System}
\label{sec:appendix-type-safety-declarative-system}

This appendix proves that the declarative type system defined in \Cref{sec:formal-type-system} is sound.

\begin{definition}[Free type variables] 
  \label{def:free-tyvars}
  The type variables free in type $t$, written $\TyVars{t}$, are coinductively defined as
  $\TyVarsFunc{0}{\Monoty}{\emptyset}$:
  \begin{align*}
    \TyVarsFunc{0}{\Monoty}{\mathcal{T}} &= \begin{cases}
      \emptyset & \text{if } \Monoty \in \mathcal{T} \\
      \TyVarsFunc{1}{\Monoty}{\mathcal{T} \cup \{ \Monoty \}} & \text{otherwise}
    \end{cases}\\
    \TyVarsFunc{1}{\Tyvar}{\mathcal{T}} &= \{ \Tyvar \}\\
    \TyVarsFunc{1}{\Basety}{\mathcal{T}} &= \emptyset\\
    \TyVarsFunc{1}{\Pairty{\Monoty_1}{\Monoty_2}}{\mathcal{T}} &= \TyVarsFunc{0}{\Monoty_1}{\mathcal{T}} \cup \TyVarsFunc{0}{\Monoty_2}{\mathcal{T}}\\
    \TyVarsFunc{1}{\Monoty_1 \to \Monoty_2}{\mathcal{T}} &= \TyVarsFunc{0}{\Monoty_1}{\mathcal{T}} \cup \TyVarsFunc{0}{\Monoty_2}{\mathcal{T}}\\
    \TyVarsFunc{1}{\Monoty_1 \Union \Monoty_2}{\mathcal{T}} &= \TyVarsFunc{1}{\Monoty_1}{\mathcal{T}} \cup \TyVarsFunc{1}{\Monoty_2}{\mathcal{T}}\\
    \TyVarsFunc{1}{\neg \Monoty_1}{\mathcal{T}} &= \TyVarsFunc{1}{\Monoty_1}{\mathcal{T}}
  \end{align*}

  \begin{itemize}
  \item The type variables free in some type scheme are defined as
  $\TyVars{\TyScm{\TyvarSet} \Monoty} = \TyVars{\Monoty} \setminus \TyvarSet$.

  \item The type variables free in some type substitution are
  $\FreeTyVars{[t_i/\alpha_i \mid i \in I]} = \medcup\nolimits_{i \in I}\FreeTyVars{t_i}$.

  \item We extend the definition of $\FreeTyVarsName$ to other constructs in the obvious way.

  \item We say a type $\Monoty$ is closed if $\TyVars{\Monoty}$ is empty.
  \end{itemize}
\end{definition}

\begin{property}[Projections of product types]
  \label{prop:projection}
  For some type $\Monoty \IsSubty \Pairty{\top}{\top}$, there exist projections $\ProjL{\Monoty}$ and $\ProjR{\Monoty}$
  such that
  \begin{EnumThm}
  \item $\Monoty \IsSubty \Pairty{\ProjL{\Monoty}}{\ProjR{\Monoty}}$;
  \item if $\Monoty \IsSubty \Pairty{\Monoty_1}{\Monoty_2}$ then $\ProjI{\Monoty} \IsSubty \Monoty_i$;
  \item if $\Monoty \IsSubty \Monoty' \IsSubty \Pairty{\top}{\top}$ then $\ProjI{\Monoty} \IsSubty \ProjI{\Monoty'}$;
  \item for all type substitutions $\Tysubst$, $\ProjI{\Monoty\Tysubst} \IsSubty \ProjI{\Monoty}{\Tysubst}$.
  \end{EnumThm}

  See \citet[Property A.31, page 26]{journals/corr/CastagnaP016}.
\end{property}

\begin{property}[Properties of subtyping]
  \label{prop:subty}
  For types $\Monoty_1, \Monoty_2$:
  \begin{EnumThm}
  \item $\ErlBot \IsSubty t \IsSubty \ErlTop$ for any $t$.
  \item If $t_1 \to t_2 \IsSubty t_1' \to t_2'$ then
    $t_1' \IsSubty t_1$ and $t_2 \IsSubty t_2'$.
  \item If $\Pairty{t_1}{t_2} \IsSubty \Pairty{t_1'}{t_2'}$ then
    $t_1 \IsSubty t_1'$ and $t_2 \IsSubty t_2'$.
  \item If $t_1 \IsSubty t_2$ then $t_1\Tysubst \IsSubty t_2\Tysubst$ for any type substitution $\Tysubst$.
  \end{EnumThm}
\end{property}
\begin{proof}
  Claims (i), (ii), and (iii) follow from the interpretation of types as sets as values,
  see \cite{journals/jacm/FrischCB08}.
  For (iv) see \cite[Lemma~3.15, page~10]{conf/icfp/CastagnaX11}.
\end{proof}

\begin{definition}[Bound pattern variables]
  The expression variables bound by a pattern, written $\PatBound{p}$, are defined as
  \begin{mathpar}
    \PatBound{p} =
    \begin{cases}
      \{x\}  & \textrm{if}~ p = x \\
      \PatBound{p_1} \cup \PatBound{p_2} & \textrm{if}~ p = (p_1, p_2)\\
      \emptyset & \textrm{otherwise}
    \end{cases}
  \end{mathpar}

  The expression variables bound by a guarded pattern are $\PatBound{\WithGuard p g} = \PatBound{p}$.
\end{definition}

\begin{definition}[Free expression variables]
  \label{def:free-exp}
  The expression variables free in some expression $e$, written $\FreeE{e}$, are defined as
    \begin{align*}
      \FreeE{x} & = \{ x \} \\
      \FreeE{\Const} & = \emptyset \\
      \FreeE{\Abs{x} e} &= \FreeE{e} \setminus \{x\}\\
      \FreeE{\App{e_1}{e_2}} &= \FreeE{e_1} \cup \FreeE{e_2}\\
      \FreeE{\Pair{e_1}{e_2}} &= \FreeE{e_1} \cup \FreeE{e_2}\\
      \FreeE{\Case{e}{\Multi{(\PatCls{\Pg_i}{e_i})}}}
                    &= \FreeE{e} \cup \medcup_{i \in I}(\FreePg{\Pg_i} \cup (\FreeE{e_i} \setminus \PatBound{\Pg_i}))
    \end{align*}
  The expression variables free in some pattern $p$, written $\FreeP{p}$, are defined as
    \begin{align*}
      \FreeP{v} & = \FreeE{v} \\
      \FreeP{\Wildcard} & = \emptyset \\
      \FreeP{x} & = \emptyset \\
      \FreeP{\Capt{x}} & = \{x\} \\
      \FreeP{(p_1, p_2)} & = \FreeP{p_1} \cup \FreeP{p_2}
    \end{align*}
  The expression variables free in some guard $g$, written $\FreeG{g}$, are defined as
    \begin{align*}
      \FreeG{\GuardTrue} & = \emptyset \\
      \FreeG{\GuardOracle} & = \emptyset \\
      \FreeG{\Is{\Gt}{v}} & = \FreeE{v} \\
      \FreeG{\Is{\Gt}{x}} & = \{x\} \\
      \FreeG{\GuardAnd{g_1}{g_2}} & = \FreeG{g_1} \cup \FreeG{g_2}
    \end{align*}
  The expression variables free in some guarded pattern $\Pg$, written $\FreePg{\Pg}$, are defined as
  $
    \FreePg{\WithGuard p g} = \FreeP p \cup (\FreeG g \setminus \PatBound{p})
  $.\\
  Most of the time, we omit the subscripts in
  $\FreeSym{e}$, $\FreeSym{g}$, $\FreeSym{p}$, $\FreeSym{pg}$ and simply write $\FreeSym{}$.

\end{definition}

\begin{lemma}[Elimination of subsumption]
  \label{lem:sub-elim}
  If $\ExpTy{\Venv}{e}{\Monoty}$, then there exist a derivation for
  $\ExpTy{\Venv}{e}{\Monoty'}$ not ending with rule \Rule{d-sub} such that $\Monoty' \IsSubty \Monoty$.
\end{lemma}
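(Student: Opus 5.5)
We argue by induction on the height of the derivation of $\ExpTy{\Venv}{e}{\Monoty}$, using only that subtyping $\IsSubty$ is reflexive and transitive. Both facts are immediate, because subtyping is interpreted as set inclusion, as established in the work cited for \Cref{prop:subty}.

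\emph{Case: the last rule is not \Rule{d-sub}.} This covers \Rule{d-var}, \Rule{d-var-poly}, \Rule{d-const}, \Rule{d-abs}, \Rule{d-app}, \Rule{d-pair} and \Rule{d-case}. The given derivation itself is the required one, with $\Monoty' = \Monoty$. Reflexivity gives $\Monoty \IsSubty \Monoty$.

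\emph{Case: the last rule is \Rule{d-sub}.} Then there is a strictly smaller derivation of $\ExpTy{\Venv}{e}{\Monoty''}$ with $\Monoty'' \IsSubty \Monoty$.
\begin{itemize}
\item Apply the induction hypothesis to this subderivation. It yields a derivation of $\ExpTy{\Venv}{e}{\Monoty'}$ that does not end with \Rule{d-sub}, together with $\Monoty' \IsSubty \Monoty''$.
\item Transitivity then gives $\Monoty' \IsSubty \Monoty$, which is what the lemma requires.
\end{itemize}
Equivalently, one can strip the maximal trailing chain of \Rule{d-sub} applications and compose the subtyping side conditions along that chain.

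We expect no real obstacle, since the lemma does not ask us to push subsumption into the premises of other rules. The only points needing care are:
\begin{itemize}
\item the induction is on derivation height, not on the structure of $e$, because \Rule{d-sub} leaves $e$ unchanged;
\item transitivity of $\IsSubty$ is justified by the set-theoretic semantics, even though \Cref{prop:subty} does not list it explicitly.
\end{itemize}
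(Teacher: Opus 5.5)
Your proof is correct and follows the same approach as the paper, which just says the lemma holds by straightforward induction on the typing derivation. Your case split is exactly that induction spelled out: reflexivity when the last rule is not \Rule{d-sub}, and otherwise the induction hypothesis plus transitivity of $\IsSubty$.
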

\begin{proof}
  By straightforward induction on the typing derivation.
\end{proof}

\begin{definition}[Subtyping for environments]
  We write $\Venv \IsSubty \Venv'$ to denote that
  $\Dom{\Venv} = \Dom{\Venv'}$ and it holds that
  $\Venv(x) \IsSubty \Venv'(x)$ for all $x \in \Dom{\Venv}$.
\end{definition}

\begin{lemma}[Weakening]
  \label{lem:weakening}
  If $\ExpTy{\Venv}{e}{\Monoty}$  and $\Venv' \IsSubty \Venv$, then
  $\ExpTy{\Venv'}{e}{\Monoty}$.
\end{lemma}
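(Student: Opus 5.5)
We prove the claim by induction on the derivation of $\ExpTy{\Venv}{e}{\Monoty}$, generalizing over $\Venv'$, with a case analysis on the last rule applied. Before starting we state and prove two monotonicity facts about the auxiliary operations used in \Rule{d-case}.

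The first fact concerns intersection of environments. If $\Venv' \IsSubty \Venv$, then $\Venv' \InterEnv \Venv_0 \IsSubty \Venv \InterEnv \Venv_0$ for every $\Venv_0$. This holds pointwise: intersection is monotone in each argument with respect to $\IsSubty$, since intersection is set intersection in the model.

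The second fact concerns pattern environments. If $t' \IsSubty t$, then $\PatEnv{t'}{\Pg} \IsSubty \PatEnv{t}{\Pg}$, and in particular both sides have the same domain. We prove this by induction on the pattern. The variable case is immediate. The pair case uses \Cref{prop:projection}(iii), which gives $\ProjI{t'} \IsSubty \ProjI{t}$, together with the first fact. The guard component $\Env{g}$ does not depend on $t$.

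The easy cases of the main induction go as follows.
\begin{itemize}
\item \Rule{d-var}: the variable has type $\Venv'(x) \IsSubty \Venv(x) = t$, so we apply \Rule{d-var} and then \Rule{d-sub}. The side condition $x \notin \Dom{\Senv}$ is unchanged, because $\Dom{\Venv'} = \Dom{\Venv}$.
\item \Rule{d-var-poly}: the premise $x \notin \Dom{\Venv'}$ follows from equality of domains.
\item \Rule{d-const}: trivial.
\item \Rule{d-abs}: apply the induction hypothesis with $\Venv', x{:}t_1 \IsSubty \Venv, x{:}t_1$. Extension preserves $\IsSubty$, including when the new binding hides an old one.
\item \Rule{d-app}, \Rule{d-pair}, \Rule{d-sub}: apply the induction hypothesis directly to the premises.
\end{itemize}

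The main work is \Rule{d-case}, and here the point is to keep the scrutinee type $t$ itself. By the induction hypothesis we get $\ExpTy{\Venv'}{e}{t}$. The exhaustiveness premise therefore stays literally the same, and so do the input types $t_i$, which depend only on $t$ and the patterns. The free-variable conditions also carry over, because the domains are equal.

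For each branch the new environment is $\Venv'_i = \Venv' \InterEnv (\PatEnv{t_i}{\Pg_i})$. By the first fact, $\Venv'_i \IsSubty \Venv_i$, with the same domain. There are then two subcases.
\begin{itemize}
\item If $t_i \IsSubty \ErlBot$, we keep $t_i' = \ErlBot$, and the condition $\Free{e_i} \subseteq \Dom{\Senv} \cup \Dom{\Venv'_i}$ transfers by equality of domains.
\item Otherwise, the induction hypothesis on the branch derivation yields $\ExpTy{\Venv'_i}{e_i}{t_i'}$.
\end{itemize}
Reassembling with \Rule{d-case} gives the same result type $\UnionBig_{i} t_i'$.

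The one subtle point, which I expect to be the main obstacle, is domain bookkeeping. The notion $\Venv' \IsSubty \Venv$ requires equal domains, and every constructed environment must keep them equal. For $\InterEnv$, for extension with hiding, and for pattern environments I will check this explicitly rather than treating it as routine.

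Since the case rule fixes $t$, the second fact is actually not needed for the main induction. I will still include it, because it is the natural companion lemma and costs little.
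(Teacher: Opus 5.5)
Your proposal is correct and follows the same approach as the paper, whose proof is a straightforward induction on the typing derivation. The key observation for \Rule{d-case} is yours too: keeping the scrutinee type $t$ fixed leaves the $t_i$ unchanged, so only the environments $\Venv' \InterEnv (\PatEnv{t_i}{\Pg_i})$ shrink. As you note, your second monotonicity fact is not needed; the paper states it separately as \Cref{lem:env-pattern-sub}.
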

\begin{proof}
  By straightforward induction on the derivation of $\ExpTy{\Venv}{e}{\Monoty}$.
\end{proof}

\begin{lemma}[Extra variable]
  \label{lem:extra-var}
  If $\ExpTy{\Venv}{e}{\Monoty}$ and $x \notin \Dom{\Senv} \cup \Dom{\Venv}$, then
  $\ExpTy{\Venv, x: t'}{e}{\Monoty}$ for any $t'$.
\end{lemma}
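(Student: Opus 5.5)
The plan is to prove the statement by induction on the derivation of $\ExpTy{\Venv}{e}{\Monoty}$, generalizing over $\Venv$ and $\Monoty$ (but keeping $\Senv$ fixed). The invariant carried through the induction is that $x$ is fresh for both environments. Renaming bound variables is allowed, since all syntactic constructs are considered equal up to renaming of bound expression variables. So whenever a binder in $e$ (a $\lambda$ or a pattern variable) coincides with $x$, we first $\alpha$-rename it away from $x$. This keeps $x$ out of every extended environment appearing in the subderivations.

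Most rules are routine.
\begin{itemize}
\item \Rule{d-var}: we have $\Venv(y)=\Monoty$ and $y \notin \Dom{\Senv}$. From $x\notin\Dom{\Venv}$ we get $y \neq x$, so $(\Venv,x:t')(y)=\Monoty$ still holds.
\item \Rule{d-var-poly}: this rule requires $y \notin \Dom{\Venv}$. Since $y \in \Dom{\Senv}$ and $x \notin \Dom{\Senv}$, we have $y\neq x$, hence $y\notin\Dom{\Venv,x:t'}$.
\item \Rule{d-const}: immediate.
\item \Rule{d-app}, \Rule{d-pair}, \Rule{d-sub}: follow directly from the induction hypothesis.
\item \Rule{d-abs} with binder $y$: after renaming we have $y\neq x$. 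Then $\Venv,y:\Monoty_1,x:t'$ and $\Venv,x:t',y:\Monoty_1$ coincide, and $x\notin\Dom{\Venv,y:\Monoty_1}$, so the induction hypothesis applies.
\end{itemize}

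The main work is rule \Rule{d-case}. The scrutinee premise and the exhaustiveness premise are handled respectively by the induction hypothesis and by being unchanged, since the types $\Monoty$, $\Monoty_i$ and the accepting and potential types do not depend on $\Venv$. The free-variable side condition $\Free{\Pg_i}\subseteq\Dom{\Senv}\cup\Dom{\Venv}$ only gets weaker when the domain grows.

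For each branch, the new environment is $(\Venv,x:t')\InterEnv(\PatEnv{\Monoty_i}{\Pg_i})$. We first rename the pattern-bound variables $\PatBound{\Pg_i}$ away from $x$; this also renames them in the guard and in $e_i$. After renaming, $\Dom{\PatEnv{\Monoty_i}{\Pg_i}}$ consists of $\PatBound{p_i}$ together with the variables tested in the guard. Those tested variables lie in $\Free{\Pg_i}\cup\PatBound{\Pg_i}$, and $\Free{\Pg_i}\subseteq\Dom{\Senv}\cup\Dom{\Venv}$, so $x$ is not among them. Hence $x\notin\Dom{\PatEnv{\Monoty_i}{\Pg_i}}$, and by the definition of $\InterEnv$ we get
\[(\Venv,x:t')\InterEnv(\PatEnv{\Monoty_i}{\Pg_i}) = (\Venv\InterEnv\PatEnv{\Monoty_i}{\Pg_i}),x:t' = \Venv_i,x:t'.\]
Also $x\notin\Dom{\Venv_i}$. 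In the non-bottom branches the induction hypothesis then gives $\ExpTy{\Venv_i,x:t'}{e_i}{\Monoty_i'}$. In the bottom branches the side condition $\Free{e_i}\subseteq\Dom{\Senv}\cup\Dom{\Venv_i}$ only needs enlarging to include $x$, which is immediate.

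I expect this \Rule{d-case} step to be the only delicate point. Two things must be checked there: that $\alpha$-renaming of pattern-bound variables is legitimate (it is, because $\Monoty_i$, the accepting and potential types, and $\Monoty_i'$ are invariant under such renaming), and that the commutation of $\InterEnv$ with the extension by $x:t'$ holds exactly when $x$ is outside the pattern environment's domain.
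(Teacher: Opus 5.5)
Your proposal is correct and follows the same route as the paper, which proves the lemma by a straightforward induction on the typing derivation. Your case analysis simply fills in details the paper leaves implicit, in particular the \Rule{d-case} step, where renaming pattern-bound variables away from $x$ gives $(\Venv, x:t') \InterEnv (\PatEnv{\Monoty_i}{\Pg_i}) = \Venv_i, x:t'$.

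As a minor remark, the renaming is not even needed. If $x \in \PatBound{\Pg_i}$, the extended environment has the same domain as $\Venv_i$ and maps $x$ to $t' \Inter \Venv_i(x) \IsSubty \Venv_i(x)$, so \Cref{lem:weakening} applies directly.
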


\begin{proof}
  By straightforward induction on the derivation of $\ExpTy{\Venv}{e}{\Monoty}$.
\end{proof}

\begin{lemma}[Value shapes]
  \label{lem:value-type-determines-shape}
  Let $v$ be a value.
  \begin{EnumThm}
  \item If $\ExpTy{\Venv}{v}{\TyOfConst{\Const}}$ for some $\Const$, then $v = \Const'$ for some $\Const'$ with
    $\TyOfConst{\Const} = \TyOfConst{\Const'}$ and $\Const = \Const'$ if $\Const$ is an int.

  \item If $\ExpTy{\Venv}{v}{\Monoty_1 \to \Monoty_2}$, then $v$ is of the form
    $\Abs{x} e$ and $\ExpTy{\Venv,x:\Monoty_1}{e}{\Monoty_2}$.

  \item If $\ExpTy{\Venv}{v}{\Pairty{\Monoty_1}{\Monoty_2}}$, then $v$ is of the form
    $\Pair{v_1}{v_2}$, $\ExpTy{\Venv}{v_1}{\Monoty_1}$, and $\ExpTy{\Venv}{v_2}{\Monoty_2}$.
  \end{EnumThm}
\end{lemma}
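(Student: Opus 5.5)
We plan to prove all three parts of the value-shapes lemma simultaneously. First we apply \Cref{lem:sub-elim} to the given derivation. This yields a derivation of $\ExpTy{\Venv}{v}{\Monoty'}$ that does not end in \Rule{d-sub}, with $\Monoty'$ a subtype of the stated type. We then case on the shape of $v$: a constant, an abstraction, or a pair of values. A value is never a variable or a case expression, so for each shape exactly one syntax-directed rule can end the derivation, namely \Rule{d-const}, \Rule{d-abs}, or \Rule{d-pair}. This pins $\Monoty'$ down as $\TyOfConst{\Const'}$, as $\Monoty_1' \to \Monoty_2'$, or as $\Pairty{\Monoty_1'}{\Monoty_2'}$, respectively.

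In each part we next rule out the two shapes that do not match, using the set-theoretic interpretation of types (\Cref{prop:subty} and the model of \cite{journals/jacm/FrischCB08}).
\begin{itemize}
\item Base types, arrow types and product types denote pairwise disjoint, nonempty parts of the model.
\item So $\TyOfConst{\Const'} \IsSubty \Monoty_1 \to \Monoty_2$ is impossible, since a constant type is nonempty and disjoint from every arrow type.
\item The same reasoning excludes a constant type below a product, an arrow below a product or a base type, and a product below an arrow or a base type.
\item One case needs care: an arrow type is never empty, even when its domain is $\ErlBot$. This is standard in the semantic-subtyping model.
\end{itemize}

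The remaining work is the matching shape in each part.

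For (ii), we have $\Monoty_1' \to \Monoty_2' \IsSubty \Monoty_1 \to \Monoty_2$. \Cref{prop:subty}(ii) gives $\Monoty_1 \IsSubty \Monoty_1'$ and $\Monoty_2' \IsSubty \Monoty_2$. The \Rule{d-abs} premise gives $\ExpTy{\Venv,x:\Monoty_1'}{e}{\Monoty_2'}$. We apply \Cref{lem:weakening}, whose hypothesis holds because the environments have equal domains and $\Monoty_1 \IsSubty \Monoty_1'$. This gives $\ExpTy{\Venv,x:\Monoty_1}{e}{\Monoty_2'}$, and \Rule{d-sub} then yields $\Monoty_2$.

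For (iii), \Rule{d-pair} gives $\ExpTy{\Venv}{v_i}{\Monoty_i'}$, and \Cref{prop:subty}(iii) gives $\Monoty_i' \IsSubty \Monoty_i$, so \Rule{d-sub} finishes. The use of (iii) is sound here because $\Monoty_i' \neq \ErlBot$ can be assumed: values inhabit their types. If needed, we instead establish nonemptiness of value types by a small side induction.

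For (i), we have $\TyOfConst{\Const'} \IsSubty \TyOfConst{\Const}$. We split on whether each constant is an int or a float.
\begin{itemize}
\item If $\Const$ is an int, then $\TyOfConst{\Const}=\Const$ is a singleton. A float type and a different singleton are not subtypes of it, so $\Const'=\Const$.
\item If $\Const$ is a float, then $\TyOfConst{\Const}=\Float$. An int singleton is disjoint from $\Float$, so $\Const'$ is also a float and the two types agree.
\end{itemize}

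The main obstacle is the disjointness and nonemptiness facts about the model, and the nonemptiness needed to invert product subtyping. These are not listed explicitly in \Cref{prop:subty}. We would justify them directly from the interpretation of types as sets of values in \cite{journals/jacm/FrischCB08}, citing the standard lemmas rather than reproving them.
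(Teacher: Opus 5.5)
Your route is the paper's: \Cref{lem:sub-elim}, inversion of \Rule{d-const}/\Rule{d-abs}/\Rule{d-pair}, then \Cref{prop:subty}, \Cref{lem:weakening} and \Rule{d-sub}. You are more careful than the paper in two places. First, you justify why the last rule must match the constructor of the target type; the paper simply asserts this. Second, you notice that inverting product subtyping needs nonempty components. You rightly obtain this from the components being values, which is \Cref{lem:value-not-bot}, and its proof does not depend on the present lemma. Your shape-exclusion bullets need the same fact for products: product types can be empty, but one produced by \Rule{d-pair} from two values is not.

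The gap is that you do not apply this care to arrows. Every type $\ErlBot \to \Monoty$ is equivalent to the top function type $\ErlBot \to \ErlTop$. So once $\Monoty_1 \IsSubty \ErlBot$, the relation $\Monoty_1'\to\Monoty_2' \IsSubty \Monoty_1\to\Monoty_2$ holds for arbitrary $\Monoty_1',\Monoty_2',\Monoty_2$, and \Cref{prop:subty}(ii) yields nothing. Unlike the product case, nonemptiness of the value's type cannot help here, because the emptiness sits in the target domain.

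In that case part (ii) is actually false. We have $\ExpTy{\EmptyEnv}{\Abs{x}{42}}{\ErlTop \to 42}$ by \Rule{d-const} and \Rule{d-abs}. Since $\ErlTop \to 42 \IsSubty \ErlBot \to \ErlBot$, rule \Rule{d-sub} gives $\ExpTy{\EmptyEnv}{\Abs{x}{42}}{\ErlBot \to \ErlBot}$. But $\ExpTy{\{x:\ErlBot\}}{42}{\ErlBot}$ is underivable by \Cref{lem:value-not-bot}. The paper's proof has exactly the same hole, inherited from its unconditional statement of \Cref{prop:subty}(ii).

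The repair has two parts:
\begin{itemize}
\item Keep the shape claim of (ii) unconditional. Your disjointness argument proves it, and it is all that \Cref{thm:progress} uses.
\item Add the hypothesis $\Monoty_1 \not\IsSubty \ErlBot$ for the typing of the body. This holds at its only use, the \Rule{d-app} case of \Cref{thm:subj-reduction}. There $\Monoty_1$ is the type of the value argument, so it is nonempty by \Cref{lem:value-not-bot}.
\end{itemize}
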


\begin{proof}
  \begin{EnumThm}
  \item If $\ExpTy{\Venv}{v}{\TyOfConst{\Const}}$, then by \Cref{lem:sub-elim}
    $\ExpTy{\Venv}{v}{\Monoty}$ ending with rule \Rule{d-const} and $\Monoty \IsSubty \TyOfConst{\Const}$.
    \begin{itemize}
    \item If $\Const$ is an int, then $\Monoty = \TyOfConst{\Const} = \Const$, and so $v = \Const$.
    \item If $\Const$ is a float, then $t = \Float$, so $v$ must also be a float constant.
    \end{itemize}
  \item If $\ExpTy{\Venv}{v}{\Monoty_1 \to \Monoty_2}$, then
    $\ExpTy{\Venv}{v}{\Monoty_1' \to \Monoty_2'}$ with the last applied rule being \Rule{d-abs} and
    $\Monoty_1' \to \Monoty_2'  \IsSubty \Monoty_1 \to \Monoty_2$ by \Cref{lem:sub-elim}.
    Then $v = \Abs{x}{e}$ and from the premise of the rule
    $\ExpTy{\Venv, x:\Monoty_1'}{e}{\Monoty_2'}$.
    We get $\ExpTy{\Venv, x:\Monoty_1}{e}{\Monoty_2}$ with \Cref{prop:subty}(ii), \Cref{lem:weakening}, and
    rule \Rule{d-sub}.
  \item If $\ExpTy{\Venv}{v}{\Pairty{\Monoty_1}{\Monoty_2}}$, then the claim follows similar as in the
    preceding case with \Cref{prop:subty}(iii).
  \end{EnumThm}
\end{proof}

\begin{lemma}
  \label{lem:value-u-or-notu}
  Assume $\ExpTy{\Venv}{v}{\Monoty}$.
  Then for any type $u$, either $\ExpTy{\Venv}{v}{u}$ or $\ExpTy{\Venv}{v}{\Neg u}$.
\end{lemma}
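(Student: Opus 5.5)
The plan is to argue by induction on the structure of the value $v$, using \Cref{lem:sub-elim} and \Cref{lem:value-type-determines-shape} to get a syntax-directed derivation for each shape. As preprocessing, put $u$ into the usual disjunctive normal form of set-theoretic types: a union of intersections of atoms (base types, type variables, products, arrows) and their negations. Since $\Neg u$ is again such a combination, it suffices to show that a suitable type $s_v$ of $v$ satisfies $s_v \IsSubty u$ or $s_v \IsSubty \Neg u$. Rule \Rule{d-sub} then gives $\ExpTy{\Venv}{v}{u}$ or $\ExpTy{\Venv}{v}{\Neg u}$.

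For a constant $v = \Const$, take $s_v = \TyOfConst{\Const}$. If $\Const$ is an int, then $s_v$ is a singleton, so it lies entirely inside or entirely outside each base atom and is disjoint from all product and arrow atoms; the claim follows by induction on the DNF of $u$. For floats, $s_v = \Float$ and the same argument applies, since $\Float$ has no proper non-empty base subtypes.

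For a pair $v = \Pair{v_1}{v_2}$, \Cref{lem:value-type-determines-shape}(iii) gives typings of $v_1$ and $v_2$. By distributivity I would rewrite the product part of $u$ as a finite union of products $\Pairty{u_1^k}{u_2^k}$. Applying the induction hypothesis to $v_1$ and $v_2$ for each $u_i^k$ and their Boolean combinations (the finitely many cells of the partition generated by the $u_i^k$) yields types $s_1, s_2$ that each lie entirely in or entirely out of every cell. Then $\Pairty{s_1}{s_2}$ is either below $u$ or disjoint from it, and we finish with \Rule{d-pair} and \Rule{d-sub}.

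The main obstacle is twofold. First, the function case $v = \Abs{x}{e}$: only rule \Rule{d-abs} introduces arrow types, and there is no intersection-introduction rule. So I would first try to show that the set of arrow types derivable for $v$ is closed enough that we can pick one $t_1 \to t_2$ below $u$, or else conclude $v : \Neg u$. This is delicate because an intersection of arrows is never empty. Second, type variables in $u$: for $u = \Tyvar$, neither $\Int \IsSubty \Tyvar$ nor $\Int \IsSubty \Neg\Tyvar$ holds. I therefore expect to have to restrict the statement to closed $u$, or to types built from the guard types $\TyOfGt{\GuardTy}$ and singletons (the only uses in \Rule{d-case}), and to prove it in that form. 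If the general statement is intended, I would follow the corresponding lemma of \citet{journals/corr/CastagnaP016} and check exactly which restriction on $u$ they impose.
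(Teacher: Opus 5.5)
Your type-variable counterexample is correct: the statement as written is too general. The paper's proof does not address this. It only appeals to the set-theoretic interpretation and to \citet[Lemma~6.22]{journals/jacm/FrischCB08}. That result concerns a calculus without type variables whose abstraction rule may also give a function negated arrow types, and it is exactly this that yields the dichotomy for function values. \Rule{d-abs} has no such feature, and this is the genuine gap in your plan: the function case fails even for closed $u$, so your first repair does not work. Take $v = \Abs{x}{x}$, typable with $\Int\to\Int$, and $u = \Int\to\Float$.
By \Cref{lem:value-type-determines-shape}(ii), $\ExpTy{\Venv}{v}{u}$ would give $\ExpTy{\Venv,x:\Int}{x}{\Float}$, which forces $\Int\IsSubty\Float$. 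By \Cref{lem:sub-elim}, $\ExpTy{\Venv}{v}{\Neg u}$ would need some $t_1\to t_2\IsSubty\Neg(\Int\to\Float)$, i.e.\ two arrow types with empty intersection. That is impossible, because the non-empty type $\ErlTop\to\ErlBot$ lies below every arrow type. So the step \enquote{pick $t_1\to t_2$ below $u$, or else conclude $\ExpTy{\Venv}{v}{\Neg u}$} cannot be carried out.

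Your second repair is the right one, and it suffices. The lemma is only used in \Cref{lem:gp-typing}(iii), with $u = \PgLowerTy{\Pg}$. As in the proof of \Cref{lem:value-has-one-union-type}, this type is generated from $\ErlTop$, $\Constty$, $\Int$, $\Float$ and $\ErlBot\to\ErlTop$ by $\Union$, $\Neg$ and products. Its only arrow atom is $\ErlBot\to\ErlTop$, which lies above every $t_1\to t_2$, so the function case becomes trivial.

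Formulate the induction hypothesis uniformly in $u$. Give each well-typed value a single type $s_v$: $\TyOfConst{\Const}$ for constants, any arrow type provided by \Cref{lem:sub-elim} for abstractions, and $\Pairty{s_{v_1}}{s_{v_2}}$ for pairs via \Cref{lem:value-type-determines-shape}(iii). Then show, by induction on $v$ and then on $u$, that $s_v\IsSubty u$ or $s_v\IsSubty\Neg u$. This property is automatically preserved by $\Union$ and $\Neg$, so only atoms need checking, and no DNF or distributivity is needed.

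Do not instead assemble $s_1, s_2$ from separate per-cell uses of the hypothesis. That requires intersecting several typings of one value, and \Cref{lem:value-has-both-types} fails for functions in \MinErl{} for the same reason: $\Abs{x}{x}$ has types $\Int\to\Int$ and $\Float\to\Float$, but not their intersection.
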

\begin{proof}
  Follows from the interpretation of types as sets of values and the fact
  that, for any type u, the union of the interpretation of $u$ and $\Neg u$ is the full set of values.
  See \citet[Lemma~6.22, page~30]{journals/jacm/FrischCB08}.
\end{proof}

\begin{lemma}
  \label{lem:value-not-bot}
  For any $\Senv$ and $\Venv$,
  there exists no value $v$  with $\ExpTy{\Venv}{v}{\ErlBot}$.
\end{lemma}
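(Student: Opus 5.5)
The plan is to argue by contradiction, using \Cref{lem:sub-elim} together with induction on the structure of the value $v$. So suppose $\ExpTy{\Venv}{v}{\ErlBot}$. By \Cref{lem:sub-elim} there is a derivation of $\ExpTy{\Venv}{v}{\Monoty'}$ that does not end in \Rule{d-sub}, with $\Monoty' \IsSubty \ErlBot$. Because $v$ is a value, this last rule must be \Rule{d-const}, \Rule{d-abs}, or \Rule{d-pair}; the variable and case rules cannot apply, since values contain neither top-level variables nor case expressions. I will treat each of these three forms in turn and show that $\Monoty'$ cannot be empty.

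\emph{Constant.} Here $\Monoty' = \TyOfConst{\Const}$, which is either the singleton $\Const$ or $\Float$. Both denote non-empty sets in the set-theoretic interpretation, so $\TyOfConst{\Const} \not\IsSubty \ErlBot$.

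\emph{Abstraction.} Here $\Monoty' = \Monoty_1 \to \Monoty_2$. An arrow type is never empty: it always contains the interpretation of $\ErlTop \to \ErlBot$ (functions that never return). More concretely, every arrow type $t_1 \to t_2$ is a supertype of $\ErlTop \to \ErlBot$, and $\ErlTop \to \ErlBot$ is non-empty in the model of \cite{journals/jacm/FrischCB08}. So $\Monoty' \not\IsSubty \ErlBot$.

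\emph{Pair.} Here $v = \Pair{v_1}{v_2}$ and $\Monoty' = \Pairty{\Monoty_1}{\Monoty_2}$, where the premises give $\ExpTy{\Venv}{v_i}{\Monoty_i}$. A product type is empty exactly when one of its components is empty. So if $\Pairty{\Monoty_1}{\Monoty_2} \IsSubty \ErlBot$, then $\Monoty_i \IsSubty \ErlBot$ for some $i$, and applying \Rule{d-sub} gives $\ExpTy{\Venv}{v_i}{\ErlBot}$. This contradicts the induction hypothesis for the smaller value $v_i$.

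The main obstacle is that \Cref{prop:subty} as stated does not provide the needed emptiness facts: it says nothing about the non-emptiness of singletons, $\Float$, or arrows, nor that an empty product forces an empty component. I would justify these directly from the set-theoretic interpretation of types and the characterization of subtyping in \cite{journals/jacm/FrischCB08,conf/icfp/CastagnaX11}, in the same way the paper already cites those sources for \Cref{prop:subty} and \Cref{lem:value-u-or-notu}. The remaining induction is routine.
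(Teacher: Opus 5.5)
Your proof is correct and is essentially the paper's argument. The paper inducts on the typing derivation, handles \Rule{d-sub} via the induction hypothesis, and calls the \Rule{d-const}, \Rule{d-abs}, and \Rule{d-pair} cases obvious (citing Frisch et al., Lemma~6.19); you instead first strip subsumption with \Cref{lem:sub-elim}, induct on the structure of $v$, and spell out the semantic facts the paper leaves implicit, in particular that an empty product forces an empty component, which is exactly what the pair case needs.
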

\begin{proof}
  See \citet[Lemma~6.19]{journals/jacm/FrischCB08}.
  We prove the implication \enquote{\emph{if $\ExpTy{\Venv}{v}{t}$, then $t \neq \ErlBot$}} by induction on the derivation
  of $\ExpTy{\Venv}{v}{t}$. If the derivation ends with one of the rules \Rule{d-const}, \Rule{d-abs}, or \Rule{d-pair},
  the claim is obvious. For rule \Rule{d-sub}, the claim follows from the \IH{} The derivation cannot end
  with any of the four other rules because $v$ is a value.
\end{proof}

\begin{lemma}
  \label{lem:val-match-implies-ty}
  If $\ExpTy{\Venv}{v}{t}$ for some $t$ and $v \ValMatches \GuardTy$,
  then $\ExpTy{\Venv}{v}{\TyOf{\GuardTy}}$.
\end{lemma}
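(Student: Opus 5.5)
The plan is a case analysis on the derivation of $v \ValMatches \GuardTy$, i.e.\ on the four rules \Rule{is-int}, \Rule{is-float}, \Rule{is-pair}, \Rule{is-fun}. In each case the shape of $v$ is fixed, and we construct a derivation of $\ExpTy{\Venv}{v}{\TyOf{\GuardTy}}$ directly. We finish with \Rule{d-sub} where needed. The hypothesis $\ExpTy{\Venv}{v}{t}$ is used only in the pair case, to obtain typings of the components.

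\emph{Case \Rule{is-int}.} Here $v = \Const$ is an int and $\GuardTy = \Int$. By \Rule{d-const} we get $\ExpTy{\Venv}{\Const}{\TyOfConst{\Const}}$ with $\TyOfConst{\Const} = \Const$. Since the singleton $\Const$ is a subtype of $\Int$ by the set-theoretic interpretation of base types, \Rule{d-sub} yields $\ExpTy{\Venv}{\Const}{\Int}$.

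\emph{Case \Rule{is-float}.} Here $v = \Const$ is a float and $\TyOfConst{\Const} = \Float = \TyOf{\Float}$. Rule \Rule{d-const} gives the claim directly.

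\emph{Case \Rule{is-fun}.} Here $v = \Abs{x}{e}$ and $\TyOf{\AnyFun} = \ErlBot \to \ErlTop$. We cannot reuse an arbitrary typing, because the hypothesis only gives $\ExpTy{\Venv}{v}{t}$ for some $t$. Instead we type the body under $x:\ErlBot$, and this is the step I expect to be the main obstacle. The plan is to prove an auxiliary claim by induction on $e$: every expression $e$ with $\Free{e} \subseteq \Dom{\Senv} \cup \Dom{\Venv}$ can be typed with $\ErlTop$ whenever some variable in scope has type $\ErlBot$. The difficulty is that an arbitrary lambda body need not be typable even with a bottom-typed argument; for example, a free variable outside both environments blocks any typing.

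The better route, which I would try first, is to use the hypothesis. By \Cref{lem:sub-elim} there is a derivation ending in \Rule{d-abs}, giving $\ExpTy{\Venv}{v}{t_1 \to t_2}$ with $\ExpTy{\Venv,x:t_1}{e}{t_2}$. Because $\ErlBot \IsSubty t_1$ and $t_2 \IsSubty \ErlTop$ (\Cref{prop:subty}(i)), the arrow type $t_1 \to t_2$ is a subtype of $\ErlBot \to \ErlTop$ by contravariance and covariance of arrows in the set-theoretic model. Rule \Rule{d-sub} then concludes.

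\emph{Case \Rule{is-pair}.} Here $v = \Pair{v_1}{v_2}$. By \Cref{lem:sub-elim} the hypothesis yields a derivation ending in \Rule{d-pair}, which gives $\ExpTy{\Venv}{v_i}{t_i}$ for $i = 1,2$ and hence $\ExpTy{\Venv}{v}{\Pairty{t_1}{t_2}}$. Since $\Pairty{t_1}{t_2} \IsSubty \Pairty{\ErlTop}{\ErlTop}$ by covariance and \Cref{prop:subty}(i), \Rule{d-sub} yields $\ExpTy{\Venv}{v}{\TyOf{\AnyPair}}$.
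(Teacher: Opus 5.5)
Your proof is correct and follows the paper's argument: a case split on the guard type, using \Rule{d-const} (plus \Rule{d-sub} for the int singleton) in the numeric cases, and \Cref{lem:sub-elim} on the hypothesis followed by \Rule{d-sub} in the pair and function cases. You can drop the speculative auxiliary claim about typing the body under $x : \ErlBot$ entirely: as you suspected, it fails in general, and the route through the hypothesis already settles the function case.
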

\begin{proof}
  \begin{CaseDistinction}{on the form of $\Gt$}
    \CdCase{$\Gt = \Int$} Then $v$ must be an int constant, so the claim follows with \Rule{c-const}.
    \CdCase{$\Gt = \Float$} Then $v$ must be a flaot constant, so the claim follows with \Rule{c-const}.
    \CdCase{$\Gt = \AnyPair$} Then $v = (v_1, v_2)$. From assumption $\ExpTy{\Venv}{v}{t}$ with
    \Cref{lem:sub-elim}, then $\ExpTy{\Venv}{v}{\Pairty{t_1}{t_2}}$ for some $t_1,t_2$.
    Now, with rule \Rule{d-sub}
    \begin{igather}
      \ExpTy{\Venv}{v}{\BraceBelow{\TyOf{\AnyPair}}{{} = \Pairty{\ErlTop}{\ErlTop}}}
    \end{igather}
    \CdCase{$\Gt = \AnyFun$} Analogously to the preceding case.
  \end{CaseDistinction}
\end{proof}

\begin{lemma}[Totality of dynamic semantics for pattern matching]
  \label{lem:dyn-match-total}
  ~
  \begin{EnumThm}
  \item If $\Free{p} \subseteq \Dom{\FunEnv}$, then $\PatSubst{v}{p}$ is
    defined; that is, either $\PatSubstFail$ or some $\Valsubst$.
  \item If $\Free{g} \subseteq \Dom{\FunEnv}$, then $\EvalGuard{g}$
    is defined; that is, either $\True$ or $\False$.
  \item If $\Free{\Pg} \subseteq \Dom{\FunEnv}$, then $\PatSubst{v}{\Pg}$
    is defined; that is, either $\PatSubstFail$ or some $\Valsubst$.
  \end{EnumThm}
\end{lemma}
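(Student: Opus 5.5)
Each of the three claims is proved by structural induction on the syntax: (i) on $p$, (ii) on $g$, and (iii) follows directly from (i) and (ii). The only rules that can get stuck are the ones that look something up in $\FunEnv$, namely \Rule{match-capture} and \Rule{eval-guard-var}. The hypotheses $\Free{p} \subseteq \Dom{\FunEnv}$ and $\Free{g} \subseteq \Dom{\FunEnv}$ are exactly what make those lookups succeed.

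For (i), we go by induction on $p$ for an arbitrary value $v$.
\begin{itemize}
\item \Rule{match-val}: the result is $[\,]$ or $\PatSubstFail$, depending on whether $v_1 = v_2$. Equality of values modulo $\alpha$-renaming is always decidable, so this is total.
\item \Rule{match-wild} and \Rule{match-var}: these return a substitution unconditionally.
\item $\Capt{x}$: here $x \in \FreeP{\Capt{x}} \subseteq \Dom{\FunEnv}$, so $\FunEnv(x)$ is defined. It is the right-hand side of a definition, hence a $\lambda$-abstraction and therefore a value. The rule reduces to $\PatSubst{v}{\FunEnv(x)}$, which is a value pattern and so is handled by \Rule{match-val} (not by the induction hypothesis, since $\FunEnv(x)$ is not a subterm). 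The one point to check is that \Rule{match-val} needs no free-variable assumption.
\item $\Pair{p_1}{p_2}$: we have $\Free{p_i} \subseteq \Free{p} \subseteq \Dom{\FunEnv}$. If $v$ is not a pair, the \enquote{otherwise} branch gives $\PatSubstFail$. If $v = \Pair{v_1}{v_2}$, the induction hypothesis makes $\PatSubst{v_i}{p_i}$ defined. If either is $\PatSubstFail$, or if $\Valsubst_1 \ValEq \Valsubst_2$ fails, the result is $\PatSubstFail$. Otherwise it is $\Valsubst_1 \cup \Valsubst_2$, which is well defined precisely because $\Valsubst_1 \ValEq \Valsubst_2$.
\end{itemize}

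For (ii), we go by induction on $g$.
\begin{itemize}
\item \Rule{eval-guard-value} and \Rule{eval-guard-true}: these are total, since $v \ValMatches \GuardTy$ either holds or not.
\item \Rule{eval-guard-var}: the condition \enquote{$\FunEnv(x) = v$ and $v \ValMatches \GuardTy$} is a proposition that is either true or false, so the rule already yields $\True$ or $\False$. Moreover $x \in \Dom{\FunEnv}$ holds by hypothesis.
\item $\GuardAnd{g_1}{g_2}$: this follows from the induction hypothesis on both conjuncts.
\item $\GuardOracle$: this is the delicate case. Its result is deliberately unspecified ($??$), and we read it as \enquote{some element of $\{\True,\False\}$}. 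So it is defined even though it is not determined. I would state this reading explicitly and treat it as the main (conceptual) obstacle, since the rest is mechanical.
\end{itemize}

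For (iii), let $\Pg = \WithGuard{p}{g}$. Since $\FreeP{p} \subseteq \FreePg{\Pg} \subseteq \Dom{\FunEnv}$, part (i) makes $\PatSubst{v}{p}$ defined. If it is $\PatSubstFail$, \Rule{match-false-pat} applies. Otherwise it is some $\Valsubst$ with $\Dom{\Valsubst} = \PatBound{p}$ (a side fact, proved by the same induction as (i)). Now we apply (ii) to $g\Valsubst$. Its free variables satisfy $\Free{g\Valsubst} \subseteq \FreeG{g} \setminus \PatBound{p} \subseteq \FreePg{\Pg} \subseteq \Dom{\FunEnv}$. This uses that $\Valsubst$ maps the pattern-bound variables to closed values, which holds because matched values are closed in program reduction; if needed, I would add closedness of $v$ as an implicit assumption. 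Then \Rule{match-true} or \Rule{match-false-guard} applies, according to the truth value obtained.
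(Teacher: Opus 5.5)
Your proposal is correct and follows the paper's proof: induction on $p$ for (i), induction on $g$ for (ii), and (iii) by combining the two. The paper gives only this outline, so your case analysis fills it in.

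One small correction to (iii): you do not need closedness of $v$, and matched values need not be closed anyway, since they may mention letrec-bound names, e.g.\ $\Abs{y}{f\,y}$. The reason is that \Rule{eval-guard-value} never inspects free variables of values, and the only lookups in $g\Valsubst$ are on variables in $\FreeG{g} \setminus \PatBound{p} \subseteq \Dom{\FunEnv}$. So your induction for (ii), which uses the hypothesis only in the $\Is{\GuardTy}{x}$ case, already applies to $g\Valsubst$.
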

\begin{proof}
  Claim (i) follows by induction on $p$, claim (ii) by induction on $g$.
  Claim (iii) follows from (i) and (ii).
\end{proof}

\begin{lemma}
  \label{lem:pat-vars}
  ~
  \begin{EnumThm}
  \item If $\PatSubst{v}{p} = \Valsubst$, then $\Dom{\Valsubst} = \PatBound{p}$.
  \item If $\PatSubst{v}{(\WithGuard{p}{g})} = \Valsubst$, then $\Dom{\Valsubst} = \PatBound{p}$.
  \item $\PatBound{p} = \Dom{\PatEnv{t}{p}}$ for any $t$.
  \item $\PatBound{\Pg} \subseteq \Dom{\PatEnv{t}{\Pg}}$ for any $t$.
  \end{EnumThm}
\end{lemma}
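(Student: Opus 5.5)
Claims (i) and (iii) go by structural induction on the pattern $p$. Claim (ii) then follows directly from (i). Claim (iv) follows from (iii) together with a small observation about guard environments.

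For (i), I would go case by case along the rules of \Cref{f:dynamic-patterns}.
\begin{itemize}
\item For a value pattern $v'$, rule \Rule{match-val} yields $[\,]$ on success, and $\PatBound{v'} = \emptyset$.
\item For the wildcard, rule \Rule{match-wild} also yields $[\,]$, and $\PatBound{\Wildcard} = \emptyset$.
\item For a variable $x$, rule \Rule{match-var} yields $[v/x]$, with domain $\{x\} = \PatBound{x}$.
\item For a bound variable $\Capt{x}$, rule \Rule{match-capture} reduces to $\PatSubst{v}{\FunEnv(x)}$. Since $\FunEnv(x)$ is a value, this is again a \Rule{match-val} match, so any successful result is $[\,]$; this agrees with $\PatBound{\Capt{x}} = \emptyset$.
\item For a pair $(p_1,p_2)$, success means $v=(v_1,v_2)$, $\PatSubst{v_i}{p_i} = \Valsubst_i$, and the result is $\Valsubst_1 \cup \Valsubst_2$. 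The induction hypothesis gives $\Dom{\Valsubst_i} = \PatBound{p_i}$, hence the domain of the union is $\PatBound{p_1} \cup \PatBound{p_2} = \PatBound{(p_1,p_2)}$. Well-definedness of the union is guaranteed by $\Valsubst_1 \ValEq \Valsubst_2$.
\end{itemize}

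For (ii), the only rule producing a substitution for a guarded pattern is \Rule{match-true}. It returns exactly the $\Valsubst$ with $\PatSubst{v}{p} = \Valsubst$, so (i) applies.

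For (iii), I would again induct on $p$, this time following the clauses in \Cref{f:aux-pattern-types}.
\begin{itemize}
\item Values, the wildcard, and $\Capt{x}$ give $\EmptyEnv$, matching $\PatBound{p} = \emptyset$.
\item A variable $x$ gives $\{x:t\}$, matching $\PatBound{x} = \{x\}$.
\item A pair gives $(\PatEnv{\ProjL{t}}{p_1}) \InterEnv (\PatEnv{\ProjR{t}}{p_2})$. By definition of $\InterEnv$, $\Dom{\Venv_1 \InterEnv \Venv_2} = \Dom{\Venv_1} \cup \Dom{\Venv_2}$, because each of the three cases covers exactly the variables in one or both domains. The induction hypothesis then closes the case.
\end{itemize}
The one subtle point is that $\ProjI{t}$ is only guaranteed by \Cref{prop:projection} when $t \IsSubty \Pairty{\ErlTop}{\ErlTop}$. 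For the domain computation, however, only the shape of the environment matters and not the projected types. So I would treat $\ProjI{t}$ as some type (e.g.\ extended arbitrarily, or read the statement under whatever convention makes $\PatEnv{t}{p}$ defined). The domain is independent of $t$ either way. This is the only place I expect any friction, and it is a definitional rather than a mathematical obstacle.

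For (iv), $\PatEnv{t}{(\WithGuard p g)} = (\PatEnv{t}{p}) \InterEnv \Env{g}$. Its domain is $\Dom{\PatEnv{t}{p}} \cup \Dom{\Env{g}}$, which by (iii) contains $\PatBound{p} = \PatBound{\WithGuard p g}$. The inclusion is in general strict, since $\Env{g}$ may mention variables not bound by $p$ (e.g.\ $\Is{\Int}{y}$ in the second example of the paper). This explains why only $\subseteq$ is claimed.
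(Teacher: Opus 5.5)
Your proof is correct and follows the paper's route: structural induction on $p$ for (i) and (iii), (ii) obtained from (i) via \Rule{match-true}, and (iv) obtained from (iii) using $\Dom{\Venv_1 \InterEnv \Venv_2} = \Dom{\Venv_1} \cup \Dom{\Venv_2}$. Your case analysis fills in details the paper leaves implicit, such as the fact that $\FunEnv(x)$ is a $\lambda$-value, so \Rule{match-capture} can only succeed with $[\,]$, and the observation that only the shape of $\PatEnv{t}{p}$ matters, not the projections.
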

\begin{proof}
  Claim (i) follows by induction on $p$. Claim (ii) follows by (i).
  Claim (iii) follows by induction on $p$. Claim (iv) with (iii).
\end{proof}

\begin{lemma}
  \label{lem:pat-types-closes}
  ~
  \begin{EnumThm}
  \item $\Env{g}$ is closed.
  \item $\PatTy{p}{\Venv}$ is closed if $\Venv$ is closed.
  \item $\PgTy{\Pg}{\DirPatTy}$ is closed.
  \end{EnumThm}
\end{lemma}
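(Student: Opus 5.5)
Each claim is proved by a structural induction, since every construct involved is defined by finite recursion on syntax. The only types that can appear are those produced by $\TyOfGt{\cdot}$, $\TyOfConst{\cdot}$, $\ErlTop$, $\ErlBot$, and products or intersections of closed types. In addition, $\FreeTyVarsName$ of a product, union, negation or intersection is contained in the union of the free type variables of the components. That containment follows from \Cref{def:free-tyvars}: $\mathsf{ftv}_0$ and $\mathsf{ftv}_1$ only ever return unions of results on subtrees, so a simple coinduction or unfolding argument shows $\FreeTyVars{t_1 \Inter t_2} \subseteq \FreeTyVars{t_1} \cup \FreeTyVars{t_2}$, and similarly for $\times$.

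For (i), I will induct on $g$. For $\Is{\GuardTy}{x}$ the environment is $\{x : \TyOfGt{\GuardTy}\}$, and each of $\Int$, $\Float$, $\Pairty{\ErlTop}{\ErlTop}$ and $\ErlBot \to \ErlTop$ is closed, because $\ErlTop$ and $\ErlBot$ are built only from base types and constructors. The cases $\Is{\GuardTy}{v}$, $\GuardOracle$ and $\GuardTrue$ yield $\EmptyEnv$. For $\GuardAnd{g_1}{g_2}$, the environment $\Env{g_1} \InterEnv \Env{g_2}$ binds each variable either to a closed type from one side or to the intersection $\Neg(\Neg t_1 \Union \Neg t_2)$ of two closed types, which is closed by the remark above.

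For (ii), I will induct on $p$, assuming $\Venv$ is closed. If $p$ is a value, the result is $\TyOfConst{\Const}$, $\ErlTop$, a singleton $\Const$, or $\ErlBot$, all of which are closed. If $p$ is $\Capt{x}$ or $\Wildcard$, the result is $\ErlTop$ or $\ErlBot$. If $p$ is a variable, the result is $\Venv(x)$, which is closed by hypothesis, or $\ErlTop$. If $p$ is a pair, the result is a product of types that are closed by the induction hypothesis.

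For (iii), I will unfold the definition of $\PgTy{\WithGuard{p}{g}}{\DirPatTy}$. In each case it is either $\ErlBot$ or $\PatTy{p}{\Env{g}}$ with the matching direction, and the latter is closed by (ii) applied with the environment $\Env{g}$, which is closed by (i).

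The only step that needs genuine care is justifying that intersections and products preserve closedness under the coinductive definition of $\FreeTyVarsName$. These particular types are finite trees, so $\FreeTyVarsName$ unfolds in finitely many steps and the argument reduces to a direct computation. I will state this as a one-line observation rather than run a full coinductive proof.
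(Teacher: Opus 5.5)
Your proof is correct and follows the paper's own argument: induction on $g$ for (i), induction on $p$ for (ii), and (iii) by unfolding the definition of potential and accepting types and combining (i) with (ii). One side remark is wrong but harmless: these types need not be finite trees, since $\ErlTop$ is defined recursively and $\Venv(x)$ may be any closed regular type. This does not affect the argument, because the memo set $\mathcal{T}$ in \Cref{def:free-tyvars} handles regular trees and $\FreeTyVars{t}$ is just the set of type variables occurring in the tree $t$, so closedness holds for $\ErlTop$ and $\ErlBot$ and is preserved by products and intersections.
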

\begin{proof}
  Claim (i) follows by induction on $g$, claim (ii) by induction on $p$. Claim (iii) follows from (i) and (ii).
\end{proof}

\begin{lemma}
  \label{lem:free-exp-vars}
  If $\ExpTy{\Venv}{e}{t}$ then $\Free{e} \subseteq \Dom{\Senv} \cup \Dom{\Venv}$
\end{lemma}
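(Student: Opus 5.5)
We prove the claim by induction on the derivation of $\ExpTy{\Venv}{e}{t}$, with one case per typing rule of \Cref{f:decl-typing}. In each case we compute $\Free{e}$ using \Cref{def:free-exp} and show that every free variable lies in $\Dom{\Senv} \cup \Dom{\Venv}$. Note that $\Senv$ never changes along a derivation, while $\Venv$ can grow, in rule \Rule{d-abs} and in the branches of \Rule{d-case}.

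\emph{Base cases.} For \Rule{d-var}, the premise $\Venv(x) = t$ gives $x \in \Dom{\Venv}$. For \Rule{d-var-poly}, the premise $t \in \Inst{\Senv(x)}$ presupposes that $x \in \Dom{\Senv}$. For \Rule{d-const} we have $\Free{\Const} = \emptyset$.

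\emph{Structural cases.} Rules \Rule{d-app}, \Rule{d-pair}, and \Rule{d-sub} follow directly from the induction hypothesis on the premises; the free variables are unions of those of the subexpressions, and the environments are unchanged. For \Rule{d-abs}, the induction hypothesis gives $\Free{e} \subseteq \Dom{\Senv} \cup \Dom{\Venv} \cup \{x\}$. Therefore $\Free{e} \setminus \{x\} \subseteq \Dom{\Senv} \cup \Dom{\Venv}$.

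\emph{The case \Rule{d-case}.} This is the only case requiring care. The free variables of the whole expression decompose into three parts:
\begin{itemize}
\item $\Free{e}$ of the scrutinee, which is handled by the induction hypothesis;
\item $\FreePg{\Pg_i}$, which is covered directly by the premise $\Free{\Pg_i} \subseteq \Dom{\Senv} \cup \Dom{\Venv}$;
\item $\Free{e_i} \setminus \PatBound{\Pg_i}$ for each branch $i$.
\end{itemize}
For the last part, observe that $\Dom{\Venv_i} = \Dom{\Venv} \cup \Dom{\PatEnv{t_i}{\Pg_i}}$, since $\InterEnv$ takes the union of domains. There are two sub-cases.
\begin{itemize}
\item If $t_i \IsSubty \ErlBot$, the side condition gives $\Free{e_i} \subseteq \Dom{\Senv} \cup \Dom{\Venv_i}$ directly.
\item Otherwise, the induction hypothesis on $\ExpTy{\Venv_i}{e_i}{t_i'}$ gives the same inclusion.
\end{itemize}
It remains to show that $\Dom{\PatEnv{t_i}{\Pg_i}} \setminus \PatBound{\Pg_i} \subseteq \Dom{\Senv} \cup \Dom{\Venv}$.

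This is the main obstacle. \Cref{lem:pat-vars}(iv) only gives the inclusion $\PatBound{\Pg} \subseteq \Dom{\PatEnv{t}{\Pg}}$, not equality. The reason is that $\Env{g}$ may also bind variables tested in the guard that are not bound by the pattern. To handle this, unfold $\PatEnv{t_i}{(\WithGuard{p}{g})} = (\PatEnv{t_i}{p}) \InterEnv \Env{g}$. By \Cref{lem:pat-vars}(iii), the domain of the first component is exactly $\PatBound{p}$. For the second component, a straightforward induction on $g$ shows that $\Dom{\Env{g}} \subseteq \FreeG{g}$, because only $\Is{\Gt}{x}$ contributes a binding, namely $x$. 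Hence any extra variable lies in $\FreeG{g} \setminus \PatBound{p} \subseteq \FreePg{\Pg_i}$. By the premise on $\Free{\Pg_i}$, this set is contained in $\Dom{\Senv} \cup \Dom{\Venv}$, which closes the case and completes the induction.
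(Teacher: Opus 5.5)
Your proof is correct and takes the same approach as the paper, which only states that the claim follows by a straightforward induction on the typing derivation. Your treatment of the branch bodies in \Rule{d-case} fills in the detail the paper leaves implicit. You use $\Dom{\Env{g}} \subseteq \FreeG{g}$, so that guard-tested variables not bound by the pattern are covered by the premise on $\Free{\Pg_i}$; this inclusion is the right one, and equality would fail for type tests on values with free variables.
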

\begin{proof}
  Straightforward induction on $\ExpTy{\Venv}{e}{t}$.
\end{proof}

\begin{lemma}
  \label{lem:val-matches-gt}
  If $\ExpTy{\Venv}{v}{t}$ and $t \IsSubty \TyOf{\GuardTy}$ then $v \ValMatches \GuardTy$.
\end{lemma}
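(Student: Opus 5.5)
We proceed by case distinction on the form of $\GuardTy$. In each case the aim is to determine the shape of $v$ from its type, using \Cref{lem:value-type-determines-shape}, and then read off $v \ValMatches \GuardTy$ from the rules \Rule{is-int}, \Rule{is-float}, \Rule{is-pair}, \Rule{is-fun}. Throughout, we first use rule \Rule{d-sub} with $t \IsSubty \TyOf{\GuardTy}$ to obtain $\ExpTy{\Venv}{v}{\TyOf{\GuardTy}}$. We then apply \Cref{lem:sub-elim} to get a derivation $\ExpTy{\Venv}{v}{t'}$ that does not end with \Rule{d-sub}, with $t' \IsSubty \TyOf{\GuardTy}$. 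Since $v$ is a value, this last rule must be \Rule{d-const}, \Rule{d-abs}, or \Rule{d-pair}; the variable, application and case rules cannot apply.

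\emph{Case $\GuardTy = \AnyPair$.} Here $t' \IsSubty \Pairty{\ErlTop}{\ErlTop}$, and we show that $v$ is a pair by ruling out the other two rules.
\begin{itemize}
\item If the rule were \Rule{d-const}, then $t'$ would be a singleton int or $\Float$. Its intersection with $\Pairty{\ErlTop}{\ErlTop}$ is empty while $t'$ itself is nonempty, so $t' \IsSubty \Pairty{\ErlTop}{\ErlTop}$ would force $t' \IsSubty \ErlBot$.
\item If the rule were \Rule{d-abs}, we would have $t' = t_1 \to t_2$, which is likewise disjoint from pair types.
\end{itemize}
Both are impossible by \Cref{lem:value-not-bot}, since $\ExpTy{\Venv}{v}{t'}$. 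Hence $v = \Pair{v_1}{v_2}$, and \Rule{is-pair} gives the claim.

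\emph{Case $\GuardTy = \AnyFun$.} This is symmetric: \Rule{d-const} and \Rule{d-pair} yield types disjoint from $\ErlBot \to \ErlTop$, so $v = \Abs{x}{e}$ and \Rule{is-fun} applies.

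\emph{Cases $\GuardTy = \Int$ and $\GuardTy = \Float$.} The types of \Rule{d-abs} and \Rule{d-pair} are disjoint from $\Int$ and from $\Float$, so $v = \Const$ with $t' = \TyOfConst{\Const}$. If $\Const$ is an int, then $\TyOfConst{\Const} = \Const \IsSubty \Int$, which is consistent with $\GuardTy = \Int$. If $\GuardTy = \Float$, an int $\Const$ would give $\Const \IsSubty \Float$ with $\Const \Inter \Float$ empty, which is impossible by the same emptiness argument. The float case is analogous, and we conclude with \Rule{is-int} or \Rule{is-float}.

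The main obstacle is justifying the disjointness facts, for example that $\Int \Inter \Float$, $\Pairty{\ErlTop}{\ErlTop} \Inter (\ErlBot \to \ErlTop)$ and $\Constty \Inter \Float$ are all $\TyEquiv \ErlBot$. We also need the step that a nonempty type contained in a type disjoint from itself must be empty, so that \Cref{lem:value-not-bot} applies. Neither is among the listed properties in \Cref{prop:subty}, so we would appeal directly to the set-theoretic interpretation of types of \citet{journals/jacm/FrischCB08}, in which these constructors denote pairwise disjoint subsets of the model.
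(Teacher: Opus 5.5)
Your proof is correct and follows essentially the same route as the paper, whose proof is a one-line appeal to inspecting all combinations of guard types ($\Int$, $\Float$, $\AnyPair$, $\AnyFun$) and value shapes (int constant, float constant, pair, lambda). Your case analysis via \Cref{lem:sub-elim}, the pairwise disjointness of the constructors in the set-theoretic model, and \Cref{lem:value-not-bot} (after one trivial use of \Rule{d-sub} to reach type $\ErlBot$) spells out that inspection in full.
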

\begin{proof}
  Follows from inspection all possible combinations of guard types $\GuardTy$ ($\Int$, $\Float$,
  $\AnyPair$, $\AnyFun$) and shapes of values (int constant, float constant, pair, lambda).
\end{proof}

\begin{lemma}
  \label{lem:val-matches-subst}
  $v \ValMatches \GuardTy$ if, and only if, $v\Valsubst \ValMatches \GuardTy$ for all $\Valsubst$.
\end{lemma}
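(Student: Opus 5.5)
The plan is to show that the relation $v \ValMatches \GuardTy$ depends only on the outermost shape of $v$, and that applying a value substitution $\Valsubst$ never changes that shape. Recall that $\Valsubst = [v_i/x_i]_{i \in I}$ replaces expression variables by values. The four rules \Rule{is-int}, \Rule{is-float}, \Rule{is-pair}, and \Rule{is-fun} in \Cref{f:dynamic-patterns} look only at whether $v$ is an int constant, a float constant, a pair $\Pair{v_1}{v_2}$, or an abstraction $\Abs{x}{e}$.

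The first step is a small shape-preservation observation, proved by case analysis on the grammar of values $v ::= \Const \mid \Abs{x} e \mid \Pair{v}{v}$:
\begin{itemize}
\item $\Const\Valsubst = \Const$, since constants contain no variables.
\item $(\Abs{x}{e})\Valsubst = \Abs{x'}{(e'\Valsubst)}$ after renaming the bound variable to avoid capture. This is again an abstraction.
\item $\Pair{v_1}{v_2}\Valsubst = \Pair{v_1\Valsubst}{v_2\Valsubst}$. This is a pair, and its components are still values, so we do not even need to recurse.
\end{itemize}
Since a value is never a bare variable, $\Valsubst$ cannot replace the value as a whole. In particular, $v\Valsubst$ is an int constant, a float constant, a pair, or an abstraction exactly when $v$ is.

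For the direction from left to right, fix $\Valsubst$ and do a case analysis on the derivation of $v \ValMatches \GuardTy$. Each rule's premise concerns only the head shape, and that shape is preserved by the observation above, so the same rule derives $v\Valsubst \ValMatches \GuardTy$.

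For the direction from right to left, instantiate the hypothesis with the empty substitution $[\,]$, for which $v[\,] = v$. A slightly more robust alternative is to take an arbitrary $\Valsubst$ and invert the rule deriving $v\Valsubst \ValMatches \GuardTy$. This yields the shape of $v\Valsubst$, and the shape observation, read in reverse, transfers it back to $v$.

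I see no real obstacle here. The only point needing care is capture-avoiding substitution under $\lambda$, and there we only have to note that the result is still an abstraction (using the paper's convention of equality up to renaming of bound variables). Nothing about the body matters for \Rule{is-fun}.
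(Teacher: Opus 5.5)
Your proof is correct and matches the paper's argument, which is just ``induction on $v$, noting that $v$ and $v\Valsubst$ have the same outermost shape.'' The one place the induction is actually needed is your pair case, because \Rule{is-pair} requires $v_1\Valsubst$ and $v_2\Valsubst$ to again be values, so ``no need to recurse'' slightly undersells it. Also keep the inversion version of your right-to-left direction rather than the empty-substitution shortcut: later lemmas such as \Cref{lem:eval-guard-implies-safe} use the statement for a fixed $\Valsubst$ (if not $v \ValMatches \GuardTy$, then not $v\Valsubst \ValMatches \GuardTy$), and the shortcut alone does not give that.
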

\begin{proof}
  Straightforward induction on $v$. Note that $v$ and $v\Valsubst$ have the same shape
  on the outermost level.
\end{proof}

\begin{lemma}
  \label{lem:env-g-subst}
  $\Env{g\Valsubst} = \{x : \Env{g}(x) \mid x \in \Dom{\Env{g}} \setminus \Dom{\Valsubst} \}$
\end{lemma}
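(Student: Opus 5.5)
Proof by structural induction on the guard $g$, following the clauses defining $\Env{\cdot}$ in \Cref{f:aux-pattern-types}. Throughout I read $g\Valsubst$ as the capture-avoiding substitution of values for the variables in $\Dom{\Valsubst}$, applied pointwise to the guard syntax. Hence $(\GuardAnd{g_1}{g_2})\Valsubst = \GuardAnd{g_1\Valsubst}{g_2\Valsubst}$. Further, $\Is{\Gt}{x}\Valsubst$ is $\Is{\Gt}{\Valsubst(x)}$ if $x \in \Dom{\Valsubst}$ and stays $\Is{\Gt}{x}$ otherwise. Finally, $\Is{\Gt}{v}\Valsubst = \Is{\Gt}{v\Valsubst}$, and $\GuardTrue$ and $\GuardOracle$ are unchanged.

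The base cases are immediate. For $g = \GuardTrue$, $\GuardOracle$, or $\Is{\Gt}{v}$, the substituted guard has the same form, so both sides are $\EmptyEnv$. Note that $\Is{\Gt}{v}\Valsubst$ is again a value test because values are closed under value substitution. For $g = \Is{\Gt}{x}$ I split on whether $x \in \Dom{\Valsubst}$:
\begin{itemize}
\item If $x \in \Dom{\Valsubst}$, then $g\Valsubst = \Is{\Gt}{\Valsubst(x)}$, so the left side is $\EmptyEnv$. The right side is also empty, since $\Dom{\Env{g}} = \{x\}$ and $x$ is removed.
\item If $x \notin \Dom{\Valsubst}$, both sides equal $\{x : \TyOfGt{\Gt}\}$.
\end{itemize}

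The inductive case is $g = \GuardAnd{g_1}{g_2}$. Here $\Env{g\Valsubst} = \Env{g_1\Valsubst} \InterEnv \Env{g_2\Valsubst}$, and the induction hypothesis rewrites each factor as the restriction of $\Env{g_i}$ to $\Dom{\Env{g_i}} \setminus \Dom{\Valsubst}$. What remains is a small lemma about $\InterEnv$: restricting both operands to the complement of a set $X$ and then intersecting equals intersecting and then restricting. This lemma follows from the three-case definition of $\InterEnv$ by examining, for a variable $y \notin X$, which of the two domains contain $y$. Variables in $X$ are absent on both sides, and for $y \notin X$ membership in each domain is unchanged by the restriction, so the same case of the definition fires on both sides and yields the same type.

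I expect no serious obstacle. The only points needing care are the following:
\begin{itemize}
\item Pinning down that substitution into $\Is{\Gt}{x}$ produces a value test, so the variable binding disappears from the environment.
\item Ensuring capture-avoidance does not interfere. It cannot, since guards contain no binders.
\item Stating the commutation of restriction with $\InterEnv$ cleanly.
\end{itemize}
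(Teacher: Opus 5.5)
Your proof is correct and follows the same route as the paper. The paper also argues by a straightforward induction on $g$, using that $\Env{\Is{\GuardTy}{v}} = \EmptyEnv$ for every value $v$, and your commutation of domain restriction with $\InterEnv$ in the conjunction case is exactly the detail it leaves implicit. One small correction: guards can contain binders inside lambda values in a test $\Is{\GuardTy}{v}$, but, as you observe, such a test remains a value test after substitution and contributes nothing to the environment, so your argument is unaffected.
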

\begin{proof}
  Straightforward induction on $g$ with $\Env{\Is{\GuardTy}{v}} = \EmptyEnv$ for any value $v$.
\end{proof}

\begin{lemma}
  \label{lem:pg-env-subst}
  Let $\Venv = \PatEnv{t}{\Pg}$ and $\Venv' = \PatEnv{t}{(\Pg\Valsubst)}$.
  Then $\Venv' = \{x : \Venv(x) \mid x \in \Dom{\Venv} \setminus \Dom{\Valsubst}\}$.
\end{lemma}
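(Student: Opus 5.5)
We write $\Pg = \WithGuard{p}{g}$; then $\Pg\Valsubst = \WithGuard{(p\Valsubst)}{(g\Valsubst)}$. Here $\Valsubst$ substitutes values for free variables, so it must not capture the variables bound by $p$. The plan is to split the statement along the definition $\PatEnv{t}{(\WithGuard{p}{g})} = (\PatEnv{t}{p}) \InterEnv \Env{g}$ and treat the pattern part and the guard part separately. For the guard part, \Cref{lem:env-g-subst} already states exactly what we need: $\Env{g\Valsubst}$ is $\Env{g}$ restricted to $\Dom{\Env{g}} \setminus \Dom{\Valsubst}$. So the remaining work is an analogous statement for patterns, followed by a short argument that restriction commutes with $\InterEnv$.

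\textbf{Pattern part.} The first step is an auxiliary claim: for every $t$, $\PatEnv{t}{(p\Valsubst)} = \{x : (\PatEnv{t}{p})(x) \mid x \in \Dom{\PatEnv{t}{p}} \setminus \Dom{\Valsubst}\}$. We prove it by induction on $p$.
\begin{itemize}
\item If $p$ is a value $v$, a wildcard, or a captured variable $\Capt{x}$, then $p\Valsubst$ is again a value, a wildcard, or a captured variable. Both sides are $\EmptyEnv$. Substituting into $\Capt{x}$ gives a captured pattern or a value, and either way the pattern environment is empty.
\item If $p$ is a binding variable $x$, then $x$ is bound by the pattern, so capture avoidance gives $x \notin \Dom{\Valsubst}$ and $p\Valsubst = x$. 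Both sides equal $\{x : t\}$.
\item If $p = (p_1, p_2)$, then $p\Valsubst = (p_1\Valsubst, p_2\Valsubst)$. The induction hypothesis applies to $p_1$ with type $\ProjL{t}$ and to $p_2$ with type $\ProjR{t}$. We then conclude using the lemma that restriction commutes with $\InterEnv$, stated next.
\end{itemize}

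\textbf{Restriction commutes with $\InterEnv$.} For any set of variables $X$, write $\Venv|_{\neg X}$ for $\{x : \Venv(x) \mid x \in \Dom{\Venv} \setminus X\}$. We need $(\Venv_1 \InterEnv \Venv_2)|_{\neg X} = \Venv_1|_{\neg X} \InterEnv \Venv_2|_{\neg X}$. This follows pointwise from the three-case definition of $\InterEnv$. For $x \notin X$, membership of $x$ in each domain is the same before and after restriction, so the same case fires with the same types. For $x \in X$, $x$ is absent from both sides.

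\textbf{Combining.} With $X = \Dom{\Valsubst}$, we compute $\PatEnv{t}{(\Pg\Valsubst)} = \PatEnv{t}{(p\Valsubst)} \InterEnv \Env{g\Valsubst}$. By the pattern claim and \Cref{lem:env-g-subst}, this equals $(\PatEnv{t}{p})|_{\neg X} \InterEnv \Env{g}|_{\neg X}$. By the commutation lemma, that is $\Venv|_{\neg X}$, which is the statement.

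\textbf{Main obstacle.} The delicate point is the interaction of $\Valsubst$ with the variables bound by the pattern. The binding-variable case relies on $\Dom{\Valsubst} \cap \PatBound{p} = \emptyset$. I would justify this from the convention that syntax is identified up to renaming and $\Valsubst$ is capture-avoiding. Alternatively, I would note that in \Rule{red-case} substitutions are only applied to clauses whose bound variables are disjoint from $\Dom{\Valsubst}$.

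A related subtlety concerns the guard. A variable $x \in \PatBound{p}$ that also appears in $g$ is not substituted in $g$ either, since substitution respects the binding scope. This matches \Cref{lem:env-g-subst} as long as $x \notin \Dom{\Valsubst}$, which is consistent with the first point.
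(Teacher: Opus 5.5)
Your proof is correct and is essentially the paper's argument. The paper also assumes w.l.o.g.\ that $\PatBound{p} \cap \Dom{\Valsubst} = \emptyset$, so that $\Pg\Valsubst = \WithGuard{p\Valsubst}{g\Valsubst}$ leaves the pattern environment unchanged, and then concludes by \Cref{lem:env-g-subst}; your induction on $p$ and the commutation of restriction with $\InterEnv$ just spell out what the paper leaves implicit. Your alternative appeal to \Rule{red-case} adds nothing, though: the disjointness is an implicit side condition (renaming $\PatBound{p}$ also renames $\Dom{\Venv}$), and it is met at the place where the lemma is actually used, \Cref{lem:expr-subst-single-mono}, which assumes the $\PatBound{\Pg_i}$ fresh.
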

\begin{proof}
  Assume $\Pg = \WithGuard{p}{g}$. \Wlog{}, $\PatBound{p}$ fresh, so
  $\PatBound{p} \cap \Dom{\Valsubst} = \emptyset$.
  Then $\Pg \Valsubst = \WithGuard{p\Valsubst}{g\Valsubst}$. The claim then follows
  with \Cref{lem:env-g-subst}.
\end{proof}

\begin{lemma}
  \label{lem:value-has-both-types}
  If $\ExpTy{\Venv}{v}{\Monoty}$ and $\ExpTy{\Venv}{v}{\Monoty'}$, then $\ExpTy{\Venv}{v}{\Monoty \Inter \Monoty'}$.
\end{lemma}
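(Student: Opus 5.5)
We prove the claim by induction on the structure of the value $v$, rather than on the two derivations. By \Cref{lem:sub-elim}, both $\ExpTy{\Venv}{v}{\Monoty}$ and $\ExpTy{\Venv}{v}{\Monoty'}$ admit derivations ending in a non-subsumption rule, giving types $s \IsSubty \Monoty$ and $s' \IsSubty \Monoty'$. Since $v$ is a value, that last rule is \Rule{d-const}, \Rule{d-abs}, or \Rule{d-pair}, and which one is fixed by the shape of $v$. In each case we will produce a single derivation $\ExpTy{\Venv}{v}{s''}$ with $s'' \IsSubty s \Inter s'$. Rule \Rule{d-sub} then gives $\ExpTy{\Venv}{v}{\Monoty \Inter \Monoty'}$, because $s \Inter s' \IsSubty \Monoty \Inter \Monoty'$.

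\emph{Constants.} If $v = \Const$, then $s = s' = \TyOfConst{\Const}$, so we take $s'' = \TyOfConst{\Const}$.

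\emph{Pairs.} If $v = \Pair{v_1}{v_2}$, then $s = \Pairty{s_1}{s_2}$ and $s' = \Pairty{s_1'}{s_2'}$, where the premises give $\ExpTy{\Venv}{v_i}{s_i}$ and $\ExpTy{\Venv}{v_i}{s_i'}$. The induction hypothesis yields $\ExpTy{\Venv}{v_i}{s_i \Inter s_i'}$, and \Rule{d-pair} then gives type $\Pairty{(s_1\Inter s_1')}{(s_2 \Inter s_2')}$. This type is equivalent to $\Pairty{s_1}{s_2} \Inter \Pairty{s_1'}{s_2'}$ in the set-theoretic interpretation of products (\citealt{journals/jacm/FrischCB08}), so it is in particular a subtype of $s \Inter s'$.

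\emph{Abstractions.} The case $v = \Abs{x}{e}$ is the main obstacle. Here $s = s_1 \to s_2$ and $s' = s_1' \to s_2'$, each obtained by \Rule{d-abs} from a derivation of the body. The declarative system has no intersection-introduction rule, and \Rule{d-abs} produces only a single arrow. Therefore $(s_1 \to s_2) \Inter (s_1' \to s_2')$ cannot be derived directly: a single arrow $u_1 \to u_2$ lies below both arrows only if $u_1 \geq s_1 \Union s_1'$ and $u_2 \leq s_2 \Inter s_2'$, and the body is not typeable that way in general. My first attempt would be to reinterpret the lemma semantically, as Frisch et al.\ do for values: a value's type judgments correspond to membership in the interpretation of types, and membership is closed under intersection. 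Concretely, I would look for a lemma stating that for values $\ExpTy{\Venv}{v}{\Monoty}$ holds exactly when $v$ belongs to the semantic interpretation of $\Monoty$, in analogy to \Cref{lem:value-u-or-notu}, and derive closure under intersection from it. If the declarative rules in \Cref{f:decl-typing} really lack an intersection rule for abstractions, this case fails. In that situation I would restrict the usage: the lemma is invoked on values produced by pattern matching against guard and pattern types, and these are closed types of the forms $\TyOfGt{\GuardTy}$, $\Pairty{\ErlTop}{\ErlTop}$, or $\ErlBot \to \ErlTop$. For the abstraction case it then suffices that one of the two types is a supertype of the other, or is the top arrow $\ErlBot \to \ErlTop$; there the intersection collapses to the other type and \Rule{d-sub} finishes the case.

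Concretely, I would first try the semantic route, citing \citet[Lemma~6.22]{journals/jacm/FrischCB08} as done for \Cref{lem:value-u-or-notu}. That lemma rests on a model where the typing of values is complete for intersections, since Frisch et al.'s abstraction rule types against intersections of arrows. If that alignment breaks for \MinErl{}'s single-arrow \Rule{d-abs}, I would fall back to the restricted form described above.
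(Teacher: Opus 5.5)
Your constant and pair cases are correct, and your diagnosis of the abstraction case is right. But no cleverer argument can close that case, because the statement is false for abstractions. Let $s = (\Int\to\Int)\Inter(\Float\to\Float)$. Then $\ExpTy{\EmptyEnv}{\Abs{y}{y}}{\Int\to\Int}$ and $\ExpTy{\EmptyEnv}{\Abs{y}{y}}{\Float\to\Float}$ both hold. However, any derivation of $\ExpTy{\EmptyEnv}{\Abs{y}{y}}{s}$ consists of \Rule{d-abs} followed by applications of \Rule{d-sub}. It therefore yields $t_1\to t_2\IsSubty s$ with $t_1\IsSubty t_2$, since the body is just $y$. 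Arrow subtyping against the non-empty domains $\Int$ and $\Float$ forces $\Int\Union\Float\IsSubty t_1$ and $t_2\IsSubty\Int\Inter\Float\TyEquiv\ErlBot$, which contradicts $t_1\IsSubty t_2$.

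So your semantic route is unavailable: in \MinErl{}, derivable typing of abstractions does not coincide with semantic membership. $\Abs{y}{y}$ lies semantically in $s$ but cannot be typed at $s$. It cannot be typed at $\Neg s$ either, because an intersection of arrow types is never empty. Hence even \Cref{lem:value-u-or-notu} fails for this $v$ and $u = s$. What your argument does establish is the lemma for abstraction-free values. Via your fallback, it also covers the case where, at an abstraction, one of the two types already contains the arrow produced by the other derivation (e.g.\ a guard type such as $\ErlBot\to\ErlTop$).

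The paper's proof has the same hole. It generalizes the claim to arbitrary expressions and inducts on the two derivations, citing Lemma~6.15 of Frisch et al. That lemma relies on their abstraction rule, which assigns an interface-annotated abstraction the intersection of all its interface arrows (plus admissible negated arrows), so two typings of the same abstraction can always be merged. Here \Rule{d-abs} yields a single arrow and there is no intersection-introduction rule, so the generalized claim fails at the same $\Abs{y}{y}$.

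Your fallback also does not cover all uses of the lemma, contrary to what you assume:
\begin{itemize}
\item \Cref{lem:correct-env-gen-pat} applies it at $(\PatEnv{\ProjL{t}}{p_1})(x)\Inter(\PatEnv{\ProjR{t}}{p_2})(x)$ for non-linear pair patterns, where the types are arbitrary projections rather than guard types.
\item The proof of \Cref{thm:subj-reduction} (case \Rule{d-var-poly}, annotated definition) applies it at $\InterBig_{i\in I} t_i\Tysubst$, an intersection of annotated arrows. This is exactly the failing case.
\end{itemize}
Subject reduction itself breaks. The program $\Letrec{\Def{x}{s}{\Abs{y}{y}}}{(\Abs{w}{\Abs{g}{g\,w}})\,x}$ has type $(s\to\Int)\to\Int$. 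Its first step (\Rule{red-var}) replaces $x$ by $\Abs{y}{y}$, and typing the reduct $(\Abs{w}{\Abs{g}{g\,w}})\,(\Abs{y}{y})$ at $(s\to\Int)\to\Int$ forces $\Abs{y}{y}$ to have a subtype of $s$, which is impossible. The repair therefore belongs in the typing rules, not in the proof of this lemma. For instance, an abstraction rule could assign $\InterBig_i(t_i\to t_i')$ whenever the body checks against every arrow. Under such a rule your abstraction case goes through by concatenating the two families of arrows.
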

\begin{proof}
  We generalize the claim and prove that
  $\ExpTy{\Venv}{e}{\Monoty}$ and $\ExpTy{\Venv}{e}{\Monoty'}$ imply $\ExpTy{\Venv}{e}{\Monoty \Inter \Monoty'}$.
  The proof is by induction on the typing derivations, see \citet[Lemma~6.15, page~27]{journals/jacm/FrischCB08}.
\end{proof}

\begin{lemma}
  \label{lem:safe-implies-eval-guard}
  Assume $\ExpTy{\Venv}{v}{\PatLowerTy{p}{\Env{g}}}$.
  If $\PatSubst{v}{p} = \Valsubst$ and $\SafeDown{g}{\PatBound{p}}$, then $\EvalGuard{g \Valsubst} = \True$.
\end{lemma}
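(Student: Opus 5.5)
The plan is to prove the statement by induction on the structure of the guard $g$, keeping $p$, $v$ and $\Valsubst$ fixed. The cases $g = \GuardTrue$ and $g = \GuardOracle$ are immediate. For $\GuardTrue$, rule \Rule{eval-guard-true} gives $\True$. For $\GuardOracle$, the hypothesis $\SafeDown{\GuardOracle}{\PatBound{p}} = \False$ contradicts the assumption. For $g = \GuardAnd{g_1}{g_2}$, we have $\SafeDown{g}{X}$ iff $\SafeDown{g_i}{X}$ for both $i$, and $(\GuardAnd{g_1}{g_2})\Valsubst = \GuardAnd{g_1\Valsubst}{g_2\Valsubst}$. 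However, the typing assumption mentions $\Env{g}$ rather than $\Env{g_i}$. So before inducting I generalize the statement: I assume $\ExpTy{\Venv}{v}{\PatLowerTy{p}{\Venv'}}$ for an arbitrary environment $\Venv'$ such that, for every test $\Is{\Gt}{x}$ occurring in $g$ with $x \in \PatBound{p}$, we have $\Venv'(x) \IsSubty \TyOf{\Gt}$. This side condition holds for $\Venv' = \Env{g}$, because $\Env{g}(x)$ is the intersection of all $\TyOf{\Gt}$ tested on $x$ in $g$. It also passes to $g_1$ and $g_2$, since a test in $g_i$ is a test in $g$. The conjunction case then follows from the induction hypothesis and \Rule{eval-guard-and}.

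The case $g = \Is{\Gt}{w}$ for a value $w$ is handled by safety. $\SafeDown{g}{X}$ forces $w \ValMatches \Gt$, so by \Cref{lem:val-matches-subst} also $w\Valsubst \ValMatches \Gt$, and \Rule{eval-guard-value} returns $\True$.

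The main case is $g = \Is{\Gt}{x}$. Safety gives $x \in \PatBound{p}$. By \Cref{lem:pat-vars}(i), $x \in \Dom{\Valsubst}$, so $g\Valsubst = \Is{\Gt}{\Valsubst(x)}$, and it remains to show $\Valsubst(x) \ValMatches \Gt$. For this I prove an auxiliary claim by induction on $p$: if $\ExpTy{\Venv}{v}{\PatLowerTy{p}{\Venv'}}$ and $\PatSubst{v}{p} = \Valsubst$ with $x \in \PatBound{p} \cap \Dom{\Venv'}$, then $\ExpTy{\Venv}{\Valsubst(x)}{\Venv'(x)}$.
\begin{itemize}
\item For $p = x$, we have $\Valsubst = [v/x]$ and $\PatLowerTy{x}{\Venv'} = \Venv'(x)$, so the claim is the assumption itself.
\item For $p = (p_1,p_2)$, \Cref{lem:value-type-determines-shape}(iii) splits $v = (v_1,v_2)$ with $\ExpTy{\Venv}{v_i}{\PatLowerTy{p_i}{\Venv'}}$. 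Rule \Rule{match-pair} gives $\Valsubst = \Valsubst_1 \cup \Valsubst_2$. Choosing the $i$ with $x \in \PatBound{p_i}$, the induction hypothesis applies, and $\Valsubst(x) = \Valsubst_i(x)$ by similarity.
\item All other pattern forms bind no variables, so the claim holds vacuously.
\end{itemize}
Combining the auxiliary claim with $\Venv'(x) \IsSubty \TyOf{\Gt}$, rule \Rule{d-sub} yields $\ExpTy{\Venv}{\Valsubst(x)}{\TyOf{\Gt}}$, and \Cref{lem:val-matches-gt} gives $\Valsubst(x) \ValMatches \Gt$.

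I expect the main obstacle to be this variable case: the induction on $p$, and in particular the bookkeeping showing that the substitution produced by matching a pair agrees with the component substitution on $x$. The generalization of the typing hypothesis from $\Env{g}$ to an arbitrary $\Venv'$, needed for the conjunction case, is a secondary subtlety. One small point to check is that $x \in \Dom{\Env{g}}$ whenever $\Is{\Gt}{x}$ occurs in $g$, so that $\PatLowerTy{x}{\cdot}$ uses $\Env{g}(x)$ rather than $\ErlTop$. This follows by induction on $g$, since the domain of $\Env{\GuardAnd{g_1}{g_2}}$ is the union of the domains of $\Env{g_1}$ and $\Env{g_2}$.
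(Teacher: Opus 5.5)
Your proof is correct, but it decomposes the argument differently from the paper.

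The paper first flattens the safe guard $g$ into a conjunction of two kinds of tests: tests $\Is{\Gt_i}{x_i}$ with $x_i \in \PatBound{p}$, and tests on values that already match. It disposes of the value tests with \Cref{lem:val-matches-subst}. It then proves $\Valsubst(x_i) \ValMatches \Gt_i$ by an induction on $p$ applied to the lemma itself.
\begin{itemize}
\item In the pair case it builds sub-guards $g_1, g_2$ collecting the tests on $\PatBound{p_1}$ and $\PatBound{p_2}$, respectively. It then invokes the induction hypothesis, tacitly using that $\PatLowerTy{p_i}{\Env{g_i}} = \PatLowerTy{p_i}{\Env{g}}$.
\item In the variable case it uses \Cref{lem:value-not-bot} to get $\Env{g}(x) \neq \ErlBot$. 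Since the guard types are pairwise disjoint, all tests on $x$ must then use the same guard type, so $\Env{g}(x) = \TyOf{\Gt_1}$.
\end{itemize}

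You instead separate the guard structure from the pattern structure.
\begin{itemize}
\item You do a structural induction on $g$, with the typing hypothesis generalized to any $\Venv'$ such that $x \in \Dom{\Venv'}$ and $\Venv'(x) \IsSubty \TyOf{\Gt}$ for every tested bound variable.
\item You add a guard-free auxiliary induction on $p$ showing that each bound value $\Valsubst(x)$ inhabits $\Venv'(x)$. This is essentially the accepting-type analogue of \Cref{lem:correct-env-gen-pat}.
\end{itemize}

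What your route buys: there is no ad hoc guard splitting, and no unstated agreement of $\Env{g_i}$ with $\Env{g}$. It also does not depend on the guard types being disjoint. The inequality $\Env{g}(x) \IsSubty \TyOf{\Gt}$ holds simply because $\Env{g}(x)$ is an intersection, so your argument would survive adding overlapping guard types.

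The price is an extra invariant that you must state and carry through the conjunction case. The paper, by contrast, keeps the lemma's statement unchanged and uses a single induction.
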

\begin{proof}
  From $\SafeDown{g}{\PatBound{p}}$, we know that $g$ is a conjunction of type tests
  $(\Is{\Gt_i}{x_i})_{i \in I}$ and $(\Is{\Gt_j'}{v_j})_{j \in J}$ such that
  $x_i \in \PatBound{p}$ for all $i \in I$ and $v_j \ValMatches \Gt_j'$ for all $j \in J$.
  From \Cref{lem:pat-vars}, we have $\PatBound{p} = \Dom{\Valsubst}$.
  With $v_j \ValMatches \Gt_j'$, we also have $v_j\Valsubst \ValMatches \Gt_j'$ by \Cref{lem:val-matches-subst}.
  To establish $\EvalGuard{g\Valsubst} = \True$, we then only need to show
  \begin{igather}
    \Valsubst(x_i) \ValMatches \Gt_i ~\textrm{for all}~i \in I
  \end{igather}
  We proceed by induction on $p$.
  \begin{CaseDistinction}{on the form of $p$}
    \CdCase{$p = v'$ or $p = \Wildcard$ or $p = \Capt{y}$} Either $\PatSubst{v}{p} = \PatSubstFail$ (impossible case) or
    $\PatSubst{v}{p} = []$, so $I = \emptyset$.

    \CdCase{$p = x$} Then $\Valsubst = [v/x]$ and $x_i = x$ for all $i \in I$.
    Assume $I \neq \emptyset$, otherwise the claim holds trivially.

    We have $\ExpTy{\Venv}{v}{\Env{g}(x)}$ from the
    assumption $\ExpTy{\Venv}{v}{\PatLowerTy{p}{\Env{g}}}$. With \Cref{lem:value-not-bot} $\Env{g}(x) \neq \ErlBot$,
    so all $\Gt_i$ are identical and $\Env{g}(x) = \TyOf{\Gt_1}$. From \Cref{lem:val-matches-gt} then
    $v \ValMatches \Gt_1$. Thus, $\Valsubst(x_i) \ValMatches \Gt_i$ for all $i \in I$ as required.

    \CdCase{$p = (p_1, p_2)$} Then $v = (v_1, v_2)$ and
    \begin{igather}
      \PatSubst{v_i}{p_i} = \Valsubst_i \quad (i = 1,2) \QLabel{eq:vi-pi}\\
      \Valsubst_1 \ValEq \Valsubst_2\\
      \Valsubst = \Valsubst_1 \cup \Valsubst_2
    \end{igather}
    Also, we have $\PatLowerTy{p}{\Env{g}} = \Pairty{\PatLowerTy{p_1}{\Env{g}}}{\PatLowerTy{p_2}{\Env{g}}}$,
    so with the assumption $\ExpTy{\Venv}{v}{\PatLowerTy{p}{\Env{g}}}$ and \Cref{lem:value-type-determines-shape}
    \begin{igather}
      \ExpTy{\Venv}{v_i}{\PatLowerTy{p_i}{\Env{g}}} \quad (i = 1,2) \QLabel{eq:ty-vi}
    \end{igather}
    Now construct guard $g_1$ as the conjunction of type tests $\Is{\Gt_i}{x_i}$, but only
    taking those with $x_i \in \PatBound{p_1}$. Construct guard $g_2$ analogously. Then
    \begin{igather}
      \SafeDown{g_i}{\PatBound{p_i}} \quad (i = 1,2) \QLabel{eq:safe-pi}
    \end{igather}
    Now use \QRef{eq:ty-vi}, \QRef{eq:vi-pi}, and \QRef{eq:safe-pi} to apply the \IH{}
    This yields $\EvalGuard{g_i\Valsubst_i}$ for $i = 1,2$.
    With $\PatBound{p} = \PatBound{p_1} \cup \PatBound{p_2}$, we know that each
    type test $\Is{\Gt_i}{x_i}$ is contained in either $g_1$ or $g_2$. By construction
    of $\Valsubst$, we then have $\Valsubst(x_i) \ValMatches \Gt_i$ for all $i \in I$.
  \end{CaseDistinction}
\end{proof}

\begin{lemma}
  \label{lem:eval-guard-implies-safe}
  If $\EvalGuard{g\Valsubst} = \True$ then $\SafeUp{g} = \True$.
\end{lemma}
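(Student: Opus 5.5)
We prove the claim by structural induction on the guard $g$, using the definitions of $\EvalGuard{\cdot}$ (\Cref{f:dynamic-patterns}) and $\SafeUp{\cdot}$ (\Cref{f:aux-pattern-types2}). The key observation is that substituting a value substitution $\Valsubst$ into a guard preserves its top-level shape except on variable type tests. A test $\Is{\Gt}{x}$ becomes either $\Is{\Gt}{\Valsubst(x)}$ (if $x \in \Dom{\Valsubst}$) or stays $\Is{\Gt}{x}$. A test $\Is{\Gt}{v}$ becomes $\Is{\Gt}{v\Valsubst}$. The constructs $\GuardTrue$, $\GuardOracle$, and conjunctions are mapped homomorphically.

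\begin{CaseDistinction}{on the form of $g$}
\CdCase{$g = \Is{\Gt}{x}$, $g = \GuardOracle$, or $g = \GuardTrue$} These cases are immediate, since $\SafeUp{g} = \True$ holds by definition, independently of the hypothesis.
\CdCase{$g = \Is{\Gt}{v}$} Then $g\Valsubst = \Is{\Gt}{v\Valsubst}$. By rule \Rule{eval-guard-value}, $\EvalGuard{g\Valsubst} = \True$ means $v\Valsubst \ValMatches \Gt$. We need $v \ValMatches \Gt$, because that is exactly the condition making $\SafeUp{\Is{\Gt}{v}} = \True$. This follows from \Cref{lem:val-matches-subst}: $v$ and $v\Valsubst$ have the same outermost shape, and $\ValMatches$ inspects only that shape.
\CdCase{$g = \GuardAnd{g_1}{g_2}$} Then $g\Valsubst = \GuardAnd{g_1\Valsubst}{g_2\Valsubst}$. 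By rule \Rule{eval-guard-and}, both $\EvalGuard{g_1\Valsubst} = \True$ and $\EvalGuard{g_2\Valsubst} = \True$. The induction hypothesis gives $\SafeUp{g_1}$ and $\SafeUp{g_2}$, hence $\SafeUp{g}$.
\end{CaseDistinction}

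The only step requiring any care is the value test case. There we must use the ``if'' direction of \Cref{lem:val-matches-subst}, which goes from $v\Valsubst \ValMatches \Gt$ for the particular $\Valsubst$ back to $v \ValMatches \Gt$. The lemma as stated quantifies over all $\Valsubst$ on one side. We therefore rely on the underlying shape-preservation argument from its proof rather than its literal formulation: substitution never changes whether a value is an int constant, a float constant, a pair, or a lambda. Everything else is bookkeeping about how substitution distributes over guards.
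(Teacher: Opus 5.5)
Your proof is correct and follows the paper's argument: induction on $g$, where the only nontrivial case is $\Is{\Gt}{v}$, discharged via \Cref{lem:val-matches-subst}. The paper argues that case by contradiction and you argue it directly, and your caveat about the quantifier is harmless, since the paper intends and uses that lemma per substitution, $v \ValMatches \Gt \iff v\Valsubst \ValMatches \Gt$, which is exactly what its shape-preservation proof establishes.
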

\begin{proof}
  Induction on $g$. The case $g = \GuardAnd{g_1}{g_2}$ directly follows from the \IH{}, the cases
  $g = \GuardTrue$ and $g = \GuardOracle$ and $g = \Is{\Gt}{x}$ hold trivially.
  Thus assume $g = \Is{\Gt}{v}$. For this case,
  $\SafeUp{g} = \False$ only if not $v \ValMatches \Gt$.
  From \Cref{lem:val-matches-subst} we then get that $v\Valsubst \ValMatches \Gt$ does not
  hold either. Thus, $\EvalGuard{g\Valsubst}$ must be $\False$, which is a contradiction.
  Thus, $\SafeUp{g} = \True$ as required.
\end{proof}

\begin{lemma}
  \label{lem:eval-guard-implies-not-bot}
  If $\EvalGuard{g\Valsubst} = \True$ then $\Env{g} \neq \ErlBot$.
\end{lemma}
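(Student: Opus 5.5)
We argue by contraposition and induction on the structure of $g$. Assume $\EvalGuard{g\Valsubst} = \True$; we must show that $\Env{g}$ binds no variable to $\ErlBot$. Recall that $\Env{g}$ is built only from singleton environments $\{x : \TyOf{\Gt}\}$ for type tests $\Is{\Gt}{x}$, combined with $\InterEnv$. The plan is to strengthen the statement so that it is easy to propagate through conjunctions. Since $g$ is a conjunction, $\EvalGuard{g\Valsubst} = \True$ means every conjunct of $g\Valsubst$ evaluates to $\True$ (rule \Rule{eval-guard-and}). Hence, for every variable $x$, all type tests $\Is{\Gt_1}{x}, \ldots, \Is{\Gt_n}{x}$ occurring in $g$ evaluate to $\True$ after substitution. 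Then $\Env{g}(x) = \TyOf{\Gt_1} \Inter \cdots \Inter \TyOf{\Gt_n}$.

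The key step is to identify a single value $w$ with $w \ValMatches \Gt_k$ for all $k$. There are two cases for $x$.
\begin{itemize}
\item If $x \in \Dom{\Valsubst}$, then $(\Is{\Gt_k}{x})\Valsubst = \Is{\Gt_k}{\Valsubst(x)}$. By \Rule{eval-guard-value} we get $\Valsubst(x) \ValMatches \Gt_k$ for all $k$, so we take $w = \Valsubst(x)$.
\item Otherwise the test stays $\Is{\Gt_k}{x}$, and \Rule{eval-guard-var} gives $\FunEnv(x) = w$ with $w \ValMatches \Gt_k$. The witness $w$ is the same for all $k$ because $\FunEnv(x)$ is fixed.
\end{itemize}
By inspection of the rules \Rule{is-int}, \Rule{is-float}, \Rule{is-pair}, \Rule{is-fun}, a value matches exactly one guard type, since the four shapes of values are disjoint. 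Therefore all $\Gt_k$ coincide, and $\Env{g}(x) \TyEquiv \TyOf{\Gt_1}$.

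It remains to check that each of $\Int$, $\Float$, $\Pairty{\ErlTop}{\ErlTop}$, and $\ErlBot \to \ErlTop$ is nonempty, i.e.\ not $\ErlBot$. This holds because each is inhabited by a typable value, by \Rule{d-const}, \Rule{d-pair} and \Rule{d-abs} respectively; by \Cref{lem:value-not-bot} such a type cannot be $\ErlBot$. Note that the intersection $\Inter$ is syntactic, so $\Env{g}(x)$ is literally $\TyOf{\Gt}\Inter\TyOf{\Gt}\Inter\cdots$. I read the condition $\Venv \neq \ErlBot$ up to equivalence $\TyEquiv$, which is what its use with emptiness in \Cref{lem:safe-implies-eval-guard} suggests. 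Under that reading this repeated intersection is equivalent to $\TyOf{\Gt}$ and is not $\ErlBot$.

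To organize this formally, I would prove the auxiliary claim by induction on $g$: if $\EvalGuard{g\Valsubst}=\True$, then for every $x \in \Dom{\Env{g}}$ there is a value $w_x$ such that $w_x \ValMatches \Gt$ for every test $\Is{\Gt}{x}$ in $g$. The $\kw{and}$ case merges the witnesses of the two sub-guards. This is justified because the witness is canonical: it is $\Valsubst(x)$ or $\FunEnv(x)$, independently of the sub-guard. The base cases are $\GuardTrue$, $\GuardOracle$, and $\Is{\Gt}{v}$, all of which have empty environments, and $\Is{\Gt}{x}$, handled as above. The main obstacle I expect is the subtlety that $\Env{g}$ is computed on $g$ but evaluation happens on $g\Valsubst$. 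The witness-per-variable formulation handles this uniformly: variables in $\Dom{\Valsubst}$ get witness $\Valsubst(x)$, and the remaining variables get witness $\FunEnv(x)$.
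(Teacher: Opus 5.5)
Your proposal is correct and follows essentially the paper's route. Both split on whether $x \in \Dom{\Valsubst}$, take the witness $\Valsubst(x)$ or $\FunEnv(x)$, and use that a value matches at most one guard type. The paper argues by contradiction and leaves implicit two points you make explicit: that each $\TyOf{\Gt}$ is nonempty, and that $\neq \ErlBot$ is read up to $\TyEquiv$.
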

\begin{proof}
  Assume $\Env{g} = \ErlBot$. Thus, there exists $x \in \Dom{\Env{g}}$ with $\Env{g}(x) = \ErlBot$.
  Hence, $g$ contains (at least) two type tests $\Is{\Gt_1}{x}$ and $\Is{\Gt_2}{x}$ with
  $\Gt_1 \neq \Gt_2$ because only types resulting from merges of environments may result in $\ErlBot$.
  \begin{itemize}
  \item $x \in \Dom{\Valsubst}$ with $\Valsubst(x) = v$. From $\EvalGuard{g\Valsubst} = \True$
    then $v \ValMatches \Gt_1$ and $v \ValMatches \Gt_2$. This is impossible since $\Gt_1 \neq \Gt_2$.
  \item $x \notin \Dom{\Valsubst}$. Then from $\EvalGuard{g\Valsubst} = \True$ we have $\FunEnv(x) = v$
    and $v \ValMatches \Gt_1$ and $v \ValMatches \Gt_2$. This is impossible since $\Gt_1 \neq \Gt_2$.
  \end{itemize}
\end{proof}

\begin{lemma}
  \label{lem:gp-typing-helper}
  Assume $\ExpTy{\Venv}{v}{t}$.
  If $\PatSubst{v}{p} = \Valsubst$ and $\EvalGuard{g\Valsubst} = \True$ and $\Env{g} \neq \ErlBot$,
  then $\ExpTy{\Venv}{v}{\PatUpperTy{p}{\Env{g}}}$.
\end{lemma}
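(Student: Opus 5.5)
Proof plan: induction on the structure of the pattern $p$. Keep $g$, $\Valsubst$ and the hypotheses $\EvalGuard{g\Valsubst} = \True$ and $\Env{g} \neq \ErlBot$ fixed, and generalize over the value $v$ and its type $t$. This generalization is needed so the induction hypothesis applies to the components of a pair.

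\emph{Base cases.}
\begin{itemize}
\item For $p = \Wildcard$ and $p = \Capt{x}$, we have $\PatUpperTy{p}{\Env{g}} = \ErlTop$. The claim follows from $\ExpTy{\Venv}{v}{t}$, $t \IsSubty \ErlTop$ (\Cref{prop:subty}(i)) and rule \Rule{d-sub}.
\item For $p = v'$, matching succeeds only if $v = v'$. If $v'$ is not a constant, the potential type is $\ErlTop$ and we conclude as before. If $v' = \Const$, then $v = \Const$, and \Rule{d-const} gives $\ExpTy{\Venv}{\Const}{\TyOfConst{\Const}}$ directly.
\item For $p = x$ with $x \notin \Dom{\Env{g}}$, the potential type is $\ErlTop$.
\item For $p = x$ with $x \in \Dom{\Env{g}}$, we have $\Valsubst = [v/x]$. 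By construction $\Env{g}(x)$ is the intersection of the types $\TyOf{\Gt_i}$ over all tests $\Is{\Gt_i}{x}$ occurring in $g$. Since $x \in \Dom{\Valsubst}$, these tests become $\Is{\Gt_i}{v}$ in $g\Valsubst$. The guard evaluates to true, and $\GuardAndSym$ requires every conjunct to be true, so $v \ValMatches \Gt_i$ for each $i$. \Cref{lem:val-match-implies-ty} then gives $\ExpTy{\Venv}{v}{\TyOf{\Gt_i}}$ for each $i$. Repeated use of \Cref{lem:value-has-both-types} yields $\ExpTy{\Venv}{v}{\Env{g}(x)}$. (Alternatively, $\Env{g} \neq \ErlBot$ forces all $\Gt_i$ to coincide, so one test suffices.)
\end{itemize}

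\emph{Pair case.} Let $p = (p_1,p_2)$. By rule \Rule{match-pair}, $v = (v_1,v_2)$ and $\PatSubst{v_i}{p_i} = \Valsubst_i$ with $\Valsubst = \Valsubst_1 \cup \Valsubst_2$. Apply \Cref{lem:sub-elim} to $\ExpTy{\Venv}{v}{t}$. The resulting derivation ends in \Rule{d-pair}, because $v$ is a pair value. This gives $\ExpTy{\Venv}{v_i}{t_i}$ for some types $t_i$.

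To invoke the induction hypothesis for $p_i$ we need a guard evaluating to true under $\Valsubst_i$. Use the same $g$ throughout and adjust the hypothesis: state the generalized claim as ``if $\PatSubst{v}{p} = \Valsubst'$ with $\Valsubst' \subseteq \Valsubst$ and $\EvalGuard{g\Valsubst} = \True$, then $\ExpTy{\Venv}{v}{\PatUpperTy{p}{\Env{g}}}$''. In the variable case only $\Valsubst(x) = v$ is used, which still holds, so the base cases go through unchanged. Since $\Valsubst_i \subseteq \Valsubst$, the induction hypothesis gives $\ExpTy{\Venv}{v_i}{\PatUpperTy{p_i}{\Env{g}}}$. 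Rule \Rule{d-pair} then yields the product $\PatUpperTy{p}{\Env{g}}$.

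The main obstacle is the bookkeeping between $g\Valsubst_i$ and $g\Valsubst$ in the pair case, handled by the generalization above. A secondary point is the variable case, where we must check that the guard evaluating to true really forces $v \ValMatches \Gt$ for \emph{every} test on $x$. This follows from the conjunctive structure of guards, rule \Rule{eval-guard-value}, and \Cref{lem:val-matches-subst}.
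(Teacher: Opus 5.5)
Your proof is correct and follows the paper's proof. Both argue by induction on $p$, generalize the claim to a substitution $\Valsubst' \subseteq \Valsubst$ under which the guard evaluates to true, use the same base cases, and close the pair case with \Rule{d-pair}. The only differences are minor: your variable case intersects all tests on $x$ via \Cref{lem:value-has-both-types} instead of using $\Env{g} \neq \ErlBot$ to collapse them to one, and you make the \Cref{lem:sub-elim} step that types the components $v_i$ explicit, which the paper leaves implicit.
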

\begin{proof}
  By induction on $p$. For the induction to go through, we need to generalize the claim as follows:
  \begin{quote}\em\raggedright
    Assume $\ExpTy{\Venv}{v}{t}$.
    If $\PatSubst{v}{p} = \Valsubst$ and $\EvalGuard{g\Valsubst'} = \True$ for some $\Valsubst'$
    with $\Valsubst \subseteq \Valsubst'$ and $\Env{g} \neq \ErlBot$,
    then $\ExpTy{\Venv}{v}{\PatUpperTy{p}{\Env{g}}}$.
  \end{quote}
  \begin{CaseDistinction}{on the form of $p$}
    \CdCase{$p = v'$} Then $v = v'$ because of $\PatSubst{v}{p} = \Valsubst$.
    If $v = \Const$, then $\PatUpperTy{p}{\Env{g}} = \TyOf{\Const}$, so the claim follows with rule
    \Rule{d-const}.
    Otherwise, $\PatUpperTy{p}{\Env{g}} = \ErlTop$, so the claim follows because $v$ is well-typed.

    \CdCase{$p = \Wildcard$}
    Then $\PatUpperTy{p}{\Env{g}} = \ErlTop$, so the claim is obvious.

    \CdCase{$p = x$} Then $\Valsubst = [v/x]$. If $x \notin \Dom{\Env{g}}$ then
    $\PatUpperTy{p}{\Env{g}} = \ErlTop$, so the claim is obvious.

    Now assume $x \in \Dom{\Env{g}}$, so $g$ contains at least one type test on $x$.
    With $\Env{g} \neq \ErlBot$, we know that
    all type tests in $g$ on $x$ are of the form $\Is{\GuardTy}{x}$.
    Hence,
    \begin{igather}
      \PatUpperTy{p}{\Env{g}} = \Env{g}(x) = \TyOf{\GuardTy}
    \end{igather}

    From $\EvalGuard{g\Valsubst'} = \True$ and $[v/x] = \Valsubst \subseteq \Valsubst'$,
    we have $\Valsubst(x) \ValMatches \GuardTy$, so $v \ValMatches \GuardTy$. With
    \Cref{lem:val-match-implies-ty} then $\ExpTy{\Venv}{v}{\TyOf{\GuardTy}}$ as required.

    \CdCase{$p = \Capt{y}$}
    Then $\PatUpperTy{p}{\Env{g}} = \ErlTop$, so the claim is obvious.

    \CdCase{$p = (p_1, p_2)$} From $\PatSubst{v}{p} = \Valsubst$ we get
    \begin{igather}
      v = (v_1, v_2) \\
      \PatSubst{v_i}{p_i} = \Valsubst_i \quad (i=1,2) \\
      \Valsubst_1 \ValEq \Valsubst_2 \\
      \Valsubst = \Valsubst_1 \cup \Valsubst_2
    \end{igather}
    With assumption $\Valsubst \subseteq \Valsubst'$ we have $\Valsubst_i \subseteq \Valsubst'$
    for $i=1,2$. Thus, applying the \IH{} yields for $i=1,2$:
    \begin{igather}
      \ExpTy{\Venv}{v_i}{\PatUpperTy{p_i}{\Env{g}}}
    \end{igather}
    Noting that $\PatUpperTy{p}{\Env{g}} = \Pairty{\PatUpperTy{p_1}{\Env{g}}}{\PatUpperTy{p_2}{\Env{g}}}$
    and applying rule \Rule{d-pair} yields
    $\ExpTy{\Venv}{v}{\PatUpperTy{p}{\Env{g}}}$ as required.
  \end{CaseDistinction}
\end{proof}

\begin{lemma}[Characteristics of potential and accepting types]
  \label{lem:gp-typing}
  Assume $\ExpTy{\Venv}{v}{t}$ for some $t$.
  \begin{EnumThm}
  \item If $\ExpTy{\Venv}{v}{\PgLowerTy{\Pg}}$ then $\PatSubst{v}{\Pg} = \Valsubst$ for some $\Valsubst$.
  \item If $\PatSubst{v}{\Pg} = \Valsubst$ for some $\Valsubst$ then $\ExpTy{\Venv}{v}{\PgUpperTy{\Pg}}$.
  \item If $\PatSubst{v}{\Pg} = \PatSubstFail$ then $\ExpTy{\Venv}{v}{\Neg\PgLowerTy{\Pg}}$.
  \end{EnumThm}
\end{lemma}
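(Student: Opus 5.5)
We prove the three claims separately, writing $\Pg = \WithGuard{p}{g}$ throughout; (iii) is obtained from (i) by contraposition.

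\emph{Claim (i).} If $\PgLowerTy{\Pg} = \ErlBot$, then \Cref{lem:value-not-bot} rules out $\ExpTy{\Venv}{v}{\ErlBot}$, so the claim holds vacuously. Otherwise the side conditions hold: $\SafeDown{g}{\PatBound{p}}$, $\Env{g} \neq \ErlBot$, $p$ linear, and $\ExpTy{\Venv}{v}{\PatLowerTy{p}{\Env{g}}}$. We first show, by induction on $p$, that $\PatSubst{v}{p} = \Valsubst$ for some $\Valsubst$ whenever $\ExpTy{\Venv}{v}{\PatLowerTy{p}{\Venv'}}$ and $p$ is linear. The cases are as follows.
\begin{itemize}
\item $p = \Wildcard$ or $p = x$: matching always succeeds.
\item $p = v'$: the accepting type is either $\ErlBot$ (impossible by \Cref{lem:value-not-bot}) or an int singleton $\Const$. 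In the latter case \Cref{lem:value-type-determines-shape}(i) gives $v = \Const = v'$, so \Rule{match-val} succeeds.
\item $p = \Capt{x}$: the accepting type is $\ErlBot$, which is again impossible.
\item $p = (p_1,p_2)$: \Cref{lem:value-type-determines-shape}(iii) gives $v = (v_1,v_2)$ with $v_i$ of type $\PatLowerTy{p_i}{\Venv'}$. The induction hypothesis gives $\Valsubst_1$ and $\Valsubst_2$. Linearity makes $\PatBound{p_1} \cap \PatBound{p_2} = \emptyset$, so by \Cref{lem:pat-vars}(i) the domains are disjoint and $\Valsubst_1 \ValEq \Valsubst_2$ holds trivially.
\end{itemize}
The guard then evaluates to $\True$ by \Cref{lem:safe-implies-eval-guard}, so \Rule{match-true} yields $\PatSubst{v}{\Pg} = \Valsubst$. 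Before the final step we must check that $\EvalGuard{g\Valsubst}$ is defined; \Cref{lem:safe-implies-eval-guard} already asserts the value $\True$, so definedness comes for free.

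\emph{Claim (ii).} The only rule producing a substitution is \Rule{match-true}. It gives $\PatSubst{v}{p} = \Valsubst$ and $\EvalGuard{g\Valsubst} = \True$. \Cref{lem:eval-guard-implies-safe} then gives $\SafeUp{g}$, and \Cref{lem:eval-guard-implies-not-bot} gives $\Env{g} \neq \ErlBot$, so $\PgUpperTy{\Pg} = \PatUpperTy{p}{\Env{g}}$. \Cref{lem:gp-typing-helper} then yields $\ExpTy{\Venv}{v}{\PatUpperTy{p}{\Env{g}}}$.

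\emph{Claim (iii).} \Cref{lem:value-u-or-notu} gives either $\ExpTy{\Venv}{v}{\PgLowerTy{\Pg}}$ or $\ExpTy{\Venv}{v}{\Neg\PgLowerTy{\Pg}}$. In the first case, (i) yields $\PatSubst{v}{\Pg} = \Valsubst$, which contradicts $\PatSubst{v}{\Pg} = \PatSubstFail$, since pattern matching is a function (the rules are deterministic). Hence the second case holds.

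The main obstacle is the pair case of (i). The accepting type $\PatLowerTy{p}{\Env{g}}$ is used with the whole environment $\Env{g}$ for both subpatterns, which is harmless. However, linearity is essential for the compatibility check $\Valsubst_1 \ValEq \Valsubst_2$. The variable-pattern case needs no guard reasoning, because the guard is handled separately by \Cref{lem:safe-implies-eval-guard}.
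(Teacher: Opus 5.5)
Your proof is correct. Claims (ii) and (iii) follow the paper's argument exactly. For (iii) you additionally state that $\PatSubst{v}{\Pg}$ is a function, which the paper leaves implicit.

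For (i) you use a different decomposition. The paper inducts directly on the guarded-pattern statement. In the pair case it builds sub-guards $g_1, g_2$ by replacing the type tests on variables of the other component with $\GuardTrue$. It then re-establishes $\SafeDown{g_i}{\PatBound{p_i}}$, $\Env{g_i} \neq \ErlBot$ and $\PatLowerTy{p_i}{\Env{g_i}} = \PatLowerTy{p_i}{\Env{g}}$, so that the induction hypothesis applies to $\WithGuard{p_i}{g_i}$. It invokes \Cref{lem:safe-implies-eval-guard} only at the leaves and reassembles $\EvalGuard{g\Valsubst} = \True$ from the two sub-results.

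You instead separate the two concerns. First, a guard-free induction shows that a linear pattern matches every value of type $\PatLowerTy{p}{\Venv'}$, for an arbitrary environment $\Venv'$. Second, a single top-level application of \Cref{lem:safe-implies-eval-guard} handles the guard.

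Your route is more modular and avoids duplicating the guard-splitting bookkeeping, which already lives inside the proof of \Cref{lem:safe-implies-eval-guard}. Generalizing to an arbitrary $\Venv'$ is what lets your induction go through without touching the guard at all. The paper's route keeps the induction hypothesis in the same form as the lemma itself, at the cost of the extra sub-guard construction and the re-verification of the side conditions of $\PgLowerTy{\cdot}$ for each component.
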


\begin{proof}
  \renewcommand\LabelQualifier{lem:gp-typing}
  Claim (iii) follows with (i) and \Cref{lem:value-u-or-notu}.

  \noindent \emph{Proof of claim (i).} Let $\Pg = \WithGuard{p}{g}$. Then with \Cref{lem:value-not-bot}
  $\PgLowerTy{\Pg} \neq \ErlBot$. By definition, we have
  \begin{igather}
    \SafeDown{g}{\PatBound{p}} = \True \\
    \Env{g} \neq \ErlBot \\
    p~\textrm{linear} \\
    \PgLowerTy{\Pg} = \PatLowerTy{p}{\Env{g}} \neq \ErlBot
  \end{igather}
  We proceed by induction on $p$
  \begin{CaseDistinction}{on the shape of $p$}
    \CdCase{$p = v'$}
    Then,
    \begin{igather}
      \PgLowerTy{\Pg} = \PatLowerTy{p}{\Env{g}} = \Const \\
      v' = \Const \\
      \TyOf{\Const} = \Const
    \end{igather}
    by definition of $\PgLowerTy{\Pg}$.
    Hence, $\ExpTy{\Venv}{v}{\Const}$ by assumption. Since only ints are singleton types,
    $\Const$ must be an int, so $v = \Const$ by \Cref{lem:value-type-determines-shape}(i).
    Thus, $\PatSubst{v}{p} = []$ by \Rule{match-val}.
    With \Cref{lem:safe-implies-eval-guard} $\EvalGuard{g} = \True$.
    Thus, $\PatSubst{v}{\Pg} = []$ by \Rule{match-true}.

    \CdCase{$p = \Wildcard$}
    Then $\PatSubst{v}{p} = []$ by \Rule{match-wild}.
    With \Cref{lem:safe-implies-eval-guard} again $\EvalGuard{g} = \True$,
    so the claim holds by \Rule{match-true}.

    \CdCase{$p = x$}
    Then $\PatSubst{v}{p} = [v/x]$ by \Rule{match-var}.
    With \Cref{lem:safe-implies-eval-guard} again $\EvalGuard{g} = \True$,
    so the claim holds by \Rule{match-true}.

    \CdCase{$p = \Capt{y}$} Impossible, since $\PatLowerTy{\Capt{y}}{\Env{g}} = \ErlBot$.

    \CdCase{$p = (p_1, p_2)$} Then,
    \begin{igather}
      \PatLowerTy{p}{\Env{g}} = \Pairty{\PatLowerTy{p_1}{\Env{g}}}{\PatLowerTy{p_2}{\Env{g}}}
    \end{igather}
    With \Cref{lem:value-type-determines-shape}(iii) we have
    \begin{igather}
      v = (v_1, v_2) \\
      \ExpTy{\Venv}{v_i}{\PatLowerTy{p_i}{\Env{g}}}\quad i=1,2 \QLabel{eq:ty-vi}
    \end{igather}
    Pattern $p$ is linear, so $\PatBound{p_1} \cap \PatBound{p_2} = \emptyset$ and
    $p_1,p_2$ are linear. Further, from $\SafeDown{g}{\PatBound{p}} = \True$, we know that $g$ only
    contains type tests for variables in $\PatBound{p}$.

    Construct $g_1$ by replacing type tests for variables in $\PatBound{p_2}$ with $\GuardTrue$.
    Construct $g_2$ analogously. Then for $i=1,2$:
    \begin{igather}
      \SafeDown{g_i}{\PatBound{p_i}} \\
      \Env{g_i} \neq \ErlBot
    \end{igather}
    Define $\Pg_i = \WithGuard{p_i}{g_i}$. Then also for $i=1,2$:
    \begin{igather}
      \PgLowerTy{\Pg_i} = \PatLowerTy{p_i}{\Env{g_i}} = \PatLowerTy{p_i}{\Env{g}}
    \end{igather}
    With \QRef{eq:ty-vi}, it follows for $i = 1,2$:
    \begin{igather}
      \ExpTy{\Venv}{v_i}{\PgLowerTy{\Pg_i}}
    \end{igather}
    Applying the \IH{} now gives us $\PatSubst{v_i}{\Pg_i} = \Valsubst_i$ for
    some $\Valsubst_i$ and $i=1,2$. Inverting rule \Rule{match-true} yields for $i=1,2$
    \begin{igather}
      \PatSubst{v_i}{p_i} = \Valsubst_i \\
      \EvalGuard{g_i\Valsubst_i} = \True
    \end{igather}

    With \Cref{lem:pat-vars}(ii) $\Dom{\Valsubst_i} = \PatBound{p_i}$.
    From $\PatBound{p_1} \cap \PatBound{p_2} = \emptyset$ we get $\Valsubst_1 \ValEq \Valsubst_2$.
    Hence, $\PatSubst{v}{p} = \Valsubst_1 \cup \Valsubst_2$ by rule \Rule{match-pair}. Further,
    by construction of $g_i$ we also have $\EvalGuard{g\Valsubst} = \True$ for
    $\Valsubst = \Valsubst_1 \cup \Valsubst_2$. Thus, $\PatSubst{v}{\Pg} = \Valsubst$ by
    rule \Rule{match-true}.
  \end{CaseDistinction}

  \noindent \emph{Proof of claim (ii).} Assume $\Pg = \WithGuard{p}{g}$. Inverting rule \Rule{match-true}
  yields $\PatSubst{v}{p} = \Valsubst$ and $\EvalGuard{g\Valsubst} = \True$.
  With \Cref{lem:eval-guard-implies-safe} and~\Cref{lem:eval-guard-implies-not-bot} then
  \begin{igather}
    \Env{g} \neq \ErlBot \\
    \SafeUp{g} = \True
  \end{igather}
  Then, $\PgUpperTy{\Pg} = \PatUpperTy{p}{\Env{g}}$. With \Cref{lem:gp-typing-helper} then
  $\ExpTy{\Venv}{v}{\PatUpperTy{p}{\Env{g}}}$ as required.
\end{proof}

\begin{lemma}[Subtyping of pattern environments]
  \label{lem:env-pattern-sub}
  Let $p$ be a pattern and $\Monoty, \Monoty'$ two types such that
  $\PatEnv{t}{p}$ and $\PatEnv{t'}{p}$ are defined.
  If $\Monoty \IsSubty \Monoty'$ then $(\PatEnv{\Monoty}{p}) \IsSubty (\PatEnv{\Monoty'}{p})$.
\end{lemma}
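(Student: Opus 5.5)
The plan is structural induction on the pattern $p$, following the clauses defining $\PatEnv{\Monoty}{p}$ in \Cref{f:aux-pattern-types}. Throughout, I use the pointwise definition of subtyping for environments: I must show that $\PatEnv{\Monoty}{p}$ and $\PatEnv{\Monoty'}{p}$ have the same domain and that their types are related pointwise. Equality of domains holds for any $p$, because by \Cref{lem:pat-vars}(iii) both domains equal $\PatBound{p}$ and so do not depend on the type. It therefore suffices to compare the types pointwise.

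The base cases are immediate. For $p = v$, $p = \Wildcard$ and $p = \Capt{x}$, both environments are $\EmptyEnv$, and the claim holds trivially. For $p = x$, the environments are $\{x : \Monoty\}$ and $\{x : \Monoty'\}$, so the claim is exactly the hypothesis $\Monoty \IsSubty \Monoty'$.

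The main case is $p = \Pair{p_1}{p_2}$. Here definedness of $\PatEnv{\Monoty'}{p}$ requires the projections of $\Monoty'$ to exist, that is, $\Monoty' \IsSubty \Pairty{\ErlTop}{\ErlTop}$. Since $\Monoty \IsSubty \Monoty'$, \Cref{prop:projection}(iii) gives $\ProjI{\Monoty} \IsSubty \ProjI{\Monoty'}$ for $i=1,2$. The sub-environments $\PatEnv{\ProjI{\Monoty}}{p_i}$ and $\PatEnv{\ProjI{\Monoty'}}{p_i}$ are defined, since they occur inside the defined environments for $p$. The induction hypothesis then yields $\PatEnv{\ProjI{\Monoty}}{p_i} \IsSubty \PatEnv{\ProjI{\Monoty'}}{p_i}$ for $i = 1,2$.

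To conclude the pair case, I prove a small auxiliary fact: $\InterEnv$ is monotone with respect to environment subtyping. That is, if $\Venv_1 \IsSubty \Venv_1'$ and $\Venv_2 \IsSubty \Venv_2'$, then $\Venv_1 \InterEnv \Venv_2 \IsSubty \Venv_1' \InterEnv \Venv_2'$. The domains agree because the domain of $\Venv_1 \InterEnv \Venv_2$ is the union of the domains of $\Venv_1$ and $\Venv_2$. For the types I distinguish the three cases in the definition of $\InterEnv$. In the two one-sided cases the claim follows directly from the hypotheses. In the overlapping case I use monotonicity of intersection, $s_1 \Inter s_2 \IsSubty s_1' \Inter s_2'$ whenever $s_i \IsSubty s_i'$. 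This holds in the set-theoretic interpretation of subtyping, since intersection is set intersection.

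The only delicate point is definedness of the projections, and \Cref{prop:projection} together with the lemma's definedness assumption handles it. I therefore expect no real obstacle; the substantive step is the pair case, which rests on \Cref{prop:projection}(iii) and the monotonicity of $\InterEnv$.
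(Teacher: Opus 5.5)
Your proposal is correct and follows the paper's proof: structural induction on $p$ with trivial base cases, and the pair case uses \Cref{prop:projection}(iii) together with the induction hypothesis. The paper leaves the monotonicity of $\InterEnv$ implicit, and your auxiliary argument for it (together with the domain equality via \Cref{lem:pat-vars}(iii)) simply spells that step out.
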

\begin{proof}
  By induction on $p$.
  Define $\Venv = \PatEnv{\Monoty}{p}$ and  $\Venv' = \PatEnv{\Monoty'}{p}$.

  \begin{CaseDistinction}{on the structure of $p$}
    \CdCase{$p = v$ or $p = \Wildcard$ or $p = \Capt{x}$}
    There is nothing to show since $\Dom{\Venv} = \EmptyEnv = \Dom{\Venv'}$.
    \CdCase{$p = x$}
    By definition, $\Dom{\Venv} = \{x\}= \Dom{\Venv'}$ and $\Venv(x) = \Monoty \IsSubty \Monoty' = \Venv'(x)$ by assumption.
    \CdCase{$p = (p_1, p_2)$}
    By assumption, $\Monoty \IsSubty \Monoty'$, hence by \Cref{prop:projection},
    $\ProjI{\Monoty} \IsSubty \ProjI{\Monoty'}$. Applying the \IH{} yields
    \begin{igather}
      (\PatEnv{\ProjI{t}}{p_i}) \IsSubty (\PatEnv{\ProjI{t'}}{p_i})
    \end{igather}
    Then
    \begin{igather}
      \PatEnv{\Monoty}{(p_1, p_2)} =
      (\PatEnv{\ProjL{\Monoty}}{p_1}) \InterEnv (\PatEnv{\ProjR{\Monoty}}{p_2}) \IsSubty
      (\PatEnv{\ProjL{t'}}{p_1}) \InterEnv (\PatEnv{\ProjR{t'}}{p_2}) =
      \PatEnv{t'}{(p_1, p_2)}
    \end{igather}
  \end{CaseDistinction}
\end{proof}

\begin{lemma}[Subtyping of guarded pattern environments]
  \label{lem:env-guarded-pattern-sub}
  Let $\Pg$ be a guarded pattern and $\Monoty, \Monoty'$ two types such that
  $\PatEnv{t}{p}$ and $\PatEnv{t'}{p}$ are defined.
  If $\Monoty \IsSubty \Monoty'$ then $(\PatEnv{\Monoty}{\Pg}) \IsSubty (\PatEnv{\Monoty'}{\Pg})$.
\end{lemma}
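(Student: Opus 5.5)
Write $\Pg = \WithGuard{p}{g}$. By the definition in \Cref{f:aux-pattern-types}, $\PatEnv{\Monoty}{\Pg} = (\PatEnv{\Monoty}{p}) \InterEnv \Env{g}$ and likewise for $\Monoty'$. The guard environment $\Env{g}$ does not depend on the type being matched, so it is the same on both sides. The plan is therefore to reduce the claim to \Cref{lem:env-pattern-sub} and then to a monotonicity property of $\InterEnv$.

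First, apply \Cref{lem:env-pattern-sub} to $p$ and the assumption $\Monoty \IsSubty \Monoty'$. Its hypotheses are exactly those assumed here (both pattern environments are defined). This yields $\Venv_1 \IsSubty \Venv_1'$, where $\Venv_1 = \PatEnv{\Monoty}{p}$ and $\Venv_1' = \PatEnv{\Monoty'}{p}$. In particular $\Dom{\Venv_1} = \Dom{\Venv_1'}$, by the definition of subtyping for environments.

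Second, prove the auxiliary fact: if $\Venv_1 \IsSubty \Venv_1'$, then $\Venv_1 \InterEnv \Venv_2 \IsSubty \Venv_1' \InterEnv \Venv_2$ for any $\Venv_2$. The domains agree because $\Dom{\Venv_1 \InterEnv \Venv_2} = \Dom{\Venv_1} \cup \Dom{\Venv_2}$. For each variable $x$, proceed by cases following the three clauses of the definition of $\InterEnv$. If $x$ lies only in $\Dom{\Venv_1}$, both sides give $\Venv_1(x) \IsSubty \Venv_1'(x)$. If $x$ lies only in $\Dom{\Venv_2}$, both sides give $\Venv_2(x)$, and reflexivity suffices. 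If $x$ lies in both, we need $\Venv_1(x) \Inter \Venv_2(x) \IsSubty \Venv_1'(x) \Inter \Venv_2(x)$, which follows from monotonicity of intersection under the set-theoretic interpretation of subtyping. The same argument gave the corresponding step in the pair case of \Cref{lem:env-pattern-sub}. Instantiating this fact with $\Venv_2 = \Env{g}$ gives the claim.

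No step is a real obstacle. The only point to check is that the hypothesis "$\PatEnv{t}{p}$ and $\PatEnv{t'}{p}$ are defined" is exactly what \Cref{lem:env-pattern-sub} needs. It is, since definedness of the guarded environment reduces to definedness of the pattern environment, $\Env{g}$ being always defined.
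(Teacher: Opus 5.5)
Your proposal is correct and is essentially the paper's proof. You apply \Cref{lem:env-pattern-sub} to the pattern part and then use that intersecting both sides with the same guard environment $\Env{g}$ via $\InterEnv$ preserves pointwise subtyping. Your explicit three-case analysis of $\InterEnv$ spells out what the paper compresses into ``by definition of intersection for environments''; the paper writes out only the case $x \in \Dom{\PatEnv{t}{p}}$.
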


\begin{proof}
  Let $\Pg = \WithGuard p g$.
  Define $\Venv$ and $\Venv'$ as follows:
  \begin{igather}
    \Venv := \PatEnv{\Monoty}{\Pg} = \PatEnv{\Monoty}{p} \InterEnv \Env{g}\\
    \Venv' := \PatEnv{\Monoty'}{\Pg} = \PatEnv{\Monoty'}{p} \InterEnv \Env{g}
  \end{igather}
  By \Cref{lem:env-pattern-sub} $\PatEnv{t}{p} \IsSubty \PatEnv{t'}{p'}$. Thus
  $\Dom{\Venv} = \Dom{\Venv'}$. Assume $x \in \Dom{\Venv}$.
  If $x \in \Dom{\PatEnv{t}{p}}$ then
  $x \in \Dom{\PatEnv{t'}{p}}$ and
  $(\PatEnv{t}{p})(x) \IsSubty \PatEnv{t'}{p}(x)$.
  Hence $\Venv(x) \IsSubty \Venv'(x)$ by definition of intersection for environments.
\end{proof}

\begin{lemma}
  \label{lem:dyn-match-guard-ty}
  Assume $\ExpTy{\Venv}{v}{t}$ and $\PatSubst{v}{p} = \Valsubst$.
  If $x \in \Dom{\Valsubst}$ and $\Valsubst(x) \ValMatches \GuardTy$, then
  $\ExpTy{\Venv}{\Valsubst(x)}{\TyOf{\GuardTy}}$.
\end{lemma}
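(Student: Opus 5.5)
The plan is to reduce the claim to \Cref{lem:val-match-implies-ty}. That lemma only needs \emph{some} typing of the value $\Valsubst(x)$ together with $\Valsubst(x) \ValMatches \GuardTy$. So the whole task is to show that every value bound by a successful match of a well-typed value is itself well-typed under $\Senv$ and $\Venv$. I would isolate this as an auxiliary claim, proved by induction on the pattern $p$:
\begin{quote}\em
If $\ExpTy{\Venv}{v}{t}$ for some $t$ and $\PatSubst{v}{p} = \Valsubst$, then for every $x \in \Dom{\Valsubst}$ there is some $t'$ with $\ExpTy{\Venv}{\Valsubst(x)}{t'}$.
\end{quote}

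The cases of the induction are as follows.
\begin{itemize}
\item For $p = v'$, $p = \Wildcard$ and $p = \Capt{y}$, a successful match yields the empty substitution $[\,]$, using \Rule{match-val}, \Rule{match-wild} and \Rule{match-capture} (the last via \Rule{match-val} on $\FunEnv(y)$). There is then nothing to show. Alternatively one can argue with \Cref{lem:pat-vars}(i), since $\PatBound{p} = \emptyset$ in these cases.
\item For $p = x$, \Rule{match-var} gives $\Valsubst = [v/x]$. The only variable in the domain is $x$, and $\Valsubst(x) = v$ is typed by the assumption.
\item For $p = (p_1, p_2)$, \Rule{match-pair} gives $v = (v_1, v_2)$, $\PatSubst{v_i}{p_i} = \Valsubst_i$ and $\Valsubst = \Valsubst_1 \cup \Valsubst_2$. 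By \Cref{lem:sub-elim}, the derivation of $\ExpTy{\Venv}{v}{t}$ can be taken to end in a non-subsumption rule. Since $v$ is a pair, that rule is \Rule{d-pair}, which yields $\ExpTy{\Venv}{v_i}{t_i}$ for $i = 1,2$.
\end{itemize}

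In the pair case, the induction hypothesis then types every $\Valsubst_i(y)$. Each $x \in \Dom{\Valsubst}$ lies in $\Dom{\Valsubst_1}$ or in $\Dom{\Valsubst_2}$, and $\Valsubst(x)$ coincides with the corresponding $\Valsubst_i(x)$. This is well-defined because $\Valsubst_1 \ValEq \Valsubst_2$. So $\Valsubst(x)$ is typed, which closes the induction.

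Finally, given $x \in \Dom{\Valsubst}$ with $\Valsubst(x) \ValMatches \GuardTy$, the auxiliary claim gives some $t'$ with $\ExpTy{\Venv}{\Valsubst(x)}{t'}$. Then \Cref{lem:val-match-implies-ty} yields $\ExpTy{\Venv}{\Valsubst(x)}{\TyOf{\GuardTy}}$.

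The only mildly delicate step is the inversion in the pair case. It is routine given \Cref{lem:sub-elim}, because only \Rule{d-pair} can type a syntactic pair once subsumption is removed. Alternatively, one can use \Cref{lem:value-type-determines-shape}(iii) after subsuming $t$ to some pair type. I expect no real obstacle here.
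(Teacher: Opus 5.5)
Your proof is correct and follows the paper's argument. Both proceed by induction on $p$ with the same case split, use \Cref{lem:sub-elim} to type the components in the pair case, and use \Cref{lem:val-match-implies-ty} to obtain $\TyOf{\GuardTy}$. The only difference is organizational: the paper carries the hypothesis $\Valsubst(x) \ValMatches \GuardTy$ through the induction and applies \Cref{lem:val-match-implies-ty} in the variable case, whereas you first show that every bound value is well typed and apply that lemma once at the end.
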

\begin{proof}
  By induction on $p$.
  \begin{CaseDistinction}{on the form of $p$}
    \CdCase{$p = v'$ or $p = \Wildcard$ or $p = \Capt{y}$} Impossible since $\PatSubst{v}{p} = \PatSubstFail$ or
    $\PatSubst v p = []$ in this case.

    \CdCase{$p = x$} Then $\Valsubst = [v/x]$, so $\Valsubst(x) = v$ and $v \ValMatches \Gt$
    by assumption. The claim now follows by \Cref{lem:val-match-implies-ty}.

    \CdCase{$p = (p_1, p_2)$} Then $v = (v_1, v_2)$ and
    \begin{igather}
      \PatSubst{v_i}{p_i} = \Valsubst_i \quad (i=1,2)\\
      \Valsubst_1 \ValEq \Valsubst_2 \\
      \Valsubst = \Valsubst_1 \cup \Valsubst_2
    \end{igather}
    From assumption $\ExpTy{\Venv}{v}{t}$, we get with \Cref{lem:sub-elim} that for some $t_1, t_2$
    \begin{igather}
      \ExpTy{\Venv}{v_i}{t_i} \quad (i=1,2)\\
      \Pairty{t_1}{t_2} \IsSubty t
    \end{igather}
    Assume $x \in \Dom{\Valsubst}$ with $\Valsubst(x) \ValMatches \Gt$. Then
    $x \in \Dom{\Valsubst_1}$ and/or $x \in \Dom{\Valsubst_2}$, and if $x$ is
    in both domains, then $\Valsubst_1(x) = \Valsubst_2(x)$

    Now assume $i \in \{1,2\}$ with $x \in \Dom{\Valsubst_i}$. Then $\Valsubst_i(x) = \Valsubst(x)$, so
    $\Valsubst_i(x) \ValMatches \Gt$. Applying the \IH{} then yields
    \begin{igather}
      \ExpTy{\Venv}{\Valsubst_i(x)}{\TyOf{\Gt}}
    \end{igather}
    The claim now follows with $\Valsubst_i(x) = \Valsubst(x)$.
  \end{CaseDistinction}
\end{proof}

\begin{definition}
  For some finite mapping $\mathcal F$ and some set $A$,
  we write $\RestrictDomain{\mathcal F}{A}$ to denote the finite mapping with
  restricted domain $A$; that is,
  $\RestrictDomain{\mathcal F}{A} = \{ x: \mathcal{F}(x) \mid x \in A \cap \Dom{\mathcal F} \}$.
\end{definition}

\begin{lemma}
  \label{lem:patty-subst}
  ~
  \begin{EnumThm}
  \item $\SafeDown{g}{A}$ implies $\SafeDown{g\Valsubst}{A}$ provided $\Dom{\Valsubst} \cap A = \emptyset$.
  \item $\PatLowerTy{p}{\Venv_1} \IsSubty \PatLowerTy{p\Valsubst}{\Venv_2}$ provided
    $\PatBound{p} \cap \Dom{\Valsubst} = \emptyset$ and
    $\RestrictDomain{\Venv_1}{\PatBound{p}} = \RestrictDomain{\Venv_2}{\PatBound{p}}$.
  \item $\PgLowerTy{\Pg} \IsSubty \PgLowerTy{\Pg\Valsubst}$ provided
    $\PatBound{\Pg} \cap \Dom{\Valsubst} = \emptyset$.
  \end{EnumThm}
\end{lemma}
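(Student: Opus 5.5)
Claim (i) is proved by induction on $g$, and claim (ii) by induction on $p$. Claim (iii) then follows from (i) and (ii) together with some bookkeeping about $\Env{g}$ and linearity.

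For (i), the base cases follow the definition of $\SafeDown{\cdot}{A}$. If $g = \Is{\Gt}{x}$ with $x \in A$, then $x \notin \Dom{\Valsubst}$, so $g\Valsubst = g$ and safety is preserved. If $g = \Is{\Gt}{v}$ with $v \ValMatches \Gt$, then $g\Valsubst = \Is{\Gt}{v\Valsubst}$, and \Cref{lem:val-matches-subst} gives $v\Valsubst \ValMatches \Gt$. The cases $\GuardTrue$ and $\GuardOracle$ are immediate; for $\GuardOracle$ the premise is false. Conjunction follows from the induction hypothesis.

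For (ii), we do induction on $p$.
\begin{itemize}
\item Case $p = v$: here $\PatLowerTy{v}{\Venv_1}$ is either $\ErlBot$ or an int constant $\Const$. In the latter case $v\Valsubst = \Const$, so both sides are equal.
\item Case $p = \Capt{y}$: the left side is $\ErlBot$, so the claim is trivial.
\item Case $p = \Wildcard$: both sides are $\ErlTop$.
\item Case $p = x$: then $x \notin \Dom{\Valsubst}$, so $p\Valsubst = x$. The restriction hypothesis on $\PatBound{p} = \{x\}$ makes both sides equal to either $\Venv_i(x)$ or $\ErlTop$.
\item Case $p = (p_1,p_2)$: the hypotheses restrict to the subpatterns, because $\PatBound{p_i} \subseteq \PatBound{p}$. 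The induction hypothesis then gives componentwise subtyping, and covariance of $\times$ under subtyping finishes the case.
\end{itemize}

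For (iii), let $\Pg = \WithGuard{p}{g}$, so that $\Pg\Valsubst = \WithGuard{p\Valsubst}{g\Valsubst}$. If $\PgLowerTy{\Pg} = \ErlBot$ we are done. Otherwise three facts hold: $\SafeDown{g}{\PatBound{p}}$, $\Env{g} \neq \ErlBot$, and $p$ is linear. We must establish the same three facts for $\Pg\Valsubst$ and then compare the types.
\begin{itemize}
\item Safety transfers by (i), using $\PatBound{p} \cap \Dom{\Valsubst} = \emptyset$.
\item By \Cref{lem:env-g-subst}, $\Env{g\Valsubst}$ is a restriction of $\Env{g}$, so it is still $\neq \ErlBot$.
\item Linearity is preserved, and we also need $\PatBound{p\Valsubst} = \PatBound{p}$. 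Both hold because substitution only replaces $\Capt{y}$ by values and values inside value patterns; it never touches binding variables, since these are disjoint from $\Dom{\Valsubst}$.
\end{itemize}
Both sides are then of the form $\PatLowerTy{p}{\Env{g}}$ versus $\PatLowerTy{p\Valsubst}{\Env{g\Valsubst}}$, and we apply (ii) with $\Venv_1 = \Env{g}$ and $\Venv_2 = \Env{g\Valsubst}$.

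The main obstacle is the restriction hypothesis of (ii). We need $\RestrictDomain{\Env{g}}{\PatBound{p}} = \RestrictDomain{\Env{g\Valsubst}}{\PatBound{p}}$. This follows from \Cref{lem:env-g-subst}: $\Env{g\Valsubst}$ agrees with $\Env{g}$ on all variables outside $\Dom{\Valsubst}$, and $\PatBound{p}$ is disjoint from $\Dom{\Valsubst}$. A second subtlety is the definition of substitution on patterns. A captured variable $\Capt{y}$ with $y \in \Dom{\Valsubst}$ becomes the value pattern $\Valsubst(y)$, which makes the accepting type possibly larger, never smaller. This is why the statement asserts only $\IsSubty$ and not equality.
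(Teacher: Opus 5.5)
Your proposal is correct and follows the paper's proof essentially step for step: (i) by induction on $g$ using \Cref{lem:val-matches-subst}, (ii) by induction on $p$, and (iii) by transferring safety, $\Env{g\Valsubst} \neq \ErlBot$, linearity and $\PatBound{p\Valsubst} = \PatBound{p}$ via (i) and \Cref{lem:env-g-subst}, and then applying (ii) with $\Venv_1 = \Env{g}$ and $\Venv_2 = \Env{g\Valsubst}$. Your remark that substituting into $\Capt{y}$ can only enlarge the accepting type is a good explanation of why the lemma asserts $\IsSubty$ rather than equality, though the proof does not need it.
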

\begin{proof}
  \emph{Proof of (i).} We proceed by induction on $g$. If $g = \GuardAnd{g_1}{g_2}$, the claim follows from the \IH{}
  If $g = \GuardTrue$ or $g = \GuardOracle$, the claim is obvious. Thus, assume $g$ is a type test.
  \begin{itemize}
  \item $g = \Is{\Gt}{v}$ and $v \ValMatches \Gt$.
    Then $v\Valsubst$ is a value and $v\Valsubst \ValMatches \Gt$ by \Cref{lem:val-matches-subst}.
    Thus, $\SafeDown{g\Valsubst}{A}$.
  \item $g = \Is{\Gt}{x}$ and $x \in A$. With assumption $\Dom{\Valsubst} \cap A = \emptyset$, we have $e\Valsubst = x$,
    so $\SafeDown{g\Valsubst}{A}$.
  \item Otherwise, $\SafeDown{g}{A} = \False$, so nothing is to prove.
  \end{itemize}

  \emph{Proof of (ii).} Induction on $p$.
  Assume $\PatLowerTy{p}{\Venv_1} \neq \ErlBot$, otherwise nothing is to prove
  because $\ErlBot \IsSubty t$ for any $t$ by \Cref{prop:subty}(i).
  \begin{itemize}
  \item $p = v$. From $\PatLowerTy{p}{\Venv_1} \neq \ErlBot$, we have $v = \Const$ and $\TyOf{\Const} = \Const$
    and $\PatLowerTy{p}{\Venv_1} = \Const$ for some $\Const$. Then $p\Valsubst = \Const$ and
    $\PatLowerTy{p\Valsubst}{\Venv_2} = \Const$.
  \item $p = \Wildcard$. Obvious because
    $p\Valsubst = \Wildcard = p$ and $\PatLowerTy{p\Valsubst}{\Venv} = \ErlTop$ for any $\Venv$.
  \item $p = x$. Then $x \in \PatBound{p}$, so $p\Valsubst = p = x$
    by assumption $\PatBound{p} \cap \Dom{\Valsubst} = \emptyset$.
    The claim follows by assumption
    $\RestrictDomain{\Venv_1}{\PatBound{p}} = \RestrictDomain{\Venv_2}{\PatBound{p}}$.
  \item $p = \Capt x$. Impossible since $\PatLowerTy{p}{\Venv_1} \neq \ErlBot$.
  \item $p = (p_1, p_2)$. Follows from the \IH{}
  \end{itemize}

  \emph{Proof of (iii).}
  Assume $\Pg = \WithGuard{p}{g}$.
  From assumption $\PatBound{\Pg} \cap \Dom{\Valsubst} = \emptyset$ we also have
  $\PatBound{p} \cap \Dom{\Valsubst} = \emptyset$.
  If $\PgLowerTy{\Pg} = \ErlBot$, the claim is obvious. Otherwise:
  \begin{igather}
    \PgLowerTy{\Pg} = \PatLowerTy{p}{\Env{g}} \\
    \SafeDown{g}{\PatBound p} \\
    \Env{g} \neq \ErlBot \\
    p ~\textrm{linear}
  \end{igather}
  Hence with (i)
  $\SafeDown{g\Valsubst}{\PatBound{p\Valsubst}}$ because $\PatBound{p} = \PatBound{p\Valsubst}$.
  From \Cref{lem:env-g-subst} $\Env{g\Valsubst} \subseteq \Env{g}$ and
  $\RestrictDomain{\Env{g\Valsubst}}{\PatBound{p}} = \RestrictDomain{\Env{g}}{\PatBound{p}}$.
  Thus,
  $\Env{g\Valsubst} \neq \ErlBot$. Further, $p\Valsubst$ is linear so
  \begin{igather}
    \PgLowerTy{\Pg} = \PatLowerTy{p}{\Env{g}} \ReasonAbove{\IsSubty}{\textrm{(ii)}}
    \PatLowerTy{p\Valsubst}{\Env{g\Valsubst}} = \PgLowerTy{\Pg\Valsubst}
  \end{igather}
\end{proof}

\begin{lemma}
  \label{lem:safe-implies-not-bot}
  If $\SafeUp{g\Valsubst}$ and $\Env{g\Valsubst} \neq \ErlBot$, then $\Env{g} \neq \ErlBot$.
\end{lemma}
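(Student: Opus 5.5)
We argue by contraposition, following the pattern of \Cref{lem:eval-guard-implies-not-bot}. Assume $\Env{g} = \ErlBot$, i.e.\ there is some $x \in \Dom{\Env{g}}$ with $\Env{g}(x) = \ErlBot$. Each individual type test contributes a non-empty type $\TyOf{\Gt}$, and $\Env{}$ of $\GuardTrue$, $\GuardOracle$ and $\Is{\Gt}{v}$ is empty. Hence $\ErlBot$ can only arise from an intersection in $\InterEnv$. So $g$ contains two type tests $\Is{\Gt_1}{x}$ and $\Is{\Gt_2}{x}$ on the same variable with $\Gt_1 \neq \Gt_2$. The four guard types have pairwise disjoint types, so $\TyOf{\Gt_1}\Inter\TyOf{\Gt_2} \TyEquiv \ErlBot$. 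We would establish this small fact (or a variant allowing repeated equal tests) by a routine induction on $g$.

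We then split on whether $x \in \Dom{\Valsubst}$.

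\emph{Case $x \notin \Dom{\Valsubst}$.} Both tests survive unchanged in $g\Valsubst$. By \Cref{lem:env-g-subst}, $\Env{g\Valsubst}(x) = \Env{g}(x) = \ErlBot$, contradicting $\Env{g\Valsubst} \neq \ErlBot$.

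\emph{Case $x \in \Dom{\Valsubst}$, say $\Valsubst(x) = v$.} Now $g\Valsubst$ contains the value tests $\Is{\Gt_1}{v}$ and $\Is{\Gt_2}{v}$. Since $\SafeUp{}$ of a conjunction is the conjunction of $\SafeUp{}$ of its parts (an easy induction on $g$), $\SafeUp{g\Valsubst}$ gives $v \ValMatches \Gt_1$ and $v \ValMatches \Gt_2$. Inspecting the rules \Rule{is-int}, \Rule{is-float}, \Rule{is-pair}, \Rule{is-fun}, a value matches at most one guard type. This contradicts $\Gt_1 \neq \Gt_2$.

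The main obstacle is making the first step precise: showing that $\Env{g}(x) = \ErlBot$ forces two distinct guard types on $x$. We would handle this by proving, by induction on $g$, that $\Env{g}(x)$ is equivalent to the intersection of $\TyOf{\Gt}$ over all tests $\Is{\Gt}{x}$ in $g$. Given that, the intersection is $\ErlBot$ exactly when two of those guard types differ, by disjointness of the $\TyOf{\Gt}$. The remaining steps are direct applications of \Cref{lem:env-g-subst} and the definitions.
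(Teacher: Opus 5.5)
Your proposal is correct and follows the paper's proof. Like the paper, you assume $\Env{g} = \ErlBot$, extract two tests $\Is{\Gt_1}{x}$ and $\Is{\Gt_2}{x}$ with disjoint guard types, and derive a contradiction from $\SafeUp{g\Valsubst}$ once $x \in \Dom{\Valsubst}$. Your separate case $x \notin \Dom{\Valsubst}$, handled via \Cref{lem:env-g-subst}, is what the paper compresses into ``from $\Env{g\Valsubst} \neq \ErlBot$ we know $x \in \Dom{\Valsubst}$''; your induction characterizing $\Env{g}(x)$ simply makes the paper's first step precise.
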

\begin{proof}
  \renewcommand\LabelQualifier{lem:safe-implies-not-bot}
  Define $\Venv = \Env{g}$. Assume $\Venv = \ErlBot$. Then there exists $x \in \Dom{\Venv}$ with
  $\Venv(x) = \ErlBot$. Hence, $g$ contains (at least) two type tests $\Is{\Gt_1}{x}$ and $\Is{\Gt_2}{x}$
  such that
  \begin{igather}
    \TyOf{\Gt_1} \Inter \TyOf{\Gt_2} = \ErlBot \QLabel{eq:eq}
  \end{igather}
  From $\Env{g\Valsubst} \neq \ErlBot$, we know that $x \in \Dom{\Valsubst}$. From
  $\SafeUp{g\Valsubst}$ we get $\Valsubst(x) \ValMatches \Gt_1$ and
  $\Valsubst(x) \ValMatches \Gt_2$. But this contradicts \QRef{eq:eq}. Hence, $\Venv \neq \ErlBot$.
\end{proof}

\begin{lemma}
  \label{lem:patty-subst-up}
  ~
  \begin{EnumThm}
  \item $\SafeUp{g\Valsubst}$ implies $\SafeUp{g}$.
  \item $\PatUpperTy{p\Valsubst}{\Venv_1} \IsSubty \PatUpperTy{p}{\Venv_2}$ provided
    $\PatBound{p} \cap \Dom{\Valsubst} = \emptyset$ and
    $\RestrictDomain{\Venv_1}{\PatBound{p}} = \RestrictDomain{\Venv_2}{\PatBound{p}}$.
  \item $\PgUpperTy{\Pg\Valsubst} \IsSubty \PgUpperTy{\Pg}$ provided
    $\PatBound{\Pg} \cap \Dom{\Valsubst} = \emptyset$.
  \end{EnumThm}
\end{lemma}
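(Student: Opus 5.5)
The plan mirrors the proof of \Cref{lem:patty-subst}, with the direction of the subtyping reversed, and treats the three claims in order.

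\emph{Claim (i).} I would argue by induction on $g$. If $g = \GuardAnd{g_1}{g_2}$, then $g\Valsubst = \GuardAnd{g_1\Valsubst}{g_2\Valsubst}$ and the claim follows from the \IH{} The cases $\GuardTrue$, $\GuardOracle$ and $\Is{\Gt}{x}$ are immediate, because $\SafeUp{g}$ is $\True$ for them by definition. (When $x \in \Dom{\Valsubst}$, $g\Valsubst$ becomes a test on a value, but the conclusion $\SafeUp{g}$ still holds trivially.) The only real case is $g = \Is{\Gt}{v}$. Then $\SafeUp{g\Valsubst}$ gives $v\Valsubst \ValMatches \Gt$, and \Cref{lem:val-matches-subst} yields $v \ValMatches \Gt$, hence $\SafeUp{g}$.

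\emph{Claim (ii).} I would argue by induction on $p$.
\begin{itemize}
\item $p = v$: here $p\Valsubst = v\Valsubst$. If $v = \Const$, then $v\Valsubst = \Const$ and both sides equal $\TyOf{\Const}$. Otherwise $v$ is not a constant, so neither is $v\Valsubst$, and both sides are $\ErlTop$.
\item $p = \Wildcard$: both sides are $\ErlTop$.
\item $p = x$: the disjointness hypothesis gives $p\Valsubst = x$. The two environments agree on $\PatBound{p} = \{x\}$, so both sides are $\Venv_i(x)$ when $x$ is in the domain, and $\ErlTop$ otherwise.
\item $p = \Capt{y}$: $p\Valsubst$ is either $\Capt{y}$ or a value $v$. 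In both situations the left side is at most $\ErlTop$, which is the right side.
\item $p = (p_1, p_2)$: the hypotheses restrict to $p_1$ and $p_2$, since $\PatBound{p_i} \subseteq \PatBound{p}$. The \IH{} then gives componentwise subtyping, and covariance of $\Pairty{}{}$ finishes the case.
\end{itemize}
The one delicate point is the capture case, because the pattern grammar does not directly say what $(\Capt{y})\Valsubst$ is. I would settle it by noting that the right side is $\ErlTop$, so any result on the left is fine.

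\emph{Claim (iii).} Let $\Pg = \WithGuard{p}{g}$, so that $\Pg\Valsubst = \WithGuard{p\Valsubst}{g\Valsubst}$ and $\PatBound{p} \cap \Dom{\Valsubst} = \emptyset$. If $\PgUpperTy{\Pg\Valsubst} = \ErlBot$, there is nothing to prove. Otherwise $\SafeUp{g\Valsubst}$ and $\Env{g\Valsubst} \neq \ErlBot$ hold.
\begin{itemize}
\item Claim (i) gives $\SafeUp{g}$.
\item \Cref{lem:safe-implies-not-bot} gives $\Env{g} \neq \ErlBot$.
\item Hence $\PgUpperTy{\Pg} = \PatUpperTy{p}{\Env{g}}$.
\item By \Cref{lem:env-g-subst}, $\Env{g\Valsubst}$ is $\Env{g}$ restricted away from $\Dom{\Valsubst}$. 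Since $\PatBound{p}$ is disjoint from $\Dom{\Valsubst}$, the restrictions of $\Env{g\Valsubst}$ and $\Env{g}$ to $\PatBound{p}$ coincide.
\item Claim (ii) now gives $\PatUpperTy{p\Valsubst}{\Env{g\Valsubst}} \IsSubty \PatUpperTy{p}{\Env{g}}$.
\end{itemize}
The main obstacle is this step: showing that the non-bottom side conditions transfer from $g\Valsubst$ back to $g$. It is exactly what \Cref{lem:safe-implies-not-bot} provides, so I expect the proof to go through once that lemma is invoked.
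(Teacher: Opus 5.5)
Your proposal is correct and follows the paper's proof essentially step for step. Claim (i) is proved by induction on $g$ using \Cref{lem:val-matches-subst}; the paper argues the type-test case by contradiction, while you argue it directly. Claim (ii) is proved by induction on $p$, and claim (iii) combines (i), \Cref{lem:safe-implies-not-bot}, and \Cref{lem:env-g-subst} before applying (ii). Your extra remarks on non-constant values and on $(\Capt{y})\Valsubst$ only make explicit what the paper leaves implicit.
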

\begin{proof}
  \emph{Proof of (i).} We proceed by induction on $g$. If $g = \GuardAnd{g_1}{g_2}$, the claim follows from the \IH{}
  If $g = \GuardTrue$ or $g = \GuardOracle$, the claim is obvious. Thus, assume $g$ is a type test. If $\SafeUp{g} = \True$, we are done.
  Assume $\SafeUp{g} = \False$. Then $g = \Is{\Gt}{v}$ for some value $v$ with not $v \ValMatches \Gt$. By \Cref{lem:val-matches-subst}
  not $v\Valsubst \ValMatches \Gt$. Thus, $\SafeUp{g\Valsubst} = \False$, which is a contradiction.

  \emph{Proof of (ii).} Induction on $p$. Assume $\PatUpperTy{p\Valsubst}{\Venv_1} \neq \ErlBot$,
  otherwise nothing is to prove.
  \begin{itemize}
  \item $p = v$. If $v$ is some $\Const$, then $v\Valsubst = \Const$ and
    $\PatUpperTy{v\Valsubst}{\Venv_1} = \TyOf{\Const} = \PatUpperTy{v}{\Venv_2}$.
  \item $p = \Wildcard$, obvious.
  \item $p = x$. Then $x \in \PatBound{p}$, so $p\Valsubst = p = x$
    by assumption $\PatBound{p} \cap \Dom{\Valsubst} = \emptyset$.
    The claim follows by assumption
    $\RestrictDomain{\Venv_1}{\PatBound{p}} = \RestrictDomain{\Venv_2}{\PatBound{p}}$.
  \item $p = \Capt x$. Then $\PatUpperTy{\Capt{x}}{\Venv_2} = \ErlTop$.
  \item $p = (p_1, p_2)$. Follows from the \IH{}
  \end{itemize}

  \emph{Proof of (iii).}
  Let $\Pg = \WithGuard p g$. From the assumption $\PatBound{\Pg} \cap \Dom{\Valsubst} = \emptyset$,
  we also have $\PatBound{p} \cap \Dom{\Valsubst} = \emptyset$.
  If $\PgUpperTy{\Pg\Valsubst} = \ErlBot$, the claim is obvious. Otherwise
  \begin{igather}
    \PgUpperTy{\Pg\Valsubst} = \PatUpperTy{p\Valsubst}{\Env{g\Valsubst}} \\
    \SafeUp{g\Valsubst} \\
    \Env{g\Valsubst} \neq \ErlBot
  \end{igather}
  With \Cref{lem:safe-implies-not-bot} then $\Env{g} \neq \ErlBot$. With (i) also $\SafeUp{g}$.
  From \Cref{lem:env-g-subst}
  $\RestrictDomain{\Env{g\Valsubst}}{\PatBound{p}} = \RestrictDomain{\Env{g}}{\PatBound{p}}$.
  Thus
  \begin{igather}
    \PgUpperTy{\Pg\Valsubst} = \PatUpperTy{p\Valsubst}{\Env{g\Valsubst}} \ReasonAbove{\IsSubty}{\textrm{(ii)}}
    \PatUpperTy{p}{\Env{g}} = \PgUpperTy{\Pg}
  \end{igather}
\end{proof}

\begin{lemma}[Correctness of environment generation for patterns]
  \label{lem:correct-env-gen-pat}
  Assume $\ExpTy{\Venv}{v}{t}$ and $\PatSubst{v}{p} = \Valsubst$.
  Then for all $x \in \Dom{\Valsubst}$: $\ExpTy{\Venv}{x\Valsubst}{(\PatEnv{t}{p})(x)}$.
\end{lemma}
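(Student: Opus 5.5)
We proceed by structural induction on the pattern $p$, following the case structure of \Cref{lem:dyn-match-guard-ty}. The claim asserts that every value bound by matching $v$ against $p$ has the type that $\PatEnv{t}{p}$ assigns to the corresponding variable. Note that $\PatEnv{t}{p}$ is only defined when the projections used in the pair case exist. Hence we may assume that $t \IsSubty \Pairty{\ErlTop}{\ErlTop}$ whenever $p$ is a pair pattern.

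\emph{Base cases.} If $p = v'$, $p = \Wildcard$, or $p = \Capt{y}$, then a successful match yields $\Valsubst = [\,]$. The domain of $\Valsubst$ is empty, so there is nothing to show. If $p = x$, then $\Valsubst = [v/x]$ and $(\PatEnv{t}{x})(x) = t$. The claim $\ExpTy{\Venv}{v}{t}$ is exactly the assumption.

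\emph{Pair case.} Let $p = (p_1, p_2)$. Inverting \Rule{match-pair} gives $v = (v_1, v_2)$, $\PatSubst{v_i}{p_i} = \Valsubst_i$, $\Valsubst_1 \ValEq \Valsubst_2$ and $\Valsubst = \Valsubst_1 \cup \Valsubst_2$.

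\begin{enumerate}
\item We first derive types for the components. By \Cref{lem:sub-elim}, the typing of $v$ comes from a derivation ending in \Rule{d-pair}, so $\ExpTy{\Venv}{v_i}{t_i}$ with $\Pairty{t_1}{t_2} \IsSubty t$.
\item By \Cref{prop:projection}(ii) and (iii), $t_i = \ProjI{\Pairty{t_1}{t_2}} \IsSubty \ProjI{t}$. More directly, (iii) applied to $\Pairty{t_1}{t_2} \IsSubty t$ gives $\ProjI{\Pairty{t_1}{t_2}} \IsSubty \ProjI{t}$, and (i) together with the emptiness considerations gives $t_i \IsSubty \ProjI{\Pairty{t_1}{t_2}}$. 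Hence \Rule{d-sub} yields $\ExpTy{\Venv}{v_i}{\ProjI{t}}$.
\item The induction hypothesis applied to $p_i$ with type $\ProjI{t}$ gives, for every $x \in \Dom{\Valsubst_i}$, the judgment $\ExpTy{\Venv}{\Valsubst_i(x)}{(\PatEnv{\ProjI{t}}{p_i})(x)}$.
\end{enumerate}

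Now take $x \in \Dom{\Valsubst}$. By \Cref{lem:pat-vars}, $\Dom{\Valsubst_i} = \PatBound{p_i} = \Dom{\PatEnv{\ProjI{t}}{p_i}}$. We distinguish the three cases of the definition of $\InterEnv$.

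\begin{itemize}
\item If $x$ lies only in $\PatBound{p_1}$, then $\Valsubst(x) = \Valsubst_1(x)$ and $(\PatEnv{t}{p})(x) = (\PatEnv{\ProjL{t}}{p_1})(x)$. The claim is then immediate from the induction hypothesis.
\item The case where $x$ lies only in $\PatBound{p_2}$ is symmetric.
\item If $x$ lies in both, then $\Valsubst_1 \ValEq \Valsubst_2$ gives $\Valsubst_1(x) = \Valsubst_2(x) = \Valsubst(x)$. This value has both types $(\PatEnv{\ProjL{t}}{p_1})(x)$ and $(\PatEnv{\ProjR{t}}{p_2})(x)$. \Cref{lem:value-has-both-types} then yields the intersection, which is exactly $(\PatEnv{t}{p})(x)$.
\end{itemize}

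\emph{Main obstacle.} The delicate step is obtaining $t_i \IsSubty \ProjI{t}$ from $\Pairty{t_1}{t_2} \IsSubty t$. \Cref{prop:projection}(i) only gives $\Pairty{t_1}{t_2} \IsSubty \Pairty{\ProjL{\Pairty{t_1}{t_2}}}{\ProjR{\Pairty{t_1}{t_2}}}$, and deriving $t_i \IsSubty \ProjI{\cdot}$ from this via \Cref{prop:subty}(iii) requires the pair type to be non-empty. We handle this by appealing to \Cref{lem:value-not-bot}: since $v_1$ and $v_2$ are typed values, $t_1, t_2 \neq \ErlBot$. We then use the semantic fact, from the interpretation of products, that $\ProjI{\Pairty{t_1}{t_2}} \TyEquiv t_i$ for non-empty components. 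Combined with \Cref{prop:projection}(iii), this gives $t_i \IsSubty \ProjI{t}$.

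If this equivalence is not available as stated, the alternative is to observe that \Cref{prop:projection}(ii) characterizes $\ProjI{t}$ as the least upper bound in the relevant sense. Concretely, $t \IsSubty \Pairty{\ProjL{t}}{\ProjR{t}}$ gives $\Pairty{t_1}{t_2} \IsSubty \Pairty{\ProjL{t}}{\ProjR{t}}$, and \Cref{prop:subty}(iii), applicable because the components are non-empty, yields $t_i \IsSubty \ProjI{t}$. This second route is simpler and is the one I would write out.
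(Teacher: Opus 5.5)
Your proof is correct and shares the paper's skeleton: induction on $p$, \Cref{lem:sub-elim} to obtain $\ExpTy{\Venv}{v_i}{t_i}$ with $\Pairty{t_1}{t_2} \IsSubty t$, the three-way split along the definition of $\InterEnv$ via \Cref{lem:pat-vars}, and \Cref{lem:value-has-both-types} for variables bound in both components.

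The difference is in the pair case. The paper applies the induction hypothesis at the component types $t_i$. It then lifts the result to $\PatEnv{\ProjI{t}}{p_i}$ with the monotonicity result \Cref{lem:env-pattern-sub}. You instead subsume $v_i$ to $\ProjI{t}$ first and apply the hypothesis there directly. Your ordering makes \Cref{lem:env-pattern-sub} unnecessary. It also evaluates $p_i$ at exactly the type used by the definition of $\PatEnv{t}{p}$, so definedness of nested projections is inherited automatically. The paper's ordering tacitly needs $\PatEnv{t_i}{p_i}$ to be defined for whatever $t_i$ \Cref{lem:sub-elim} produces, for instance $t_i = \ErlTop$ with $p_i$ a pair pattern.

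Both versions hinge on $t_i \IsSubty \ProjI{t}$. The paper just cites \Cref{prop:projection}(iii), which by itself only yields $\ProjI{\Pairty{t_1}{t_2}} \IsSubty \ProjI{t}$. Your second route is the more careful justification: $\Pairty{t_1}{t_2} \IsSubty t \IsSubty \Pairty{\ProjL{t}}{\ProjR{t}}$, then componentwise decomposition by \Cref{prop:subty}(iii). That decomposition is only valid for non-empty components, and you correctly supply non-emptiness from \Cref{lem:value-not-bot}.

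In the write-up, use only that second route and drop the first sentence of your step 2. \Cref{prop:projection}(ii) gives $\ProjI{\Pairty{t_1}{t_2}} \IsSubty t_i$, which is the opposite of the direction you need. So the equation $t_i = \ProjI{\Pairty{t_1}{t_2}}$ is not justified by (ii) and (iii).
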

\begin{proof}
  \renewcommand\LabelQualifier{lem:correct-env-gen-pat}
  Induction on $p$.
  \begin{CaseDistinction}{on the shape of $p$}
    \CdCase{$p = v$ or $p = \Wildcard$ or $p = \Capt{y}$} Impossible, since either
    $\Valsubst = []$ or $\PatSubst{v}{p} = \PatSubstFail$.

    \CdCase{$p = y$} Then $\Valsubst = [v/y]$ and $\PatEnv{t}{p} = \{y : t \}$.
    If $x \in \Dom{\Valsubst}$, then $x = y$ and $x\Valsubst = v$ and
    $(\PatEnv t p)(x) = t$, so the claim follows by the assumption $\ExpTy{\Venv}{v}{t}$.

    \CdCase{$p = (p_1, p_2)$} Then
    \begin{igather}
      \PatEnv t p = (\PatEnv{\ProjL{t}}{p_1}) \InterEnv (\PatEnv{\ProjR{t}}{p_2}) \QLabel{eq:def-patenv}\\
      v = (v_1, v_2) \\
      \PatEnv{v_i}{p_i} = \Valsubst_i \quad (i=1,2) \QLabel{eq:substi} \\
      \Valsubst_1 \ValEq \Valsubst_2 \\
      \Valsubst = \Valsubst_1 \cup \Valsubst_2
    \end{igather}
    From the assumption $\ExpTy{\Venv}{v}{t}$ and \Cref{lem:sub-elim} for some $t_1, t_2$
    \begin{igather}
      \ExpTy{\Venv}{v_i}{t_i} \QLabel{eq:ty-vi} \\
      \Pairty{t_1}{t_2} \IsSubty t
    \end{igather}
    Applying the \IH{} to \QRef{eq:substi} and \QRef{eq:ty-vi} for $i=1,2$ yields
    \begin{igather}
      \ExpTy{\Venv}{x\Valsubst_i}{(\PatEnv{t_i}{p_i})(x)} \quad (\forall x \in \Dom{\Valsubst_i}) \QLabel{eq:ty-xsubst}
    \end{igather}
    Now assume $x \in \Dom{\PatEnv t p}$. With \QRef{eq:def-patenv}, we have three possible cases.
    \begin{itemize}
    \item $x \in \Dom{\PatEnv{\ProjL{t}}{p_1}}$ and $x \notin \Dom{\PatEnv{\ProjR{t}}{p_2}}$
      With \QRef{eq:substi} and \Cref{lem:pat-vars} then
      $x \in \Dom{\Valsubst_1}$ and $x \notin \Dom{\Valsubst_2}$.

      Hence, $x\Valsubst = x\Valsubst_1$ and
      $(\PatEnv t p)(x) = (\PatEnv{\ProjL{t}}{p_1})(x)$. With \Cref{prop:projection}(iii) also
      $t_1 \IsSubty \ProjL{t}$. Thus, with \QRef{eq:ty-xsubst} and \Cref{lem:env-pattern-sub}
      \begin{igather}
        \ExpTy{\Venv}{x\Valsubst_1}{(\PatEnv{\ProjL{t}}{p_1})(x)}
      \end{igather}
      as required.
    \item $x \notin \Dom{\PatEnv{\ProjL{t}}{p_1}}$ and $x \in \Dom{\PatEnv{\ProjR{t}}{p_2}}$. Similarly as the preceding case.
    \item $x \in \Dom{\PatEnv{\ProjL{t}}{p_1}}$ and $x \in \Dom{\PatEnv{\ProjR{t}}{p_2}}$.
      Then $x \in \Dom{\Valsubst_1}$ and $x \in \Dom{\Valsubst_2}$ and $\Valsubst_1(x) = \Valsubst_2(x)$.
      Further,
      \begin{igather}
        (\PatEnv t p)(x) = (\PatEnv{\ProjL{t}}{p_1})(x) \Inter (\PatEnv{\ProjR{t}}{p_2})(x)
      \end{igather}
      With the same reasoning as in the first case, we can show that
      \begin{igather}
        \ExpTy{\Venv}{x\Valsubst_i}{(\PatEnv{\ProjI{t}}{p_i})(x)}
      \end{igather}
      for $i = 1,2$. Then with \Cref{lem:value-has-both-types}
      \begin{igather}
        \ExpTy{\Venv}{x\Valsubst}{(\PatEnv t p)(x)}
      \end{igather}
      as required.
    \end{itemize}
  \end{CaseDistinction}
\end{proof}

\begin{definition}
  A function environment $\FunEnv$ is in accordance with a type schema environment $\Senv$,
  written $\FunEnv \Accord \Senv$, if and only if the following holds:
  \begin{itemize}
  \item $\FunEnv = \DefSym_{i \in I}$
  \item $\Envs{\DefSym_i} = \MetaPair{\Senv_i}{\Senv_i'}$ for all $i \in I$
  \item $\DefOk{\Senv_{i \in I}}{\DefSym_i}$ for all $i \in I$
  \item $\Senv = \Senv'_{i \in I}$
  \end{itemize}
\end{definition}

\begin{lemma}[Correctness of environment generation for guarded patterns]
  \label{lem:correct-env-gen}
  Assume $\ExpTy{\Venv}{v}{t}$ and $\PatSubst{v}{\Pg} = \Valsubst$ and
  $\FunEnv \Accord \Senv$.
  Then for all $x \in \Dom{\Valsubst}$: $\ExpTy{\Venv}{x\Valsubst}{(\PatEnv{t}{\Pg})(x)}$.
\end{lemma}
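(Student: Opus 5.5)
Write $\Pg = \WithGuard{p}{g}$. By definition, $\PatEnv{t}{\Pg} = (\PatEnv{t}{p}) \InterEnv \Env{g}$. Inverting rule \Rule{match-true} on $\PatSubst{v}{\Pg} = \Valsubst$ gives $\PatSubst{v}{p} = \Valsubst$ and $\EvalGuard{g\Valsubst} = \True$. Fix $x \in \Dom{\Valsubst}$. By \Cref{lem:pat-vars}(i), $x \in \PatBound{p}$, and by \Cref{lem:pat-vars}(iii), $x \in \Dom{\PatEnv{t}{p}}$. The definition of $\InterEnv$ leaves two cases: $x \notin \Dom{\Env{g}}$, where the type is $(\PatEnv{t}{p})(x)$, and $x \in \Dom{\Env{g}}$, where it is $(\PatEnv{t}{p})(x) \Inter \Env{g}(x)$. (The third case of $\InterEnv$, with $x$ only in $\Env{g}$, cannot occur.)

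In the first case the claim is exactly \Cref{lem:correct-env-gen-pat}, applied to $\ExpTy{\Venv}{v}{t}$ and $\PatSubst{v}{p} = \Valsubst$.

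In the second case, \Cref{lem:correct-env-gen-pat} again gives $\ExpTy{\Venv}{x\Valsubst}{(\PatEnv{t}{p})(x)}$. It then suffices to show $\ExpTy{\Venv}{x\Valsubst}{\Env{g}(x)}$ and combine the two judgments with \Cref{lem:value-has-both-types}. Note that $x\Valsubst = \Valsubst(x)$ is a value.

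To get the guard type, first observe that $\Env{g}(x)$ is the intersection $\InterBig_k \TyOf{\Gt_k}$ over all type tests $\Is{\Gt_k}{x}$ occurring in $g$. I would prove this by a short induction on $g$ using the definition of $\Env{\cdot}$: only tests on the variable $x$ itself contribute, and tests $\Is{\Gt}{v}$ on values contribute nothing. Since $x \in \Dom{\Valsubst}$, each such test becomes $\Is{\Gt_k}{\Valsubst(x)}$ in $g\Valsubst$. From $\EvalGuard{g\Valsubst} = \True$ and rules \Rule{eval-guard-and} and \Rule{eval-guard-value}, each such test evaluates to $\True$, so $\Valsubst(x) \ValMatches \Gt_k$ for every $k$. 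Then \Cref{lem:dyn-match-guard-ty} yields $\ExpTy{\Venv}{\Valsubst(x)}{\TyOf{\Gt_k}}$ for each $k$. Iterating \Cref{lem:value-has-both-types} over the finitely many $k$ then gives $\ExpTy{\Venv}{x\Valsubst}{\Env{g}(x)}$.

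The main point needing care is the interaction between substitution and guard evaluation: a variable $x \in \Dom{\Valsubst}$ must not be shadowed or confused with a top-level variable handled by \Rule{eval-guard-var}. Since $\Valsubst$ substitutes $x$ in $g$, the test on $x$ really becomes a value test, so this should go through by the definition of substitution.

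The hypothesis $\FunEnv \Accord \Senv$ appears unnecessary for this argument. It is presumably there for uniformity with later uses, and I would note that it is unused.
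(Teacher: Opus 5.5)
Your proof is correct and follows the paper's argument: you invert \Rule{match-true}, use \Cref{lem:pat-vars} to rule out the case where $x$ lies only in $\Dom{\Env{g}}$, and in the remaining cases combine \Cref{lem:correct-env-gen-pat} with \Cref{lem:dyn-match-guard-ty}. The only difference is minor: the paper notes that all tests $\Is{\Gt_i}{x}$ in $g$ must use the same guard type, since $\Valsubst(x)$ matches each one, so $\Env{g}(x)$ collapses to $\TyOf{\Gt}$ and \Cref{lem:dyn-match-guard-ty} is applied once, whereas you intersect over all $k$ via \Cref{lem:value-has-both-types} (a lemma the paper also uses, implicitly, for the final combination), and you are right that $\FunEnv \Accord \Senv$ is not used in the paper's proof either.
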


\begin{proof}
  \renewcommand\LabelQualifier{lem:correct-env-gen}
  Assume $\Pg = \WithGuard p g$.
  From $\PatSubst{v}{\Pg} = \Valsubst$ we get
  \begin{igather}
    \PatSubst{v}{p} = \Valsubst \QLabel{eq:subst} \\
    \EvalGuard{g\Valsubst} = \True \QLabel{eq:guard-true}
  \end{igather}
  Assume $x \in \Dom{\Valsubst}$. With \Cref{lem:pat-vars} then $x \in \Dom{\PatEnv{t}{\Pg}}$.
  By definition, we have
  \begin{igather}
    \PatEnv{t}{\Pg} = \PatEnv{t}{p} \InterEnv \Env{g} \QLabel{eq:def-t-pg}
  \end{igather}
  \begin{itemize}
  \item $x \in \Dom{\PatEnv{t}{p}}$ and $x \notin \Dom{\Env{g}}$.
    The claim then follows with \Cref{lem:correct-env-gen-pat}.

  \item $x \notin \Dom{\PatEnv{t}{p}}$ and $x \in \Dom{\Env{g}}$.
    Impossible because $\Dom{\Valsubst} = \Dom{\PatEnv{t}{p}}$ by \Cref{lem:pat-vars}.

  \item $x \in \Dom{\PatEnv{t}{p}}$ and $x \in \Dom{\Env{g}}$.
    With \Cref{lem:correct-env-gen-pat} we have
    \begin{igather}
      \ExpTy{\Venv}{x\Valsubst}{(\PatEnv{t}{p})(x)} \QLabel{eq:ty-tp}
    \end{igather}
    From $x \in \Dom{\Env{g}}$, we know that $g$ contains type tests $\Is{\Gt_i}{x}$
    for $i \in I \neq \emptyset$. From \QRef{eq:guard-true} we know that
    \begin{igather}
      \Valsubst(x) \ValMatches \Gt_i \quad (\forall i \in I)
    \end{igather}
    Hence, all $\Gt_i$ are identical. Call this guard type $\Gt$.
    With assumption $\ExpTy{\Venv}{v}{t}$, \QRef{eq:subst}, $x \in \Dom{\Valsubst}$,
    and $\Valsubst(x) \ValMatches \Gt$, we may apply \Cref{lem:dyn-match-guard-ty} and get
    \begin{igather}
      \ExpTy{\Venv}{x\Valsubst}{\TyOf{\Gt}}
    \end{igather}
    We also have $\Env{g}(x) = \TyOf{\Gt}$, so with \QRef{eq:ty-tp} and \QRef{eq:def-t-pg} then
    $\ExpTy{\Venv}{x\Valsubst}{(\PatEnv{t}{\Pg})(x)}$ as required.
  \end{itemize}
\end{proof}

\begin{lemma}
  \label{lem:upper-ty-not-bot-value-has-ty}
  Assume $\ExpTy{\Venv}{v}{t}$ for some $t$.
  Let $\Venv' = \PatEnv{u}{\Pg}$ and $\Valsubst = [v/x]$.
  If $x \in \Dom{\Venv'}$ with $x \notin \PatBound{\Pg}$ and $\PgUpperTy{\Pg\Valsubst} \neq \ErlBot$,
  then $\ExpTy{\Venv}{v}{\Venv'(x)}$.
\end{lemma}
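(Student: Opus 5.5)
The plan is to reduce the claim to a statement about the guard alone and then argue by induction on the guard. Write $\Pg = \WithGuard{p}{g}$, so that $\Venv' = (\PatEnv{u}{p}) \InterEnv \Env{g}$. By \Cref{lem:pat-vars}(iii) we have $\Dom{\PatEnv{u}{p}} = \PatBound{p} = \PatBound{\Pg}$. Since $x \notin \PatBound{\Pg}$, the variable $x$ lies only in the domain of $\Env{g}$, and the definition of $\InterEnv$ gives $\Venv'(x) = \Env{g}(x)$.

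Next we unfold the hypothesis $\PgUpperTy{\Pg\Valsubst} \neq \ErlBot$. Because $x \notin \PatBound{p}$, the substitution $\Valsubst = [v/x]$ does not touch the binders of $p$, so $\Pg\Valsubst = \WithGuard{p\Valsubst}{g\Valsubst}$. By the definition of the potential type, $\PgUpperTy{\Pg\Valsubst} \neq \ErlBot$ forces $\SafeUp{g\Valsubst} = \True$; the other conjunct $\Env{g\Valsubst} \neq \ErlBot$ will not be needed.

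It then suffices to prove the following auxiliary claim by induction on $g$: if $x \in \Dom{\Env{g}}$ and $\SafeUp{g[v/x]}$, then $\ExpTy{\Venv}{v}{\Env{g}(x)}$. The cases are as follows.
\begin{itemize}
\item For $g = \GuardTrue$, $g = \GuardOracle$, and $g = \Is{\Gt}{v'}$, we have $x \notin \Dom{\Env{g}}$, so these cases are vacuous.
\item For $g = \Is{\Gt}{y}$, we must have $y = x$, so $g[v/x] = \Is{\Gt}{v}$. Safety of this guard means $v \ValMatches \Gt$. \Cref{lem:val-match-implies-ty}, applied with the assumed typing $\ExpTy{\Venv}{v}{t}$, then gives $\ExpTy{\Venv}{v}{\TyOf{\Gt}} = \Env{g}(x)$.
\item For $g = \GuardAnd{g_1}{g_2}$, safety of $g[v/x]$ gives safety of both $g_1[v/x]$ and $g_2[v/x]$. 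We then split on whether $x$ lies in $\Dom{\Env{g_1}}$, in $\Dom{\Env{g_2}}$, or in both. In the one-sided cases the \IH{} yields the claim directly, since $\Env{g}(x)$ equals the corresponding component. In the two-sided case $\Env{g}(x) = \Env{g_1}(x) \Inter \Env{g_2}(x)$; the \IH{} gives $v$ both types, and \Cref{lem:value-has-both-types} combines them into the intersection.
\end{itemize}

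Combining the auxiliary claim with $\Venv'(x) = \Env{g}(x)$ yields $\ExpTy{\Venv}{v}{\Venv'(x)}$.

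The only delicate point is the bookkeeping of $\Pg\Valsubst$. We need that substituting into the guarded pattern turns every test $\Is{\Gt}{x}$ into $\Is{\Gt}{v}$, rather than having $x$ shadowed by a binder of $p$. This is guaranteed by $x \notin \PatBound{\Pg}$, with bound variables renamed apart if necessary, as in the proof of \Cref{lem:pg-env-subst}. The rest is routine.
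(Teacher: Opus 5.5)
Your proof is correct and takes essentially the paper's route: reduce $\PgUpperTy{\Pg\Valsubst} \neq \ErlBot$ to $\SafeUp{g\Valsubst}$, derive $v \ValMatches \Gt$ for every test $\Is{\Gt}{x}$ in $g$, then combine \Cref{lem:val-match-implies-ty} with \Cref{lem:value-has-both-types}. Your structural induction on $g$, together with the explicit use of \Cref{lem:pat-vars}(iii) to get $\Venv'(x) = \Env{g}(x)$, is a more careful version of the paper's step that flattens $g$ into its tests $(\Is{\Gt_i}{x})_{i \in I}$ with $\Venv'(x) = \InterBig_{i \in I}\TyOf{\Gt_i}$.
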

\begin{proof}
  Assume $\Pg = \WithGuard{p}{g}$. With $x \in \Dom{\Venv'}$, we know that $g$ contains type tests
  $(\Is{\Gt_i}{x})_{i \in I}$ for $I \neq \emptyset$, and
  \begin{igather}
    \Venv'(x) = \InterBig_{i \in I}{\TyOf{\Gt_i}}
  \end{igather}
  From $\PgUpperTy{\Pg\Valsubst} \neq \ErlBot$ and $x \notin \PatBound{\Pg}$, we know $\SafeUp{g\Valsubst} = \True$.
  Thus, $\SafeUp{\Is{\Gt_i}{v}}$ for all $i \in I$. Hence, $v \ValMatches \Gt_i$ for all $i \in I$.
  With \Cref{lem:val-match-implies-ty} then $\ExpTy{\Venv}{v}{\TyOf{\Gt_i}}$ for all $i \in I$.
  Then by \Cref{lem:value-has-both-types}, $\ExpTy{\Venv}{v}{\Venv'(x)}$ as required.
\end{proof}

\begin{lemma}[Value substitution lemma]
  \label{lem:expr-subst-single-mono}
  Assume $x \notin \Dom{\Venv}$.
  If $\ExpTy{\Venv, \{x : t'\}}{e}{t}$ and $\ExpTy{\Venv}{v}{t'}$, then
  $\ExpTy{\Venv}{e[v/x]}{\Monoty}$. 
\end{lemma}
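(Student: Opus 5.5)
We prove the statement by induction on the derivation of $\ExpTy{\Venv, x:t'}{e}{\Monoty}$, case by case on the last rule used. In parallel we need the fact that values typed under $\Venv$ can be moved into extended environments. This is \Cref{lem:extra-var} for fresh variables and \Cref{lem:weakening} for refinements. These are needed whenever we pass under binders ($\lambda$ or pattern clauses). Throughout, by renaming bound variables we assume that binders in $e$ are distinct from $x$ and from $\Free{v}$. Since $v$ is typed under $\Venv$, we have $\Free{v} \subseteq \Dom{\Senv}\cup\Dom{\Venv}$ by \Cref{lem:free-exp-vars}.

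\textbf{Easy cases.}
\begin{itemize}
\item \Rule{d-var}: if the variable is $x$, then $\Monoty = t'$ and the claim is the assumption on $v$. Otherwise the lookup is unchanged in $\Venv$. Note that $x \notin \Dom{\Senv}$ follows from the premise of \Rule{d-var} for $x$ combined with the variable conventions.
\item \Rule{d-var-poly} and \Rule{d-const}: the expression is unaffected by the substitution.
\item \Rule{d-app}, \Rule{d-pair}, \Rule{d-sub}: follow directly from the induction hypothesis.
\item \Rule{d-abs} with binder $y \neq x$: reorder the environment to $(\Venv, y:\Monoty_1), x:t'$. Use \Cref{lem:extra-var} to obtain $\ExpTy{\Venv,y:\Monoty_1}{v}{t'}$, then apply the induction hypothesis.
\end{itemize}

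\textbf{Case \Rule{d-case}.} This is the main obstacle. We must rebuild a \Rule{d-case} derivation for $\Case{e_0[v/x]}{\Multi{(\Pg_i[v/x]\to e_i[v/x])}}$ at the same result type. Write $\Valsubst=[v/x]$. Since pattern-bound variables are fresh, $\PatBound{\Pg_i}\cap\Dom{\Valsubst}=\emptyset$. The induction hypothesis on the scrutinee gives $\ExpTy{\Venv}{e_0\Valsubst}{\Monoty}$.
\begin{itemize}
\item \textbf{Exhaustiveness.} By \Cref{lem:patty-subst}(iii), $\PgLowerTy{\Pg_i}\IsSubty\PgLowerTy{\Pg_i\Valsubst}$, so exhaustiveness is preserved.
\item \textbf{Input types.} The new input types satisfy $\Monoty_i^\Valsubst \IsSubty \Monoty_i$: the subtracted accepting types only grow (again \Cref{lem:patty-subst}(iii)), and the potential types only shrink (\Cref{lem:patty-subst-up}(iii)). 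In particular, if $\Monoty_i\IsSubty\ErlBot$ then also $\Monoty_i^\Valsubst\IsSubty\ErlBot$, and the free-variable side condition is transported using $\Free{v}$.
\item \textbf{Branch environments.} For a reachable branch, \Cref{lem:env-guarded-pattern-sub} gives that the new pattern environment is a subtype of the old one on common variables. By \Cref{lem:pg-env-subst}, substituting into the guard only removes $x$ from the guard environment. So the new body environment $\Venv\InterEnv(\PatEnv{\Monoty_i^\Valsubst}{\Pg_i\Valsubst})$ agrees with $(\Venv_i \setminus x)$ up to subtyping.
\end{itemize}

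\textbf{The delicate point.} The old $\Venv_i$ may have refined $x$ to $t'\Inter\Env{g_i}(x)$ because the guard $g_i$ tests $x$. To apply the induction hypothesis to the branch body, we need $v$ to have this refined type. We obtain it from \Cref{lem:upper-ty-not-bot-value-has-ty} together with \Cref{lem:value-has-both-types}. If instead $\PgUpperTy{\Pg_i\Valsubst}=\ErlBot$, the new branch is unreachable and we use the $\ErlBot$ alternative of \Cref{d-case}, which needs only the free-variable condition.

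\textbf{Finishing the branch.} For reachable branches, apply \Cref{lem:weakening} to pass to the smaller new environment, then apply the induction hypothesis on $e_i$ with environment $\Venv_i\setminus x$ extended by $x:\Venv_i(x)$. This gives each branch a type $\Monoty_i''\IsSubty\Monoty_i'$ (or $\ErlBot$). Closing with \Rule{d-case} and \Rule{d-sub} yields $\ExpTy{\Venv}{e\Valsubst}{\Monoty}$.
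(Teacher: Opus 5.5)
Your proposal is correct and follows the paper's proof essentially step for step. It uses induction on the typing derivation, and handles the \Rule{d-case} case through \Cref{lem:patty-subst}, \Cref{lem:patty-subst-up}, \Cref{lem:pg-env-subst}, \Cref{lem:env-guarded-pattern-sub}, and \Cref{lem:upper-ty-not-bot-value-has-ty} to give $v$ the guard-refined type of $x$. Your citation of \Cref{lem:value-has-both-types} for that step is more explicit than the paper's. The one thing to keep straight in the reachable-branch case is the order: apply the induction hypothesis to the original sub-derivation for $e_i$ under $\Venv_i$ first, and only then use \Cref{lem:weakening} to move to the new branch environment, as the paper does, since a weakened derivation is not a sub-derivation.
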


\begin{proof}
  \renewcommand\LabelQualifier{lem:expr-subst-single-mono}
  Define $\Venv' = \{x:t'\}$ and $\Valsubst = [v/x]$. We proceed by induction on the typing derivation of
  $\ExpTy{\Venv, \Venv'}{e}{\Monoty}$.
  \begin{CaseDistinction}{on the last rule of the derivation}
    \CdCase{\Rule{d-var}} We distinguish two cases:
    \begin{itemize}
    \item $e = x$: Then $e\Valsubst = v$ and $t = t'$, so $\ExpTy{\Venv}{e\Valsubst}{t}$ by assumption.
    \item $e = y \neq x$: Then $e\Valsubst = y$ and $t = \Venv(y)$, so $\ExpTy{\Venv}{e\Valsubst}{t}$ by rule \Rule{d-var}.
    \end{itemize}

    \CdCase{\Rule{d-var-poly}} Then $e = y \neq x$ and $t \in \Inst{\Senv(y)}$, so $\ExpTy{\Venv}{e\Valsubst}{t}$ by rule \Rule{d-var-poly}.

    \CdCase{\Rule{d-const}} Trivial.

    \CdCase{\Rule{d-pair}} Straightforward from the \IH{}
    \CdCase{\Rule{d-app}} Straightforward from the \IH{}

    \CdCase{\Rule{d-abs}} Then $e = \lambda y . e'$. \Wlog{}, $y$ fresh. From the premise of the rule
    $\ExpTy{\Venv, \Venv', y:t_1}{e'}{t_2}$ for $t = t_1 \to t_2$. Then $\ExpTy{\Venv, y:t_1}{e'\Valsubst}{t_2}$
    from the \IH{} Hence $\ExpTy{\Venv}{e\Valsubst}{t}$ via rule \Rule{d-abs}.

    \CdCase{\Rule{d-case}}
    Then $e = \Case{e'}{(\Pg_i \to e_i)_{i \in I}}$. Inverting the rule yields:
    \begin{igather}
      \ExpTy{\Venv,\Venv'}{e'}{u} \QLabel{eq:ty-ep}\\
      u \IsSubty \UnionBig_{i \in I} \PgLowerTy{\Pg_i} \QLabel{eq:u-sub}\\
      t = \UnionBig_{i \in I}{t_i'} \QLabel{eq:def-t}
    \end{igather}
    And for all $i \in I$:
    \begin{igather}
      \Free{\Pg_i} \subseteq \Dom{\Senv} \cup \Dom{\Venv} \cup \Dom{\Venv'} \QLabel{eq:free1}\\
      t_i = (u \WithoutTy \UnionBig_{j < i} \PgLowerTy{\Pg_j}) \Inter \PgUpperTy{\Pg_i}\\
      \Venv_i = (\Venv, \Venv') \InterEnv (\PatEnv{t_i}{\Pg_i}) \QLabel{eq:def-venvi}
      \\
      {
        \begin{cases}
          t_i' = \ErlBot ~\textrm{and}~
          \Free{e_i} \subseteq \Dom{\Senv} \cup \Dom{\Venv_i}
          & \textrm{if}~ t_i \IsSubty \ErlBot
          \\
          \ExpTy{\Venv_i}{e_i}{t_i'}
          & \textrm{otherwise}
        \end{cases}
      } \QLabel{eq:ty-ei}
    \end{igather}
    \Wlog{}, $\PatBound{\Pg_{i \in I}}$ fresh, so
    \begin{igather}
      \PatBound{\Pg_i} \cap \Dom{\Valsubst} = \emptyset \quad (\forall i \in I) \QLabel{eq:fresh} \\
      e\Valsubst = \Case{e'\Valsubst}{(\Pg_i\Valsubst \to e_i\Valsubst)_{i \in I}}
    \end{igather}
    Applying the \IH{} on \QRef{eq:ty-ep} yields
    \begin{igather}
      \ExpTy{\Venv}{e'\Valsubst}{u} \QLabel{eq:ep-subst-ty}
    \end{igather}
    With \QRef{eq:fresh} and \Cref{lem:patty-subst} we have
    for any $i \in I$ that $\PgLowerTy{\Pg_i} \IsSubty \PgLowerTy{\Pg_i\Valsubst}$.
    With \QRef{eq:u-sub} then
    \begin{igather}
      u \IsSubty \UnionBig_{i \in I} \PgLowerTy{(\Pg_i\Valsubst)} \QLabel{eq:u-sub2}
    \end{igather}
    Define for all $i \in I$
    \newcommand\tis{t_i^\star}
    \newcommand\tisp{t_i^{\star\prime}}
    \newcommand\tix{t_{i,x}}
    \newcommand\Venvis{\Venv_i^\star}
    \newcommand\Venvit{\widetilde{\Venv}_i}
    \newcommand\Venvitt{\widetilde{\widetilde{{\Venv}}}_i}
    \begin{igather}
      \tis = (u \WithoutTy \UnionBig_{j < i} \PgLowerTy{\Pg_j\Valsubst}) \Inter \PgUpperTy{\Pg_i\Valsubst} \QLabel{eq:def-tis} \\
      \Venvis = \PatEnv{\tis}{\Pg_i\Valsubst} \QLabel{eq:def-venvis}
    \end{igather}
    With \QRef{eq:fresh} and \Cref{lem:patty-subst} and~\Cref{lem:patty-subst-up}, we have
    for any $i \in I$ that $\PgLowerTy{\Pg_i} \IsSubty \PgLowerTy{\Pg_i\Valsubst}$ and
    $\PgUpperTy{\Pg_i\Valsubst} \IsSubty \PgUpperTy{\Pg_i}$.
    For any $i \in I$, we then get by set-theoretic reasoning and the definitions of $\tis$ and $t_i$
    that
    \begin{igather}
      \tis \IsSubty t_i   \QLabel{eq:tis-subty}
    \end{igather}

    From \QRef{eq:free1} and the definition of $\Valsubst$ we have
    \begin{igather}
      \Free{\Pg_i \Valsubst} \subseteq \Dom{\Senv} \cup \Dom{\Venv} \QLabel{eq:free2}\\
    \end{igather}
    Now assume $i \in I$.
    \begin{itemize}
    \item Case $\tis \IsSubty \ErlBot$. From \QRef{eq:ty-ei} we have (using \Cref{lem:free-exp-vars} in the case where
      $t_i \not\IsSubty \ErlBot$) $\Free{e_i} \subseteq \Dom{\Senv} \cup \Dom{\Venv_i}$. Hence
      \begin{igather}
        \Free{e_i\Valsubst} \subseteq \Dom{\Senv} \cup \Dom{\Venvis} \QLabel{eq:free-ei-subst}
      \end{igather}
      Define $\tisp = \ErlBot$ in this case.
    \item Case $\tis \not\IsSubty \ErlBot$. Define
      \begin{igather}
        \Venvit = \PatEnv{t_i}{\Pg_i}\\
        \Venvitt = \PatEnv{t_i}{(\Pg_i\Valsubst)}\\
        \tix = {
          \begin{cases}
            \Venvit(x) & \textrm{if}~ x \in \Dom{\Venvit} \\
            \ErlTop    & \textrm{otherwise}
          \end{cases}
        }
      \end{igather}
      If $x \in \Dom{\Venvit}$, we have by \Cref{lem:pg-env-subst} $\Venvit = \Venvitt, x:\tix$.
      Otherwise, the same lemma gives us $\Venvit = \Venvitt$.

      Thus, with \QRef{eq:def-venvi} and the definition of $\tix$ we have
      \begin{igather}
        \Venv_i = (\Venv, \Venv') \InterEnv \Venvit = (\Venv, \{x: t'\}) \InterEnv \Venvit
        = (\Venv \InterEnv \Venvitt), x : (t' \Inter \tix)  \QLabel{eq:eq-venvi}
      \end{igather}
      From the assumption $\ExpTy{\Venv}{v}{t'}$, we get with
      \Cref{lem:weakening} and~\Cref{lem:extra-var}
      \begin{igather}
        \ExpTy{\Venv \InterEnv \Venvitt}{v}{t'} \QLabel{eq:ty-v}
      \end{igather}
      With $\tis \not\IsSubty \ErlBot$ and \QRef{eq:def-tis} we get $\PgUpperTy{\Pg_i\Valsubst} \neq \ErlBot$.
      Thus, if $x \in \Dom{\Venvit}$, we have by \Cref{lem:upper-ty-not-bot-value-has-ty} that
      $\ExpTy{\Venv}{v}{\tix}$ with $\tix = \Venvit(x)$. Otherwise $x \notin \Dom{\Venvit}$,
      so we have $\ExpTy{\Venv}{v}{\tix}$ because then $\tix = \ErlTop$.
      With \QRef{eq:ty-v} and \Cref{lem:weakening} and~\Cref{lem:extra-var} then
      \begin{igather}
        \ExpTy{\Venv \InterEnv \Venvitt}{v}{(t' \Inter \tix)} \QLabel{eq:ty-v2}
      \end{igather}
      With $\tis \not\IsSubty \ErlBot$ and \QRef{eq:tis-subty} we have $t_i \not\IsSubty \ErlBot$.
      Then with \QRef{eq:ty-ei} and \QRef{eq:eq-venvi}
      \begin{igather}
        \ExpTy{(\Venv \InterEnv \Venvitt), x: (t' \Inter \tix)}{e_i}{t_i'} \QLabel{eq:ty-ei2}
      \end{igather}
      Applying the \IH{} to \QRef{eq:ty-v2} and \QRef{eq:ty-ei2} yields
      \begin{igather}
        \ExpTy{\Venv \InterEnv \Venvitt}{e_i \Valsubst}{t_i'}
      \end{igather}
      With \QRef{eq:def-venvis}, the definition of $\Venvitt$, with \QRef{eq:tis-subty}
      and \Cref{lem:env-guarded-pattern-sub}, we have $\Venvis \IsSubty \Venvitt$, so with
      \Cref{lem:weakening}
      \begin{igather}
        \ExpTy{\Venv \InterEnv \Venvis}{e_i \Valsubst}{t_i'} \QLabel{eq:ty-ei3}
      \end{igather}
      Define $\tisp = t_i'$. This finishes the case where $t_i \not\IsSubty \ErlBot$.
    \end{itemize}
    Using \QRef{eq:ep-subst-ty}, \QRef{eq:u-sub2}, \QRef{eq:free2}, \QRef{eq:def-tis}, \QRef{eq:free-ei-subst},
    and \QRef{eq:ty-ei3} as the premises of rule \Rule{d-case} gives us
    \begin{igather}
      \ExpTy{\Venv}{e\Valsubst}{\UnionBig_{i \in I}\tisp}
    \end{igather}
    By construction of $\tisp$, we have $\tisp \IsSubty t_i'$ for all $i \in I$. Hence,
    $\UnionBig\nolimits_{i \in I} \tisp \IsSubty \UnionBig\nolimits_{i \in I} t_i'$.
    Using rule \Rule{d-sub} and \QRef{eq:def-t} then gives us
    $\ExpTy{\Venv}{e\Valsubst}{t}$ as required.

    \CdCase{\Rule{d-sub}} Straightforward from the \IH{}
  \end{CaseDistinction}
\end{proof}

\begin{lemma}[Substitution lemma for multiple values]
  \label{lem:expr-subst}
  Assume $\ExpTy{\Venv,\Venv'}{e}{t}$ with $\Dom{\Venv} \cap \Dom{\Venv'} = \emptyset$.
  Let $\Valsubst$ be a value substitution with $\Dom{\Valsubst} = \Dom{\Venv'}$ and
  $\ExpTy{\Venv}{\Valsubst(x)}{\Venv'(x)}$
  for all $x \in \Dom{\Valsubst}$.
  Then $\ExpTy{\Venv}{e\Valsubst}{t}$.
\end{lemma}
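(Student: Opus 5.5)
We prove the lemma by induction on the size of $\Dom{\Valsubst}$, i.e.\ the number of variables in $\Venv'$. At each step we peel off one binding and apply the single-variable substitution lemma (\Cref{lem:expr-subst-single-mono}).

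If $\Dom{\Venv'} = \emptyset$, then $\Valsubst$ is the empty substitution, so $e\Valsubst = e$ and $\Venv,\Venv' = \Venv$. The claim is then exactly the assumption.

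Otherwise, pick some $x \in \Dom{\Venv'}$. Write $\Venv' = \Venv'', x:t'$ with $t' = \Venv'(x)$, and split $\Valsubst = \Valsubst'' \cup [v/x]$, where $v = \Valsubst(x)$ and $\Dom{\Valsubst''} = \Dom{\Venv''}$. We first discharge $x$ by rearranging the environment to $(\Venv,\Venv''), x:t'$, which is legitimate because the domains are disjoint.

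\begin{itemize}
\item The side condition $x \notin \Dom{\Venv,\Venv''}$ of \Cref{lem:expr-subst-single-mono} holds because $\Dom{\Venv} \cap \Dom{\Venv'} = \emptyset$ and $\Venv''$ does not bind $x$.
\item We also need $\ExpTy{\Venv,\Venv''}{v}{t'}$. The hypothesis only gives $\ExpTy{\Venv}{v}{t'}$, so we enlarge the environment by repeated application of \Cref{lem:extra-var}, once per variable of $\Venv''$. This requires $\Dom{\Venv''}$ to avoid $\Dom{\Senv}$, which holds by the convention that $\Senv$- and $\Venv$-bound variables are distinct, secured by renaming.
\end{itemize}

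\Cref{lem:expr-subst-single-mono} then yields $\ExpTy{\Venv,\Venv''}{e[v/x]}{t}$. Now $\Valsubst''$ satisfies the hypotheses of the lemma for the smaller environment $\Venv''$, so the induction hypothesis gives $\ExpTy{\Venv}{(e[v/x])\Valsubst''}{t}$.

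It remains to identify $(e[v/x])\Valsubst''$ with $e\Valsubst$, viewing $\Valsubst$ as a simultaneous substitution. This works because the substituted values contain no occurrences of variables in $\Dom{\Venv'}$: by \Cref{lem:free-exp-vars}, $\ExpTy{\Venv}{\Valsubst(y)}{\Venv'(y)}$ gives $\Free{\Valsubst(y)} \subseteq \Dom{\Senv}\cup\Dom{\Venv}$, which is disjoint from $\Dom{\Venv'}$. Hence sequential and simultaneous substitution coincide.

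The main obstacle is this bookkeeping: the commutation of substitutions and the environment reshuffling under the hiding semantics of $\Venv,\Venv'$. Both are justified by the disjointness assumptions together with \Cref{lem:free-exp-vars}.
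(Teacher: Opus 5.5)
Your proposal is correct and follows the paper's proof, which is a one-line induction on the number of bindings in $\Venv'$ using \Cref{lem:expr-subst-single-mono}. You also make explicit two pieces of bookkeeping that the paper leaves implicit. The first is weakening the typing of the substituted value with \Cref{lem:extra-var}, which relies, as the paper does, on the convention that $\Dom{\Senv}$ is disjoint from the monomorphic environment. The second is the agreement of sequential and simultaneous substitution, which you justify via \Cref{lem:free-exp-vars}.
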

\begin{proof}
  By induction on the number of elements in $\Venv'$, using \Cref{lem:expr-subst-single-mono}.
\end{proof}

\begin{lemma}
  \label{lem:value-has-one-union-type}
  If $\ExpTy{\Venv}{v}{{\UnionBig_{i \in I}{\PgLowerTy{\Pg_i}}}}$ and $I \neq \emptyset$,
  then there exists an $j \in I$ such that $\ExpTy{\Venv}{v}{\PgLowerTy{\Pg_j}}$.
\end{lemma}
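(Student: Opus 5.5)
We prove the claim by induction on the cardinality of $I$, using \Cref{lem:value-u-or-notu} to split off one disjunct at a time and \Cref{lem:value-has-both-types} to recombine typings. The key idea is that a value cannot inhabit the complement of a type and a union containing that type unless it inhabits the remaining part.

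\textbf{Base case.} If $I = \{j\}$ is a singleton, the union is just $\PgLowerTy{\Pg_j}$ and the claim is immediate.

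\textbf{Inductive step.} Suppose $|I| \geq 2$. Pick some $j \in I$ and set $I' = I \setminus \{j\}$, which is non-empty, and $u = \UnionBig_{i \in I'} \PgLowerTy{\Pg_i}$. We have $\ExpTy{\Venv}{v}{\PgLowerTy{\Pg_j} \Union u}$, so $v$ is in particular well-typed. By \Cref{lem:value-u-or-notu}, applied with the type $\PgLowerTy{\Pg_j}$, either $\ExpTy{\Venv}{v}{\PgLowerTy{\Pg_j}}$ or $\ExpTy{\Venv}{v}{\Neg\PgLowerTy{\Pg_j}}$.

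In the first case we are done. In the second case, \Cref{lem:value-has-both-types} yields
\[
\ExpTy{\Venv}{v}{(\PgLowerTy{\Pg_j} \Union u) \Inter \Neg\PgLowerTy{\Pg_j}}.
\]
Set-theoretically, $(s \Union u) \Inter \Neg s \IsSubty u$ holds for all types $s$ and $u$. With rule \Rule{d-sub} we therefore obtain $\ExpTy{\Venv}{v}{u}$. The induction hypothesis for $I'$ then gives some $j' \in I' \subseteq I$ with $\ExpTy{\Venv}{v}{\PgLowerTy{\Pg_{j'}}}$, as required.

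\textbf{Potential obstacle.} The only step that could cause trouble is the subtyping fact $(s \Union u) \Inter \Neg s \IsSubty u$. It follows directly from the interpretation of types as sets, since semantic subtyping is set containment. Alternatively, one can avoid this fact by instead showing that $v$ cannot have type $\Neg s$ for every disjunct simultaneously, using \Cref{lem:value-has-both-types} and \Cref{lem:value-not-bot}: the intersection of the union with all the complements is $\ErlBot$, which no value inhabits. Either route is short.
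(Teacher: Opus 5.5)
Your proof is correct relative to the lemmas the paper states before this one, but it takes a different route from the paper. The paper never splits off disjuncts. It observes that every accepting type $\PgLowerTy{\Pg}$ is equivalent to a \emph{simple} type, built only from $\ErlTop$, $\ErlBot$, integer singletons, $\Int$, $\Float$, $\ErlBot\to\ErlTop$ and products, and then appeals to the set-theoretic model via a lemma from the literature. You instead derive the claim by induction on $|I|$ from \Cref{lem:value-u-or-notu}, \Cref{lem:value-has-both-types}, the set-theoretic fact $(s\Union u)\Inter\Neg s\IsSubty u$ and \Rule{d-sub}. This is more elementary and stays inside the paper.

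What the paper's route buys is that it pinpoints why the statement is true. Your argument never uses the shape of accepting types, so it would go through verbatim for arbitrary types $t_i$. That general statement is false in this Curry-style calculus: $\lambda x.x$ has type $(\Int\to\Float)\Union\Neg(\Int\to\Float)\TyEquiv\ErlTop$ but neither disjunct. An abstraction is typed only by \Rule{d-abs} followed by \Rule{d-sub}, and typing the body $x$ against an arrow $t'\to t''$ forces $t'\IsSubty t''$. So $\Int\to\Float$ would require $\Int\IsSubty t'\IsSubty t''\IsSubty\Float$, and $\Neg(\Int\to\Float)$ would require two arrow types to be disjoint, which never happens. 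Hence the two lemmas you invoke do not hold in the generality stated.

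Your particular instances are nevertheless sound, because you apply them only to Boolean combinations of accepting types, whose function component is either empty or all of $\ErlBot\to\ErlTop$. One way to justify them is to give each well-typed value its shape type: its singleton for an int, $\Float$ for a float, $\ErlBot\to\ErlTop$ for an abstraction, and componentwise for pairs. This shape type has three properties:
\begin{itemize}
\item the value has it;
\item it overlaps every type the value has;
\item it is an atom with respect to Boolean combinations of simple types.
\end{itemize}
Making this restriction explicit, which is precisely the role of the paper's simple-type observation, would make your write-up airtight.
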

\begin{proof}
  Define the following grammar for simple types:
  \begin{igather}
    s ::= \ErlTop \mid \ErlBot \mid \Constty \mid \Int \mid \Float \mid \ErlBot \to \ErlTop \mid \Pairty s s
  \end{igather}
  It is straightforward to verify that for any guard pattern $\Pg$ there exists some simple type
  $s$ such that $\PgLowerTy{\Pg}$ is equivalent to $s$.
  (For any guard $g$, $\Env{g}$ maps variables to simple types. For any pattern $p$,
  $\PatLowerTy{p}{\Venv}$ is a simple type given that $\Venv$ contains only simple types.)

  The rest of the proof relies on the model of types,
  see \citet[Lemma~A.14, page ~27]{journals/corr/CastagnaP016}
  and \cite{Frisch2004,conf/popl/Castagna0XA15}.
\end{proof}

\begin{theorem}[Progress for expressions]
  \label{thm:progress}
  Assume $\ExpTy{\EmptyEnv}{e}{t}$ with $\FunEnv \Accord \Senv$. Then either $e$ is
  a value or there exists $\estar$ such that $\Reduce{\FunEnv}{e}{\estar}$.
\end{theorem}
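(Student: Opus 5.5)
We proceed by induction on the derivation of $\ExpTy{\EmptyEnv}{e}{t}$, following the standard Wright--Felleisen scheme. Rule \Rule{d-sub} follows directly from the \IH{} (or we first apply \Cref{lem:sub-elim} to remove trailing subsumptions). Rule \Rule{d-const} gives a value. Rule \Rule{d-abs} gives a value as well. Rule \Rule{d-var} is impossible, because $\Venv = \EmptyEnv$.

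For \Rule{d-var-poly} we have $x \in \Dom{\Senv}$. From $\FunEnv \Accord \Senv$, the domain of $\Senv$ coincides with $\Dom{\FunEnv}$, so $\FunEnv(x) = \Abs{y}{e'}$ is a value and \Rule{red-var} together with \Rule{red-context} (with $\EvalCtx = \Hole$) gives a step.

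For \Rule{d-pair} and \Rule{d-app} we apply the \IH{} left to right. If a subexpression steps, we lift the step through the contexts $\Pair{\EvalCtx}{e}$, $\Pair{v}{\EvalCtx}$, $\App{\EvalCtx}{e}$, or $\App{v}{\EvalCtx}$. If both components of a pair are values, the pair is a value. For an application $\App{v_1}{v_2}$, the function has type $\Monoty' \to \Monoty$, so \Cref{lem:value-type-determines-shape}(ii) gives $v_1 = \Abs{x}{e'}$, and \Rule{red-abs} applies.

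The main case is \Rule{d-case} with scrutiny $e'$. If $e'$ steps, we use the context $\Case{\EvalCtx}{\ldots}$. Otherwise $e' = v$ with $\ExpTy{\EmptyEnv}{v}{u}$ and $u \IsSubty \UnionBig_{i \in I}\PgLowerTy{\Pg_i}$. First, $I \neq \emptyset$: otherwise the union is $\ErlBot$, and \Rule{d-sub} would give $\ExpTy{\EmptyEnv}{v}{\ErlBot}$, contradicting \Cref{lem:value-not-bot}. Subsumption then gives $\ExpTy{\EmptyEnv}{v}{\UnionBig_i \PgLowerTy{\Pg_i}}$, and \Cref{lem:value-has-one-union-type} yields some $k$ with $\ExpTy{\EmptyEnv}{v}{\PgLowerTy{\Pg_k}}$. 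By \Cref{lem:gp-typing}(i), $\PatSubst{v}{\Pg_k} = \Valsubst$.

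To pick the first matching branch, \Rule{red-case} needs every $\PatSubst{v}{\Pg_i}$ to be defined. The premise $\Free{\Pg_i} \subseteq \Dom{\Senv} \cup \Dom{\EmptyEnv} = \Dom{\FunEnv}$ combined with \Cref{lem:dyn-match-total}(iii) shows each match is either $\PatSubstFail$ or a substitution. Among the finitely many branches, choose the least $j$ with $\PatSubst{v}{\Pg_j} = \Valsubst_j$; such a $j$ exists and satisfies $j \le k$. All earlier branches return $\PatSubstFail$, so \Rule{red-case} steps to $e_j\Valsubst_j$.

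The main obstacle is this totality step. Two subtleties arise. Oracle guards evaluate nondeterministically, but they still yield either $\True$ or $\False$. Rule \Rule{match-capture} needs $\FunEnv(x)$, which is secured by the accordance of domains.
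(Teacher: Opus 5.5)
Your proof is correct and follows the paper's argument essentially step for step. It uses induction on the typing derivation, and it resolves the \Rule{d-case} case via \Cref{lem:value-has-one-union-type} and \Cref{lem:gp-typing}(i) before selecting the first matching branch; you are slightly more careful than the paper, which leaves implicit both the check $I \neq \emptyset$ needed by \Cref{lem:value-has-one-union-type} and the appeal to \Cref{lem:dyn-match-total} ensuring that earlier branches yield $\PatSubstFail$ rather than being undefined.
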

\begin{proof}
  \renewcommand\LabelQualifier{thm:progress}
  By induction on the derivation of $\ExpTy{\EmptyEnv}{e}{t}$.
  \begin{CaseDistinction}{on the last rule of the derivation}
    \CdCase{\Rule{d-var}} Then $e = x$ with $x \in \Dom{\EmptyEnv}$,
    which is impossible.

    \CdCase{\Rule{d-var-poly}}
    Then $e = x$ for some $x \in \Dom{\Senv}$. From $\FunEnv \Accord \Senv$, we know that
    $(x = v) \in \FunEnv$ or $(x : \sigma) = v \in \FunEnv$. Thus, $x$ reduces to $v$ by
    \Rule{red-var} and \Rule{red-context}.

    \CdCase{\Rule{d-const}, \Rule{d-abs}} $e$ is a value.

    \CdCase{\Rule{d-app}}
    Then $e = \App{e_1}{e_2}$ and $\ExpTy{\EmptyEnv}{e_1}{t' \to t}$ and $\ExpTy{\EmptyEnv}{e_2}{t'}$.
    From the \IH{}, we know that one of the following holds:
    \begin{itemize}
    \item $e_1$ takes a step. Then $e$ takes a step.
    \item $e_1$ is a value and $e_2$ takes a step. Then $e$ takes a step.
    \item Both $e_1$ and $e_2$ are values. By \Cref{lem:value-type-determines-shape}, we know that $e_1$ is a lambda,
      so $e$ takes a step by \Rule{red-abs} and \Rule{red-context}.
    \end{itemize}

    \CdCase{\Rule{d-pair}} Straightforward from the \IH{}

    \CdCase{\Rule{d-case}}
    Then $e = \Case{e'}{(\Pg_i \to e_i)_{i \in I}}$. Inverting the rule yields:
    \begin{igather}
      \ExpTy{\EmptyEnv}{e'}{t'} \QLabel{eq:ty-ep}\\
      t' \IsSubty \UnionBig_{i \in I} \PgLowerTy{\Pg_i} \QLabel{eq:tp-sub}\\
      t = \UnionBig_{i \in I}{t_i'}
    \end{igather}
    And for all $i \in I$:
    \begin{igather}
      \Free{\Pg_i} \subseteq \Dom{\Senv} \QLabel{eq:free}\\
      t_i = (t' \WithoutTy \UnionBig_{j < i} \PgLowerTy{\Pg_j}) \Inter \PgUpperTy{\Pg_i}\\
      \Venv_i = \PatEnv{t_i}{\Pg_i}
      \\
      {
        \begin{cases}
          t_i' = \ErlBot ~\textrm{and}~
          \Free{e_i} \subseteq \Dom{\Senv} \cup \Dom{\Venv_i}
          & \textrm{if}~ t_i \IsSubty \ErlBot
          \\
          \ExpTy{\Venv_i}{e_i}{t_i'}
          & \textrm{otherwise}
        \end{cases}
      }
    \end{igather}
    Applying the \IH{} to \QRef{eq:ty-ep} gives us that either $e'$ reduces or is a value.
    If $e'$ reduces, then $e$ reduces using rule \Rule{red-context} and we are done.

    Thus, assume that $e'$ is a value $v$. With \QRef{eq:ty-ep}, \QRef{eq:tp-sub}, and
    \Cref{lem:value-has-one-union-type}, we know that there exists $j \in I$ such that
    \begin{igather}
      \ExpTy{\EmptyEnv}{v}{\PgLowerTy{\Pg_j}}
    \end{igather}
    By \Cref{lem:gp-typing}(i) we then have for some $\Valsubst'$ that
    $\PatSubst{v}{\Pg_j} = \Valsubst'$. Hence, there exists $k \leq j$ with
    \begin{igather}
      \PatSubst{v}{\Pg_k} = \Valsubst \quad \textrm{for some}~\Valsubst \\
      \PatSubst{v}{\Pg_{k'}} = \PatSubstFail \quad (\forall k' < k)
    \end{igather}
    With rules \Rule{red-case} and \Rule{red-context} we then have
    for $\estar = e_k \Valsubst$ that
    $\Reduce{\FunEnv}{e}{\estar}$.

    \CdCase{\Rule{d-sub}} Follows from the \IH{}
  \end{CaseDistinction}
\end{proof}

\begin{theorem}[Progress for programs]
  \label{thm:progress-prog}
  Assume $\ProgTy{\Letrec{\FunEnv}{e}}{\Monoty}$ for some $\Monoty$.
  Then, either $e$ is a value or there exists an expression $\estar$
  such that $\Reduce{}{\Letrec{\FunEnv}{e}}{\Letrec{\FunEnv}{\estar}}$.
\end{theorem}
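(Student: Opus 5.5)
The plan is to reduce the statement to \Cref{thm:progress}, the progress theorem for expressions, by unfolding the program typing rule and observing that the resulting scheme environment is exactly one that is in accordance with $\FunEnv$.

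First, invert the derivation of $\ProgTy{\Letrec{\FunEnv}{e}}{\Monoty}$. The only applicable rule is \Rule{d-prog}, so $\FunEnv = \DefSym_{i \in I}$ and the premises give the following facts. For all $i \in I$ we have $\Envs{\DefSym_i} = \MetaPair{\Senv_i}{\Senv_i'}$ and $\DefOk{\Senv_{i \in I}}{\DefSym_i}$. Moreover, $\ExpSchemaTy{\Senv'_{i\in I}}{\EmptyEnv}{e}{\Monoty}$.

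Second, set $\Senv := \Senv'_{i \in I}$. The four conditions in the definition of $\FunEnv \Accord \Senv$ are literally the premises just obtained, so $\FunEnv \Accord \Senv$ holds. The typing judgment for $e$ is then $\ExpTy{\EmptyEnv}{e}{\Monoty}$ with this $\Senv$.

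Third, apply \Cref{thm:progress}. It tells us that either $e$ is a value, and we are done, or there is some $\estar$ with $\Reduce{\FunEnv}{e}{\estar}$. In the latter case, rule \Rule{red-prog} immediately gives $\Reduce{}{\Letrec{\FunEnv}{e}}{\Letrec{\FunEnv}{\estar}}$.

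There is no real obstacle here. The only point needing care is checking that the $\Senv$ used in \Cref{thm:progress} is the generalized environment $\Senv'_{i\in I}$, not the possibly monomorphic $\Senv_{i\in I}$. The definition of $\Accord$ was set up precisely to match \Rule{d-prog} in this way, so the match-up is exact.
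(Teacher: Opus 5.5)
Your proposal is correct and matches the paper's proof step for step. You invert \Rule{d-prog}, observe that its premises are exactly the conditions for $\FunEnv \Accord \Senv'_{i \in I}$, apply \Cref{thm:progress}, and lift the resulting step to the program level with \Rule{red-prog}.
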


\begin{proof}
  From $\ProgTy{\Letrec{\FunEnv}{e}}{\Monoty}$ we get by inverting rule
  \Rule{d-prog}
  \begin{igather}
    \FunEnv = \DefSym_{i \in I} \\
    \Envs{\DefSym_i} = \MetaPair{\Senv_i}{\Senv_i'} \quad \Forall{i \in I}\\
    \DefOk{\Senv_{i \in I}}{\DefSym_i} \quad \Forall{i \in I}\\
    \ExpSchemaTy{\Senv'_{i \in I}}{\EmptyEnv}{e}{\Monoty}
  \end{igather}
  Thus, $\FunEnv \Accord \Senv'_{i \in I}$. The claim now follows with
  \Cref{thm:progress} and reduction rule \Rule{red-prog}.
\end{proof}

\begin{lemma}[Type substitution lemma]
    \label{lem:type-subst}
    If $\ExpSchemaTy{\Senv}{\Venv}{e}{\Monoty}$, then $\ExpSchemaTy{\Senv\Tysubst}{\Venv\Tysubst}{e}{\Monoty\Tysubst}$.
\end{lemma}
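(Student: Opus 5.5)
The proof goes by induction on the derivation of $\ExpSchemaTy{\Senv}{\Venv}{e}{\Monoty}$, with a case for each rule of \Cref{f:decl-typing}. Here $\Senv\Tysubst$ applies $\Tysubst$ to the free type variables of each scheme; by $\alpha$-renaming we may assume the bound variables $\TyvarSet$ of every scheme in $\Senv$ are disjoint from $\Dom{\Tysubst} \cup \FreeTyVars{\Tysubst}$. Note that in the setting of \Rule{d-prog} the schemes in $\Senv$ are closed, but we prove the general statement.

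Most rules are routine.

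\emph{Rule \Rule{d-var}.} We use $(\Venv\Tysubst)(x) = \Venv(x)\Tysubst$.

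\emph{Rule \Rule{d-var-poly}.} If $\Monoty = \Monoty'\Tysubst'$ with $\Dom{\Tysubst'} = \TyvarSet$, then $\Monoty\Tysubst = \Monoty'\Tysubst''$, where $\Tysubst''$ maps each $\Tyvar \in \TyvarSet$ to $(\Tyvar\Tysubst')\Tysubst$. This identity uses the disjointness assumption, and it gives $\Monoty\Tysubst \in \Inst{\Senv(x)\Tysubst}$.

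\emph{Rule \Rule{d-const}.} $\TyOfConst{\Const}$ is closed.

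\emph{Rules \Rule{d-abs}, \Rule{d-app}, \Rule{d-pair}.} Substitution commutes with $\to$ and $\times$, so these follow directly from the \IH{}

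\emph{Rule \Rule{d-sub}.} This uses \Cref{prop:subty}(iv).

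\emph{Rule \Rule{d-case}} is the main obstacle. By \Cref{lem:pat-types-closes}, the types $\PgLowerTy{\Pg_j}$ and $\PgUpperTy{\Pg_i}$ are closed, so they are invariant under $\Tysubst$. Exhaustiveness $\Monoty\Tysubst \IsSubty \UnionBig_i \PgLowerTy{\Pg_i}$ then follows by \Cref{prop:subty}(iv). The new input types are $t_i\Tysubst = (\Monoty\Tysubst \WithoutTy \ldots) \Inter \PgUpperTy{\Pg_i}$, since substitution commutes with the connectives. For the environments we need $\PatEnv{t_i\Tysubst}{\Pg_i}$ related to $(\PatEnv{t_i}{\Pg_i})\Tysubst$. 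This is not an equality, because of projections, but \Cref{prop:projection}(iv) gives $\ProjI{t\Tysubst} \IsSubty \ProjI{t}\Tysubst$. An induction on $p$, as in \Cref{lem:env-pattern-sub} and using that $\Env{g}$ is closed, then yields $\PatEnv{t_i\Tysubst}{\Pg_i} \IsSubty (\PatEnv{t_i}{\Pg_i})\Tysubst$. Hence $\Venv\Tysubst \InterEnv \PatEnv{t_i\Tysubst}{\Pg_i} \IsSubty \Venv_i\Tysubst$. We then transport the \IH{} for branch $i$ to the smaller environment by \Cref{lem:weakening}.

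The free-variable side conditions are unchanged, since $\Dom{\Venv\Tysubst} = \Dom{\Venv}$ and the domains of the pattern environments do not depend on the type.

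The remaining subtle point is the branch split. Take a branch where $t_i \not\IsSubty \ErlBot$ but $t_i\Tysubst \IsSubty \ErlBot$. Set the new output type to $\ErlBot$ and discharge the free-variable premise using \Cref{lem:free-exp-vars} applied to the original body derivation. Since $\ErlBot \IsSubty t_i'\Tysubst$, we finish with \Rule{d-sub}. The converse cannot happen: if $t_i \IsSubty \ErlBot$ then $t_i\Tysubst \IsSubty \ErlBot$ by \Cref{prop:subty}(iv). Finally, the union of the new output types is a subtype of $(\UnionBig_i t_i')\Tysubst$, and a last application of \Rule{d-sub} closes the case.
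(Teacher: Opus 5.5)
Your proof follows the paper's overall plan. You do induction on the derivation, use the same $\alpha$-renaming argument for \Rule{d-var-poly}, and use closedness of $\PgLowerTy{\Pg_j}$ and $\PgUpperTy{\Pg_i}$ (\Cref{lem:pat-types-closes}) to get exhaustiveness and the new input types $t_i\Tysubst$. You split the branches the same way, using \Cref{lem:free-exp-vars} and \Cref{prop:subty}(iv), and finish with \Rule{d-sub}.

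The one genuine divergence is the branch environments. The paper simply asserts $\Venv\Tysubst \InterEnv (\PatEnv{t_i\Tysubst}{\Pg_i}) = \Venv_i\Tysubst$ and feeds the induction hypothesis in directly. You only claim $\IsSubty$, derive it from \Cref{prop:projection}(iv) by an induction on the pattern in the style of \Cref{lem:env-pattern-sub} together with closedness of $\Env{g}$, and then apply \Cref{lem:weakening}. Your version is the right one, because projections need not commute with substitution. For example, take $t = ((\Pairty{\alpha}{\alpha'}) \WithoutTy (\Pairty{\ErlTop}{\beta})) \Union (\Pairty{\Int}{\Int})$ and $\Tysubst = [\beta/\alpha']$. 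Then $\ProjL{t\Tysubst} \TyEquiv \Int$ while $\ProjL{t}\Tysubst \TyEquiv \alpha \Union \Int$, and $t\Tysubst \not\IsSubty \ErlBot$. So the paper's equality can fail in exactly the branch where it is used, and your weakening step closes that gap at no extra cost.

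There is one small slip in your \Rule{d-var-poly} case. The identity should read $\Monoty\Tysubst = (\Monoty'\Tysubst)\Tysubst''$, not $\Monoty'\Tysubst''$. As written it forgets to apply $\Tysubst$ to the free variables of $\Monoty'$ outside $\TyvarSet$, which matters because you deliberately allow non-closed schemes. With that correction the case goes through as you say: the instances of $\Senv(x)\Tysubst = \TyScm{\TyvarSet}{(\Monoty'\Tysubst)}$ are exactly the types $(\Monoty'\Tysubst)\Tysubst''$ with $\Dom{\Tysubst''} = \TyvarSet$.
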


\begin{proof}
    \begin{CaseDistinction}{on the derivation of $\ExpSchemaTy{\Senv}{\Venv}{e}{\Monoty}$}
    \CdCase{\Rule{d-var}}
    We have $\ExpSchemaTy{\Senv}{\Venv}{x}{\Monoty}$ and $\Venv(x) = \Monoty$.
    By definition, $\Venv\Tysubst(x) = \Monoty\Tysubst$ and the claim holds.

    \CdCase{\Rule{d-var-poly}}
    We have $\ExpSchemaTy{\Senv}{\Venv}{x}{\Monoty}$, $\Senv(x) = \TyScm{A} u$, and
    $\Monoty \in \Inst{\Senv(x)}$, so $t = u\Tysubst^\star$ for some $\Tysubst^\star$ with $\Dom{\Tysubst^\star} = A$.
    By $\alpha$-renaming, we assume $A \cap \Dom{\Tysubst} = \emptyset$ and
    $A \cap \FreeTyVars{\Tysubst} = \emptyset$.
    Then, $(\Senv\Tysubst)(x) = \TyScm{A}u\Tysubst$.
    For $\Tysubst' = [\alpha\Tysubst^\star\Tysubst/\alpha \mid \alpha \in A]$, we have
    $\Monoty\Tysubst = u\Tysubst\Tysubst'$ and hence, $\Monoty\Tysubst \in \Inst{\TyScm{A}u\Tysubst}$.
    The claim follows by application of \Rule{d-var-poly}.

    \CdCase{\Rule{d-const}} The case is straightforward.

    \CdCase{\Rule{d-abs}}
    We have $\ExpSchemaTy{\Senv}{\Venv}{\Abs{x}{e'}}{\Monoty_1 \to \Monoty_2}$ with
    $\ExpTy{\Venv, x: \Monoty_1}{e'}{\Monoty_2}$ and $t = t_1 \to t_2$.
    By \IH{}, we have $\ExpSchemaTy{\Senv\Tysubst}{\Venv\Tysubst, x: \Monoty_1\Tysubst}{e}{\Monoty_2\Tysubst}$.
    By \Rule{d-abs} then
    $\ExpSchemaTy{\Senv\Tysubst}{\Venv\Tysubst}{\Abs{x}{e}}{\Monoty_1\Tysubst \to \Monoty_2\Tysubst}$.

    \CdCase{\Rule{d-app} and \Rule{d-pair}} The claim follows directly from the \IH{}.

    \CdCase{\Rule{d-case}}
    $e = \Case{e'}\Multi{(\Pg_i \to e_i)}$. We get by inverting the rule
    \begin{igather}
      \ExpTy{\Venv}{e'}{u} \QLabel{eq:ty-u}\\
      u \IsSubty \UnionBig\nolimits_{i \in I} \PgLowerTy{\Pg_i} \QLabel{eq:u-sub}
    \end{igather}
    And for all $i \in I$:
    \begin{igather}
      \Free{\Pg_i} \subseteq \Dom{\Senv} \cup \Dom{\Venv} \QLabel{eq:free-pgi}\\
      \Monoty_i = (u \WithoutTy \UnionBig\nolimits_{j < i} \PgLowerTy{\Pg_j}) \Inter \PgUpperTy{\Pg_i} \\
      \Venv_i = \Venv \InterEnv (\PatEnv{\Monoty_i}{\Pg_i})
      \\
      {
        \begin{cases}
          \Monoty_i' = \ErlBot ~\textrm{and}~
          \Free{e_i} \subseteq \Dom{\Senv} \cup \Dom{\Venv_i}
          & \textrm{if}~ \Monoty_i \IsSubty \ErlBot
          \\
          \ExpTy{\Venv_i}{e_i}{\Monoty_i'}
          & \textrm{otherwise}
        \end{cases}
      } \QLabel{eq:ty-ei}\\
      t = \UnionBig\nolimits_{i \in I}{\Monoty_i'}
    \end{igather}
    From \QRef{eq:ty-u} and the \IH{}:
    \begin{igather}
      \ExpSchemaTy{\Senv\Tysubst}{\Venv\Tysubst}{e'}{u\Tysubst}
    \end{igather}
    From \QRef{eq:u-sub}, \Cref{prop:subty}, and \Cref{lem:pat-types-closes}:
    \begin{igather}
      u\Tysubst \IsSubty \UnionBig\nolimits_{i \in I} \PgLowerTy{\Pg_i}
    \end{igather}
    Now assume $i \in I$. From \QRef{eq:free-pgi} we get
    \begin{igather}
      \Free{\Pg_i} \subseteq \Dom{\Senv\Tysubst} \cup \Dom{\Venv\Tysubst}
    \end{igather}
    Define
    \begin{igather}
      t_i^\star = (u\Tysubst \WithoutTy \UnionBig\nolimits_{j < i} \PgLowerTy{\Pg_j}) \Inter \PgUpperTy{\Pg_i}
      \ReasonAbove{=}{\textrm{\Cref{lem:pat-types-closes}}} t_i\Tysubst \QLabel{eq:def-ti-star} \\
      \Venv_i^\star = \Venv\Tysubst \InterEnv (\PatEnv{t_i^\star}{\Pg_i}) \QLabel{eq:def-venvi-star}
    \end{igather}
    \begin{itemize}
    \item Case $t_i^\star \IsSubty \ErlBot$. Define $u_i = \ErlBot$. We also have
      $\Free{e_i} \subseteq \Dom{\Senv\Tysubst} \cup \Dom{\Venv_i^\star}$ because of
      \QRef{eq:ty-ei} and \Cref{lem:free-exp-vars} and $\Dom{\Venv_i} = \Dom{\Venv_i^\star}$.
    \item Otherwise, we must have that $t_i \neq \IsSubty \ErlBot$ because $t_i \IsSubty \ErlBot$ implies
      $t_i\Tysubst \IsSubty \ErlBot$ (\Cref{prop:subty}). Hence, applying the \IH{} to \QRef{eq:ty-ei} yields
      \begin{igather}
        \ExpSchemaTy{\Senv\Tysubst}{\Venv\Tysubst}{e_i}{t_i'\Tysubst}
      \end{igather}
      Define $u_i := t_i' \Tysubst$. We also have $\Venv_i^\star = \Venv_i\Tysubst$ from
      \QRef{eq:def-ti-star} and \QRef{eq:def-venvi-star}.
    \end{itemize}
    We just have established the premises of rule \Rule{d-case}. Applying the rule yields
    \begin{igather}
      \ExpSchemaTy{\Senv\Tysubst}{\Venv\Tysubst}{e}{\UnionBig_{i \in I} u_i}
    \end{igather}
    We have $u_i \IsSubty t_i' \Tysubst$ for all $i \in I$. (If $t_i^\star \IsSubty \ErlBot$ then
    $u_i = \ErlBot \IsSubty t_i' \Tysubst$. Otherwise $u_i = t_i'\Tysubst$.)
    Hence
    \begin{igather}
      \UnionBig_{i \in I} u_i \IsSubty \UnionBig_{i \in I} t_i'\Tysubst = t\Tysubst
    \end{igather}
    Hence with rule \Rule{d-sub} $\ExpSchemaTy{\Senv\Tysubst}{\Venv\Tysubst}{e}{t \Tysubst}$ as required.

    \CdCase{\Rule{d-sub}}
    The claim follows directly from the \IH{} and \Cref{prop:subty}(iv).
    \end{CaseDistinction}
\end{proof}

\begin{theorem}[Subject reduction for expressions]
  \label{thm:subj-reduction}
  Assume $\ExpTy{\Venv}{e}{t}$ and $\Dom{\Senv} \cap \Dom{\Venv} = \emptyset$ and $\FunEnv \Accord \Senv$.
  If $\Reduce{\FunEnv}{e}{\estar}$, then $\ExpTy{\Venv'}{\estar}{t}$ for
  some $\Venv'$ with $\Venv' \IsSubty \Venv$.
\end{theorem}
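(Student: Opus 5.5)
Since $\Reduce{\FunEnv}{e}{\estar}$ can only be derived by \Rule{red-context}, there are an evaluation context $\EvalCtx$ and expressions $e_1, e_2$ with $e = \Plug{\EvalCtx}{e_1}$, $\estar = \Plug{\EvalCtx}{e_2}$ and $\ReduceBase{\FunEnv}{e_1}{e_2}$. I would first prove the claim for base reductions, with the strengthened conclusion $\Venv' = \Venv$, and then lift it through $\EvalCtx$. The lifting is an induction on $\EvalCtx$ combined with an inner induction on the typing derivation so that \Rule{d-sub} can be stripped off (\Cref{lem:sub-elim}). For $\App{\EvalCtx}{e}$, $\App{v}{\EvalCtx}$ and the two pair contexts, we invert \Rule{d-app} or \Rule{d-pair}, apply the induction hypothesis to the subterm in the hole, and reassemble; every environment stays equal to $\Venv$. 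For $\Case{\EvalCtx}{\Multi{\Cls}}$ the scrutinee keeps its type $u$, so the types $t_i$, the environments $\Venv_i$ and the branch derivations of \Rule{d-case} can all be reused unchanged. Hence $\Venv' = \Venv$ works throughout, and the weaker statement $\Venv' \IsSubty \Venv$ follows.

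The base cases go as follows. For \Rule{red-var}, $e = x$ with $\FunEnv(x) = v$. Since $\Dom{\Senv} \cap \Dom{\Venv} = \emptyset$ and $x$ has to be typed, the derivation uses \Rule{d-var-poly} (plus subsumption), so $t \in \Inst{\Senv(x)}$. The hypothesis $\FunEnv \Accord \Senv$ gives $\DefOk{\Senv_{i \in I}}{\DefSym}$ for $x$'s definition. In the unannotated case, \Rule{d-def-no-annot} gives $\ExpSchemaTy{\Senv_{i\in I}}{\EmptyEnv}{v}{t_0}$ for a monomorphic $t_0$, and $\Senv(x) = \TyScm{\FreeTyVars{t_0}}{t_0}$. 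I then instantiate with \Cref{lem:type-subst}; the substitution acts only on variables generalized in $\Senv'$, and these can be chosen fresh for the other monomorphic $\Senv_j$. In the annotated case, \Rule{d-def-annot} types the body against every arrow of the intersection, and repeated use of \Cref{lem:value-has-both-types} together with \Rule{d-abs} gives the intersection type, after which \Cref{lem:type-subst} instantiates. Both derivations live in the environment $\Senv_{i\in I}$, but each binding there has an instance in $\Senv'_{i\in I}$ (identity or generalization), so a routine induction replaces $\Senv_{i\in I}$ by $\Senv$. Finally, \Cref{lem:extra-var} adds $\Venv$.

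For \Rule{red-abs}, \Cref{lem:sub-elim} and \Cref{lem:value-type-determines-shape}(ii) yield $\ExpTy{\Venv, y:t'}{e_0}{t}$ and $\ExpTy{\Venv}{v}{t'}$. Then \Cref{lem:expr-subst-single-mono} applies, with $y$ chosen fresh.

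For \Rule{red-case} with scrutinee $v$ and selected branch $j$, inverting \Rule{d-case} gives $\ExpTy{\Venv}{v}{u}$. For each $i<j$ we have $\PatSubst{v}{\Pg_i} = \PatSubstFail$, so \Cref{lem:gp-typing}(iii) gives $\ExpTy{\Venv}{v}{\Neg\PgLowerTy{\Pg_i}}$. Since $\PatSubst{v}{\Pg_j} = \Valsubst$, \Cref{lem:gp-typing}(ii) gives $\ExpTy{\Venv}{v}{\PgUpperTy{\Pg_j}}$. Combining these with \Cref{lem:value-has-both-types} yields $\ExpTy{\Venv}{v}{t_j}$. By \Cref{lem:value-not-bot}, $t_j \not\IsSubty \ErlBot$, so the premise $\ExpTy{\Venv_j}{e_j}{t_j'}$ is available. \Cref{lem:correct-env-gen} gives $\ExpTy{\Venv}{x\Valsubst}{(\PatEnv{t_j}{\Pg_j})(x)}$ for each $x \in \Dom{\Valsubst}$.

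The main obstacle is the interplay between $\Venv_j = \Venv \InterEnv \PatEnv{t_j}{\Pg_j}$ and \Cref{lem:expr-subst}, which expects a disjoint split. Variables bound by the pattern can be renamed fresh, giving $\Venv_j = \Venv, \Venv^p$ with $\Dom{\Valsubst} = \PatBound{\Pg_j}$. The remaining entries come from guard tests on outer variables $y \in \Dom{\Venv}$ and refine $\Venv(y)$ to $\Venv(y) \Inter \TyOfGt{\Gt}$. Such a $y$ cannot be $\lambda$-bound at runtime in a closed reduction, yet here $\Venv$ may be nonempty. I would try to argue from the successful guard evaluation that such tests concern only $\FunEnv$-bound variables. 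If that fails, I would instead allow $\Venv'$ to be the refined environment restricted to $\Dom{\Venv}$; this explains why the statement only asks for $\Venv' \IsSubty \Venv$ rather than equality. After resolving this, \Cref{lem:expr-subst} gives $\ExpTy{\Venv'}{e_j\Valsubst}{t_j'}$, and \Rule{d-sub} with $t_j' \IsSubty t$ finishes the case.
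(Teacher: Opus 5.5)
Your proof is correct. It differs from the paper's mainly in organisation and in the \Rule{red-case} step.

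\textbf{Organisation.} The paper inducts on the typing derivation. Within the \Rule{d-app}, \Rule{d-pair} and \Rule{d-case} cases it separates a step inside a subterm from a top-level redex. This is equivalent to your split into base reductions plus lifting through $\EvalCtx$.

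\textbf{The case redex.} The paper takes exactly your fallback. It sets $\widetilde{\Venv}_j$ to $\Venv$ intersected with the guard refinements of non-pattern-bound variables outside $\Dom{\Senv}$. It proves $\widetilde{\Venv}_j \IsSubty \Venv$, and consequently needs \Cref{lem:weakening} in every congruence case. Your primary argument also works:
\begin{itemize}
\item Since $\Dom{\Valsubst} = \PatBound{p_j}$, a test $\Is{\Gt}{y}$ with $y \notin \PatBound{p_j}$ is unchanged in $g_j\Valsubst$.
\item It is therefore evaluated by \Rule{eval-guard-var}, which yields $\True$ only if $\FunEnv(y)$ is defined.
\item Since $\Dom{\FunEnv} = \Dom{\Senv}$ is disjoint from $\Dom{\Venv}$, no variable of $\Venv$ is ever refined.
\end{itemize}
So the paper's extra refinements are vacuous. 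You obtain the stronger conclusion $\Venv' = \Venv$ with weakening-free congruence cases. What the paper's version buys is independence from how guards treat unbound variables.

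\textbf{Residual entries in $\Venv_j$.} Tests on $\FunEnv$-bound variables still leave entries for some $y \in \Dom{\Senv}$ in $\Venv_j$. So the disjoint split $\Venv_j = \Venv, \Venv^p$ you feed to \Cref{lem:expr-subst} is not literal. You first need a small strengthening step, and deleting such an entry does preserve typing:
\begin{itemize}
\item the entry only blocks both \Rule{d-var} and \Rule{d-var-poly} for $y$;
\item the scope side conditions of \Rule{d-case} are stated over $\Dom{\Senv} \cup \Dom{\Venv}$, which is unchanged by the deletion.
\end{itemize}
The paper's equation $\Venv_j = \widetilde{\Venv}_j, \Venv_j^{\star}$ silently relies on the same step.

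\textbf{The \Rule{red-var} case.} Here you are more careful than the paper. The paper types $v$ directly under $\Senv$ and appeals to $\FreeTyVars{\Senv} = \emptyset$. But $\FunEnv \Accord \Senv$ only yields derivations under the monomorphic $\Senv_{i \in I}$, so your replacement step fills a real hole.

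Your remark that the generalised variables can be chosen fresh for the other monomorphic bindings is not achievable in general. Mutually recursive unannotated definitions may share free type variables. It is also unnecessary. Apply the instantiating substitution to all of $\Senv_{i\in I}$. Each binding $u_k\Tysubst$ is then still an instance of $\TyScm{\FreeTyVars{u_k}}{u_k}$, and closed annotated schemes are unaffected, which is all the replacement argument needs.
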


\begin{proof}
  \renewcommand\LabelQualifier{thm:subj-reduction}
  By induction on the derivation of $\ExpTy{\Venv}{e}{\Monoty}$.
  \begin{CaseDistinction}{on the last rule of the derivation}
    \CdCase{\Rule{d-var}} Then $x \notin \Dom{\Senv}$, so $x \notin \Dom{\FunEnv}$ because
    of $\FunEnv \Accord \Senv$. But then $e = x$ does not reduce.

    \CdCase{\Rule{d-var-poly}} Then $e = x \in \Dom{\Senv}$ and $t \in \Inst{\Senv(x)}$.
    With $\Reduce{\FunEnv}{x}{\estar}$, we get by inverting rule \Rule{red-var} that
    either $(x = v) \in \FunEnv$ or $(x : \sigma = v) \in \FunEnv$ and that $\estar = v$.
    \begin{itemize}
    \item $(x = v) \in \FunEnv$.
      Then by $\FunEnv \Accord \Senv$ there exists some monotype $u$ such that
      \begin{igather}
        \ExpTy{\EmptyEnv}{v}{u} \\
        \Senv(x) = \forall A . u \quad \textrm{with}~A = \FreeTyVars{u}
      \end{igather}
      Hence, $t = u\Tysubst$ for some $\Tysubst$ with $\Dom{\Tysubst} = A$. By \Cref{lem:type-subst}
      $\ExpTy{\EmptyEnv}{v}{t}$
      (note $\FreeTyVars{\Senv} = \emptyset$ because of $\FunEnv \Accord \Senv$).
      By \Cref{lem:extra-var} then $\ExpTy{\Venv}{v}{t}$.
    \item $(x : \sigma = v) \in \FunEnv$.
      Then by $\FunEnv \Accord \Senv$ there exist monotypes $t_{i \in I}$ such that
      \begin{igather}
        \Senv(x) = \forall A . \InterBig_{i \in I} t_i \quad
        \textrm{with}~A \subseteq \UnionBig_{i \in I} \FreeTyVars{t_i} \\
        \ExpTy{\EmptyEnv}{v}{t_i} \quad (\forall i \in I)
      \end{igather}
      Hence, $t = \InterBig_{i \in I} (t_i\Tysubst)$ for some $\Tysubst$ with $\Dom{\Tysubst} = A$.
      By \Cref{lem:type-subst} $\ExpTy{\EmptyEnv}{v}{t_i\Tysubst}$ for all $i \in I$.
      By \Cref{lem:extra-var} and \Cref{lem:value-has-both-types} then $\ExpTy{\Venv}{v}{t}$.
    \end{itemize}

    \CdCase{\Rule{d-const}, \Rule{d-abs}}
    These cases do not occur, since these expressions do not reduce.

    \CdCase{\Rule{d-app}} Hence $e = e_1\,e_2$ with $\ExpTy{\Venv}{e_1}{t' \to t}$ and $\ExpTy{\Venv}{e_2}{t'}$.
    There are three cases:
    \begin{enumerate}
    \item $\Reduce{\FunEnv}{\App {e_1}{e_2}}{\App {e_1'}{e_2}}$ with $\Reduce{\FunEnv}{e_1}{e_1'}$.
      By \IH{}, $\ExpTy{\Venv'}{e_1'}{t' \to t}$ for $\Venv' \IsSubty \Venv$ and we can apply again \Rule{d-app}
      with \Cref{lem:weakening}.
    \item $\Reduce{\FunEnv}{\App{e_1}{e_2}}{\App{e_1}{e_2'}}$ and $e_1$ is a value and
      and $\Reduce{\FunEnv}{e_2}{e_2'}$.
      By \IH{}, $\ExpTy{\Venv'}{e_2'}{t'}$ for $\Venv' \IsSubty \Venv$ and we can apply again \Rule{d-app}
      with \Cref{lem:weakening}.
    \item $e_1 = \lambda x . e'$ and $e_2 = v_2$ for some $x, e', v_2$.
      Further, $\Reduce{\FunEnv}{\App{(\lambda x . e')}{v_2}}{e'[v_2/x]}$ with $\estar = e'[v_2/x]$.

      With assumption $\ExpTy{\Venv}{\lambda x . e'}{t' \to t}$ and \Cref{lem:value-type-determines-shape} then
      $\ExpTy{\Venv,x:\Monoty'}{e'}{\Monoty}$.
      With assumption $\ExpTy{\Venv}{v_2}{t'}$
      and \Cref{lem:expr-subst-single-mono} then $\ExpTy{\Venv}{\estar}{t}$ as required.
    \end{enumerate}

    \CdCase{\Rule{d-pair}} $e = \Pair{e_1}{e_2}$ with $\ExpTy{\Venv}{e_i}{t_i}$ for $i = 1,2$ and
    $t = \Pairty{t_1}{t_2}$. There are three cases:
    \begin{enumerate}
    \item $\Reduce{\FunEnv}{\Pair{e_1}{e_2}}{\Pair{e_1'}{e_2}}$ with $\Reduce{\FunEnv}{e_1}{e_1'}$.
      By \IH{} $\ExpTy{\Venv'}{e_1'}{\Monoty_1}$ for $\Venv' \IsSubty \Venv$ and we can apply again \Rule{d-pair}
      with \Cref{lem:weakening}.
    \item $\Reduce{\FunEnv}{\Pair{v_1}{e_2}}{\Pair{v_1}{e_2'}}$ with $\Reduce{\FunEnv}{e_2}{e_2'}$.
      By \IH{} $\ExpTy{\Venv'}{e_2'}{\Monoty_2}$ for $\Venv' \IsSubty \Venv$ and we can apply again \Rule{d-pair}.
      with \Cref{lem:weakening}.
    \item $e = {\Pair{v_1}{v_2}}$. Impossible, since values do not reduce.
    \end{enumerate}

    \CdCase{\Rule{d-case}} $e = \Case{e'}{(\Pg_i \to e_i)_{i \in I}}$ and we get by inverting rule \Rule{d-case}
    \begin{igather}
      \ExpTy{\Venv}{e'}{t'} \QLabel{eq:ep-tp}\\
      t' \IsSubty \UnionBig_{i \in I} \PgLowerTy{\Pg_i} \QLabel{eq:tp-sub}
    \end{igather}
    and for all $i \in I$:
    \begin{igather}
      \Free{\Pg_i} \subseteq \Dom{\Senv} \cup \Dom{\Venv} \QLabel{eq:free-pgi} \\
      t_i = (t' \WithoutTy \UnionBig_{j < i} \PgLowerTy{\Pg_j}) \Inter \PgUpperTy{\Pg_i} \QLabel{eq:def-ti}\\
      \Venv_i = \Venv \InterEnv (\PatEnv{t_i}{\Pg_i}) \QLabel{eq:def-venvi}\\
      {
        \begin{cases}
          \Monoty_i' = \ErlBot ~\textrm{and}~
          \Free{e_i} \subseteq \Dom{\Senv} \cup \Dom{\Venv_i}
          & \textrm{if}~ \Monoty_i \IsSubty \ErlBot
          \\
          \ExpTy{\Venv_i}{e_i}{\Monoty_i'}
          & \textrm{otherwise}
        \end{cases}
      } \QLabel{eq:prop-ti}\\
      t = \UnionBig_{i \in I}{t_i'} \QLabel{eq:eq-t}
    \end{igather}
    With $\Reduce{\FunEnv}{e}{\estar}$, we must distinguish two cases.

    \begin{enumerate}
    \item $\Reduce{\FunEnv}{e'}{e''}$ and $\estar = \Case{e''}{(\Pg_i \to e_i)_{i \in I}}$
      With \QRef{eq:ep-tp} and the \IH{} then $\ExpTy{\Venv'}{e''}{t'}$ for $\Venv' \IsSubty \Venv$.
      Define $\Venv_i' = \Venv' \InterEnv (\PatEnv{t_i}{\Pg_i})$, so $\Venv_i' \IsSubty \Venv_i$.
      \begin{itemize}
      \item If $t_i \IsSubty \ErlBot$, define $t_i' = \ErlBot$ and have
        $\Free{e_i} \subseteq \Dom{\Senv} \cup \Dom{\Venv_i'}$ by \QRef{eq:prop-ti}.
      \item Otherwise, we have with \QRef{eq:prop-ti} and \Cref{lem:weakening}
        $\ExpTy{\Venv_i'}{e_i}{\Monoty_i'}$
      \end{itemize}
      Then with \Rule{d-case} $\ExpTy{\Venv'}{\estar}{t}$ as required.
    \item $e'$ is a value $v$. Then we get by inverting reduction rule \Rule{red-case} that
      there exists some $j \in I$ and some $\Valsubst$ such that
      \begin{igather}
        \PatSubst{v}{\Pg_j} = \Valsubst \QLabel{eq:v-pgj}\\
        \PatSubst{v}{\Pg_i} = \PatSubstFail \quad (\forall i < j) \QLabel{eq:v-fail}\\
        \estar = e_j \Valsubst
      \end{igather}
      From \QRef{eq:v-pgj} we get with \Cref{lem:gp-typing}(ii) and
      from \QRef{eq:v-fail} with \Cref{lem:gp-typing}(iii)
      \begin{igather}
        \ExpTy{\Venv}{v}{\PgUpperTy{\Pg_j}} \QLabel{eq:v-upty}\\
        \ExpTy{\Venv}{v}{\Neg\PgLowerTy{\Pg_i}} \quad (\forall i < j)
      \end{igather}
      With $e' = v$ and \QRef{eq:ep-tp}, \QRef{eq:tp-sub}, \QRef{eq:def-ti} then
      \begin{igather}
        \ExpTy{\Venv}{v}{t_j} \QLabel{eq:v-tj}
      \end{igather}
      With \Cref{lem:value-not-bot} we know $t_j \not\IsSubty \ErlBot$. Hence with \QRef{eq:prop-ti}
      \begin{igather}
        \ExpTy{\Venv_j}{e_j}{t_j'} \QLabel{eq:ejtj}
      \end{igather}
      With \QRef{eq:v-tj}, \QRef{eq:v-pgj}, the assumption $\FunEnv \Accord \Senv$, and
      \Cref{lem:correct-env-gen}, we get for all $x \in \Dom{\Valsubst}$:
      \begin{igather}
        \ExpTy{\Venv}{\Valsubst(x)}{(\PatEnv{t_j}{\Pg_j})(x)} \QLabel{eq:xsubst-ty}
      \end{igather}
      Our goal is now to give a type to $\estar = e_j \Valsubst$. For this, we define
      \newcommand\VenvT{\widetilde{\Venv}_j}
      \newcommand\VenvS{\Venv^\star_j}
      \newcommand\PgEnv{\PatEnv{t_j}{\Pg_j}}
      \begin{igather}
        \VenvT =
        \Venv \InterEnv \{x : (\PgEnv)(x) \mid x \in \Dom{\PgEnv}, x \notin \PatBound{p_j}, x \notin \Dom{\Senv} \}\\
        \VenvS = \{x : (\PgEnv)(x) \mid x \in \Dom{\PgEnv}, x \in \PatBound{p_j} \}\\
      \end{igather}
      \Wlog{},
      \begin{igather}
        \Dom{\VenvT} \cap \Dom{\VenvS} = \emptyset \QLabel{eq:dom-disj}
      \end{igather}
      We have by definition of $\Venv_j$ that $\Venv_j = \VenvT, \VenvS$ (note that
      $\Dom{\Venv} \cap \Dom{\Senv} = \emptyset$ by assumption). From \QRef{eq:ejtj} then
      \begin{igather}
        \ExpTy{\VenvT,\VenvS}{e_j}{t_j'} \QLabel{eq:ejtj2}
      \end{igather}
      With \Cref{lem:pat-vars}
      \begin{igather}
        \Dom{\Valsubst} = \PatBound{p_j} = \Dom{\VenvS} \QLabel{eq:doms}
      \end{igather}
      Hence, with \QRef{eq:xsubst-ty}
      $\ExpTy{\Venv}{\Valsubst(x)}{\VenvS(x)}$ for all $x \in \Dom{\Valsubst}$.
      We later show that $\VenvT \IsSubty \Venv$. Then by \Cref{lem:weakening}
      \begin{igather}
        \ExpTy{\VenvT}{\Valsubst(x)}{\VenvS(x)} \quad (\forall x \in \Dom{\Valsubst}) \QLabel{eq:etax-ty}
      \end{igather}
      Applying \Cref{lem:expr-subst} to \QRef{eq:dom-disj}, \QRef{eq:ejtj2}, \QRef{eq:doms}, and \QRef{eq:etax-ty} yields
      \begin{igather}
        \ExpTy{\VenvT}{e_j\Valsubst}{t_j'}
      \end{igather}
      With \QRef{eq:eq-t} and $\estar = e_j\Valsubst$ and rule \Rule{d-sub} then
      \begin{igather}
        \ExpTy{\VenvT}{\estar}{t}
      \end{igather}
      Using $\VenvT$ as the $\Venv'$ mentioned in the theorem finishes this case.

      We still need to show $\VenvT \IsSubty \Venv$.
      We have $\Dom{\VenvT} = \Dom{\Venv}$ by the following reasoning:
      \begin{itemize}
      \item $\Dom{\VenvT} \subseteq \Dom{\Venv}$. Assume $x \in \Dom{\VenvT}$. The interesting case is where
        $x \in \Dom{\PgEnv}$ and $x \notin \PatBound{p_j}$. Then $x \in \Free{\Pg_j}$, so $x \in \Dom{\Venv}$
        by \QRef{eq:free-pgi} and definition of $\VenvT$.
      \item $\Dom{\Venv} \subseteq \Dom{\VenvT}$. This holds by construction of $\VenvT$.
      \end{itemize}
      Now assume $x \in \Dom{\VenvT}$. If $x \notin \Dom{\PgEnv}$ then obviously $\VenvT(x) = \Venv(x)$.
      Otherwise $\VenvT(x) = \Venv(x) \Inter (\PgEnv)(x) \IsSubty \Venv(x)$.
    \end{enumerate}

    \CdCase{\Rule{d-sub}} Follows from the \IH{} and rule \Rule{d-sub}.
  \end{CaseDistinction}
\end{proof}

\begin{theorem}[Subject reduction for programs]
  \label{thm:subj-reduction-prog}
  If $\ProgTy{\Prog}{\Monoty}$ and $\Reduce{}{\Prog}{\Prog'}$, then $\ProgTy{\Prog'}{\Monoty}$.
\end{theorem}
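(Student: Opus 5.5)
The plan is to reduce this to \Cref{thm:subj-reduction}, in the same way that \Cref{thm:progress-prog} was reduced to \Cref{thm:progress}. Write $\Prog = \Letrec{\FunEnv}{e}$. Program reduction has only the rule \Rule{red-prog}, so inverting $\Reduce{}{\Prog}{\Prog'}$ gives $\Prog' = \Letrec{\FunEnv}{e'}$ with $\Reduce{\FunEnv}{e}{e'}$. Inverting \Rule{d-prog} on $\ProgTy{\Prog}{\Monoty}$ gives $\FunEnv = \DefSym_{i \in I}$ together with the following facts:
\begin{itemize}
\item $\Envs{\DefSym_i} = \MetaPair{\Senv_i}{\Senv_i'}$ for all $i \in I$;
\item $\DefOk{\Senv_{i \in I}}{\DefSym_i}$ for all $i \in I$;
\item $\ExpSchemaTy{\Senv'_{i \in I}}{\EmptyEnv}{e}{\Monoty}$.
\end{itemize}
The first two facts are precisely $\FunEnv \Accord \Senv'_{i \in I}$, exactly as in the proof of \Cref{thm:progress-prog}.

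Next apply \Cref{thm:subj-reduction} with $\Senv := \Senv'_{i \in I}$ and $\Venv := \EmptyEnv$. The side condition $\Dom{\Senv} \cap \Dom{\Venv} = \emptyset$ holds trivially because $\Venv$ is empty. The theorem yields $\ExpSchemaTy{\Senv'_{i \in I}}{\Venv'}{e'}{\Monoty}$ for some $\Venv' \IsSubty \EmptyEnv$. By the definition of subtyping for environments, $\Dom{\Venv'} = \Dom{\EmptyEnv} = \emptyset$, so $\Venv' = \EmptyEnv$. Hence $\ExpSchemaTy{\Senv'_{i \in I}}{\EmptyEnv}{e'}{\Monoty}$.

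Finally reapply \Rule{d-prog} to $\Letrec{\FunEnv}{e'}$. The definitions $\FunEnv$ are unchanged, so the premises $\Envs{\DefSym_i} = \MetaPair{\Senv_i}{\Senv_i'}$ and $\DefOk{\Senv_{i \in I}}{\DefSym_i}$ carry over verbatim, and the last premise is the judgement just derived. This gives $\ProgTy{\Prog'}{\Monoty}$.

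There is no real obstacle, since all the work lies in \Cref{thm:subj-reduction}. The only points needing care are that $\Envs{\cdot}$ is a deterministic function of each definition, so that the same $\Senv'_{i \in I}$ is used before and after the step, and that the environment returned by \Cref{thm:subj-reduction} collapses to $\EmptyEnv$ because the domains must coincide.
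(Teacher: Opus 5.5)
Your proposal is correct and follows the paper's proof: invert \Rule{red-prog} and \Rule{d-prog}, observe $\FunEnv \Accord \Senv'_{i \in I}$, apply \Cref{thm:subj-reduction} with $\Venv = \EmptyEnv$, and reapply \Rule{d-prog}. Your remark that $\Venv' \IsSubty \EmptyEnv$ forces $\Venv' = \EmptyEnv$ spells out a step the paper leaves implicit. One correction to a side remark: $\Envs{\cdot}$ is not deterministic, because rule \Rule{d-envs-no-annot} leaves the monotype $\Monoty$ free. You do not need determinism anyway, since you simply reuse the witnesses $\Senv_i, \Senv_i'$ from the inversion when reapplying \Rule{d-prog}.
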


\begin{proof}
  Assume $\Prog = \Letrec{\FunEnv}{e}$. Then by inverting
  rule $\Rule{red-prog}$ we have $\Prog' = \Letrec{\FunEnv}{e'}$ and
  $\Reduce{\FunEnv}{e}{e'}$.
  From $\ProgTy{\Prog}{\Monoty}$ we get by inverting rule \Rule{d-prog}:
  \begin{igather}
    \FunEnv = \DefSym_{i \in I}\\
    \Envs{\DefSym_i} = \MetaPair{\Senv_i}{\Senv_i'} \quad (\forall i \in I)\\
    \DefOk{\Senv_{i \in I}}{\DefSym_i} \quad (\forall i \in I)\\
    \ExpSchemaTy{\Senv'_{i \in I}}{\EmptyEnv}{e}{\Monoty}
  \end{igather}

  Therefore, $\FunEnv \Accord \Senv'_{i \in I}$.
  With \Cref{thm:subj-reduction} then $\ExpSchemaTy{\Senv'_{i \in I}}{\EmptyEnv}{e'}{\Monoty}$.
  By \Rule{d-prog}, the claim holds.
\end{proof}

\begin{corollary}[Type soundness]
  \label{thm:soundness}
  Let $\Prog = \Letrec{\FunEnv}{e}$ such that $\ProgTy{\Prog}{t}$ for some $t$.
  Then, either $\Prog$ diverges or reduces to $\Prog' = \Letrec{\FunEnv}{v}$ for some value $v$
  with $\ProgTy{\Prog'}{\Monoty}$.
\end{corollary}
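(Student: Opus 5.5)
The plan is the usual Wright--Felleisen combination of progress and subject reduction for programs (\Cref{thm:progress-prog} and \Cref{thm:subj-reduction-prog}). We consider the reduction sequence starting from $\Prog = \Letrec{\FunEnv}{e}$. By rule \Rule{red-prog}, every program reduction keeps $\FunEnv$ fixed and only rewrites the main expression. So every program reachable from $\Prog$ has the form $\Letrec{\FunEnv}{e_n}$, where $e_0 = e$ and $\Reduce{\FunEnv}{e_n}{e_{n+1}}$.

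First we show, by induction on $n$, that every reachable program $\Prog_n = \Letrec{\FunEnv}{e_n}$ satisfies $\ProgTy{\Prog_n}{\Monoty}$. The base case is the hypothesis. The step is exactly \Cref{thm:subj-reduction-prog}, which preserves the type $\Monoty$ unchanged; the program-level statement has already absorbed the weakening $\Venv' \IsSubty \Venv$ (there $\Venv = \EmptyEnv$).

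Next we distinguish the two possible outcomes.
\begin{itemize}
\item The sequence is infinite. Then $\Prog$ diverges by definition.
\item The sequence stops at some $\Prog_n$ that admits no further reduction. Since $\ProgTy{\Prog_n}{\Monoty}$, \Cref{thm:progress-prog} says that either $e_n$ is a value or $\Prog_n$ can take a step. The second option contradicts maximality, so $e_n = v$ for some value $v$. Taking $\Prog' = \Prog_n = \Letrec{\FunEnv}{v}$, the invariant gives $\ProgTy{\Prog'}{\Monoty}$.
\end{itemize}

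We should also rule out stepping past a value, i.e.\ a reduction sequence that reaches a value and then continues. Values are not of the form $\Plug{\EvalCtx}{e}$ for any redex $e$: every context other than $\Hole$ requires an application, a non-value pair component, or a case, and no base rule applies to a value. Hence a value program is stuck, and the dichotomy above is exhaustive.

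There is no real obstacle in this argument. The only points requiring care are the (standard) formalization of ``diverges'' as the existence of an infinite reduction sequence, and the observation that $\FunEnv$ is invariant under program reduction, so that the final program indeed has the shape $\Letrec{\FunEnv}{v}$ claimed in the statement.
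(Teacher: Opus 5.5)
Your proposal is correct and follows the paper's route. The paper's proof only says the corollary follows directly from progress and subject reduction for programs; your argument spells that out: $\FunEnv$ is invariant under \Rule{red-prog}, typing is preserved by induction along the reduction sequence, and a maximal sequence is either infinite or ends in a value.
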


\begin{proof}
  The claim follows directly from \Cref{thm:progress-prog} and \Cref{thm:subj-reduction-prog}.
\end{proof}


\FloatBarrier
\clearpage
\section{Soundness and Completeness of Algorithmic System}
\label{sec:appendix-algorithmic}

This appendix contains the proofs for showing that the algorithmic
system for type reconstruction
via constraint generation (\Cref{f:constr-gen}),
constraint rewriting (\Cref{f:constr-rew}), and constraint solving (tally)
is sound and complete with respect to the declarative type system (\Cref{f:decl-typing}).

\renewcommand\showfreshtyvars{-fresh}

\Cref{f:constr-gen-fresh} and~\Cref{f:constr-rew-fresh}
contain alternative definitions of the constraint generation and rewriting judgments
from \Cref{f:constr-gen} and~\Cref{f:constr-rew} in the main text.
These alternative definitions explicitly track fresh type variables in type variable sets
$A$. Otherwise, the alternative definitions are identical
to the definitions given in the main text. We employ the convention that
a missing type variable set indicates that there exist such a set but we do
not name it explicitly.
For example, the notation $\ConstrGenV{e}{\Monoty}{\Constr}{}$
implies that there exist some type variable set $\TyvarSet$ such that
$\ConstrGenV{e}{\Monoty}{\Constr}{\TyvarSet}$.

\subsection{Soundness}

We first show that if a program is typeable via the algorithmic system, then it is also
typeable via the declarative system.

\begin{lemma}[Soundness of environment generation for patterns]\label{lem:p-env}
  Given some type $\Monoty$ and pattern $p$.
  Assume $\PatTyEnvConstrV{\Monoty}{p}{\Constr}{\Venv'}{}$ and
  $\ConstrRewV{\Venv}{\Constr}{\SiConstrAux}{}$ and
  $\SubstSolves{\Tysubst}{\SiConstrAux}$ for some
  $\Constr, \Senv, \Venv, \Venv', \SiConstrAux$, and $\Tysubst$.
  Define $\Venv^\star = \PatEnv{\Monoty\Tysubst}{p}$.
  Then $\Venv\Tysubst \InterEnv \Venv^\star \IsSubty (\Venv \InterEnv \Venv')\Tysubst$.
\end{lemma}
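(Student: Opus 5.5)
We proceed by structural induction on the pattern $p$, following the shape of the rules for $\PatTyEnvConstrV{\Monoty}{p}{\Constr}{\Venv'}{}$ in \Cref{f:constr-gen-fresh}. First we record two facts about $\InterEnv$ that we use throughout.
\begin{itemize}
\item Type substitution distributes over it: $(\Venv_1 \InterEnv \Venv_2)\Tysubst = \Venv_1\Tysubst \InterEnv \Venv_2\Tysubst$, since $(t_1 \Inter t_2)\Tysubst = t_1\Tysubst \Inter t_2\Tysubst$.
\item It is monotone with respect to $\IsSubty$ on environments. If $\Venv_1 \IsSubty \Venv_1'$ and $\Venv_2 \IsSubty \Venv_2'$, then $\Venv_1 \InterEnv \Venv_2 \IsSubty \Venv_1' \InterEnv \Venv_2'$. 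Domains are preserved because $\IsSubty$ on environments requires equal domains, and pointwise this is monotonicity of $\Inter$.
\end{itemize}
We also note that, by \Cref{lem:pat-vars}(iii) and an easy induction on the constraint generation rules, $\Dom{\Venv'} = \PatBound{p} = \Dom{\Venv^\star}$. Hence both sides of the claimed inequality have the same domain, $\Dom{\Venv} \cup \PatBound{p}$.

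The base cases are direct. For $p = v$, $p = \Wildcard$ and $p = \Capt{x}$, the rules \Rule{c-val-env}, \Rule{c-wild-env} and \Rule{c-capture-env} give $\Venv' = \EmptyEnv$, and by \Cref{f:aux-pattern-types} also $\Venv^\star = \EmptyEnv$. The claim therefore reduces to $\Venv\Tysubst \IsSubty \Venv\Tysubst$. For $p = x$, rule \Rule{c-var-env} gives $\Venv' = \{x : \Monoty\}$, and $\Venv^\star = \{x : \Monoty\Tysubst\}$. Hence $\Venv\Tysubst \InterEnv \Venv^\star = (\Venv \InterEnv \Venv')\Tysubst$ exactly, using distributivity.

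The interesting case is $p = \Pair{p_1}{p_2}$. By \Rule{c-pair-env} there are fresh $\Tyvar_1, \Tyvar_2$ with $\PatTyEnvConstrV{\Tyvar_i}{p_i}{\Constr_i}{\Venv_i}{}$ and $\Venv' = \Venv_1 \InterEnv \Venv_2$. Moreover $\Constr = \Constr_1 \ConstrAnd \Constr_2 \ConstrAnd \SubtyConstr{\Monoty}{\Pairty{\Tyvar_1}{\Tyvar_2}}$. Inverting \Rule{rc-and} and \Rule{rc-subty}, the rewritten constraint is $\SiConstrAux_1 \ConstrAnd \SiConstrAux_2 \ConstrAnd \SubtyConstr{\Monoty}{\Pairty{\Tyvar_1}{\Tyvar_2}}$, where $\ConstrRewV{\Venv}{\Constr_i}{\SiConstrAux_i}{}$. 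Since $\Tysubst$ solves the conjunction, it solves each $\SiConstrAux_i$, and also $\Monoty\Tysubst \IsSubty \Pairty{\Tyvar_1\Tysubst}{\Tyvar_2\Tysubst}$.

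The induction hypothesis, applied with type $\Tyvar_i$, gives $\Venv\Tysubst \InterEnv (\PatEnv{\Tyvar_i\Tysubst}{p_i}) \IsSubty (\Venv \InterEnv \Venv_i)\Tysubst$ for $i = 1,2$. From $\Monoty\Tysubst \IsSubty \Pairty{\Tyvar_1\Tysubst}{\Tyvar_2\Tysubst}$ and \Cref{prop:projection}(ii) we get $\ProjI{\Monoty\Tysubst} \IsSubty \Tyvar_i\Tysubst$. Then \Cref{lem:env-pattern-sub} yields $\PatEnv{\ProjI{\Monoty\Tysubst}}{p_i} \IsSubty \PatEnv{\Tyvar_i\Tysubst}{p_i}$. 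Unfolding $\Venv^\star = (\PatEnv{\ProjL{\Monoty\Tysubst}}{p_1}) \InterEnv (\PatEnv{\ProjR{\Monoty\Tysubst}}{p_2})$ and using monotonicity, we obtain
\[
\Venv\Tysubst \InterEnv \Venv^\star \IsSubty (\Venv\Tysubst \InterEnv (\PatEnv{\Tyvar_1\Tysubst}{p_1})) \InterEnv (\Venv\Tysubst \InterEnv (\PatEnv{\Tyvar_2\Tysubst}{p_2})).
\]
Here we use idempotence of $\Inter$ up to $\TyEquiv$ to duplicate $\Venv\Tysubst$. Applying the two induction hypotheses and distributivity, the right-hand side is below $(\Venv \InterEnv \Venv_1)\Tysubst \InterEnv (\Venv \InterEnv \Venv_2)\Tysubst \TyEquiv (\Venv \InterEnv \Venv_1 \InterEnv \Venv_2)\Tysubst$, which is the claim.

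I expect the main obstacle to be the bookkeeping for the domain-sensitive relations $\IsSubty$ and $\InterEnv$ on environments rather than any single type-theoretic step. In particular, one has to check that the three-case definition of $\InterEnv$ combines pointwise correctly when a variable lies in the domain of only one of $\Venv$, $\Venv_1$, $\Venv_2$, and that the duplicated $\Venv\Tysubst$ does not change domains. I would handle this by a separate auxiliary statement that $\InterEnv$ is associative, commutative, idempotent up to $\TyEquiv$, and monotone, each proved pointwise by cases on domain membership. The projection step itself is immediate from \Cref{prop:projection}(ii).
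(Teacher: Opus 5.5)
Your proof is correct and follows the paper's argument essentially step for step. It uses induction on $p$ with trivial base cases, and in the pair case it applies \Cref{prop:projection}(ii) and \Cref{lem:env-pattern-sub} to pass from $\ProjI{\Monoty\Tysubst}$ to $\Tyvar_i\Tysubst$ before invoking the two induction hypotheses; your explicit appeal to idempotence, associativity and monotonicity of $\InterEnv$ (to split $\Venv\Tysubst$ across the two components) fills in the step that the paper compresses into ``by IH''.
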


\begin{proof}
  \renewcommand\LabelQualifier{lem:p-env}
  By induction on the structure of $p$.
  \begin{CaseDistinction}{on the structure of $p$}
    \CdCase{$p = \Wildcard$ or $p = v$ or $p = \Capt{x}$}
    Then $\Constr = \ConstrTrue, \Venv' = \emptyset, \PatEnv{\Monoty\Tysubst}{p} = \emptyset = \Venv^\star$.
    The claim follows immediately 
    \CdCase{$p = y$ for some variable $y$}
    Then $\Constr = \ConstrTrue$ and $\Venv' = \{y:\Monoty\}$.
    Further $\PatEnv{\Monoty\Tysubst}{p} = \{y:\Monoty\Tysubst\} = \Venv^\star$.
    Obviously,
    \begin{igather}
      \Dom{\Venv \InterEnv \Venv'} = \Dom{\Venv\Tysubst \InterEnv \Venv^\star} 
    \end{igather}
    Assume $x \in \Dom{\Venv \InterEnv \Venv'}$. We have to show
    $(\Venv\Tysubst \InterEnv \Venv^\star)(x) \IsSubty (\Venv \InterEnv \Venv')\Tysubst(x)$
    \begin{itemize}
    \item If $x \neq y$, we have
      $(\Venv\Tysubst \InterEnv \Venv^\star)(x) = \Venv(x)\Tysubst = (\Venv \InterEnv \Venv')\Tysubst(x)$.
    \item If $x = y$ define
      \begin{igather}
        u = \begin{cases}
          \Venv(y)  & \textrm{if}~y \in \Dom{\Venv}\\
          \top      & \textrm{otherwise}
        \end{cases}
      \end{igather}
      Then
      $(\Venv\Tysubst \InterEnv \Venv^\star)(x) = u\Tysubst \Inter t\Tysubst = (u \Inter t)\Tysubst = (\Venv \InterEnv \Venv')\Tysubst(x)$.
    \end{itemize}
    \CdCase{$p = (p_1, p_2)$}
    We have
    \begin{igather}
      \PatTyEnvConstr{\Monoty}{p}
      {\Constr_1 \ConstrAnd \Constr_2 \ConstrAnd \Monoty \IsSubtyConstr \Pairty{\Tyvar_1}{\Tyvar_2}}
      {\Venv'}
      \\
      \Venv' = {\Venv_1 \InterEnv \Venv_2}\\
      \Tyvar_i~\textrm{fresh}\quad
      \PatTyEnvConstr{\Tyvar_i}{p_i}{\Constr_i}{\Venv_i} \quad \ConstrRew{\Venv}{\Constr_i}{\SiConstrAux_i}
      \quad \SubstSolves{\Tysubst}{\SiConstrAux_i} \quad (i=1,2)\\
      C = C_1 \ConstrAnd C_2 \quad d = d_1 \ConstrAnd d_2 \\
      \SubstSolves{\Tysubst}{\Monoty \IsSubtyConstr \Pairty{\Tyvar_1}{\Tyvar_2}} \QLabel{eq:t-subty}
    \end{igather}
    Applying the \IH{} yields for $i=1,2$
    \begin{igather}
      (\Venv\Tysubst \InterEnv (\PatEnv{\Tyvar_i\Tysubst}{p_i})) \IsSubty
      (\Venv \InterEnv \Venv_i)\Tysubst \QLabel{eq:ih}
    \end{igather}
    We have by definition
    $\Venv^\star = \PatEnv{\Monoty\Tysubst}{p} = (\PatEnv{\ProjL{\Monoty\Tysubst}}{p_1}) \InterEnv (\PatEnv{\ProjR{\Monoty\Tysubst}}{p_2})$.
    Then
    \begin{igather}
      \Dom{\Venv\Tysubst \InterEnv \Venv^\star} =
      \Dom{\Venv\Tysubst \InterEnv \Venv_1 \InterEnv \Venv_2} =
      \Dom{(\Venv \InterEnv \Venv')\Tysubst}
    \end{igather}
    Assume $x \in \Dom{(\Venv \InterEnv \Venv')\Tysubst}$. Then we have
    \begin{igather}
      (\Venv\Tysubst \InterEnv \Venv^\star)(x) =
      (\Venv\Tysubst \InterEnv (\PatEnv{\ProjL{\Monoty\Tysubst}}{p_1}) \InterEnv (\PatEnv{\ProjR{\Monoty\Tysubst}}{p_2}))(x)
      \ReasonAbove{\IsSubty}{\textrm{see below}}\\
      (\Venv\Tysubst \InterEnv (\PatEnv{\Tyvar_1\Tysubst}{p_1}) \InterEnv (\PatEnv{\Tyvar_1\Tysubst}{p_2}))(x)
      \ReasonAbove{\IsSubty}{\textrm{by}~\QRef{eq:ih}}
      (\Venv\InterEnv\Venv_1\InterEnv\Venv_2)\Tysubst(x) = (\Venv \InterEnv \Venv')\Tysubst(x)
    \end{igather}
    as required. The step marked with \enquote{see below} follows with \Cref{lem:env-pattern-sub} and
    the fact that $\ProjI{\Monoty\Tysubst} \IsSubty \Tyvar_i\Tysubst$. This fact holds because we
    have from \QRef{eq:t-subty} that $t\Tysubst \IsSubty \Pairty{\Tyvar_1\Tysubst}{\Tyvar_2\Tysubst}$,
    so by \Cref{prop:projection}(ii) then
    $\ProjI{\Monoty\Tysubst} \IsSubty \Tyvar_i\Tysubst$.
  \end{CaseDistinction} \qed
\end{proof}

\begin{lemma}[Properties of environments for guard patterns]\label{lem:gp-props}~
  \begin{EnumThm}
  \item $\Dom{\PatEnv{t}{p}} = \PatBound{p}$
  \item $\Dom{\PatEnv{t}{\WithGuard{p}{g}}} = \PatBound{p} \cup \Free{g}$
  \item If $\PatTyEnvConstrV{\Monoty}{p}{\Constr}{\Venv}{}$, then $\Dom{\Venv} = \PatBound{p}$.
  \item If $\PatTyEnvConstrV{\Monoty}{\WithGuard{p}{g}}{\Constr}{\Venv}{}$, then
    $\Dom{\Venv} = \PatBound{p} \cup \Free{g}$.
  \item
    $\FreeTyVars{\PatTy{p}{\Venv}} \subseteq \FreeTyVars{\Venv}$ and
    $\FreeTyVars{\Env{g}} = \FreeTyVars{\PgUpperTy{\Pg}} = \FreeTyVars{\PgLowerTy{\Pg}} = \emptyset$
  \end{EnumThm}
\end{lemma}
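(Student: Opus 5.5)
Each claim follows by structural induction, on the pattern or guard, or on the derivation of the environment-generation judgment, whose rules are syntax-directed in $p$; I treat the five parts in order.

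For (i) I induct on $p$, following the defining equations of $\PatEnv{t}{p}$ in \Cref{f:aux-pattern-types}. For $p = v$, $p=\Wildcard$ and $p=\Capt{x}$ the environment is $\EmptyEnv$ and $\PatBound{p}=\emptyset$. For $p=x$ both sides are $\{x\}$. For $p=(p_1,p_2)$ the environment is $(\PatEnv{\ProjL{t}}{p_1}) \InterEnv (\PatEnv{\ProjR{t}}{p_2})$. Since $\Dom{\Venv_1\InterEnv\Venv_2} = \Dom{\Venv_1}\cup\Dom{\Venv_2}$ by definition of $\InterEnv$, the induction hypothesis (which holds for \emph{every} argument type, in particular the projections) gives $\PatBound{p_1}\cup\PatBound{p_2}=\PatBound{p}$. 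This is essentially \Cref{lem:pat-vars}(iii).

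For (ii) I first prove $\Dom{\Env{g}} = \Free{g}$ by induction on $g$. The case $\Is{\Gt}{x}$ gives $\{x\}$ on both sides, and $\GuardTrue$ and $\GuardOracle$ give $\emptyset$. The case $\GuardAnd{g_1}{g_2}$ uses the domain-union property of $\InterEnv$. The delicate case is $\Is{\Gt}{v}$: $\Env{}$ is empty, while $\FreeG{}$ yields $\FreeE{v}$. Here I would argue, or adopt as a standing convention as the paper implicitly does, that guard values are closed, so $\FreeE{v}=\emptyset$; this is the only spot where the statement needs care, and I would read $\Free{g}$ as the type-test variables. Combining this with (i) and the definition $\PatEnv{t}{(\WithGuard p g)} = (\PatEnv{t}{p}) \InterEnv \Env{g}$ gives (ii).

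Parts (iii) and (iv) are the constraint-generation analogues, proved by induction on the derivation. Rules \Rule{c-val-env}, \Rule{c-wild-env} and \Rule{c-capture-env} yield $\EmptyEnv$, and \Rule{c-var-env} yields $\{x:\Monoty\}$. Rule \Rule{c-pair-env} yields $\Venv_1\InterEnv\Venv_2$, so the induction hypothesis and the union property settle (iii) exactly as in (i). Rule \Rule{c-pg-env} returns $\Venv\InterEnv\Env{g}$, so (iv) follows from (iii) and the auxiliary fact $\Dom{\Env{g}}=\Free{g}$ established for (ii).

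For (v), I prove $\FreeTyVars{\PatTy{p}{\Venv}} \subseteq \FreeTyVars{\Venv}$ by induction on $p$. Values give $\TyOfConst{\Const}$, $\ErlTop$ or $\ErlBot$, all closed, and $\Wildcard$ and $\Capt{x}$ also give closed types. A variable gives either $\Venv(x)$ or $\ErlTop$, and pairs take the union of the two components. Next, $\Env{g}$ only contains the closed types $\TyOfGt{\Gt}$ and their intersections, so it is closed. Hence $\PgTy{\Pg}{\DirPatTy}$ is either $\ErlBot$ or $\PatTy{p}{\Env{g}}$, whose free variables lie in $\FreeTyVars{\Env{g}}=\emptyset$. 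This is \Cref{lem:pat-types-closes}, which I would cite directly. No step is a real obstacle beyond the closed-value convention noted in (ii).
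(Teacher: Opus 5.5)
Your proposal is correct and follows the paper's own proof: (i) via \Cref{lem:pat-vars}(iii), (iii) by induction on $p$, (ii) and (iv) from (i) and (iii) together with $\Dom{\Env{g}} = \Free{g}$, and (v) by induction on $p$ and on $g$. The caveat you raise is real. The paper simply asserts $\Dom{\Env{g}} = \Free{g}$, which fails for a test $\Is{\Gt}{v}$ on an open value such as $\Abs{y}{x}$, or a $\lambda$ mentioning a top-level name, so closedness of guard values cannot be argued from the syntax; however, reading $\Free{g}$ as the set of type-tested variables, as you propose, is exactly what every later use of (ii) and (iv) requires.
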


\begin{proof}
  \begin{EnumThm}
  \item Follows by \Cref{lem:pat-vars}(iii).
  \item Follows with (i) and the fact that $\Dom{\Env{g}} = \Free{g}$.
  \item Follows by induction on the structure of $p$.
  \item Follows with (iii) and the fact that $\Dom{\Env{g}} = \Free{g}$.
  \item
    $\FreeTyVars{\PatTy{p}{\Venv}} \subseteq \FreeTyVars{\Venv}$ follows by induction on the
    structure of $p$, using the fact that $\TyOfConst{\Const}$ is closed for all $\Const.$
    $\FreeTyVars{\Env{g}} = \emptyset$ follows by induction on the structure of $g$.
    $\FreeTyVars{\PgUpperTy{\Pg}} = \FreeTyVars{\PgLowerTy{\Pg}} = \emptyset$ follows
    with the two preceding equations. \qed
  \end{EnumThm}
\end{proof}

\begin{lemma}\label{lem:interenv-assoc-comm}
  The $\InterEnv$ operator is associative and commutative.
\end{lemma}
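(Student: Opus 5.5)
The plan is to unfold the definition of $\InterEnv$ pointwise and reduce both claims to properties of type intersection. Two environments are equal when they have the same domain and agree at every variable in it. We will use this notion of equality, read up to type equivalence $\TyEquiv$ on the bound types, since the relevant laws for $\Inter$ hold only up to $\TyEquiv$. This reading suffices for every later use, because those uses only compare environments via $\IsSubty$.

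\emph{Commutativity.} First note that $\Dom{\Venv_1 \InterEnv \Venv_2} = \Dom{\Venv_1} \cup \Dom{\Venv_2}$ follows directly from the definition, and this set is symmetric in $\Venv_1,\Venv_2$. Now fix $x$ and split into the three cases of the definition. If $x$ lies in exactly one of the two domains, both sides return that environment's type. If $x$ lies in both domains, the two sides return $\Venv_1(x) \Inter \Venv_2(x)$ and $\Venv_2(x) \Inter \Venv_1(x)$ respectively. These agree because intersection denotes set intersection in the model, so $t_1 \Inter t_2 \TyEquiv t_2 \Inter t_1$.

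\emph{Associativity.} Both $(\Venv_1 \InterEnv \Venv_2) \InterEnv \Venv_3$ and $\Venv_1 \InterEnv (\Venv_2 \InterEnv \Venv_3)$ have domain $\Dom{\Venv_1}\cup\Dom{\Venv_2}\cup\Dom{\Venv_3}$. Rather than checking all seven membership patterns of $x$ separately, we prove a single characterization. Let $J_x = \{ k \mid x \in \Dom{\Venv_k}\}$, which is nonempty for $x$ in the domain. We show that each side maps $x$ to a type equivalent to $\InterBig_{k \in J_x} \Venv_k(x)$. This is a short induction on the nesting of $\InterEnv$, using that the one-sided cases simply pass the defined type through unchanged. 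The remaining step is $(t_1 \Inter t_2)\Inter t_3 \TyEquiv t_1 \Inter (t_2 \Inter t_3)$, which again holds because intersection is interpreted as set intersection.

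The only delicate point is the notion of equality. Types are coinductive trees, and $\Inter$ is an abbreviation built from $\Neg$ and $\Union$, so $t_1\Inter t_2$ and $t_2 \Inter t_1$ are different syntactic trees; the lemma therefore cannot hold with syntactic equality. If a strict syntactic reading were required, we would instead treat $\Union$ as a set-like connective, i.e.\ identify types modulo the ACI laws. We expect to state explicitly that environments are compared modulo $\TyEquiv$ and to observe that lemmas such as \Cref{lem:weakening} and \Cref{lem:env-pattern-sub} are insensitive to this choice.
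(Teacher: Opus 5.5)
Your proof is correct and takes the same route as the paper, whose one-line proof reduces the claim pointwise to associativity and commutativity of $\Inter$ on types. Your remark that environment equality must be read modulo $\TyEquiv$, because $\Inter$ abbreviates a syntactically non-symmetric type, states explicitly something the paper leaves implicit; it is harmless, since the lemma is only used inside $\IsSubty$ chains (e.g.\ in \Cref{lem:pg-env}).
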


\begin{proof}
  Follows by associativity and commutativity of the intersection operator on types. \qed
\end{proof}

\begin{lemma}[Soundness of environment generation for one guarded pattern]\label{lem:pg-env}
  Given some type $\Monoty$ and guarded pattern $\Pg$.
  Assume $\PatTyEnvConstrV{\Monoty}{\Pg}{\Constr}{\Venv'}{}$ and
  $\ConstrRewV{\Venv}{\Constr}{\SiConstrAux}{}$ and
  $\SubstSolves{\Tysubst}{\SiConstrAux}$ for some
  $\Constr,\Senv,\Venv,\Venv',\SiConstrAux$, and $\Tysubst$.
  Define $\Venv^\star = \PatEnv{\Monoty\Tysubst}{\Pg}$.
  Then $\Venv\Tysubst \InterEnv \Venv^\star \IsSubty (\Venv \InterEnv \Venv')\Tysubst$.
\end{lemma}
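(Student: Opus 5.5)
The plan is to reduce the statement to \Cref{lem:p-env} by unfolding rule \Rule{c-pg-env} and then handling the guard environment $\Env{g}$ separately, using that it is closed. Write $\Pg = \WithGuard{p}{g}$. Inverting \Rule{c-pg-env} gives $\PatTyEnvConstrV{\Monoty}{p}{\Constr_p}{\Venv_p}{}$ with $\Constr = \Constr_p \ConstrAnd C'$ and $\Venv' = \Venv_p \InterEnv \Env{g}$. Here $C'$ is the conjunction of the constraints $\SubtyConstr{x}{\ErlTop}$. Inverting \Rule{rc-and} on $\ConstrRewV{\Venv}{\Constr}{\SiConstrAux}{}$ yields $\ConstrRewV{\Venv}{\Constr_p}{\SiConstrAux_p}{}$ with $\SiConstrAux = \SiConstrAux_p \ConstrAnd \SiConstrAux'$, so $\SubstSolves{\Tysubst}{\SiConstrAux_p}$. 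The constraint $C'$ only matters for scoping and plays no role in the type inequality. We may therefore apply \Cref{lem:p-env} and obtain
\[
\Venv\Tysubst \InterEnv (\PatEnv{\Monoty\Tysubst}{p}) \IsSubty (\Venv \InterEnv \Venv_p)\Tysubst .
\]

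Next, unfold the definition $\Venv^\star = \PatEnv{\Monoty\Tysubst}{\Pg} = (\PatEnv{\Monoty\Tysubst}{p}) \InterEnv \Env{g}$. By \Cref{lem:interenv-assoc-comm}, the left-hand side of the goal equals $(\Venv\Tysubst \InterEnv (\PatEnv{\Monoty\Tysubst}{p})) \InterEnv \Env{g}$, and the right-hand side equals $((\Venv \InterEnv \Venv_p) \InterEnv \Env{g})\Tysubst$. By \Cref{lem:gp-props}(v), $\Env{g}$ is closed, so $\Env{g}\Tysubst = \Env{g}$. Substitution also distributes over $\InterEnv$ pointwise, since $(u \Inter u')\Tysubst = u\Tysubst \Inter u'\Tysubst$, exactly as used in the variable case of \Cref{lem:p-env}. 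Hence the right-hand side is $(\Venv \InterEnv \Venv_p)\Tysubst \InterEnv \Env{g}$.

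It remains to prove a small monotonicity fact: if $\Venv_1 \IsSubty \Venv_2$, then $\Venv_1 \InterEnv \Venv_3 \IsSubty \Venv_2 \InterEnv \Venv_3$. Equal domains are preserved, and on a shared variable $u_1 \Inter u_3 \IsSubty u_2 \Inter u_3$ by set-theoretic reasoning. Applying this with $\Venv_3 = \Env{g}$ to the inequality from \Cref{lem:p-env} finishes the proof.

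The main obstacle is bookkeeping rather than content. One must check that domains match, in particular that \Cref{lem:p-env} indeed yields equal domains, which follows from \Cref{lem:gp-props}(i),(iii). One must also justify that substitution commutes with $\InterEnv$ and with the closed $\Env{g}$; I would state this as a one-line auxiliary observation before the main calculation.
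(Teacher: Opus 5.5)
Your proposal is correct and follows essentially the same route as the paper: invert \Rule{c-pg-env}, apply \Cref{lem:p-env} to the pattern part, and finish with associativity/commutativity of $\InterEnv$ (\Cref{lem:interenv-assoc-comm}) and the closedness of $\Env{g}$ (\Cref{lem:gp-props}(v)). You are slightly more careful than the paper, which silently identifies the constraint for $p$ with the whole constraint $\Constr$. You instead explicitly invert \Rule{rc-and} to drop the scoping constraints $\SubtyConstr{x}{\ErlTop}$, and you spell out the monotonicity of $\InterEnv$ that the paper leaves implicit.
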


\begin{proof}
  \renewcommand\LabelQualifier{lem:pg-env}
  Assume $\Pg = \WithGuard{p}{g}$.
  We have
  \begin{igather}
    \Venv^\star = \PatEnv{\Monoty\Tysubst}{\Pg} = (\PatEnv{t\Tysubst}{p}) \InterEnv \Env{g}
  \end{igather}
  Further, by inverting rule \Rule{c-pg-env} used to derive $\PatTyEnvConstrV{\Monoty}{\Pg}{\Constr}{\Venv'}{}$:
  \begin{igather}
    \PatTyEnvConstrV{\Monoty}{p}{\Constr}{\Venv''}{} \\ 
    \Venv' = \Venv'' \InterEnv \Env{g} \QLabel{eq:2}
  \end{igather}
  Hence with \Cref{lem:p-env}
  \begin{igather}
    (\Venv\Tysubst \InterEnv (\PatEnv{t\Tysubst}{p})) \IsSubty (\Venv \InterEnv \Venv'')\Tysubst \QLabel{eq:1}
  \end{igather}
  Thus we get the desired result:
  \begin{igather}
    \Venv\Tysubst \InterEnv \Venv^\star =
    \Venv\Tysubst \InterEnv (\PatEnv{\Monoty\Tysubst}{p}) \InterEnv \Env{g}
    \ReasonAbove{\IsSubty}{\QRef{eq:1}~\textrm{and}~\Cref{lem:interenv-assoc-comm}}
    \Venv\Tysubst \InterEnv \Venv''\Tysubst \InterEnv \Env{g} \\
    \ReasonAbove{=}{\QRef{eq:2}~\textrm{and}~\Cref{lem:gp-props}(v)}
    (\Venv \InterEnv \Venv')\Tysubst
  \end{igather}\qed
\end{proof}

\begin{lemma}[Soundness of environment generation for multiple guarded patterns]\label{lem:pg-env-multi}
  Given some type $\Monoty$ and a sequence of guarded patterns $\Pg_{i \in I}$.
  Assume $\PatTyEnvConstrV{\Monoty}{\Pg_i}{\Constr_i}{\Venv_i}{}$ and
  $\ConstrRewV{\Venv}{\Constr_i}{\SiConstrAux_i}{}$ and
  $\SubstSolves{\Tysubst}{\SiConstrAux_i}$ for all $i \in I$ and some
  $\Constr_{i \in I}$, $\Venv_{i \in I}$, $\Senv$, $\Venv$, $\SiConstrAux_{i \in I}$, and $\Tysubst$.

  Define $\Venv' = \Venv \InterEnv \BigInterEnv_{i \in I} \Venv_i$ and
  $\Venv^\star = \Venv\Tysubst \InterEnv \BigInterEnv_{i \in I} (\PatEnv{\Monoty\Tysubst}{\Pg_i})$.
  Then $\Venv^\star \IsSubty \Venv'\Tysubst$.
\end{lemma}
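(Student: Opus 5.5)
The plan is to argue by induction on the size of the index set $I$, using \Cref{lem:pg-env} for each single guarded pattern and gluing the pieces together with associativity and commutativity of $\InterEnv$ (\Cref{lem:interenv-assoc-comm}). If $I = \emptyset$, then $\Venv' = \Venv$ and $\Venv^\star = \Venv\Tysubst$. The claim is then just reflexivity of subtyping on environments, since $\Venv\Tysubst$ and $\Venv'\Tysubst$ coincide.

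For the inductive step, write $I = I_0 \cupdisjoint \{k\}$. Set $\Venv'_0 = \Venv \InterEnv \BigInterEnv_{i \in I_0} \Venv_i$ and $\Venv^\star_0 = \Venv\Tysubst \InterEnv \BigInterEnv_{i \in I_0} (\PatEnv{\Monoty\Tysubst}{\Pg_i})$. The induction hypothesis gives $\Venv^\star_0 \IsSubty \Venv'_0\Tysubst$. The single-pattern lemma (\Cref{lem:pg-env}), applied to $\Pg_k$ under $\Venv$, gives
\[
\Venv\Tysubst \InterEnv (\PatEnv{\Monoty\Tysubst}{\Pg_k}) \IsSubty (\Venv \InterEnv \Venv_k)\Tysubst .
\]
Using \Cref{lem:interenv-assoc-comm}, we rewrite $\Venv^\star$ as $\Venv^\star_0 \InterEnv (\Venv\Tysubst \InterEnv (\PatEnv{\Monoty\Tysubst}{\Pg_k}))$. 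This step relies on idempotence $\Venv\Tysubst \InterEnv \Venv\Tysubst \TyEquiv \Venv\Tysubst$, which holds pointwise because $s \Inter s \TyEquiv s$. Similarly, $\Venv'\Tysubst \TyEquiv \Venv'_0\Tysubst \InterEnv (\Venv \InterEnv \Venv_k)\Tysubst$, using that substitution distributes over $\InterEnv$, i.e.\ $(s \Inter s')\Tysubst = s\Tysubst \Inter s'\Tysubst$.

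What remains is a monotonicity fact: if $\Venv_a \IsSubty \Venv_b$ and $\Venv_c \IsSubty \Venv_d$, then $\Venv_a \InterEnv \Venv_c \IsSubty \Venv_b \InterEnv \Venv_d$. Subtyping on environments requires equal domains, so the domains agree on both sides. Pointwise the claim is immediate, because $\Inter$ is monotone in both arguments, and a variable occurring in only one operand keeps its type. Combining the two component inequalities with this fact gives $\Venv^\star \IsSubty \Venv'\Tysubst$.

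I expect the main obstacle to be bookkeeping in the sense of "up to equivalence". Idempotence of $\InterEnv$ yields $\TyEquiv$ rather than syntactic equality, so $\IsSubty$ on environments has to be read modulo type equivalence. It is also possible that $\Venv$ and the $\Venv_i$ share variables, for example guard-tested variables from $\Free{g}$; this case is exactly why the duplicated $\Venv\Tysubst$ must be absorbed via idempotence.
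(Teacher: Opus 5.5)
Your proof is correct and takes the paper's route: the paper's proof is the single line ``by induction on $|I|$ using \Cref{lem:pg-env}''. You additionally spell out the gluing steps it leaves implicit, namely monotonicity of $\InterEnv$, idempotence to absorb the duplicated $\Venv\Tysubst$, and distribution of $\Tysubst$ over $\InterEnv$; the equivalence bookkeeping you flag is harmless, because $\IsSubty$ on environments is pointwise subtyping and is therefore transitive and closed under $\TyEquiv$.
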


\begin{proof}
  By induction on $|I|$ using \Cref{lem:pg-env}. \qed
\end{proof}

\begin{definition}
  The free expression variables of a constraint $C$, written $\Free{C}$, are defined as follows:
  \[
  \begin{array}{r@{~}l}
    \Free{\SubtyConstr{\Monoty}{\Monoty}} & = \emptyset\\
    \Free{\SubtyConstr{x}{\Monoty}} & = \{x\}\\
    \Free{\DefConstr{\Venv}{\Constr}} & = \Free{C} \setminus \Dom{\Venv}\\
    \Free{\CaseConstr{C}{
      \Multi{(\InConstr{\Venv_i}{C_i}{\hat{C}_i})}
    }}
    & = \Free{C} \cup (\medcup_{i \in I}\Free{\hat{C}_i}) \cup (\medcup_{i \in I}(\Free{C_i} \setminus \Dom{\Venv_i}))
    \\
    \Free{C_1 \ConstrAnd C_2} &= \Free{C_1} \cup \Free{C_2}
  \end{array}
  \]
\end{definition}

\begin{lemma}[Free variables in constraint generation and rewriting]\label{lem:free-constr}~
  \begin{EnumThm}
  \item
    If $\ConstrRew{\Venv}{C}{d}$ then $\Free{C} \subseteq \Dom{\Senv} \cup \Dom{\Venv}$.
  \item
    If $\ConstrGen{e}{t}{C}$ and $\ConstrRew{\Venv}{C}{d}$ then
    $\Free{e} \subseteq \Dom{\Senv} \cup \Dom{\Venv}$.
  \end{EnumThm}
\end{lemma}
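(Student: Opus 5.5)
Claim (i) is proved by induction on the derivation of $\ConstrRew{\Venv}{C}{d}$, generalizing over $\Venv$ so that the induction hypothesis can be used under extended environments. Claim (ii) then follows by instantiating (i) and showing $\Free{e} \subseteq \Free{C}$ whenever $\ConstrGen{e}{t}{C}$.

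\emph{Claim (i).} We go through the rewriting rules one by one.
\begin{itemize}
\item \Rule{rc-subty}: $\Free{C} = \emptyset$, so there is nothing to show.
\item \Rule{rc-var} and \Rule{rc-var-poly}: $\Free{C} = \{x\}$, and the premises give $x \in \Dom{\Venv}$ or $x \in \Dom{\Senv}$, respectively.
\item \Rule{rc-and}: immediate from the two induction hypotheses.
\item \Rule{rc-def}: the induction hypothesis gives $\Free{C} \subseteq \Dom{\Senv} \cup \Dom{\Venv} \cup \Dom{\Venv'}$. Removing $\Dom{\Venv'}$ yields the bound for $\Free{\DefConstr{\Venv'}{C}} = \Free{C} \setminus \Dom{\Venv'}$.
\item \Rule{rc-case}: the scrutiny constraint and each $\hat{C}_i$ are rewritten under $\Venv$, so the induction hypothesis applies directly. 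Each $C_i$ is rewritten under $\Venv \InterEnv \Venv_i$, whose domain is $\Dom{\Venv} \cup \Dom{\Venv_i}$ by definition of $\InterEnv$. Subtracting $\Dom{\Venv_i}$ as in the definition of $\Free{\cdot}$ for case constraints gives the bound.
\end{itemize}

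\emph{Claim (ii).} We show $\Free{e} \subseteq \Free{C}$ by induction on the derivation of $\ConstrGen{e}{t}{C}$, and then apply (i). The simple cases go as follows.
\begin{itemize}
\item \Rule{c-var}: $\Free{x} = \{x\} = \Free{\SubtyConstr{x}{t}}$.
\item \Rule{c-const}: both sides are empty.
\item \Rule{c-app} and \Rule{c-pair}: $\Free{e}$ is the union of the components' free variables, and the constraint is a conjunction of the components' constraints.
\item \Rule{c-abs}: $\Free{\Abs{x}{e'}} = \Free{e'} \setminus \{x\} \subseteq \Free{C'} \setminus \{x\} = \Free{\DefConstr{\{x:\alpha\}}{C'}}$.
\end{itemize}

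The main obstacle is \Rule{c-case}. Here
\[
\Free{e} = \Free{e'} \cup \medcup_i\bigl(\Free{\Pg_i} \cup (\Free{e_i} \setminus \PatBound{\Pg_i})\bigr).
\]
We need four facts.
\begin{enumerate}
\item The scrutiny: $\Free{e'}$ is covered by the induction hypothesis, since the constraint of $e'$ is a conjunct of $D$, which is the scrutiny part of the case constraint.
\item The patterns: we need $\Free{\Pg_i} \subseteq \Free{C_i}$, where $C_i$ comes from $\PatTyEnvConstrV{\Monoty_i}{\Pg_i}{C_i}{\Venv_i}{}$. This holds because \Rule{c-pg-env} adds the conjunct $\SubtyConstr{x}{\ErlTop}$ for every $x \in \Free{\WithGuard{p}{g}}$, and $C_i$ is again a conjunct of $D$. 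We must check that the \Rule{c-pg-env} side condition $\Free{\WithGuard{p}{g}}$ denotes $\FreePg{\cdot}$, which excludes $\PatBound{p}$; then these $x$ are genuinely free and are not subtracted by any binder.
\item The bodies: the body constraint $D_i$ sits under the binder $\Venv_i$. By \Cref{lem:gp-props}(iv), $\Dom{\Venv_i} = \PatBound{p_i} \cup \Free{g_i}$, so
\[
\Free{e_i} \setminus \Dom{\Venv_i} \subseteq \Free{D_i} \setminus \Dom{\Venv_i}.
\]
\item The guard variables: the only variables of $\Free{e_i} \setminus \PatBound{\Pg_i}$ possibly lost by subtracting the larger set $\Dom{\Venv_i}$ are those in $\Free{g_i} \setminus \PatBound{p_i}$. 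These lie in $\FreePg{\Pg_i}$, so they are already covered by the previous fact.
\end{enumerate}
Together these give $\Free{e} \subseteq \Free{C}$, and then (i) yields $\Free{e} \subseteq \Dom{\Senv} \cup \Dom{\Venv}$.
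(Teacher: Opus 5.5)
Your proof is correct and follows the paper's route: (i) by induction on the syntax-directed rewriting of $C$, and (ii) by structural induction on $e$. The only difference is that you factor (ii) through the environment-independent inclusion $\Free{e}\subseteq\Free{C}$, the converse of the paper's later \Cref{lem:free-C-e}, combined with (i), which is a harmless reorganization. In the \Rule{c-case} step you only need $\Dom{\Venv_i}\setminus\PatBound{p_i}\subseteq\Free{g_i}$, i.e.\ the inclusion $\Dom{\Env{g}}\subseteq\Free{g}$. This matters because the equality you cite from \Cref{lem:gp-props}(iv) can fail for a test $\Is{\Gt}{v}$ on a $\lambda$-value with free variables.
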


\begin{proof}
  The proof of (i) is by induction on the structure of $C$.
  The proof of (ii) is by induction on the structure of $e$. Both proofs are straightforward.
\end{proof}

\begin{lemma}[Soundness of algorithmic typing for expressions]\label{lem:soundness-algo-exp}
  If $\ConstrGen{e}{\Monoty}{\Constr}$ and
  $\ConstrRew{\Venv}{\Constr}{d}$ and
  $\SubstSolves{\Tysubst}{d}$, then
  $\ExpSchemaTy{\Senv\Tysubst}{\Venv\Tysubst}{e}{t\Tysubst}$.
\end{lemma}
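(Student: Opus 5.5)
We prove the statement by induction on the structure of $e$ (equivalently on the derivation of $\ConstrGen{e}{\Monoty}{\Constr}$), inverting the last constraint-generation rule and then the corresponding rewriting rules. Each case ends by rebuilding a derivation in the declarative system of \Cref{f:decl-typing}, typically with a final application of \Rule{d-sub}. We use \Cref{prop:subty}(iv) freely, since a solution $\Tysubst$ of $\SubtyConstr{t_1}{t_2}$ gives $t_1\Tysubst \IsSubty t_2\Tysubst$. Because a solution of a conjunction solves each conjunct, the induction hypothesis can be applied to every sub-constraint.

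\emph{Simple cases.} For \Rule{c-var} the rewriting used either \Rule{rc-var} or \Rule{rc-var-poly}.
\begin{itemize}
\item With \Rule{rc-var} we have $\Venv(x)\Tysubst \IsSubty t\Tysubst$, and the claim follows by \Rule{d-var} and \Rule{d-sub}.
\item With \Rule{rc-var-poly} we have $\Senv(x) = \TyScm{A}t'$ and $t'\Tysubst \IsSubty t\Tysubst$. After $\alpha$-renaming $A$ away from $\Tysubst$, $(\Senv\Tysubst)(x) = \TyScm{A}(t'\Tysubst)$, so $t'\Tysubst \in \Inst{(\Senv\Tysubst)(x)}$ via the identity on $A$. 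We then conclude with \Rule{d-var-poly} and \Rule{d-sub}.
\end{itemize}
Constants are immediate. For \Rule{c-abs}, rule \Rule{rc-def} rewrites $C$ under $\Venv, x:\Tyvar$. The induction hypothesis gives $\ExpSchemaTy{\Senv\Tysubst}{\Venv\Tysubst, x:\Tyvar\Tysubst}{e}{\TyvarAux\Tysubst}$, and we finish with \Rule{d-abs} and \Rule{d-sub} using $(\Tyvar\to\TyvarAux)\Tysubst \IsSubty t\Tysubst$. Applications and pairs are analogous, using \Rule{d-app} and \Rule{d-pair}.

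\emph{Case expressions.} This is the main case. From \Rule{c-case} and \Rule{rc-case}, $\Tysubst$ solves the rewritten scrutinee constraint, the exhaustiveness constraint $\Tyvar \IsSubtyConstr \UnionBig_i \PgLowerTy{\Pg_i}$, and the rewritten pattern constraints $C_i$. For each $i$, it also solves either $\hat c_i$ (the rewriting of $\SubtyConstr{\Tyvar}{u_i}$) or $c_i$ (the body constraint under $\Venv \InterEnv \Venv_i$). We instantiate \Rule{d-case} with scrutinee type $\Tyvar\Tysubst$; the induction hypothesis yields the typing of the scrutinee.

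Exhaustiveness transfers because the pattern types are closed (\Cref{lem:gp-props}(v)). For the same reason the declarative input types are $t_i^\star = t_i\Tysubst$. Scoping of $\Free{\Pg_i}$ follows from the constraints $\SubtyConstr{x}{\ErlTop}$ of \Rule{c-pg-env} together with \Cref{lem:free-constr}.

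For each branch we split on which disjunct $\Tysubst$ satisfies.
\begin{itemize}
\item If $\Tysubst \Vdash \hat c_i$, then $\Tyvar\Tysubst \IsSubty u_i$. Since $u_i = \UnionBig_{j<i}\PgLowerTy{\Pg_j} \Union \Neg\PgUpperTy{\Pg_i}$, set-theoretic reasoning gives $t_i\Tysubst \IsSubty \ErlBot$. We choose output type $\ErlBot$, and the free-variable side condition follows from \Cref{lem:free-constr}(ii).
\item Otherwise $\Tysubst \Vdash c_i$, and the induction hypothesis gives $\ExpSchemaTy{\Senv\Tysubst}{(\Venv\InterEnv\Venv_i)\Tysubst}{e_i}{\TyvarAux\Tysubst}$. \Cref{lem:pg-env} gives $\Venv\Tysubst \InterEnv \PatEnv{t_i\Tysubst}{\Pg_i} \IsSubty (\Venv\InterEnv\Venv_i)\Tysubst$. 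Weakening (\Cref{lem:weakening}) then types $e_i$ in the declarative branch environment at $\TyvarAux\Tysubst$. If $t_i\Tysubst \IsSubty \ErlBot$ anyway, we use the bottom alternative instead.
\end{itemize}
The resulting union of branch types is below $\TyvarAux\Tysubst \IsSubty t\Tysubst$, so \Rule{d-sub} concludes.

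\emph{Expected obstacle.} The delicate step is matching environments. The pattern environment is generated on the fresh variables $\Tyvar_1, \Tyvar_2$ of \Rule{c-pair-env}, not on projections, and this is handled by \Cref{lem:pg-env}. Note that \Cref{lem:pg-env} requires the pattern constraint to be rewritten under $\Venv$; here the rewriting of $C_i$ sits inside $D$, whose environment is $\Venv$, so this matches. A second subtlety is that the domains of $\Venv\InterEnv\Venv_i$ and the declarative environment coincide, which holds by \Cref{lem:gp-props}(ii),(iv), as weakening requires. Finally, the rewriting rule \Rule{rc-var} requires the side condition $x \notin \Dom{\Senv}$, which carries over to the declarative rules since $\Dom{\Senv\Tysubst} = \Dom{\Senv}$.
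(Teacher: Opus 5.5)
Your proposal is correct and follows essentially the same route as the paper: structural induction on $e$, with the case-expression case handled via closedness of the pattern types, a split on which disjunct of $\hat{c}_i \ConstrOr c_i$ the substitution $\Tysubst$ solves, and the environment comparison plus \Cref{lem:weakening} for the reachable branches. The one difference is that you invoke \Cref{lem:pg-env} directly where the paper cites \Cref{lem:pg-env-multi}, which is equally fine. A minor point, shared with the paper's own wording, concerns the \Rule{rc-var-poly} case: $\Tysubst$ may act on the fresh bound variables $A$, so after renaming the binders to some $A'$, the instance witnessing $t'\Tysubst \in \Inst{(\Senv\Tysubst)(x)}$ is $[\Tyvar\Tysubst/\Tyvar' \mid \Tyvar \in A]$ rather than the identity.
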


\begin{proof}
  \renewcommand\LabelQualifier{lem:soundness-algo-exp}
  By structural induction on $e$.
  \begin{CaseDistinction}{on the form of $e$}
    \CdCase{$e = x$}
    Then $C = \SubtyConstr{x}{t}$.
    We distinguish three cases:
    \begin{itemize}
    \item $\Venv(x) = t'$ and $x \notin \Dom{\Senv}$.
      In this case, $\ConstrRew{\Venv}{\Constr}{d}$
      has been derived by rule \Rule{rc-var}, so $d = \SubtyConstr{t'}{t}$. Hence, $t'\Tysubst \IsSubty t\Tysubst$ 
      and so $\ExpSchemaTy{\Senv\Tysubst}{\Venv\Tysubst}{e}{t\Tysubst}$ with rules \Rule{d-var} and \Rule{d-sub}.
    \item $\Senv(x) = \forall A . t'$ and $x \notin \Dom{\Venv}$.
      In this case, $\ConstrRew{\Venv}{\Constr}{d}$
      has been derived by rule \Rule{rc-var-poly}, so $d = \SubtyConstr{t'}{t}$. Hence, $t'\Tysubst \IsSubty t\Tysubst$.
      By renaming variables in $A$, we may assume that $(\Senv\Tysubst)(x) = \forall A . (t'\Tysubst)$.
      With rule \Rule{d-var-poly} 
      $\ExpSchemaTy{\Senv\Tysubst}{\Venv\Tysubst}{x}{t'\Tysubst}$. The claim now follows with
      rule \Rule{d-sub}.
    \item $x \in \Dom{\Senv}$ and $x \in \Dom{\Venv}$. Impossible, since there is
        no rule to derive $\ConstrRew{\Venv}{\Constr}{d}$.
    \end{itemize}
    \CdCase{$e = \Const$} Straightforward by rule \Rule{d-const} since $\TyOfConst{\Const}$ is closed.

    \CdCase{$e = \lambda x . e_1$} We get by inverting rule \Rule{c-abs} for fresh $\alpha,\beta$:
    \begin{igather}
      \ConstrGen{\lambda x . e_1}{t}{
        \BraceBelow{\DefConstr{\{x : \alpha\}}{C_1} \ConstrAnd \SubtyConstr{(\alpha \to \beta)}{t}}{
          {} = C
        }
      } \\
      \ConstrGen{e_1}{\beta}{C_1} \QLabel{deriv-C1}
    \end{igather}
    Inverting rules \Rule{rc-and} and \Rule{rc-def} yields
    \begin{igather}
      \ConstrRew{\Venv}{\DefConstr{\{x : \alpha\}}{C_1}}{d_1}\\
      d = d_1 \ConstrAnd \SubtyConstr{(\alpha \to \beta)}{t} \QLabel{def-d}\\
      \ConstrRew{\Venv,x:\alpha}{C_1}{d_1} \QLabel{deriv-d1}
    \end{igather}
    Applying the \IH{} to \QRef{deriv-C1} and \QRef{deriv-d1} yields
    $\ExpSchemaTy{\Senv\Tysubst}{\Venv\Tysubst, x:\alpha\Tysubst}{e_1}{\beta\Tysubst}$,
    so with rule \Rule{d-abs}
    \begin{igather}
      \ExpSchemaTy{\Senv\Tysubst}{\Venv\Tysubst}{e}{\alpha\Tysubst \to \beta\Tysubst}
    \end{igather}
    With \QRef{def-d} and the assumption $\SubstSolves{\Tysubst}{d}$ then
    $\alpha\Tysubst \to \beta\Tysubst \IsSubty t\Tysubst$, so by rule \Rule{d-sub}
    $\ExpSchemaTy{\Senv\Tysubst}{\Venv\Tysubst}{e}{t\Tysubst}$ as required.

    \CdCase{$e = e_1\,e_2$}. Follows with the \IH{} 
    \CdCase{$e = (e_1, e_2)$}. Follows with the \IH{} 

    \CdCase{$e = \Case{e_0}{\Multi{(\PatCls{\Pg_i}{e_i})}}$}
    We have $\ConstrGen{\Case{e_0}{\Multi{(\Pg_i \to e_i)}}}{\Monoty}{C}$,
    so by inverting rule \Rule{c-case}
    \begin{igather}
      \ConstrGen{e_0}{\Tyvar}{\Constr_0} \QLabel{eq:e0C0}\\
      C_0' = C_0 \ConstrAnd \MedConstrAnd\nolimits_{i \in I}C_i'' \ConstrAnd
      (\SubtyConstr{\Tyvar}{\UnionBig_{i \in I}{\PgLowerTy{\Pg_i}}}) \\
      \Tyvar,\TyvarAux ~\textrm{fresh}\\
      \Forall{i \in I}\quad
      \Monoty_i = (\Tyvar \WithoutTy \UnionBig_{j < i}{\PgLowerTy{\Pg_j}}) \Inter \PgUpperTy{\Pg_i}\\
      \PatTyEnvConstr{\Monoty_i}{\Pg_i}{C_i''}{\Venv_i} \QLabel{eq:ti1}\\
      \ConstrGen{e_i}{\TyvarAux}{C_i'} \QLabel{eq:eiCip}\\
      u_i = \UnionBig_{j < i}{\PgLowerTy{\Pg_j}} \Union \Neg \PgUpperTy{\Pg_i}\\
      C = (\CaseConstr{C_0'}{\Multi{(\InConstr{\Venv_i}{C_i'}{\SubtyConstr{\Tyvar}{\MonotyAlt_i}})}}) \ConstrAnd
      \SubtyConstr{\TyvarAux}{\Monoty}
  \end{igather}
  Further $\ConstrRew{\Venv}{C}{d}$ so by inverting rules \Rule{rc-case} and \Rule{rc-and}
  \begin{igather}
      \ConstrRew{\Venv}{C_0}{d_0} \QLabel{eq:C0d0}\\
      \Forall{i \in I}~ \ConstrRew{\Venv}{C_i''}{d_i''} \QLabel{eq:dipp}\\
      \ConstrRew{\Venv \InterEnv \Venv_i}{C_i'}{d_i'} \QLabel{eq:Cipdip}\\
      d_0' = d_0 \ConstrAnd \MedConstrAnd\nolimits_{i \in I} d_i'' \ConstrAnd
      (\SubtyConstr{\Tyvar}{\UnionBig_{i \in I}{\PgLowerTy{\Pg_i}}}) \\
      d_i = d_i' \ConstrOr \SubtyConstr{\alpha}{u_i}\\
      d = d_0' \ConstrAnd \MedConstrAnd\nolimits_{i \in I}{d_i} \ConstrAnd \SubtyConstr{\beta}{t}
  \end{igather}
  Our goals to show are
  \begin{igather}
    \ExpSchemaTy{\Senv\Tysubst}{\Venv\Tysubst}{e_0}{\hat{t}_0} ~\textrm{for some}~\hat{t}_0\tag{G1}\\
    \hat{t}_0 \IsSubty \UnionBig_{i \in I}{\PgLowerTy{\Pg_i}} \tag{G2}\\
    (\forall i \in I) \quad
    \hat{t}_i = (\hat{t}_0 \WithoutTy \UnionBig_{j < i}{\PgLowerTy{\Pg_j}}) \Inter \PgUpperTy{\Pg_i}\\
    \Free{e_i} \subseteq \Dom{\Senv} \cup \Dom{\Venv \InterEnv (\PatEnv{\hat{t}_i}{\Pg_i})}
    \tag{G3}\\
    \Free{\Pg_i} \subseteq \Dom{\Senv} \cup \Dom{\Venv} \tag{G4}\\
    {\begin{cases}
        \hat{t_i}' = \bot & \textrm{if}~\hat{t}_i \IsSubty \bot \\
        \ExpSchemaTy{\Senv\Tysubst}{
          \Venv\Tysubst \InterEnv (\PatEnv{\hat{t}_i}{\Pg_i})
        }{e_i}{\hat{t}_i'} & \textrm{otherwise}
      \end{cases}} \tag{G5} \\
    \UnionBig_{i \in I}{\hat{t}_i} \IsSubty t \Tysubst \tag{G6}
  \end{igather}
  The claim then follows with rules \Rule{d-case} and \Rule{d-sub}.

  \begin{description}
  \item[(G1)]
    From $\SubstSolves{\Tysubst}{d}$ and the definition of $d$, we have $\SubstSolves{\Tysubst}{d_0}$.
    Hence, with \QRef{eq:e0C0}, \QRef{eq:C0d0} and the \IH{}, we get
    $\ExpSchemaTy{\Senv\Tysubst}{\Venv\Tysubst}{e_0}{\alpha\Tysubst}$. Define
    $\hat{t}_0 = \alpha\Tysubst$. This proves (G1).
  \item[(G2)]
    From $\SubstSolves{\Tysubst}{d}$ and the definition of $d, d_0'$, we have
    $\SubstSolves{\Tysubst}{\SubtyConstr{\Tyvar}{\UnionBig_{i \in I}{\PgLowerTy{\Pg_i}}}}$.
    With \Cref{lem:gp-props}, we know that $\UnionBig_{i \in I}{\PgLowerTy{\Pg_i}}$ is closed.
    Hence, $\hat{t}_0 = \alpha\Tysubst \IsSubty \UnionBig_{i \in I}{\PgLowerTy{\Pg_i}}$, which proves (G2).
  \item[(G3)]
    By \QRef{eq:eiCip} and \QRef{eq:Cipdip}, we have with \Cref{lem:free-constr} that
    \begin{igather}
      \begin{array}{c}
        \Free{e_i} \subseteq \Dom{\Venv} \cup \Dom{\Venv \InterEnv \Venv_i} =
        \Dom{\Venv} \cup \Dom{\Venv} \cup \Dom{\Venv_i}
      \end{array}
    \end{igather}
    Assuming $\Pg_i = \WithGuard{p_i}{g_i}$, we have with \QRef{eq:ti1}, and
    \Cref{lem:gp-props} that
    \begin{igather}
      \begin{array}{c}
        \Dom{\Venv_i} = \PatBound{p_i}  \cup \Free{g_i} =
        \Dom{\PatEnv{\hat{t}_i}{\Pg_i}}
      \end{array}
    \end{igather}
    This proves (G3).
  \item[(G4)]
    From \QRef{eq:ti1} by inverting rule \Rule{c-pg-env}
    \begin{igather}
      \Free{\Pg_i} \subseteq \Free{C_i''} \ReasonAbove{\subseteq}{\textrm{definition of}~C_0', C} \Free{C}
    \end{igather}
    From the assumption $\ConstrRew{\Venv}{\Constr}{d}$ and \Cref{lem:free-constr}(i), we have
    $\Free{C} \subseteq \Dom{\Senv} \cup \Dom{\Venv}$. This prove (G4).
  \item[(G5)]
    Now suppose $i \in I$.
    From $\SubstSolves{\Tysubst}{d}$ and the definition of $d$, we have
    $\SubstSolves{\Tysubst}{d_i}$ for $d_i = d_i' \ConstrOr \SubtyConstr{\alpha}{u_i}$.
    Hence, either $\SubstSolves{\Tysubst}{d_i'}$ or
    $\SubstSolves{\Tysubst}{\SubtyConstr{\alpha}{u_i}}$. We now show that in both cases (G5) holds.
    \begin{CaseDistinction}{on whether or not $\SubstSolves{\Tysubst}{\SubtyConstr{\alpha}{u_i}}$}
      \CdCase{$\SubstSolves{\Tysubst}{\SubtyConstr{\alpha}{u_i}}$}
      Hence
      \begin{igather}
        \begin{array}{c}
        \hat{t}_0 = \alpha\Tysubst \IsSubty
        u_i \Tysubst =
        (\UnionBig_{j < i}{\PgLowerTy{\Pg_j}} \Union \Neg \PgUpperTy{\Pg_i}) \Tysubst\\
        \ReasonAbove{=}{~\Cref{lem:gp-props}}
          (\UnionBig_{j < i}{\PgLowerTy{\Pg_j}} \Union \Neg \PgUpperTy{\Pg_i}) = u_i \QLabel{eq:ui}
        \end{array}
      \end{igather}
      We now show $\hat{t}_i \IsSubty \bot$, so that we can define $\hat{t}_i' = \bot$.
      We have
      \begin{igather}
        \hat{t}_i = (\hat{t}_0 \WithoutTy \UnionBig_{j < i}{\PgLowerTy{\Pg_j}}) \Inter \PgUpperTy{\Pg_i}
      \end{igather}
      We argue set-theoretically. Assume $\xi \in \hat{t}_i$. Then
      $\xi \in \hat{t}_0$ and
      $\xi \notin \UnionBig_{j < i}{\PgLowerTy{\Pg_j}}$  and
      $\xi \in \PgUpperTy{\Pg_i}$.
      With \QRef{eq:ui} then $\xi \in u_i$, so by definition of $u_i$ then
      $\xi \in \UnionBig_{j < i}{\PgLowerTy{\Pg_j}}$ or
      $\xi \notin \PgUpperTy{\Pg_i}$.
      This is a contradiction, so $\xi \notin \hat{t}_i$, so $\hat{t}_i \IsSubty \bot$.
      \CdCase{not $\SubstSolves{\Tysubst}{\SubtyConstr{\alpha}{u_i}}$}
      Hence $\SubstSolves{\Tysubst}{d_i'}$.
      If $\hat{t}_i \IsSubty \bot$, define $\hat{t}_i' = \bot$. Otherwise, continue as follows.
      With \QRef{eq:eiCip} and \QRef{eq:Cipdip} and the \IH{}, we get
      \begin{igather}
        \ExpSchemaTy{\Senv\Tysubst}{(\Venv \InterEnv\Venv_i)\Tysubst}{e_i}{\beta\Tysubst} \QLabel{eq:beta}
      \end{igather}
      Define
      \begin{igather}
        \Venv_i^\star = \PatEnv{\hat{t}_i}{\Pg_i}
      \end{igather}
      By definition of $t_i, \hat{t}_i, \hat{t}_0 = \alpha\Tysubst$, and with \Cref{lem:gp-props} we have
      $t_i\Tysubst = \hat{t}_i$. From $\SubstSolves{\Tysubst}{d}$ and the definition of $d, d_0'$, we have
      $\SubstSolves{\Tysubst}{d_i''}$. We then get with
      \QRef{eq:ti1}, \QRef{eq:dipp}, and \Cref{lem:pg-env-multi} that
      \begin{igather}
        (\Venv\Tysubst \InterEnv \Venv_i^\star) \IsSubty (\Venv \InterEnv \Venv_i)\Tysubst
      \end{igather}
      With \QRef{eq:beta} and \Cref{lem:weakening} then
      \begin{igather}
        \ExpSchemaTy{\Senv\Tysubst}{\Venv\Tysubst \InterEnv\Venv_i^\star}{e_i}{\beta\Tysubst}
      \end{igather}
      Define $\hat{t}_i' = \beta\Tysubst$. We have with $\SubstSolves{\Tysubst}{d}$ and the definition of $d$
      that $\SubstSolves{\Tysubst}{\SubtyConstr{\beta}{t}}$. Hence $\hat{t}_i' \IsSubty t\Tysubst$.
    \end{CaseDistinction}

    We have in both cases defined some $\hat{t}_i'$ that fulfills the condition of (G5).
  \item[(G6)]
    We have $\hat{t}_i' \IsSubty t\Tysubst$ for all $i \in I$. This proves (G6).
  \end{description}
\end{CaseDistinction} \qed
\end{proof}

\begin{lemma}[Soundness of environment generation for definitions]
  \label{lem:soundness-algo-def}
  Assume $\DefConstrGen{\DefSym}{\Constr}{\Senv}$ and
  $\ConstrRewFull{\Senv'}{\EmptyEnv}{C}{d}{}$ for some $\Senv'$ with $\Senv \subseteq \Senv'$.
  If $\SubstSolves{\Tysubst}{d}$, then $\DefOk{\Senv'\Tysubst}{\DefSym}$.
\end{lemma}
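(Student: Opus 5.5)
The proof will be a case distinction on the rule used to derive $\DefConstrGen{\DefSym}{\Constr}{\Senv}$, that is, \Rule{c-def-annot} or \Rule{c-def-no-annot}. In each case I invert the constraint rewriting, apply \Cref{lem:soundness-algo-exp}, and then rebuild the corresponding declarative rule \Rule{d-def-annot} or \Rule{d-def-no-annot} under the scheme environment $\Senv'\Tysubst$.

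\emph{Case \Rule{c-def-no-annot}.} Here $\DefSym = \DefNt{x}{\Abs{y}{e}}$, $\Senv = \{x : \Tyvar\}$ with $\Tyvar$ fresh, and $\ConstrGen{\Abs{y}{e}}{\Tyvar}{\Constr}$.
\begin{enumerate}
\item Applying \Cref{lem:soundness-algo-exp} with scheme environment $\Senv'$ and $\Venv = \EmptyEnv$ gives $\ExpSchemaTy{\Senv'\Tysubst}{\EmptyEnv}{\Abs{y}{e}}{\Tyvar\Tysubst}$.
\item Since $\Senv \subseteq \Senv'$, we have $(\Senv'\Tysubst)(x) = \Tyvar\Tysubst$. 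This is a monotype, i.e.\ the scheme $\TyScm{\emptyset}{\Tyvar\Tysubst}$, so substitution acts on it directly and no bound variables interfere.
\item Rule \Rule{d-def-no-annot} then applies directly.
\end{enumerate}

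\emph{Case \Rule{c-def-annot}.} Here $\Polyty = \TyScm{\TyvarSet'}{\InterBig_{i \in I}(\Monoty_i' \to \Monoty_i)}$ with $\FreeTyVars{\Polyty} = \emptyset$, and $\ConstrGen{e}{\Monoty_i}{\Constr_i}$ for each $i \in I$.
\begin{enumerate}
\item Inverting \Rule{rc-and} and \Rule{rc-def}, $d$ is a conjunction of $d_i$ with $\ConstrRewFull{\Senv'}{\{y : \Monoty_i'\}}{\Constr_i}{d_i}{}$. From $\SubstSolves{\Tysubst}{d}$ we get $\SubstSolves{\Tysubst}{d_i}$ for every $i$.
\item \Cref{lem:soundness-algo-exp} then yields $\ExpSchemaTy{\Senv'\Tysubst}{\{y : \Monoty_i'\Tysubst\}}{e}{\Monoty_i\Tysubst}$.
\item To match the premises of \Rule{d-def-annot}, I still need $\Monoty_i'\Tysubst = \Monoty_i'$ and $\Monoty_i\Tysubst = \Monoty_i$.
\end{enumerate}

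This last step is the delicate one. The types $\Monoty_i, \Monoty_i'$ may mention the quantified variables in $\TyvarSet'$, so $\Tysubst$ does not obviously leave them unchanged. I would first use $\alpha$-renaming of the bound variables $\TyvarSet'$, as in the \Rule{d-var-poly} case of \Cref{lem:type-subst}, to assume $\TyvarSet' \cap (\Dom{\Tysubst} \cup \FreeTyVars{\Tysubst}) = \emptyset$. The constraints $\Constr_i$ contain $\TyvarSet'$ free, but renaming them consistently in the derivation is harmless because $\Polyty$ is closed. With this convention $\Tysubst$ fixes every variable occurring in $\Monoty_i, \Monoty_i'$, so $\Monoty_i\Tysubst = \Monoty_i$ and $\Monoty_i'\Tysubst = \Monoty_i'$.

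For the same reason $(\Senv'\Tysubst)(x) = \Polyty$, since $\Senv'(x) = \Polyty$ is closed. All premises of \Rule{d-def-annot} are then established.

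If the renaming convention cannot be justified cleanly, the fallback is to restrict $\Tysubst$ to the fresh variables of the derivation. Because $\TyvarSet'$ is disjoint from those fresh variables, this restriction still solves $d_i$ and leaves $\TyvarSet'$ untouched.
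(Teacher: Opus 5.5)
Your case split, the inversions of \Rule{rc-and}/\Rule{rc-def}, and the appeal to \Cref{lem:soundness-algo-exp} match the paper's proof exactly, and your unannotated case is fine. The paper settles your delicate step in one phrase: closedness of $\Polyty$ plus \enquote{assuming} the quantified variables fresh. Your justification of that step, however, does not work. The quantified variables $\TyvarSet'$ occur \emph{free} in $\Constr_i$ and $d_i$, and the hypothesis $\SubstSolves{\Tysubst}{d}$ concerns the $d$ generated from the chosen representative of $\Polyty$. Renaming $\TyvarSet'$ by some $\rho$ turns $d$ into $d\rho$. The only solution you can transport to $d\rho$ is $\rho^{-1}$ followed by $\Tysubst$, and it acts on the new names exactly as $\Tysubst$ acted on the old ones. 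So $\TyvarSet' \cap \Dom{\Tysubst} = \emptyset$ is not a without-loss-of-generality assumption here. This differs from the \Rule{d-var-poly} case of \Cref{lem:type-subst}, where the quantified variables occur only under their binder. Your fallback fails as well. Restricting $\Tysubst$ to the fresh variables of the derivation need not preserve $\SubstSolves{\Tysubst}{d_i}$, because $d_i$ mentions $\TyvarSet'$ (and possibly free variables of $\Senv'$). It would also change $\Senv'\Tysubst$ in the conclusion.

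In fact the statement is false as written. Take $\Polyty = \TyScm{\{\Tyvar\}}{\Tyvar \to \Tyvar}$ and $\DefSym = \Def{x}{\Polyty}{\Abs{y}{1}}$. Then $\Constr = \DefConstr{\{y:\Tyvar\}}{(\SubtyConstr{1}{\Tyvar})}$ and $d = \SubtyConstr{1}{\Tyvar}$, which $\Tysubst = [\Int/\Tyvar]$ solves. Yet $\DefOk{\Senv'\Tysubst}{\DefSym}$ would require $\ExpSchemaTy{\Senv'\Tysubst}{\{y:\Tyvar\}}{1}{\Tyvar}$, which fails for every choice of bound name since $1 \not\IsSubty \Tyvar$.

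The missing ingredient is that annotation variables must be rigid. Add the hypothesis $\TyvarSet' \cap \Dom{\Tysubst} = \emptyset$, and at the program level solve with a tallying that treats these variables as constants; the usual tallying problem is parameterized by such a set of non-instantiable variables. Under that hypothesis your argument goes through verbatim, since $\FreeTyVars{\Polyty} = \emptyset$ puts every variable of $\Monoty_i$ and $\Monoty_i'$ into $\TyvarSet'$. Disjointness from $\FreeTyVars{\Tysubst}$ is not needed. The paper's proof has the same hole, and it propagates to \Cref{lem:soundness-algo-prog}: there $\TysubstStar \in \Tally{c}$ has domain $\FreeTyVars{c}$, which contains these variables.
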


\begin{proof}
  \begin{CaseDistinction}{on the form of $\DefSym$}
    \CdCase{$\DefSym = (x = \lambda y.e)$}
    Then for $\alpha$ fresh by inverting rule \Rule{c-def-no-annot}
    \begin{igather}
      \ConstrGen{\lambda y . e}{\alpha}{C} \\
      \Senv = \{ x:\alpha \}
    \end{igather}
    With \Cref{lem:soundness-algo-exp} then
    $\ExpSchemaTy{\Senv'\Tysubst}{\EmptyEnv}{\lambda y . e}{\alpha\Tysubst}$.
    From $\Senv \subseteq \Senv'$ we have $\Senv'\Tysubst(x) = \alpha\Tysubst$.
    Hence, $\DefOk{\Senv'\Tysubst}{\DefSym}$ by rule \Rule{d-def-no-annot}.

    \CdCase{$\DefSym = (x : \sigma : \lambda y.e)$}
    Then by inverting rule \Rule{c-def-annot}
    \begin{igather}
      \FreeTyVars{\sigma} = \emptyset\\
      \sigma = \TyScm{A}{\InterBig_{i \in I}{(t_i' \to t_i)}}\\
      \Forall{i \in I}\quad \ConstrGen{e}{t_i}{C_i} \QLabel{gen-Ci}\\
      C = \MedConstrAnd_{i \in I} (\DefConstr{ \{y : t_i'\} }{C_i}) \\
      \Senv = \{ x:\sigma \}
    \end{igather}
    Further, by inverting rules \Rule{rc-and} and \Rule{rc-def}
    \begin{igather}
      \Forall{i \in I}\quad \ConstrRewFull{\Senv'}{\{y : t_i'\}}{C_i}{d_i}{} \QLabel{rew-Ci}\\
      d = \MedConstrAnd_{i \in I} d_i
    \end{igather}
    Define $\Senv_i = \Senv', y:t_i'$ for all $i \in I$. Then with \QRef{gen-Ci}, \QRef{rew-Ci},
    and \Cref{lem:soundness-algo-exp}
    \begin{igather}
      \ExpSchemaTy{\Senv_i\Tysubst}{\EmptyEnv}{e}{t_i \Tysubst}
    \end{igather}
    With $\FreeTyVars{\sigma} = \emptyset$ and by assuming $A$ fresh, we get
    $t_i\Tysubst = t_i$ and $t_i'\Tysubst = t_i'$ for all $i \in I$.
    Hence, $\ExpSchemaTy{\Senv'\Tysubst, y:t_i'}{\EmptyEnv}{e}{t_i}$,
    so $\DefOk{\Senv'\Tysubst}{\DefSym}$ by rule \Rule{d-def-annot}.
  \end{CaseDistinction} \qed
\end{proof}

We now restate and prove the soundness theorem for algorithmic typing from
\Cref{sec:algor-typing-rules}.

\paragraph{Theorem~\ref*{lem:soundness-algo-prog}
  \textnormal{(Soundness of algorithmic typing for programs)}}
\emph{
    Given program $\Prog$, some type $t$, program constraint $P$,
    simple constraint $d$, and some type substitution $\Tysubst$.
    Assume $\ConstrGen{\Prog}{\Monoty}{\ProgConstr}$ and
    $\ConstrRewProg{\ProgConstr}{d}$ and
    $\SubstSolves{\Tysubst}{d}$.
    Then $\ProgTy{\Prog}{t\Tysubst}$.
}

\begin{proof}
  \renewcommand\LabelQualifier{lem:soundness-algo-prog}
  Assume $\Prog = \Letrec{\DefSym_{i \in I}}{e}$. Then we get by inverting rule \Rule{c-prog}:
  \begin{igather}
    \DefConstrGen{\DefSym_i}{C_i}{\Senv_i} \quad(\forall i \in I) \QLabel{gen-Ci}\\
    \Senv = \Senv_{i \in I}\\
    C = \MedConstrAnd\nolimits_{i \in I}C_i \QLabel{def-C}\\
    \ConstrGen{e}{t}{C'} \QLabel{gen-Cp}\\
    P = \LetConstr{C}{\Senv}{C'}
  \end{igather}
  By inverting rule \Rule{rc-prog}:
  \begin{igather}
    \ConstrRew{\EmptyEnv}{C}{c} \QLabel{rew-C}\\
    \textrm{there exists}~\TysubstStar \in \Tally{c} \QLabel{tally}\\
    \ConstrRewFull{\Gen{\Senv\TysubstStar}}{\EmptyEnv}{C'}{c'}{} \QLabel{rew-Cp}\\
    d = \Equiv{\TysubstStar} \ConstrAnd c' \QLabel{equiv}
  \end{igather}
  Now examine definitions $\DefSym_{i \in I}$ and the environment $\Senv$ by inverting
  the rules \Rule{c-def-annot} and \Rule{c-def-no-annot}, assuming $|I| = n$ and $\alpha_i$ fresh.
  Note that all $\sigma_i$ are closed.
  Define environments $\Senv'$ and $\Senv''$, where $t_i = \alpha_i\TysubstStar$ for $i = 1,\ldots k$
  and $A_i = \FreeTyVars{t_i}$.
  \[
    \begin{array}{r@{{}={}}l@{~\ldots,~}l@{~}c@{~\ldots,~}l}
      \DefSym_{i \in I} &    (x_1 = e_1),           & (x_k = e_k),            & (x_{k+1} : \sigma_{k+1} = e_{k+1}),   & (x_n: \sigma_n = e_n) \\
      \Senv             & \{ x_1 = \alpha_1,        & x_k = \alpha_k,         & x_{k+1} = \alpha_{k+1},               & x_n = \alpha_n \} \\
      \Senv'            & \{ x_1 = t_1,             & x_k = t_k,              & x_{k+1} = \sigma_{k+1},               & x_n = \sigma_n \} \\
      \Senv''           & \{ x_1 = \forall A_1.t_1, & x_k = \forall A_k.t_k,  & x_{k+1} = \sigma_{k+1},               & x_n = \sigma_n \}
    \end{array}
  \]
  Then $\Senv\TysubstStar = \Senv'$ and $\Gen{\Senv'} = \Senv''$.

  Further, $\Envs{\DefSym_i} = \MetaPair{\Senv_i'}{\Senv_i''}$ for
  all $i \in I$ with $\Senv' = \Senv'_{i \in I}$ and $\Senv'' = \Senv''_{i \in I}$.
  From \QRef{tally} we get $\SubstSolves{\TysubstStar}{c}$. With \QRef{gen-Ci}, the definition of $C$
  in \QRef{def-C}, with \QRef{rew-C} and \Cref{lem:soundness-algo-def} then
  \begin{igather}
    \DefOk{\BraceBelow{\Senv\TysubstStar}{{} = \Senv'}}{\DefSym_i} \quad (\forall i \in I) \QLabel{defi-ok}
  \end{igather}
  From \QRef{gen-Cp}, \QRef{rew-Cp}, \QRef{equiv}, the assumption $\SubstSolves{\Tysubst}{d}$, and \Cref{lem:soundness-algo-exp}, we get
  \begin{igather}
    \ExpSchemaTy{\Gen{\Senv\TysubstStar}\Tysubst}{\EmptyEnv}{e}{t\Tysubst}
  \end{igather}
  Because $\Gen{\Senv\TysubstStar}$ is closed and $\Senv\TysubstStar = \Senv'$ and $\Gen{\Senv'} = \Senv''$ then
  \begin{igather}
    \ExpSchemaTy{\Senv''}{\EmptyEnv}{e}{t\Tysubst}
  \end{igather}
  With \QRef{defi-ok} and rule \Rule{d-prog} then $\ProgTy{\Prog}{t\Tysubst}$ as required. \qed
\end{proof}

\subsection{Completeness}

In this section, we show that a program typeable via the declarative system is also typeable via the
algorithmic system.

\begin{lemma}[Simple constraints from guarded patterns]
  \label{lem:tp-simp-constr}
  If $\PatTyEnvConstr{\Monoty}{\Pg}{\Constr}{\Venv}$ or
  $\PatTyEnvConstr{\Monoty}{p}{\Constr}{\Venv}$ then $C$ is a simple constraint $c$.
\end{lemma}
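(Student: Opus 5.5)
The plan is to prove the pattern claim first, by structural induction on the derivation of $\PatTyEnvConstr{\Monoty}{p}{\Constr}{\Venv}$, and then obtain the guarded-pattern claim from a single application of rule \Rule{c-pg-env}. In both parts the point to check is that the only constraint formers used are subtyping constraints and conjunction $\ConstrAnd$. None of the constructs $\DefConstr{\Venv}{\Constr}$ or $\CaseConstr{\Constr}{\ldots}$ ever arises, and the grammar of simple constraints in \Cref{f:constr-rew} is closed under $\ConstrAnd$.

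\emph{Unguarded patterns.} The base cases are rules \Rule{c-val-env}, \Rule{c-wild-env}, \Rule{c-capture-env}, and \Rule{c-var-env}. Each produces $\ConstrTrue$, which abbreviates $\SubtyConstr{\ErlBot}{\ErlTop}$ and is therefore a subtyping constraint between types, hence simple. The only inductive case is \Rule{c-pair-env}. There the induction hypothesis gives simple constraints $\SiConstr_1, \SiConstr_2$ for the constraints generated for $p_1$ against $\Tyvar_1$ and $p_2$ against $\Tyvar_2$. The generated constraint is $\SiConstr_1 \ConstrAnd \SiConstr_2 \ConstrAnd \SubtyConstr{\Monoty}{(\Pairty{\Tyvar_1}{\Tyvar_2})}$, which is a conjunction of simple constraints with a type--type subtyping constraint and hence simple. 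No other rule concludes this judgment, so the induction is complete.

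\emph{Guarded patterns.} The judgment is derived only by \Rule{c-pg-env}, whose constraint is $\Constr \ConstrAnd \MedConstrAnd_{x \in \Free{\WithGuard{p}{g}}} (\SubtyConstr{x}{\ErlTop})$. By the previous part, $\Constr$ is simple. What remains are the atomic conjuncts $\SubtyConstr{x}{\ErlTop}$, and these are the step I expect to be the main obstacle. Syntactically they belong to the constraint grammar rather than the simple grammar, so I would settle their status before writing the case. They still contain no $\kw{def}$ or $\kw{case}$ nodes. Under rewriting, only \Rule{rc-var} or \Rule{rc-var-poly} applies to them, and that turns $\SubtyConstr{x}{\ErlTop}$ into $\SubtyConstr{t'}{\ErlTop}$ without introducing any disjunction or environment change. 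The conjunction is therefore simple in the sense the lemma is used for later in the completeness proof: a flat conjunction of subtyping constraints whose rewriting is determined pointwise by \Rule{rc-and} together with the variable rules.

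The resulting $\SubtyConstr{t'}{\ErlTop}$ holds under every substitution by \Cref{prop:subty}(i). I would record this as the precise reading of the claim, and then close the guarded case by a final application of \Rule{rc-and}.
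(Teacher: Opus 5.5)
For unguarded patterns your induction is exactly what the paper's one-line proof (``straightforward by definition'') amounts to. For guarded patterns, the difficulty you flag is real, and it lies in the statement itself. The lemma is false as written, and the paper's proof overlooks the conjuncts that \Rule{c-pg-env} adds. For example, $\Pg = \WithGuard{\Capt{y}}{\GuardTrue}$ has $\Free{\Pg} = \{y\}$ and generates $\ConstrTrue \ConstrAnd \SubtyConstr{y}{\ErlTop}$. Here $\SubtyConstr{y}{\ErlTop}$ constrains an expression variable, so it is not produced by the grammar $c ::= \SubtyConstr{t}{t} \mid c \ConstrAnd c \mid c \ConstrOr c$. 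Instead of calling the conjunction ``simple in the sense the lemma is used'', say outright that this half cannot be proved, and replace it with a corrected statement.

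Your ``precise reading'' is the right replacement, but it does not follow from the lemma's hypotheses, which mention no environments. The constraint $\SubtyConstr{x}{\ErlTop}$ rewrites only if $x$ lies in exactly one of $\Dom{\Senv}$ and $\Dom{\Venv}$; otherwise neither \Rule{rc-var} nor \Rule{rc-var-poly} applies. A correct version has two parts:
\begin{itemize}
\item If $\PatTyEnvConstr{\Monoty}{\WithGuard{p}{g}}{C}{\Venv'}$, then $C = c \ConstrAnd \MedConstrAnd_{x \in \Free{\WithGuard{p}{g}}}(\SubtyConstr{x}{\ErlTop})$ with $c$ simple and disjunction-free.
\item If moreover each such $x$ lies in exactly one of $\Dom{\Senv}$ and $\Dom{\Venv}$, then $\ConstrRew{\Venv}{C}{c \ConstrAnd d}$, where $d$ is a conjunction of constraints $\SubtyConstr{s}{\ErlTop}$, each solved by every substitution by \Cref{prop:subty}(i).
\end{itemize}
The extra hypothesis is available at the use site in \Cref{lem:completeness-algo-exp}. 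It comes from the premise $\Free{\Pg_i} \subseteq \Dom{\Senv} \cup \Dom{\Venv}$ of \Rule{d-case}, together with the convention that no variable is in both environments. That proof then needs two adjustments. First, $C_0'$ rewrites to a constraint containing the rewritten $C_i$, not $C_i$ itself. Second, the quantified variables that \Rule{rc-var-poly} introduces as fresh must be included in the domain of the solving substitution. Any choice of images works, since the corresponding conjuncts are trivially true.
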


\begin{proof}
  Straightforward by definition. \qed
\end{proof}

\begin{lemma}[Completeness of environment generation for patterns]
  \label{lem:pat-ty-algo}
  Assume $t\Tysubst \IsSubty \PatTy{p}{\VenvStar}$ for $\VenvStar$ closed. Further
  $\PatTyEnvConstrV{\Monoty}{p}{\Constr}{\Venv}{\TyvarSet}$ for $A$ fresh.
  Then C is a simple constraint and there exists $\Tysubst'$ such that
  $\Dom{\Tysubst'} = A$ and
  $\SubstSolves{\Tysubst \cupdisjoint \Tysubst'}{C}$ and
  $\Venv(x)(\Tysubst \cupdisjoint \Tysubst') \IsSubty (\PatEnv{t\Tysubst}{p})(x)$
  for all $x \in \PatBound{p}$.
\end{lemma}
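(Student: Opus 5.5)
I will prove the statement by structural induction on $p$, following the rules \Rule{c-val-env}, \Rule{c-wild-env}, \Rule{c-capture-env}, \Rule{c-var-env} and \Rule{c-pair-env}. That $C$ is a simple constraint is immediate from \Cref{lem:tp-simp-constr}, so the substance is constructing $\Tysubst'$ with $\Dom{\Tysubst'} = A$. Throughout I write $\Tysubst \cupdisjoint \Tysubst'$ for the union of substitutions with disjoint domains. This union is well defined because $A$ is fresh, so $A \cap \Dom{\Tysubst} = \emptyset$. I also use that $\Monoty(\Tysubst \cupdisjoint \Tysubst') = \Monoty\Tysubst$, since $\FreeTyVars{\Monoty} \cap A = \emptyset$ by freshness.

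\emph{Base cases.} For $p = v$, $p = \Wildcard$ or $p = \Capt{x}$ the rules give $C = \ConstrTrue$, $\Venv = \EmptyEnv$ and $A = \emptyset$. I take $\Tysubst' = []$. Then $\ConstrTrue$ is trivially solved, and $\PatBound{p} = \emptyset$, so the environment claim is vacuous. For $p = y$ I again have $C = \ConstrTrue$ and $A = \emptyset$, and now $\Venv = \{y : \Monoty\}$. Taking $\Tysubst' = []$, I get $\Venv(y)\Tysubst = \Monoty\Tysubst = (\PatEnv{\Monoty\Tysubst}{y})(y)$, so the required inequality holds with equality. None of these cases use the hypothesis $\Monoty\Tysubst \IsSubty \PatTy{p}{\VenvStar}$.

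\emph{Pair case.} Here $p = (p_1,p_2)$, and the rule gives $C = C_1 \ConstrAnd C_2 \ConstrAnd \SubtyConstr{\Monoty}{\Pairty{\Tyvar_1}{\Tyvar_2}}$ with $A = A_1 \cupdisjoint A_2 \cupdisjoint \{\Tyvar_1,\Tyvar_2\}$ and $\Venv = \Venv_1 \InterEnv \Venv_2$. From the hypothesis, $\Monoty\Tysubst \IsSubty \Pairty{\PatTy{p_1}{\VenvStar}}{\PatTy{p_2}{\VenvStar}} \IsSubty \Pairty{\ErlTop}{\ErlTop}$, so the projections $\ProjI{\Monoty\Tysubst}$ exist by \Cref{prop:projection}. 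I extend $\Tysubst$ by setting $\Tysubst_0 = \Tysubst \cupdisjoint [\ProjL{\Monoty\Tysubst}/\Tyvar_1, \ProjR{\Monoty\Tysubst}/\Tyvar_2]$. With this choice:
\begin{itemize}
\item $\SubstSolves{\Tysubst_0}{\SubtyConstr{\Monoty}{\Pairty{\Tyvar_1}{\Tyvar_2}}}$ holds by \Cref{prop:projection}(i), since $\Monoty\Tysubst_0 = \Monoty\Tysubst$.
\item $\Tyvar_i\Tysubst_0 = \ProjI{\Monoty\Tysubst} \IsSubty \PatTy{p_i}{\VenvStar}$ by \Cref{prop:projection}(ii).
\end{itemize}
The second item is exactly the hypothesis needed to apply the induction hypothesis to $p_i$ with type $\Tyvar_i$ and substitution $\Tysubst_0$. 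The fresh sets $A_i$ are disjoint from $\Dom{\Tysubst_0}$, so the induction hypothesis applies and yields $\Tysubst_i'$ with domain $A_i$. I then set $\Tysubst' = [\ldots/\Tyvar_1,\ldots/\Tyvar_2] \cupdisjoint \Tysubst_1' \cupdisjoint \Tysubst_2'$, whose domain is $A$.

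To show that $\Tysubst \cupdisjoint \Tysubst'$ solves each $C_i$, I use that $C_i$ mentions only $\Tyvar_i$, $A_i$ and variables in $\Dom{\Tysubst}$. This follows by a routine induction showing that the type variables of the generated constraint lie in $\FreeTyVars{\Monoty} \cup A$. Consequently $\Tysubst'_{3-i}$ does not affect $C_i$ or $\Venv_i$. For the environment claim, let $x \in \PatBound{p}$. Each $\Venv_i(x)$ is bounded by $(\PatEnv{\Tyvar_i\Tysubst_0}{p_i})(x) = (\PatEnv{\ProjI{\Monoty\Tysubst}}{p_i})(x)$. Since $\InterEnv$ is monotone, and an entry present on only one side passes through unchanged, intersecting these bounds gives exactly $(\PatEnv{\Monoty\Tysubst}{p})(x)$ by the definition of $\PatEnv{\cdot}{(p_1,p_2)}$.

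The main obstacle is the bookkeeping of fresh variables. I have to justify that the constraints and environments produced for $p_i$ only contain variables from $A_i \cup \{\Tyvar_i\} \cup \FreeTyVars{\Monoty}$, so that the independently obtained substitutions can be glued without interference. I will state this as an auxiliary observation proved by induction on $p$ and discharge it first.
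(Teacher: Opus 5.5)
Your proposal is correct and matches the paper's proof essentially step for step. The paper inducts on $p$, uses identity substitutions in the base cases, and in the pair case instantiates $\Tyvar_i$ with $\ProjI{\Monoty\Tysubst}$. It uses \Cref{prop:projection}(ii) to feed the induction hypothesis and \Cref{prop:projection}(i) to discharge $\SubtyConstr{\Monoty}{\Pairty{\Tyvar_1}{\Tyvar_2}}$, then glues the substitutions and splits on whether $x$ lies in $\PatBound{p_1}$, $\PatBound{p_2}$, or both.

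The only difference is that you spell out the fresh-variable bookkeeping that the paper uses silently. In the final intersection step, also cite $\Dom{\Venv_i} = \PatBound{p_i} = \Dom{\PatEnv{\ProjI{\Monoty\Tysubst}}{p_i}}$ (\Cref{lem:gp-props}, \Cref{lem:pat-vars}); this is what justifies your remark that an entry present on only one side passes through unchanged.
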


\begin{proof}
  \renewcommand\LabelQualifier{lem:pat-ty-algo}
  By induction on $p$.
  \begin{CaseDistinction}{on the shape of $p$}
    \CdCase{$p = v$ or $p = \Wildcard$ or $p = \Capt{x}$}
    Then $C = \ConstrTrue$ and $\Venv = \EmptyEnv$ and $A = \emptyset$.
    Choose $\Tysubst'$ as the identity substitution and the claim follows immediately.
    \CdCase{$p = x$}
    Then $C = \ConstrTrue$ and $\Venv = \{x : t\}$ and $A = \emptyset$.
    Choose $\Tysubst'$ as the identity substitution and the claim follows immediately.
    \CdCase{$p = (p_1, p_2)$}
    Inverting rule \Rule{c-pair-env} yields
    \begin{igather}
      \PatTyEnvConstrV{\alpha_i}{p_i}{C_i}{\Venv_i}{A_i} \quad i=1,2 \QLabel{eq1} \\
      \alpha_1, \alpha_2 \textrm{~fresh}\\
      A = A_1 \cupdisjoint A_2 \cupdisjoint \{\alpha_1, \alpha_2\} \\
      C = C_1 \ConstrAnd C_2 \ConstrAnd \SubtyConstr{t}{\Pairty{\alpha_1}{\alpha_2}}\\
      \Venv = \Venv_1 \InterEnv \Venv_2
    \end{igather}
    Define $\TysubstStar := \Tysubst \cupdisjoint [\ProjI{t\Tysubst} / \alpha_i  \mid i=1,2]$.
    We have the assumption
    $t\Tysubst \IsSubty \PatTy{p}{\VenvStar} = \Pairty{\PatTy{p_1}{\VenvStar}}{\PatTy{p_2}{\VenvStar}}$.
    From \Cref{prop:projection}(ii) then
    $\ProjI{t\Tysubst} \IsSubty \PatTy{p_i}{\VenvStar}$ for $i = 1,2$. Hence
    $\alpha_i \TysubstStar \IsSubty \PatTy{p_i}{\VenvStar}$. Thus we can apply the \IH{} to
    \QRef{eq1} and get for $i = 1,2$ that $C_i$ is a simple constraint and there exist $\Tysubst_i$ with
    \begin{igather}
      \Dom{\Tysubst_i} = A_i\\
      \SubstSolves{\TysubstStar \cupdisjoint \Tysubst_i}{C_i}\\
      \Venv_i(x)(\TysubstStar \cupdisjoint \Tysubst_i) \IsSubty
      (\PatEnv{\alpha_i\TysubstStar}{p_i})(x) \quad\textrm{for all}~x \in \PatBound{p_i} \QLabel{eq2}
    \end{igather}
    Now define
    \begin{igather}
      \Tysubst' := \Tysubst_1 \cupdisjoint \Tysubst_2 \cupdisjoint [\ProjI{t\Tysubst} / \alpha_i \mid i=1,2]
    \end{igather}
    Then we have
    \begin{igather}
      \Dom{\Tysubst'} = A \\
      \SubstSolves{\Tysubst \cupdisjoint \Tysubst'}{C_1 \ConstrAnd C_2}\\
      \SubstSolves{\Tysubst \cupdisjoint \Tysubst'}{\SubtyConstr{t}{\Pairty{\alpha_1}{\alpha_2}}}
    \end{igather}
    The last claim follows because
    \begin{igather}
      t(\Tysubst \cupdisjoint \Tysubst') = t\Tysubst
      \ReasonAbove{\IsSubty}{~\textit{\Cref{prop:projection}(i)}}
      \Pairty{\ProjL{t\Tysubst}}{\ProjR{t\Tysubst}} =
      (\Pairty{\alpha_1}{\alpha_2})(\Tysubst \cupdisjoint \Tysubst')
    \end{igather}
    Pick $x \in \PatBound{p}$. We still need to show that
    $(\Venv_1 \InterEnv \Venv_2)(x)(\Tysubst \cupdisjoint \Tysubst') \IsSubty (\PatEnv{t\Tysubst}{p})(x)$.
    \begin{itemize}
    \item Assume $x \in \PatBound{p_1}$ and $x \notin \PatBound{p_2}$. Then
      \begin{igather}
        \begin{array}{c}
          (\Venv_1 \InterEnv \Venv_2)(x)(\Tysubst \cupdisjoint \Tysubst')
          \ReasonAbove{=}{\textit{\Cref{lem:gp-props}}} \Venv_1(x)(\Tysubst \cupdisjoint \Tysubst')
          = \Venv_1(x)(\TysubstStar \cupdisjoint \Tysubst_1)
          \\
          \ReasonAbove{\IsSubty}{\QRef{eq2}} (\PatEnv{\alpha_1\TysubstStar}{p_1})(x)
          = (\PatEnv{\ProjL{t\Tysubst}}{p_1})(x)
          \\
          \ReasonAbove{=}{\textit{\Cref{lem:gp-props}}}
          (\PatEnv{(\ProjL{t\Tysubst}}{p_1}) \InterEnv (\PatEnv{\ProjR{t\Tysubst}}{p_2}))(x)
          = (\PatEnv{t\Tysubst}{p})(x)
        \end{array}
      \end{igather}
    \item Analogously if $x \notin \PatBound{p_1}$ and $x \in \PatBound{p_2}$.
    \item Assume $x \in \PatBound{p_1}$ and $x \in \PatBound{p_2}$. Then
      \begin{igather}
        \begin{array}{c}
          (\Venv_1 \InterEnv \Venv_2)(x)(\Tysubst \cupdisjoint \Tysubst') =
          \Venv_1(x)(\TysubstStar \cupdisjoint \Tysubst_1) \Inter
          \Venv_2(x)(\TysubstStar \cupdisjoint \Tysubst_2)
          \\
          \ReasonAbove{\IsSubty}{\QRef{eq2}}
          (\PatEnv{\alpha_1\TysubstStar}{p_1})(x) \Inter
          (\PatEnv{\alpha_2\TysubstStar}{p_2})(x)
          \\
          = (\PatEnv{\ProjL{t\Tysubst}}{p_1})(x) \Inter
          (\PatEnv{\ProjR{t\Tysubst}}{p_2})(x)
          \\
          = ((\PatEnv{\ProjL{t\Tysubst}}{p_1}) \InterEnv (\PatEnv{\ProjR{t\Tysubst}}{p_2}))(x)
          = (\PatEnv{t\Tysubst}{p})(x)
        \end{array}
      \end{igather}
    \end{itemize}
  \end{CaseDistinction} \qed
\end{proof}

\begin{lemma} [Completeness of environment generation for guarded patterns]
  \label{lem:pg-constr-satisfy}
  Assume $t\Tysubst \leq \PgUpperTy{\Pg}$ with $\Pg = \WithGuard{p}{g}$. Further, assume
  $\PatTyEnvConstrV{\Monoty}{\Pg}{\Constr}{\Venv}{\TyvarSet}$ for $A$ fresh.
  Then $C$ is a simple constraint and there exists $\Tysubst'$  with
  $\Dom{\Tysubst'} = A$ and
  $\SubstSolves{\Tysubst \cupdisjoint \Tysubst'}{C}$ and
  $\Venv(x)(\Tysubst \cupdisjoint \Tysubst') \IsSubty
  (\PatEnv{t\Tysubst}{\Pg})(x)$
  for all $x \in \PatBound{p} \cup \Free{g}$.
\end{lemma}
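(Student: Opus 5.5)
We reduce the guarded-pattern statement to its pattern counterpart, \Cref{lem:pat-ty-algo}, and then handle the guard component of the environment separately. First we invert rule \Rule{c-pg-env}. This yields $\PatTyEnvConstrV{\Monoty}{p}{\Constr_p}{\Venv_p}{\TyvarSet}$, $C = C_p \ConstrAnd C'$ with $C' = \MedConstrAnd_{x \in \Free{\Pg}}(\SubtyConstr{x}{\ErlTop})$, and $\Venv = \Venv_p \InterEnv \Env{g}$. One caveat: $C'$ contains variable constraints $\SubtyConstr{x}{\ErlTop}$, so $C$ is not literally a simple constraint. We therefore read "simple" and "$\Vdash$" for $C'$ as holding after rewriting, where every $\SubtyConstr{x}{\ErlTop}$ becomes some $\SubtyConstr{t'}{\ErlTop}$ and is trivially solved by any substitution by \Cref{prop:subty}(i). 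Alternatively, we state the claim about $C_p$ and note that $C'$ contributes only trivially satisfied constraints. The fresh variables are exactly those of $C_p$, so $A$ is unchanged.

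Next we obtain the hypothesis needed for \Cref{lem:pat-ty-algo}, namely $t\Tysubst \IsSubty \PatTy{p}{\VenvStar}$ for some closed $\VenvStar$, from $t\Tysubst \IsSubty \PgUpperTy{\Pg}$. We split on the definition of $\PgUpperTy{\Pg}$. If $\SafeUp{g}$ and $\Env{g}\neq\ErlBot$, then $\PgUpperTy{\Pg} = \PatUpperTy{p}{\Env{g}}$, and we take $\VenvStar = \Env{g}$, which is closed by \Cref{lem:pat-types-closes}(i). Otherwise $\PgUpperTy{\Pg} = \ErlBot$, so $t\Tysubst \IsSubty \ErlBot \IsSubty \PatUpperTy{p}{\EmptyEnv}$, and we take $\VenvStar = \EmptyEnv$. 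In both cases \Cref{lem:pat-ty-algo} gives $\Tysubst'$ with $\Dom{\Tysubst'} = A$, $\SubstSolves{\Tysubst\cupdisjoint\Tysubst'}{C_p}$, and $\Venv_p(x)(\Tysubst\cupdisjoint\Tysubst') \IsSubty (\PatEnv{t\Tysubst}{p})(x)$ for all $x \in \PatBound{p}$.

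Finally we compare the environments pointwise for $x \in \PatBound{p}\cup\Free{g}$. By \Cref{lem:gp-props}(ii),(iv) both sides have the same domain. We have $(\PatEnv{t\Tysubst}{\Pg}) = (\PatEnv{t\Tysubst}{p}) \InterEnv \Env{g}$, and $\Env{g}$ is closed by \Cref{lem:gp-props}(v), so substitution commutes with $\InterEnv$ and leaves $\Env{g}$ unchanged. We then consider three cases for $x$. If $x$ lies only in $\PatBound{p}$, the bound from \Cref{lem:pat-ty-algo} applies directly. If $x$ lies only in $\Dom{\Env{g}}$, both sides equal $\Env{g}(x)$. If $x$ lies in both, we intersect both sides of the pattern bound with $\Env{g}(x)$, using monotonicity of $\Inter$.

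The main obstacle is the mismatch between "simple constraint" and the variable constraints in $C'$, together with ensuring that $\Free{g}\setminus\PatBound{p}$ lies in $\Dom{\Env{g}}$ as \Cref{lem:gp-props}(ii) asserts. This is doubtful when $g$ tests values, since $\Is{\Gt}{v}$ contributes free variables of $v$ without contributing to $\Env{g}$. For such $x$ we would restrict the claim to $x \in \Dom{\Venv}$, which is the only place the claim is later used.
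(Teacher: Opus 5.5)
Your proof is correct and follows the paper's own argument. The paper also inverts \Rule{c-pg-env}, turns $t\Tysubst \IsSubty \PgUpperTy{\Pg}$ into $t\Tysubst \IsSubty \PatUpperTy{p}{\Env{g}}$ by splitting on $\PgUpperTy{\Pg}$ (it keeps $\Env{g}$ in both cases instead of switching to $\EmptyEnv$), applies \Cref{lem:pat-ty-algo}, and compares pointwise using closedness of $\Env{g}$. Both of your caveats are genuine imprecisions that the paper's proof passes over silently. It reuses the same $C$ for the pattern and the guarded-pattern judgments, dropping the $\SubtyConstr{x}{\ErlTop}$ conjuncts, and it relies on $\Dom{\Env{g}} = \Free{g}$, which fails for tests $\Is{\Gt}{v}$ with $v$ open; your repairs (treating those conjuncts as trivially solved after rewriting, and quantifying only over $x \in \Dom{\Venv}$) are adequate.
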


\begin{proof}
  \renewcommand\LabelQualifier{lem:pg-constr-satisfy}
  From $\PatTyEnvConstrV{\Monoty}{\Pg}{\Constr}{\Venv}{\TyvarSet}$ we have for some $\Venv^\star$
  \begin{igather}
    \PatTyEnvConstrV{\Monoty}{p}{\Constr}{\Venv^\star}{\TyvarSet}\\
    \Venv = \Venv^\star \InterEnv \Env{g}
  \end{igather}
  If $\PgUpperTy{\Pg} = \ErlBot$ then $t\Tysubst = \ErlBot$, else
  $\PgUpperTy{\Pg} = \PatUpperTy{p}{\Env{g}}$. In both cases
  $t\Tysubst \IsSubty \PatUpperTy{p}{\Env{g}}$. By \Cref{lem:gp-props} we have that $\Env{g}$ is closed.
  Then with \Cref{lem:pat-ty-algo}, there exists $\Tysubst'$ with
  \begin{igather}
    \Dom{\Tysubst'} = A\\
    \SubstSolves{\Tysubst \cupdisjoint \Tysubst'}{C} \\
    \VenvStar(x)(\Tysubst \cupdisjoint \Tysubst') \IsSubty
    (\PatEnv{t\Tysubst}{p})(x)
    \quad (\forall x \in \PatBound{p}) \QLabel{eq1}
  \end{igather}
  Pick $x \in \PatBound{p} \cup \Free{g}$. We need to show
  $\Venv(x)(\Tysubst \cupdisjoint \Tysubst') \IsSubty (\PatEnv{t\Tysubst}{\Pg})(x)$.
  \begin{itemize}
  \item Assume $x \notin \Dom{\Env{g}} = \Free{g}$. Then
    $\Venv(x) = \VenvStar(x)$ and
    $(\PatEnv{t\Tysubst}{\Pg})(x) = (\PatEnv{t\Tysubst}{p})(x)$. Hence, the claim follows from \QRef{eq1}.
  \item Assume $x \in \Dom{\Env{g}} = \Free{g}$. Define $\Tysubst'' := \Tysubst \cupdisjoint \Tysubst'$.
    \begin{itemize}
    \item If $x \notin \PatBound{p}$ then
      \begin{igather}
        \begin{array}{c}
          (\Venv \InterEnv \Venv')(x)\Tysubst'' = \Env{g}(x)\Tysubst''
          \ReasonAbove{=}{~\textit{\Cref{lem:gp-props}(v)}} \Env{g}(x) =
          (\PatEnv{t\Tysubst}{\Pg})(x)
        \end{array}
      \end{igather}
    \item If $x \in \PatBound{p}$ then
      \begin{igather}
        \begin{array}{c}
          (\Venv \InterEnv \Venv')(x)\Tysubst'' =
          (\VenvStar \InterEnv \Venv')(x)\Tysubst'' \Inter \Env{g}(x)\Tysubst''
          \\
          \ReasonAbove{=}{~\textit{\Cref{lem:gp-props}(v)}} (\VenvStar \InterEnv \Venv')(x)\Tysubst'' \Inter \Env{g}(x)
        \end{array}
      \end{igather}
      and
      \begin{igather}
        (\PatEnv{t\Tysubst}{\Pg})(x) =
        (\PatEnv{t\Tysubst}{p})(x) \Inter \Env{g}(x)
      \end{igather}
      The claim then follows with \QRef{eq1}.  
      \qed
    \end{itemize}
  \end{itemize}
\end{proof}

\begin{lemma}[Possibility of constraint rewriting]\label{lem:always-constr-rew}
  For all $\Senv, \Venv$, and $C$ with $\Free{C} \subseteq \Dom{\Senv} \cup \Dom{\Venv}$ exists
  $c$ and fresh $A$ with
  $\ConstrRewV{\Venv}{\Constr}{\SiConstr}{A}$.
\end{lemma}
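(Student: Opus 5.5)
The plan is to prove the claim by structural induction on the constraint $\Constr$, constructing the derivation $\ConstrRewV{\Venv}{\Constr}{\SiConstr}{A}$ bottom-up. For the strengthened statement we quantify over all $\Venv$ inside the induction, since the cases for definition and case constraints extend the environment. Freshness of $A$ causes no difficulty: the rewriting rules introduce no new type variables beyond the disjoint unions of those produced by subderivations (and \Rule{rc-var-poly} merely records the bound variables of the scheme, which we may $\alpha$-rename apart). So we can always pick the component sets pairwise disjoint and disjoint from anything previously used.

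\emph{Base cases.} For $\Constr = \SubtyConstr{t}{t'}$, rule \Rule{rc-subty} applies unconditionally with $A = \EmptyVarSet$.

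For $\Constr = \SubtyConstr{x}{t}$ we have $\Free{\Constr} = \{x\} \subseteq \Dom{\Senv} \cup \Dom{\Venv}$.
\begin{itemize}
\item If $x \in \Dom{\Venv}$ and $x \notin \Dom{\Senv}$, use \Rule{rc-var}.
\item If $x \in \Dom{\Senv}$ and $x \notin \Dom{\Venv}$, use \Rule{rc-var-poly} with a renamed scheme.
\end{itemize}
The remaining situation, $x$ in both domains, is the one genuine obstacle. No rule applies there, as already noted in the proof of \Cref{lem:soundness-algo-exp}. I would handle it the way the declarative system does for \Rule{d-var}/\Rule{d-var-poly}: invoke the standing convention that bound expression variables are renamed so that $\Dom{\Senv} \cap \Dom{\Venv} = \emptyset$. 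If necessary, I would add this disjointness explicitly as a hypothesis that is preserved by the inductive steps.

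\emph{Inductive cases.}
\begin{itemize}
\item For $\Constr_1 \ConstrAnd \Constr_2$, the free variables of each conjunct are contained in $\Free{\Constr}$. Apply the induction hypothesis to each and combine with \Rule{rc-and}, choosing disjoint fresh sets.
\item For $\DefConstr{\Venv'}{\Constr_1}$, we have $\Free{\Constr_1} \subseteq \Free{\Constr} \cup \Dom{\Venv'} \subseteq \Dom{\Senv} \cup \Dom{\Venv, \Venv'}$. Apply the induction hypothesis under $\Venv,\Venv'$ and conclude with \Rule{rc-def}. The variables of $\Venv'$ are binders, so by renaming they avoid $\Dom{\Senv}$, and disjointness is preserved.
\item For the case constraint $\CaseConstr{\Constr_0}{\Multi{(\InConstr{\Venv_i}{\Constr_i}{\hat{\Constr}_i})}}$, the definition of $\Free{\cdot}$ gives three inclusions:
\begin{itemize}
\item $\Free{\Constr_0} \subseteq \Dom{\Senv}\cup\Dom{\Venv}$;
\item $\Free{\hat{\Constr}_i} \subseteq \Dom{\Senv}\cup\Dom{\Venv}$;
\item $\Free{\Constr_i} \subseteq \Dom{\Senv}\cup\Dom{\Venv}\cup\Dom{\Venv_i} = \Dom{\Senv}\cup\Dom{\Venv \InterEnv \Venv_i}$, using that $\Dom{\Venv \InterEnv \Venv_i} = \Dom{\Venv}\cup\Dom{\Venv_i}$ by the definition of $\InterEnv$.
\end{itemize}
Applying the induction hypothesis to all $2|I|+1$ subconstraints under the appropriate environments yields the premises of \Rule{rc-case}.
\end{itemize}
For the case constraint, disjointness of $\Dom{\Venv_i}$ from $\Dom{\Senv}$ is again obtained by renaming pattern-bound variables. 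The subtle point is guard variables of $\Venv_i$ that are free rather than bound, since these could lie in $\Dom{\Senv}$. For such variables, \Rule{rc-var}/\Rule{rc-var-poly} would get stuck. I expect this to be the hardest spot. I would first try to argue that the extended environment only needs disjointness for variables actually looked up. If that fails, I would strengthen the statement to require $\Dom{\Venv_i} \cap \Dom{\Senv} = \emptyset$ for the case constraints generated by \Rule{c-case}, which holds because guard type tests on top-level names yield environments only for variables in $\Venv$.
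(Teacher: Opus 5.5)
Your route is the paper's: structural induction on $\Constr$, with one rewriting rule per constraint form. The paper only says ``straightforward''. You correctly notice that the statement silently needs $\Dom{\Senv}\cap\Dom{\Venv}=\emptyset$, since for $\SubtyConstr{x}{t}$ with $x$ free and in both domains no rule applies.

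The gap is the \Rule{rc-case} step, which you leave open, and both of your fallbacks fail.
\begin{itemize}
\item Restricting disjointness to ``variables actually looked up'' does not help. $\Free{\cdot}$ subtracts $\Dom{\Venv_i}$ from $\Free{\Constr_i}$, so the hypothesis says nothing about those variables. Take $\Venv=\EmptyEnv$ and $f\in\Dom{\Senv}$. The constraint $\CaseConstr{\ConstrTrue}{(\InConstr{\{f:\Int\}}{\SubtyConstr{f}{\Int}}{\ConstrTrue})}$ has no free variables, yet its branch is stuck under $\{f:\Int\}$: \Rule{rc-var} needs $f\notin\Dom{\Senv}$ and \Rule{rc-var-poly} needs $f\notin\Dom{\{f:\Int\}}$.
\item The justification of your strengthened hypothesis is false. $\Env{\Is{\Gt}{x}}=\{x:\TyOf{\Gt}\}$ for every $x$, top-level names included; rule \Rule{eval-guard-var} even looks $x$ up in $\FunEnv$. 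So a guard $\Is{\AnyFun}{f}$ on a top-level $f$ makes \Rule{c-case} produce $f\in\Dom{\Venv_i}\cap\Dom{\Senv}$. In any case the lemma is about arbitrary $\Constr$, not only generated ones.
\end{itemize}

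The missing observation is that, by the definition of $\Free{\cdot}$, every variable of $\Venv_i$ (pattern-bound or guard-tested) is a binder for $\Constr_i$. This is exactly the status of $\Venv'$ in $\DefConstr{\Venv'}{\Constr_1}$. The bound/free distinction for guard variables exists for programs, not for constraints. So your def-case renaming argument applies verbatim: rename the variables in $\Dom{\Venv_i}\cap\Dom{\Senv}$ apart, in both $\Venv_i$ and $\Constr_i$. This works for three reasons:
\begin{itemize}
\item By the invariant these variables are not in $\Dom{\Venv}$, so the renaming does not change how $\Venv\InterEnv\Venv_i$ refines variables of $\Venv$.
\item The inclusion $\Free{\Constr_i}\subseteq\Dom{\Senv}\cup\Dom{\Venv\InterEnv\Venv_i}$ survives the renaming.
\item Afterwards $\Dom{\Venv\InterEnv\Venv_i}\cap\Dom{\Senv}=\emptyset$ holds, so the induction hypothesis applies.
\end{itemize}
With this, the only extra assumption the lemma needs is the top-level disjointness you already identified.
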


\begin{proof}
  Straightforward induction on the structure of $C$. \qed
\end{proof}

\begin{lemma}[Free variables of constraint generation]\label{lem:free-C-e}
  If $\ConstrGen{e}{t}{C}$ then $\Free{C} \subseteq \Free{e}$.
\end{lemma}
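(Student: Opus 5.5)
The proof proceeds by structural induction on $e$, following the shape of the constraint-generation rules in \Cref{f:constr-gen}. For each form of $e$ there is exactly one applicable rule, so we invert it and compute $\Free{C}$ using the definition of free variables of constraints. The target in each case is the corresponding clause of \Cref{def:free-exp}.

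The base cases are immediate. For $e = x$, rule \Rule{c-var} gives $C = \SubtyConstr{x}{t}$, so $\Free{C} = \{x\} = \Free{e}$. For $e = \Const$, rule \Rule{c-const} yields a pure subtyping constraint with no free expression variables.

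The inductive cases for abstraction, application and pairs follow from the induction hypothesis.
\begin{itemize}
\item For $\Abs{x}{e_1}$, the constraint is $(\DefConstr{\{x:\Tyvar\}}{C_1}) \ConstrAnd \SubtyConstr{\Tyvar\to\TyvarAux}{t}$. Hence $\Free{C} = \Free{C_1}\setminus\{x\} \subseteq \Free{e_1}\setminus\{x\} = \Free{e}$.
\item For $\App{e_1}{e_2}$ and $\Pair{e_1}{e_2}$, we take unions of the two hypotheses, and the added subtyping conjuncts contribute nothing.
\end{itemize}

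The only case needing care is \Rule{c-case}. Here $C = (\CaseConstr{D}{\Multi{(\InConstr{\Venv_i}{D_i}{\SubtyConstr{\Tyvar}{u_i}})}}) \ConstrAnd \SubtyConstr{\TyvarAux}{t}$, where $D = C_0 \ConstrAnd (\SubtyConstr{\Tyvar}{\ldots}) \ConstrAnd \MedConstrAnd_i C_i$. The unless-constraints $\SubtyConstr{\Tyvar}{u_i}$ are type-only and have no free expression variables. Thus
\[
  \Free{C} = \Free{C_0} \cup \medcup_i \Free{C_i} \cup \medcup_i (\Free{D_i}\setminus\Dom{\Venv_i}).
\]
We bound the three parts as follows.
\begin{enumerate}
\item By the induction hypothesis, $\Free{C_0} \subseteq \Free{e_0}$.
\item For the pattern constraints $C_i$ produced by \Rule{c-pg-env}, we prove by a small auxiliary induction on $p$ that $\PatTyEnvConstr{t}{p}{C}{\Venv}$ yields $\Free{C} = \emptyset$. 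This holds because every rule for patterns produces only $\ConstrTrue$ or subtyping constraints between types. The guarded rule then adds exactly the conjuncts $\SubtyConstr{x}{\ErlTop}$ for $x \in \Free{\Pg_i}$, so $\Free{C_i} = \Free{\Pg_i}$.
\item By the induction hypothesis, $\Free{D_i} \subseteq \Free{e_i}$.
\end{enumerate}

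It remains to relate $\Dom{\Venv_i}$ to $\PatBound{\Pg_i}$. By \Cref{lem:gp-props}(iv), $\Dom{\Venv_i} = \PatBound{p_i} \cup \Free{g_i} \supseteq \PatBound{\Pg_i}$. Therefore $\Free{D_i}\setminus\Dom{\Venv_i} \subseteq \Free{e_i}\setminus\PatBound{\Pg_i}$. Combining the three parts gives exactly the case clause of \Cref{def:free-exp}:
\[
  \Free{e_0} \cup \medcup_i\bigl(\Free{\Pg_i} \cup (\Free{e_i}\setminus\PatBound{\Pg_i})\bigr).
\]

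The main potential obstacle is this last bookkeeping step. The environment $\Venv_i$ binds more than the pattern-bound variables, since it also covers the guard variables in $\Free{g_i}$. This only helps the inclusion, because removing a larger set from $\Free{D_i}$ leaves a smaller remainder, but the direction of the inclusion must be checked carefully.
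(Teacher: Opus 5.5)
Your proof is correct and is the same structural induction the paper intends (the paper records it only as straightforward); your treatment of the \Rule{c-case} clause is right, including the direction of the inclusion when you remove the larger set $\Dom{\Venv_i}$. That step only needs $\Dom{\Venv_i} \supseteq \PatBound{p_i}$, which follows from \Cref{lem:gp-props}(iii) and rule \Rule{c-pg-env}. This is safer than relying on the equality in (iv), because a test $\Is{\GuardTy}{v}$ on a value with free variables adds to $\Free{g_i}$ but not to $\Dom{\Env{g_i}}$.
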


\begin{proof}
  Straightforward induction on the structure of $C$. \qed
\end{proof}

\begin{lemma}\label{lem:bot-implies-sub}
  Let $I$ be some index set with $i \in I$, let
  $u_{k \in I}$ and $t$ be types.
  Assume $t \leq \UnionBig_{k \in I} u_k$ and define
  $t' = (\UnionBig_{j < i}u_j) \Union \Neg u_i$
  and $t'' = (t \WithoutTy  \UnionBig_{j < i}u_j) \Inter u_i$.
  If $t'' \leq \bot$ then $t \leq t'$.
\end{lemma}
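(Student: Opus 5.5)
The argument is purely set-theoretic, in the same style the paper already uses in case (G5) of \Cref{lem:soundness-algo-exp}. Since types are interpreted as sets and $\IsSubty$ is inclusion, it suffices to show that every element $\xi$ of (the interpretation of) $t$ lies in $t' = (\UnionBig_{j < i}u_j) \Union \Neg u_i$. Abbreviate $U = \UnionBig_{j<i} u_j$. Then $t' = U \Union \Neg u_i$, and $t'' = t \Inter \Neg U \Inter u_i$ after unfolding $\WithoutTy$.

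The key steps are as follows. Take $\xi \in t$ and distinguish whether $\xi \in U$ or $\xi \in u_i$.
\begin{itemize}
\item If $\xi \in U$, then $\xi \in t'$ immediately.
\item If $\xi \notin U$ and $\xi \notin u_i$, then $\xi \in \Neg u_i \subseteq t'$.
\item In the remaining case $\xi \notin U$ and $\xi \in u_i$. Together with $\xi \in t$ this gives $\xi \in t''$, contradicting $t'' \IsSubty \ErlBot$ (the interpretation of $\ErlBot$ is empty).
\end{itemize}
Hence $t \IsSubty t'$.

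Equivalently, and to stay algebraic, I would note that $t \Inter \Neg t' = t \Inter \Neg U \Inter u_i = t''$ by De Morgan. So $t'' \IsSubty \ErlBot$ is exactly $t \Inter \Neg t' \IsSubty \ErlBot$, which is equivalent to $t \IsSubty t'$ in the set-theoretic model.

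The hypothesis $t \IsSubty \UnionBig_{k\in I} u_k$ is not needed for this direction. I would simply remark that it is unused, or that it merely guarantees the setting matches \Rule{d-case}. I expect no real obstacle. The only point needing care is justifying the step ``$t \Inter \Neg t' \IsSubty \ErlBot$ iff $t \IsSubty t'$''. This follows from the interpretation of types as sets, with negation as complement with respect to the full domain, as cited in \Cref{prop:subty} (Frisch et al.).
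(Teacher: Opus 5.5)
Your proof is correct and is essentially the paper's argument. The paper also reasons set-theoretically: an element $\xi \in t$ with $\xi \notin t'$ must lie in $u_i$ and outside $\UnionBig_{j<i} u_j$, hence in $t''$, which contradicts $t'' \leq \bot$. Your remark that the hypothesis $t \leq \UnionBig_{k \in I} u_k$ is unused is accurate, since the paper's proof never invokes it either.
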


\begin{proof}
  We argue set-theoretically. Assume $\xi \in t$ but $\xi \notin t'$. Then
  $\xi \notin \UnionBig_{j < i}u_j$ and $\xi \notin \Neg u_i$. Hence, $\xi \in u_i$.
  But then $\xi \in t''$. This contradicts $t'' \IsSubty \bot$. \qed
\end{proof}

\begin{lemma}[Completeness of algorithmic typing for expressions]\label{lem:completeness-algo-exp}
  Given $\Senv,\Venv,e,t$, and $\Tysubst$. If
  $\ExpSchemaTy{\Senv\Tysubst}{\Venv\Tysubst}{e}{t\Tysubst}$ then there exist $C,d$,
  fresh type variable sets $A_1, A_2$ and
  $\Tysubst'$ such that
  $\ConstrGenV{e}{t}{C}{A_1}$ and
  $\ConstrRewV{\Venv}{C}{d}{A_2}$ and
  $\SubstSolves{\Tysubst \cupdisjoint \Tysubst'}{d}$ and $\Dom{\Tysubst'} = A_1 \cupdisjoint A_2$.
\end{lemma}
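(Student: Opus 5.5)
We proceed by induction on the derivation of $\ExpSchemaTy{\Senv\Tysubst}{\Venv\Tysubst}{e}{t\Tysubst}$, generalized so that the statement holds for every $t$ and $\Tysubst$ with the required typing. We also assume that the fresh variables $A_1, A_2$ are disjoint from $\Dom{\Tysubst}$ and from $\FreeTyVars{\Senv}, \FreeTyVars{\Venv}, \FreeTyVars{t}$. Constraint generation is syntax-directed and depends only on $e$ and $t$, so $C$ and $A_1$ are determined by $e$ and $t$. The work is to construct $\Tysubst'$ and to show that $C$ rewrites under $\Venv$. For rewriting we use \Cref{lem:always-constr-rew} together with \Cref{lem:free-C-e} and \Cref{lem:free-exp-vars}. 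In fact the rewriting rules introduce no fresh variables except through the polymorphic case, so $A_2$ will be essentially empty.

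To handle \Rule{d-sub}, we first use \Cref{lem:sub-elim}: there is a derivation of $e$ at some $t^\star \IsSubty t\Tysubst$ ending in a syntax-directed rule. In every generation rule the target type $t$ occurs only in a final upper-bound constraint (for example $\SubtyConstr{\Tyvar\to\TyvarAux}{t}$ or $\SubtyConstr{\TyvarAux}{t}$). So it suffices to choose $\Tysubst'$ mapping these fresh variables to the components of $t^\star$, and transitivity discharges the bound.

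The cases then go as follows.
\begin{itemize}
\item \Rule{d-var}: the rewriting is \Rule{rc-var}, giving the constraint $\Venv(x)\Tysubst \IsSubty t\Tysubst$.
\item \Rule{d-var-poly}: $t^\star = u\Tysubst''$ for some instance substitution $\Tysubst''$ of $\Senv(x) = \TyScm{A}u$. We take $A_2 = A$ (renamed fresh) and let $\Tysubst'$ extend $\Tysubst$ by $\Tysubst''$ on $A$.
\item \Rule{d-abs}: map $\Tyvar \mapsto t_1$ and $\TyvarAux \mapsto t_2$, and apply the induction hypothesis to the body under $\Venv, x:\Tyvar$ with substitution $\Tysubst \cupdisjoint [t_1/\Tyvar]$ and target $\TyvarAux$.
\item \Rule{d-app} and \Rule{d-pair}: analogous, taking the union of the disjoint substitutions produced by the subderivations.
\end{itemize}

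The main obstacle is \Rule{d-case}. We set $\Tyvar \mapsto t_0$, the scrutinee type, and $\TyvarAux \mapsto \UnionBig_i t_i'$. The scrutinee constraint follows from the induction hypothesis. The exhaustiveness constraint holds because the $\PgLowerTy{\Pg_i}$ are closed (\Cref{lem:gp-props}(v)). Each pattern constraint $C_i''$ is solved by \Cref{lem:pg-constr-satisfy}, applied with $t_i(\Tysubst\cupdisjoint[\ldots]) = t_i^{decl} \IsSubty \PgUpperTy{\Pg_i}$; this step yields environment bounds $\Venv_i(x)\Tysubst'' \IsSubty (\PatEnv{t_i^{decl}}{\Pg_i})(x)$.

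For each branch we split on $t_i^{decl} \IsSubty \ErlBot$. If it holds, \Cref{lem:bot-implies-sub} gives $\Tyvar\Tysubst'' \IsSubty u_i$, so the $\kw{unless}$ disjunct is satisfied. Even in this case $C_i'$ must still be rewritable; \Cref{lem:always-constr-rew} guarantees this via the free-variable side condition of \Rule{d-case}. Otherwise the body is typed in $\Venv\Tysubst \InterEnv \PatEnv{t_i^{decl}}{\Pg_i}$. By weakening (\Cref{lem:weakening}) with the environment bounds above, the body is also typed in $(\Venv \InterEnv \Venv_i)\Tysubst''$ at $t_i' \IsSubty \TyvarAux\Tysubst''$. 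There is a subtlety here: the induction hypothesis needs an environment exactly of the form $(\cdot)\Tysubst''$. We therefore apply it with environment $\Venv\InterEnv\Venv_i$ and substitution $\Tysubst''$, after weakening the declarative judgment.

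Finally we combine the branch substitutions, which have disjoint fresh domains, into one $\Tysubst'$. We need to check that every solved constraint mentions only its own variables plus $\Dom{\Tysubst}$, so the union still solves everything.
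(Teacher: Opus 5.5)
Your case analysis follows the paper's proof almost step for step. In particular you use sub-elimination first, map the fresh $\Tyvar,\TyvarAux$ to the components of the syntax-directed type, solve the pattern constraints with \Cref{lem:pg-constr-satisfy}, handle unreachable branches with \Cref{lem:bot-implies-sub} and \Cref{lem:always-constr-rew}, and weaken the body judgment into the form $(\Venv\InterEnv\Venv_i)\Tysubst''$ before invoking the hypothesis.

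The one real gap is the induction principle. You induct on the derivation of $\ExpSchemaTy{\Senv\Tysubst}{\Venv\Tysubst}{e}{t\Tysubst}$. In the \Rule{d-case} step, however, you apply the hypothesis to a judgment that is not a subderivation. First, you weaken the body derivation from $\Venv\Tysubst \InterEnv (\PatEnv{t_i}{\Pg_i})$ to $(\Venv \InterEnv \Venv_i)\Tysubst''$. Second, you raise $t_i'$ to $\TyvarAux\Tysubst'' = \UnionBig_i t_i'$ by \Rule{d-sub}; this is unavoidable, because \Rule{c-case} generates every body against the same $\TyvarAux$. \Cref{lem:weakening} does not preserve derivations: at \Rule{d-var} leaves it inserts a \Rule{d-sub}. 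So the resulting derivation is neither a subderivation nor necessarily shorter. For example, when the body is a pattern-refined variable, it can be as tall as the \Rule{d-case} derivation itself. Generalizing the claim over $t$ and $\Tysubst$ does not help, since an induction on derivations still only gives you the hypothesis for subderivations.

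The fix is what the paper does: use structural induction on $e$, with the claim quantified over all $\Senv,\Venv,t,\Tysubst$. As you observe yourself, constraint generation is syntax-directed. Hence $e_i$ is a proper subterm of $e$, and the hypothesis applies to any typing of it, including the weakened and subsumed one; the rest of your argument then goes through unchanged.

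Two smaller slips also need fixing. In the \Rule{d-abs} case, the substitution passed to the hypothesis must be $\Tysubst\cupdisjoint[t_1/\Tyvar, t_2/\TyvarAux]$, not $\Tysubst\cupdisjoint[t_1/\Tyvar]$, since the body is generated against $\TyvarAux$. In an unreachable branch, you must still put the fresh variables from generating and rewriting $D_i$ into $\Dom{\Tysubst'}$ (the paper maps them to $\ErlBot$), so that $\Dom{\Tysubst'} = A_1 \cupdisjoint A_2$ holds.
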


\begin{proof}
  By structural induction on $e$
  \begin{CaseDistinction}{on the form of $e$}
    \CdCase{$e = x$} We have with rule \Rule{c-var}
    \begin{igather}
      \ConstrGenV{x}{t}{\BraceBelow{\SubtyConstr{x}{t}}{{} = C}}{\emptyset}
    \end{igather}
    With \Cref{lem:sub-elim} we have
    $\ExpSchemaTy{\Senv\Tysubst}{\Venv\Tysubst}{e}{t'}$ such that the derivation ends with one of the rules
    \Rule{d-var} or \Rule{d-var-poly}. Further, $t' \IsSubty t\Tysubst$.
    \begin{itemize}
    \item If the last rule is \Rule{d-var},
      then $\Venv(x) = t''$ with $t''\Tysubst = t'$ and $x \notin \Dom{\Senv}$. Then with rule \Rule{rc-var}
        \begin{igather}
          \ConstrRewV{\Venv}{\SubtyConstr{x}{t}}{\BraceBelow{\SubtyConstr{t''}{t}}{{}=: d}}{\emptyset}
        \end{igather}
        Let $\Tysubst'$ be the identity substitution. Then $\Dom{\Tysubst'} = \emptyset$ and
        it is straightforward to verify that $\SubstSolves{\Tysubst \cupdisjoint \Tysubst'}{d}$.
    \item If the last rule is \Rule{d-var-poly}, then $\Senv(x) = \forall A_2 . u$ for some $u$ and $A_2$ fresh
      with $t' = u(\Tysubst \cupdisjoint \Tysubst')$ for some $\Tysubst'$ with $\Dom{\Tysubst'} = A_2$.
      Further, $x \notin \Dom{\Venv}$. Hence with rule \Rule{rc-var-poly}
        \begin{igather}
          \ConstrRewV{\Venv}{\SubtyConstr{x}{t}}{\BraceBelow{\SubtyConstr{u}{t}}{{}=: d}}{A_2}
        \end{igather}
        And we have
        $u(\Tysubst \cupdisjoint \Tysubst') = t' \IsSubty t\Tysubst = t(\Tysubst \cupdisjoint \Tysubst')$,
        so $\SubstSolves{\Tysubst \cupdisjoint \Tysubst'}{d}$.
    \end{itemize}

    \CdCase{$e = \Const$} With \Cref{lem:sub-elim}
    $\ExpSchemaTy{\Senv\Tysubst}{\Venv\Tysubst}{\Const}{\TyOfConst{\Const}}$ and
    $\TyOfConst{\Const} \IsSubty t\Tysubst$.
    With rules \Rule{c-const} and \Rule{rc-subty}
    \begin{igather}
      \ConstrGenV{\Const}{t}{\SubtyConstr{\TyOfConst{\Const}}{t}}{\emptyset}\\
      \ConstrRewV{\Venv}{\SubtyConstr{\TyOfConst{\Const}}{t}}{
        \BraceBelow{\SubtyConstr{\TyOfConst{\Const}}{t}}{=: d}
        }{\emptyset}
    \end{igather}
    Choose $\Tysubst'$ as the identity substitution, and $A_1 = A_2 = \emptyset$.
    Noting that $\TyOfConst{\Const}$ is closed, we have
    $\SubstSolves{\Tysubst \cupdisjoint \Tysubst'}{d}$.

    \CdCase{$e = \lambda x . e'$}
    \renewcommand\LabelQualifier{lem:completeness-algo-exp-lambda}
    With \Cref{lem:sub-elim} we get by inverting rule \Rule{d-abs}:
    \begin{igather}
      \ExpSchemaTy{\Senv\Tysubst}{\Venv\Tysubst, x:t_1}{e'}{t_2} \QLabel{type-body} \\
      t_1 \to t_2 \IsSubty t\Tysubst \QLabel{subtyp-res}
    \end{igather}
    Choose fresh $\alpha,\beta$ and define
    \begin{igather}
      \Tysubst_1 := \Tysubst \cupdisjoint [ t_1/\alpha, t_2/\beta ] \QLabel{def-tysubst-1}
    \end{igather}
    Then with \QRef{type-body}:
    $\ExpSchemaTy{\Senv\Tysubst_1}{(\Venv, x:\alpha)\Tysubst_1}{e'}{\beta\Tysubst_1}$.
    Applying the \IH{} now yields the existence of $\Tysubst_2$ with
    \begin{igather}
      \ConstrGenV{e'}{\beta}{C'}{A_1'} \quad (A_1'~\textrm{fresh}) \QLabel{gen-Cp}\\
      \ConstrRewV{\Venv, x:\alpha}{C'}{d'}{A_2} \quad (A_2~\textrm{fresh}) \QLabel{rew-Cp}\\
      \SubstSolves{\Tysubst_1 \cupdisjoint \Tysubst_2}{d'} \QLabel{solve-dp}\\
      \Dom{\Tysubst_2} = A_1' \cupdisjoint A_2 \QLabel{dom-theta2}
    \end{igather}
    With \QRef{gen-Cp} and rule \Rule{c-abs} then
    \begin{igather}
      \ConstrGenV{\lambda x . e'}{t}{
        \BraceBelow{\DefConstr{ \{ x: \alpha \}}{C'} \ConstrAnd \SubtyConstr{(\alpha\to\beta)}{t}}{=: C}
      }{A_1} \QLabel{gen-C}\\
      A_1 = A_1' \cupdisjoint \{\alpha,\beta\} \QLabel{def-A1}
    \end{igather}
    With \QRef{rew-Cp} and rule \Rule{rc-def} now
    $\ConstrRewV{\Venv}{\DefConstr{\{x:\alpha\}}{C'}}{d'}{A_2}$.
    Hence, with \QRef{gen-C} then
    \begin{igather}
      \ConstrRewV{\Venv}{C}{
        \BraceBelow{d' \ConstrAnd \SubtyConstr{(\alpha\to\beta)}{t}}{{}=: d}
      }{A_2} \QLabel{rew-C}
    \end{igather}
    With \QRef{gen-C} and \QRef{rew-C}, we only need to prove that there exists $\Tysubst'$ with
    \begin{igather}
      \SubstSolves{\Tysubst \cupdisjoint \Tysubst'}{d} \tag{G1}\\
      \Dom{\Tysubst'} = A_1 \cupdisjoint A_2 \tag{G2}
    \end{igather}
    Define
    \begin{igather}
      \Tysubst' = [ t_1/\alpha, t_2/\beta ] \cupdisjoint \Tysubst_2 \QLabel{def-tysubst-p}
    \end{igather}
    Then we get goal (G2)
    \begin{igather}
      \Dom{\Tysubst'} = \{\alpha,\beta\} \cupdisjoint \Dom{\Tysubst_2}
      \ReasonAbove{=}{\QRef{dom-theta2}} A_1' \cupdisjoint \{\alpha,\beta\}\cupdisjoint A_2
      \ReasonAbove{=}{\QRef{def-A1}} A_1 \cupdisjoint A_2 \QLabel{dom-theta-p}
    \end{igather}
    Further, we have
    \begin{igather}
      \Tysubst \cupdisjoint \Tysubst' =
      \Tysubst \cupdisjoint [t_1/\alpha, t_2/\beta] \cupdisjoint \Tysubst_2
      \ReasonAbove{=}{\QRef{def-tysubst-1}} \Tysubst_1 \cupdisjoint \Tysubst_2
    \end{igather}
    We get with \QRef{solve-dp} that
    $\SubstSolves{\Tysubst_1 \cupdisjoint \Tysubst_2}{d'}$, so by definition of
    $d$ in \QRef{rew-C}, we only need to verify
    $(\alpha\to\beta)(\Tysubst \cupdisjoint \Tysubst') \IsSubty t (\Tysubst \cupdisjoint \Tysubst')$
    in order to prove (G1). By definition of $\Tysubst'$ in \QRef{def-tysubst-p} we know that
    $(\alpha\to\beta)(\Tysubst \cupdisjoint \Tysubst') = t_1 \to t_2$.
    From \QRef{dom-theta2}, \QRef{dom-theta-p} and the freshness of $\alpha,\beta,A_1',A_2$ we know
    $\FreeTyVars{t} \cap \Dom{\Tysubst'} = \emptyset$. Hence,
    $t(\Tysubst \cupdisjoint \Tysubst') = t\Tysubst$ and
    $(t_1 \to t_2) \IsSubty t\Tysubst$ follows with \QRef{subtyp-res}.

    \CdCase{$e = e_1\,e_2$} Follows with the \IH{} and rules \Rule{c-app} and \Rule{rc-and}. 
    \CdCase{$e = (e_1, e_2)$} Follows with the \IH{} and rules \Rule{c-pair} and \Rule{rc-and}. 

    \CdCase{$e = \Case{e_0}{\Multi{(\PatCls{\Pg_i}{e_i})}}$}
    \renewcommand\LabelQualifier{lem:completeness-algo-exp-case}
    With \Cref{lem:sub-elim}, we get by inverting rule \Rule{d-case}:
    \begin{igather}
      \ExpSchemaTy{\Senv\Tysubst}{\Venv\Tysubst}{e_0}{t_0\Tysubst} \QLabel{e0t0}\\
      t_0 \IsSubty \UnionBig_{i \in I} \PgLowerTy{\Pg_i} \QLabel{t0sub}\\
      \Forall{i \in I}\quad
      \Monoty_i = (t_0 \WithoutTy \UnionBig_{j < i} \PgLowerTy{\Pg_j}) \Inter \PgUpperTy{\Pg_i}
      \QLabel{def-ti}\\
      {
        \begin{cases}
          \Monoty_i' = \ErlBot ~\textrm{and}~
          \Free{e_i} \subseteq \Dom{\Senv\Tysubst} \cup
          \Dom{\Venv\Tysubst
            \InterEnv (\PatEnv{\Monoty_i}{\Pg_i})}
          & \textrm{if}~ \Monoty_i \IsSubty \ErlBot
          \\
          \ExpSchemaTy{\Senv\Tysubst}{\Venv\Tysubst
            \InterEnv (\PatEnv{\Monoty_i}{\Pg_i})
          }{e_i}{t_i'}
          & \textrm{otherwise}
        \end{cases}
      } \QLabel{tip} \\
      \Free{\Pg_i} \subseteq \Dom{\Senv} \cup \Dom{\Venv}
      \\
      \ExpSchemaTy{\Senv\Tysubst}{\Venv\Tysubst}{e}{\UnionBig_{i \in I}{\Monoty_i'}} \\
      t' := \UnionBig_{i \in I}{t_i'} \IsSubty t\Tysubst \QLabel{def-tp}
    \end{igather}
    We derive with rule \Rule{c-case}:
    \begin{igather}
      \ConstrGenV{e}{t}{
        \BraceBelow{
          (\CaseConstr{C_0'}{\Multi{(\InConstr{\Venv''_i}{D_i}{\SubtyConstr{\Tyvar}{u_i'}})}}) \ConstrAnd
          \SubtyConstr{\TyvarAux}{t}
        }{{} = C}
      }{A_1} \QLabel{deriv-C} \\
      A_1 = \{\alpha, \beta\} \cupdisjoint B_0 \cupdisjoint
      \medcupdisjoint_{i \in I}(B_i \cupdisjoint B_i') \QLabel{def-A1}\\
      \ConstrGenV{e_0}{\alpha}{C_0}{B_0} \qquad (\alpha~\textrm{fresh}) \QLabel{deriv-C0}
    \end{igather}
    and for all $i \in I$
    \begin{igather}
      u_i = (\alpha \WithoutTy \UnionBig_{j < i} \PgLowerTy{\Pg_j}) \Inter \PgUpperTy{\Pg_i}\\
      \PatTyEnvConstrV{u_i}{\Pg_i}{C_i}{\Venv_i}{B_i} \QLabel{deriv-Cip}\\
      \ConstrGenV{e_i}{\beta}{D_i}{B_i'} \QLabel{gen-beta}\\
      u_i' = (\UnionBig_{j < i} \PgLowerTy{\Pg_j}) \Union \Neg \PgUpperTy{\Pg_i} \QLabel{def-uip}
    \end{igather}
    All type variable sets $B_0, B_i, B_i'$ are assumed to be fresh. The $C_0'$ in \QRef{deriv-C}
    is defined as follows:
    \begin{igather}
      C_0' =
      C_0 \ConstrAnd \MedConstrAnd\nolimits_{i \in I}C_i \ConstrAnd
      (\SubtyConstr{\Tyvar}{\UnionBig_{i \in I}{\PgLowerTy{\Pg_i}}}) \QLabel{def-C0p}
    \end{igather}
    Define
    \begin{igather}
      \Tysubst^{\star} := \Tysubst \cupdisjoint [ t_0/\alpha ] \QLabel{def-theta-star}
    \end{igather}
    Then we can apply the \IH{} to \QRef{e0t0} and with \QRef{deriv-C0} we get
    \begin{igather}
      \ConstrRewV{\Venv}{C_0}{d_0}{B_0'}\\
      \SubstSolves{\Tysubst^\star \cupdisjoint \Tysubst_0'}{d_0} \QLabel{deriv-d0}\\
      \textrm{for some}~\Tysubst_0'~\textrm{with}~\Dom{\Tysubst_0'} = B_0 \cupdisjoint B_0' \QLabel{dom-theta0p}
    \end{igather}
    With \Cref{lem:tp-simp-constr} and \QRef{deriv-Cip}
    we know that $C_i$ are simple constraints for all $i \in I$.
    Hence with \QRef{def-C0p} and because simple constraints rewrite
    to simple constraints, we get
    \begin{igather}
      \ConstrRewV{\Venv}{C_0'}{
        \BraceBelow{
          d_0 \ConstrAnd \MedConstrAnd\nolimits_{i \in I}C_i \ConstrAnd
          (\SubtyConstr{\Tyvar}{\UnionBig_{i \in I}{\PgLowerTy{\Pg_i}}})
        }{{} = d_0'}
      }{B_0'} \QLabel{deriv-d0p}
    \end{igather}
    For each branch $i \in I$ (note that
    $\Free{\UnionBig_{j < i} \PgLowerTy{\Pg_j}} = \emptyset$
    and $\Free{\PgUpperTy{\Pg_i}} = \emptyset$ by \Cref{lem:gp-props}(v)), we have
    by definition of $t_i$ in \QRef{def-ti}:
    \begin{igather}
      u_i \Tysubst^\star = t_i \IsSubty \PgUpperTy{\Pg_i}
    \end{igather}
    Assume $\Pg_i = \WithGuard{p_i}{g_i}$.
    With \Cref{lem:pg-constr-satisfy} and \QRef{deriv-Cip} there exists
    $\Tysubst^\star_i$ such that
    \begin{igather}
      \Dom{\Tysubst^\star_i} = B_i \QLabel{dom-theta-i-star}\\
      \SubstSolves{\Tysubst^\star \cupdisjoint \Tysubst^\star_i} C_i \QLabel{deriv-Ci-Cip}\\
      \forall x \in \PatBound{p_i} \cup \Free{g_i}:\\
      (\Venv_i \InterEnv \Venv_i' 
      )(x)(\Tysubst^\star \cupdisjoint \Tysubst^\star_i) \IsSubty
      \BraceBelow{(\PatEnv{t_i}{\Pg_i})}{{} =: \Venv^\star_i}(x)
      \QLabel{env-subty}
    \end{igather}
    From the definition of $\Tysubst^\star$ in \QRef{def-theta-star} and from \QRef{t0sub}, we have
    \begin{igather}
      \SubstSolves{\Tysubst^\star}{\SubtyConstr{\alpha}{\UnionBig_{i \in I}{\PgLowerTy{\Pg_i}}}}
    \end{igather}
    Hence also with the definition of $d_0'$ in \QRef{deriv-d0p}, and with \QRef{deriv-d0}, \QRef{deriv-Ci-Cip}
    \begin{igather}
      \SubstSolves{
        \BraceBelow{
          \Tysubst^\star \cupdisjoint \Tysubst_0' \cupdisjoint \medcupdisjoint_{i \in I} \Tysubst^\star_i
        }{{} =: \Tysubst^\dstar} \QLabel{entails-d0p}
      }{d_0'}
    \end{igather}
    Define
    \begin{igather}
      \Tysubst^\top := \Tysubst^\dstar \cupdisjoint [ t'/\beta ] \QLabel{def-subst-T}
    \end{igather}
    Assume $i \in I$. We now prove three goals (G1), (G2), (G3):
    \begin{igather}
      \ConstrRewV{\Venv \InterEnv \Venv_i}{D_i}{d_i}{B_i''} \quad
      \textrm{for some}~d_i~\textrm{ and fresh}~B_i''\tag{G1}
      \\
      \textrm{there exists}~\Tysubst_i'~\textrm{with}~\Dom{\Tysubst_i'} = B_i' \cupdisjoint B_i'' \tag{G2}
      \\
      \SubstSolves{\Tysubst' \cupdisjoint \Tysubst_i'}{d_i} ~\textrm{or}~ t_i \IsSubty \ErlBot  \tag{G3}
    \end{igather}
    \begin{CaseDistinction}{on whether or not $t_i \IsSubty \bot$}
      \CdCase{not $t_i \IsSubty \bot$} We have from \QRef{tip} and the definition of $\Venv_i^\star$ in \QRef{env-subty}:
      \begin{igather}
        \ExpSchemaTy{\Senv\Tysubst}{(\Venv\Tysubst \InterEnv \Venv_i^\star)}{e_i}{t_i'} \QLabel{type-ei1}
      \end{igather}
      Assume the following goal (G4), which we will prove shortly:
      \begin{igather}
        (\Venv \InterEnv \Venv_i) \Tysubst^\top \IsSubty \Venv\Tysubst \InterEnv \Venv_i^\star \tag{G4}
      \end{igather}
      Then we get from \QRef{type-ei1} and \QRef{def-tp} by \Cref{lem:weakening} and rule \Rule{d-sub}
      \begin{igather}
        \ExpSchemaTy{\Senv\Tysubst}{(\Venv \InterEnv \Venv_i)\Tysubst^\top}{e_i}{\beta\Tysubst^\top} \QLabel{type-ei2}
      \end{igather}
      With \QRef{gen-beta}, \QRef{type-ei2}, and the \IH{} then for some $\Tysubst_i'$
      \begin{igather}
        \ConstrRewV{\Venv \InterEnv \Venv_i}{D_i}{d_i}{B_i''}      \\
        \SubstSolves{\Tysubst^\top \cupdisjoint \Tysubst_i'}{d_i}     \\
        \Dom{\Tysubst_i'} = B_i' \cupdisjoint B_i''                 \\
      \end{igather}
      To finish the case $t_i \not\leq \bot$, we still need to prove (G4). That is, we need to show
      $(\Venv \InterEnv \Venv_i) \Tysubst^\top \IsSubty \Venv\Tysubst \InterEnv \Venv_i^\star$.
      We have by \Cref{lem:gp-props} and the definitions of $\Tysubst_i^\star$ and $\Tysubst_i''$ that
      $\Dom{\Venv_i^\star} = \PatBound{p_i} \cup \Free{g_i} = \Dom{\Venv_i}$.
      Now assume $x \in \Dom{\Venv} \cup \Dom{\Venv_i^\star}$.
      We need to show
      $(\Venv \InterEnv \Venv_i) \Tysubst^\top(x) \IsSubty (\Venv\Tysubst \InterEnv \Venv_i^\star)(x)$.
      \begin{itemize}
      \item If $x \in \Dom{\Venv}$ and $x \notin \Dom{\Venv_i^\star} = \Dom{\Venv_i}$ then
        \begin{igather}
          (\Venv \InterEnv \Venv_i) \Tysubst^\top(x) =
          \Venv(x)\Tysubst^\top = \Venv(x)\Tysubst =
          (\Venv\Tysubst \InterEnv \Venv_i^\star)(x)
        \end{igather}
        The second equality holds because
        \begin{igather}
          \begin{array}{c}
            \Tysubst^\top =
            \Tysubst^\dstar \cupdisjoint [ t'/\beta ] =
            \Tysubst^\star \cupdisjoint \Tysubst_0' \cup (\medcupdisjoint_{j \in I}{\Tysubst_j^\star})
            \cupdisjoint [ t'/\beta ] =
            \\
            \Tysubst \cupdisjoint \Tysubst_0' \cup (\medcupdisjoint_{j \in I}{\Tysubst_j^\star})
            \cupdisjoint [t_0/\alpha, t'/\beta] \QLabel{eq-theta-top}
          \end{array}
        \end{igather}
        and all types variables in $\Dom{\Tysubst_0'}, \Dom{\medcupdisjoint_{j \in I}\Tysubst_j^\star}$, as well as
        $\alpha,\beta$ are fresh.
      \item If $x \in \Dom{\Venv}$ and $x \in \Dom{\Venv_i^\star} = \Dom{\Venv_i}$ then
        \[
          \begin{array}{r@{~}c@{~}l}
          (\Venv \InterEnv \Venv_i) \Tysubst^\top(x)
          & = & \textrm{by}~ \QRef{eq-theta-top},
            ~\textrm{freshness of type}
          \\ && \textrm{variables, definition of} \InterEnv{}{}
          \\
          \Venv(x)\Tysubst \Inter \Venv_i(x)\Tysubst^\top
          & \IsSubty & \textrm{by}~\QRef{env-subty}
          \\
          \Venv(x)\Tysubst \Inter \Venv_i^\star(x) &=
          \\
          (\Venv\Tysubst \InterEnv \Venv_i^\star)(x)
          \end{array}
        \]
      \item If $x \notin \Dom{\Venv}$ and $x \in \Dom{\Venv_i^\star} = \Dom{\Venv_i}$ then
        \begin{igather}
          \begin{array}{c}
          (\Venv \InterEnv \Venv_i) \Tysubst^\top(x) = \Venv_i(x)\Tysubst^\top
          \ReasonAbove{\IsSubty}{\QRef{env-subty}} \Venv_i^\star(x) =
            (\Venv\Tysubst \InterEnv \Venv_i^\star)(x)
          \end{array}
        \end{igather}
      \end{itemize}
      This finishes the proof of (G4) and thus the case $t_i \not\leq \bot$.
      \CdCase{$t_i \leq \bot$}
      We have
      \begin{igather}
        \Free{D_i} \ReasonAbove{\subseteq}{\QRef{gen-beta}, \Cref{lem:free-C-e}}
        \Free{e_i} \ReasonAbove{\subseteq}{\QRef{tip}}
        \Dom{\Senv\Tysubst} \cup
        \Dom{\Venv\Tysubst \InterEnv (\PatEnv{\Monoty_i}{\Pg_i})} \\
        \ReasonAbove{=}{\QRef{deriv-Cip}, \Cref{lem:gp-props}}
        \Dom{\Senv\Tysubst} \cup \Dom{\Venv \InterEnv \Venv_i}
      \end{igather}
      With \Cref{lem:always-constr-rew} then
      \begin{igather}
        \ConstrRewV{\Venv \InterEnv \Venv_i}{D_i}{d_i}{B_i''}
      \end{igather}
      for some $d_i$ and fresh $B_i''$.
      In this case, define
      \begin{igather}
        \Tysubst_i' := [ \bot/\alpha \mid \forall \alpha \in B_i' \cupdisjoint B_i'' ]
      \end{igather}
    \end{CaseDistinction}
    This finishes the proof of goals (G1), (G2), and (G3).

    With \QRef{deriv-C}, we have $\ConstrGenV{e}{t}{C}{A_1}$.
    With the definition of $C$ in \QRef{deriv-C}, with \QRef{deriv-d0p}, (G1), and with rule \Rule{rc-case}, we get
    \begin{igather}
      \ConstrRewV{\Venv}{C}{
        \BraceBelow{
          d_0' \ConstrAnd \MedConstrAnd_{i \in I}(d_i \ConstrOr \SubtyConstr{\alpha}{u_i'})
        }{{}=: d}
      }{A_2} \QLabel{deriv-d}\\
      A_2 = B_0' \cupdisjoint \medcupdisjoint_{i \in I}{B_i''} \QLabel{def-A2}
    \end{igather}
    We define $\Tysubst'$ as follows:
    \begin{igather}
      \Tysubst' =: [ t_0/\alpha, t'/\beta ] \cupdisjoint
      \Tysubst_0' \cupdisjoint \medcupdisjoint_{i \in I}(\Tysubst_i' \cupdisjoint \Tysubst_i^\star) \QLabel{def-theta-p}
    \end{igather}
    and verify that
    $\SubstSolves{\Tysubst \cupdisjoint \Tysubst'}{d}$ and
    $\Dom{\Tysubst'} = A_1 \cupdisjoint A_2$.
    We have
    \begin{igather}
      \Tysubst \cupdisjoint \Tysubst'
      \ReasonAbove{=}{\QRef{def-theta-star}}
      \Tysubst^\star \cupdisjoint [ t'/\beta ] \cupdisjoint
      \Tysubst_0' \cupdisjoint \medcupdisjoint_{i \in I}(\Tysubst_i' \cupdisjoint \Tysubst_i^\star)
      \ReasonAbove{=}{\QRef{entails-d0p}}
      \Tysubst^\dstar \cupdisjoint
      [ t'/\beta ]
      \cupdisjoint \medcupdisjoint_{i \in I}\Tysubst_i' \\
      \ReasonAbove{=}{\QRef{def-subst-T}}
      \Tysubst^\top \cupdisjoint \medcupdisjoint_{i \in I}\Tysubst_i' \QLabel{eq-subst}
    \end{igather}
    We then have for all $i \in I$
    \begin{igather}
      \SubstSolves{\Tysubst \cupdisjoint \Tysubst'}{d_i \ConstrOr (\SubtyConstr{\alpha}{u_i'})} \QLabel{solve-di-or}
    \end{igather}
    because \ldots
    \begin{itemize}
    \item \ldots if $t_i \not\leq \bot$ then $\SubstSolves{\Tysubst \cupdisjoint \Tysubst'}{d_i}$ by
      (G3) and \QRef{eq-subst}.
    \item \ldots if $t_i \leq \bot$ then $\SubstSolves{\Tysubst \cupdisjoint \Tysubst'}{\SubtyConstr{\alpha}{u_i'}}$
      by the following reasoning:
      \begin{igather}
        \alpha(\Tysubst \cupdisjoint \Tysubst') = t_0 \ReasonAbove{\IsSubty}{\QRef{t0sub}}
        \UnionBig_{i \in I} \PgLowerTy{\Pg_i} \\
        u_i'(\Tysubst\cupdisjoint\Tysubst') \ReasonAbove{=}{\QRef{def-uip}~\textrm{and}~\Cref{lem:gp-props}}
        (\UnionBig_{j < i} \PgLowerTy{\Pg_j}) \Union \Neg \PgUpperTy{\Pg_i}\\
        t_i \ReasonAbove{=}{\QRef{def-ti}}
        (t_0 \WithoutTy \UnionBig_{j < i} \PgLowerTy{\Pg_j}) \Inter \PgUpperTy{\Pg_i}\\
      \end{igather}
      Then with $t_i \leq \bot$ by \Cref{lem:bot-implies-sub}
      \begin{igather}
        \alpha(\Tysubst \cupdisjoint \Tysubst') \IsSubty
        u_i' = u_i' (\Tysubst \cupdisjoint \Tysubst')
      \end{igather}
    \end{itemize}
    Further, we have with \QRef{entails-d0p} and \QRef{eq-subst} that
    $\SubstSolves{\Tysubst \cupdisjoint \Tysubst'}{d_0'}$. Hence with the definition of
    $d$ in \QRef{deriv-d} and with \QRef{solve-di-or} then
    \begin{igather}
     \SubstSolves{\Tysubst \cupdisjoint \Tysubst'}{d}
    \end{igather}
    as required. It remains to show that $\Dom{\Tysubst'} = A_1 \cupdisjoint A_2$.
    With (G2) we have
    $\Dom{\Tysubst_i'} = B_i'' \cupdisjoint B_i'''$
    and with \QRef{dom-theta-i-star} $\Dom{\Tysubst_i^\star} = B_i \cupdisjoint B_i'$
    for all $i \in I$.
    With \QRef{dom-theta0p} we have $\Dom{\Tysubst_0'} = B_0 \cupdisjoint B_0'$.
    With the definition of $\Tysubst'$ in \QRef{def-theta-p} then
    \begin{igather}
      \Dom{\Tysubst'} = \{\alpha,\beta\} \cupdisjoint B_0 \cupdisjoint B_0' \cupdisjoint
      \medcupdisjoint_{i \in I}(B_i \cupdisjoint B_i' \cupdisjoint B_i'')
    \end{igather}
    Finally, with \QRef{def-A1} and \QRef{def-A2} then $A_1 \cupdisjoint A_2 = \Dom{\Tysubst'}$.
  \end{CaseDistinction} \qed
\end{proof}

\begin{definition}
  The relations $\DefEnvGen{\DefSym}{\Senv}$
  and $\DefEnvGen{\DefSym_{i \in I}}{\Senv}$ are defined by the following inference rules:
  \begin{mathpar}
    \inferrule{
      \alpha~\textrm{fresh}
    }{
      \DefEnvGen{(x = e)}{ \{x : \alpha\} }
    }

    \inferrule{
      \FreeTyVars{\sigma} = \emptyset
    }{
      \DefEnvGen{(x : \sigma = e)}{ \{x : \sigma\} }
    }

    \inferrule{
      (\forall i \in I)~~\DefEnvGen{\DefSym_i}{\Senv_i}
    }{
      \DefEnvGen{\DefSym_{i \in I}}{ \Senv_{i \in I} }
    }
  \end{mathpar}
\end{definition}

\begin{lemma}[Completeness of environment generation for single definitions]
  \label{lem:completeness-algo-def-single}
  Assume $\DefEnvGen{\DefSym}{\Senv'}$ and $\Senv' \subseteq \Senv$ for some $\Senv$.
  If $\DefOk{\Senv\Tysubst}{\DefSym}$, then
  $\DefConstrGenV{\DefSym}{C}{\Senv'}{A_1}$ and
  $\ConstrRewFull{\Senv}{\EmptyEnv}{C}{d}{A_2}$
  and there exists $\Tysubst'$ with
  $\SubstSolves{\Tysubst \cupdisjoint \Tysubst'}{d}$
  and $\Dom{\Tysubst'} = (A_1 \cupdisjoint A_2) \setminus \FreeTyVars{\Senv'}$.
\end{lemma}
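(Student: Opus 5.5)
The plan is a case distinction on the form of $\DefSym$, following the two rules of $\DefEnvGen{\DefSym}{\Senv'}$, with \Cref{lem:completeness-algo-exp} doing the main work in each case.

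\emph{Unannotated case $\DefSym = (\DefNt{x}{\Abs{y}{e}})$.} Here $\Senv' = \{x:\alpha\}$ with $\alpha$ fresh, so $\FreeTyVars{\Senv'} = \{\alpha\}$. Since $\Senv' \subseteq \Senv$, we have $(\Senv\Tysubst)(x) = \alpha\Tysubst$. Inverting \Rule{d-def-no-annot} then gives $\ExpSchemaTy{\Senv\Tysubst}{\EmptyEnv}{\Abs{y}{e}}{\alpha\Tysubst}$. Because $\EmptyEnv = \EmptyEnv\Tysubst$, \Cref{lem:completeness-algo-exp} applies with type $\alpha$. It yields $\ConstrGenV{\Abs{y}{e}}{\alpha}{C}{A'}$, a rewriting $\ConstrRewFull{\Senv}{\EmptyEnv}{C}{d}{A_2}$, and a substitution $\Tysubst'$ with $\Dom{\Tysubst'} = A' \cupdisjoint A_2$ and $\SubstSolves{\Tysubst \cupdisjoint \Tysubst'}{d}$. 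Rule \Rule{c-def-no-annot} then gives $\DefConstrGenV{\DefSym}{C}{\Senv'}{A_1}$ with $A_1 = A' \cupdisjoint \{\alpha\}$. Removing $\alpha$ leaves $(A_1 \cupdisjoint A_2) \setminus \{\alpha\} = A' \cupdisjoint A_2 = \Dom{\Tysubst'}$, as required. One bookkeeping point: $\alpha$ must already lie in $\Dom{\Tysubst}$, or at least be left untouched by $\Tysubst'$. This holds because $\alpha \in \FreeTyVars{\Senv}$ and the fresh sets produced by the lemma avoid it.

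\emph{Annotated case $\DefSym = (\Def{x}{\Polyty}{\Abs{y}{e}})$.} Here $\Polyty$ is closed, so $\FreeTyVars{\Senv'} = \emptyset$ and $\Polyty\Tysubst = \Polyty$. Inverting \Rule{d-def-annot} gives, for every $i \in I$, $\ExpSchemaTy{\Senv\Tysubst}{\{y:t_i'\}}{e}{t_i}$. Since $t_i, t_i'$ are closed, this can be read as $\ExpSchemaTy{\Senv\Tysubst}{\{y:t_i'\}\Tysubst}{e}{t_i\Tysubst}$. Applying \Cref{lem:completeness-algo-exp} to each $i$, with fresh and pairwise disjoint variable sets, yields
\begin{itemize}
\item constraints $\ConstrGenV{e}{t_i}{C_i}{A_{1,i}}$,
\item rewritings $\ConstrRewFull{\Senv}{\{y:t_i'\}}{C_i}{d_i}{A_{2,i}}$,
\item substitutions $\Tysubst_i'$ with $\Dom{\Tysubst_i'} = A_{1,i} \cupdisjoint A_{2,i}$ and $\SubstSolves{\Tysubst \cupdisjoint \Tysubst_i'}{d_i}$.
\end{itemize}
Rule \Rule{c-def-annot} gives $C = \MedConstrAnd_i(\DefConstr{\{y:t_i'\}}{C_i})$ with $A_1 = A'' \cupdisjoint \medcupdisjoint_i A_{1,i}$, where $A''$ is the quantified set of $\Polyty$. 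Rules \Rule{rc-def} and \Rule{rc-and} give $d = \MedConstrAnd_i d_i$. Set $\Tysubst' = \medcupdisjoint_i \Tysubst_i'$. Disjointness of the domains means each $d_i$, whose variables outside $\Dom{\Tysubst}$ lie in $A_{1,i}\cupdisjoint A_{2,i}$, is still solved by $\Tysubst\cupdisjoint\Tysubst'$, hence so is $d$.

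The expected obstacle is matching the domain equation exactly. The bound variables $A''$ of $\Polyty$ are counted in $A_1$ by \Rule{c-def-annot}, but no $\Tysubst_i'$ covers them. I would extend $\Tysubst'$ by a dummy mapping such as $[\ErlBot/\beta \mid \beta \in A'']$. Since $\Polyty$ is closed, these variables never occur free in $d$, so this does not affect solvability, and it makes $\Dom{\Tysubst'} = A_1 \cupdisjoint A_2$ hold exactly. A smaller point is to make sure that freshness of all generated sets relative to $\Dom{\Tysubst}$ and $\FreeTyVars{\Senv}$ is maintained throughout, so that every $\cupdisjoint$ above is well defined.
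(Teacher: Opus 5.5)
Your case split, the inversions of \Rule{d-def-no-annot} and \Rule{d-def-annot}, the per-component use of \Cref{lem:completeness-algo-exp}, and the reassembly via \Rule{c-def-no-annot}, \Rule{c-def-annot}, \Rule{rc-def} and \Rule{rc-and} follow the paper's proof exactly, and your unannotated case is fine. You also correctly notice that \Rule{c-def-annot} puts the quantified set $A''$ of $\Polyty$ into $A_1$; the paper's proof silently writes $A_1=\medcupdisjoint_{i\in I}B_i$ instead. Your repair of this point fails, however, because it rests on a false premise. Closedness of $\Polyty=\TyScm{A''}{\InterBig_{i}(t_i'\to t_i)}$ does not make $t_i,t_i'$ closed. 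It only says that their free variables lie in $A''$, and by the paper's meaningfulness assumption every variable of $A''$ genuinely occurs. These variables therefore do occur in $d$: through \Rule{rc-var} on $\{y:t_i'\}$, and through the constraints generated against the expected type $t_i$. They can also occur in the ranges of the $\Tysubst_i'$ delivered by \Cref{lem:completeness-algo-exp}, so a simultaneous extension by $[\ErlBot/\beta]$ changes the constraints. Concretely, take $\Def{x}{\TyScm{\gamma}{\gamma\to\gamma}}{\Abs{y}{\App{(\Abs{z}{z})}{y}}}$. Generating at $\gamma$ and rewriting under $\{y:\gamma\}$ yields
\[
  d = \SubtyConstr{\alpha_2}{\beta_2} \ConstrAnd \SubtyConstr{\alpha_2\to\beta_2}{\alpha_1\to\beta_1} \ConstrAnd \SubtyConstr{\gamma}{\alpha_1} \ConstrAnd \SubtyConstr{\beta_1}{\gamma}.
\]
Any solution $\rho$ leaving $\gamma$ untouched satisfies $\gamma \IsSubty \alpha_1\rho \IsSubty \alpha_2\rho \IsSubty \beta_2\rho \IsSubty \beta_1\rho \IsSubty \gamma$, hence $\beta_1\rho \TyEquiv \gamma$. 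Under $\rho \cupdisjoint [\ErlBot/\gamma]$ the last conjunct becomes $\beta_1\rho \IsSubty \ErlBot$, i.e.\ $\gamma \IsSubty \ErlBot$, which is false.

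The fix is easy, and there are two options.
\begin{itemize}
\item Pad with the identity $[\beta/\beta \mid \beta \in A'']$, if domains are read syntactically.
\item Convention-independently, obtain $\Tysubst''$ from $\Tysubst'$ by applying $[\ErlBot/\beta \mid \beta\in A'']$ to its range and then adding $[\ErlBot/\beta \mid \beta\in A'']$. Every type $s$ then satisfies $s(\Tysubst \cupdisjoint \Tysubst'') = (s(\Tysubst \cupdisjoint \Tysubst'))[\ErlBot/\beta \mid \beta\in A'']$, so solvability is preserved by \Cref{prop:subty}(iv). $\Tysubst$ itself is unaffected once $A''$ is $\alpha$-renamed away from $\Dom{\Tysubst}\cup\FreeTyVars{\Tysubst}$.
\end{itemize}
That renaming (the paper's \enquote{with $A$ fresh}) is also the correct justification for $t_i\Tysubst=t_i$ and $t_i'\Tysubst=t_i'$, replacing your \enquote{$t_i,t_i'$ are closed}. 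The same slip affects your remark that the variables of $d_i$ outside $\Dom{\Tysubst}$ lie in $A_{1,i}\cupdisjoint A_{2,i}$: they also include $A''$ and $\FreeTyVars{\Senv}$. The conclusion $\SubstSolves{\Tysubst\cupdisjoint\Tysubst'}{d}$ still survives, because each $\Dom{\Tysubst_j'}$ is fresh with respect to all of these variables (\Cref{lem:free-tyvars-constr}).
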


\begin{proof}
  \begin{CaseDistinction}{on the form of $\DefSym$}

    \CdCase{$\DefSym = (x = \lambda y.e)$}
    With $\DefEnvGen{\DefSym}{\Senv'}$ we have $\Senv' = \{x : \alpha\}$ for some
    fresh $\alpha$. From $\Senv' \subseteq \Senv$ then $\Senv(x) = \alpha$.
    From assumption $\DefOk{\Senv\Tysubst}{\DefSym}$, we get by inverting rule
    \Rule{d-def-no-annot}
    \begin{igather}
      \ExpSchemaTy{\Senv\Tysubst}{\EmptyEnv}{\lambda y.e}{\alpha\Tysubst}
    \end{igather}
    With \Cref{lem:completeness-algo-exp} there exists $\Tysubst'$ with
    \begin{igather}
      \ConstrGenV{\lambda y.e}{\alpha}{C}{A_1'}\\
      \ConstrRewV{\EmptyEnv}{C}{d}{A_2}\\
      \SubstSolves{\Tysubst \cupdisjoint \Tysubst'}{d}\\
      \Dom{\Tysubst'} = A_1' \cupdisjoint A_2
    \end{igather}
    With rule \Rule{c-def-no-annot} also
    $\DefConstrGenV{\DefSym}{C}{\Senv'}{A_1}$ where
    $A_1 = \{\alpha\} \cupdisjoint A_1'$.

    \CdCase{$\DefSym = (x : \sigma = \lambda y.e)$}
    From $\DefOk{\Senv\Tysubst}{\DefSym}$ by inverting rule \Rule{d-def-annot}
    \begin{igather}
      \Senv\Tysubst(x) = \sigma = \Senv(x)\\
      \FreeTyVars{\sigma} = \emptyset \QLabel{sigma-closed}\\
      \sigma = \TyScm{A}{\InterBig_{i \in I}{(t_i' \to t_i)}} \\
      \Forall{i \in I}\quad
      \ExpSchemaTy{\Senv\Tysubst}{\{y:t_i'\}}{e}{t_i}
    \end{igather}
    Hence, with $A$ fresh
    $\ExpSchemaTy{\Senv\Tysubst}{\{y:t_i'\}\Tysubst}{e}{t_i\Tysubst}~\Forall{i \in I}$.
    Thus with \Cref{lem:completeness-algo-exp} for all $i \in I$:
    \begin{igather}
      \ConstrGenV{e}{t_i}{C_i}{B_i} \\
      \ConstrRewV{y:t_i'}{C_i}{d_i}{B_i'} \QLabel{rew-di}\\
      \SubstSolves{\Tysubst \cupdisjoint \Tysubst_i}{d_i} \QLabel{solve-di}\\
      \Dom{\Tysubst_i} = B_i \cupdisjoint B_i' \QLabel{dom-theta-i}
    \end{igather}
    Further, $\Senv' = \{x : \sigma\}$ from the assumption
    $\DefEnvGen{\DefSym}{\Senv'}$. So with rule \Rule{c-def-annot}
    \begin{igather}
      \DefConstrGenV{\DefSym}{C}{\Senv'}{A_1} \\
      C = \MedConstrAnd\nolimits_{i \in I} (\DefConstr{ \{y : t_i'\} }{C_i})\\
      A_1 = \medcupdisjoint_{i \in I}B_i
    \end{igather}
    And with \QRef{rew-di} and rules \Rule{rc-def} and \Rule{rc-and}
    \begin{igather}
      \ConstrRewV{\EmptyEnv}{C}{d}{A_2}\\
      d = \MedConstrAnd_{i \in I}d_i\\
      A_2 = \medcupdisjoint_{i \in I}B_i'
    \end{igather}
    Define $\Tysubst' = \medcupdisjoint\nolimits_{i \in I} \Tysubst_i$.
    Then $\SubstSolves{\Tysubst \cupdisjoint \Tysubst'}{d}$ by \QRef{solve-di}.
    And with \QRef{dom-theta-i}
    $\Dom{\Tysubst'} = A_1 \cupdisjoint A_2$. Note that $\FreeTyVars{\Senv'} = \emptyset$
    from \QRef{sigma-closed}.
  \end{CaseDistinction} \qed
\end{proof}

\begin{lemma}[Completeness of environment generation for multiple definitions]
  \label{lem:completeness-algo-def-multi}
  Assume $\Senv = \Senv_{i \in I}$ and $\DefEnvGen{\DefSym_{i \in I}}{\Senv}$ and
  $\DefOk{\Senv\Tysubst}{\DefSym_i}$ for all $i \in I$.
  Then $\DefConstrGenV{\DefSym_i}{C_i}{\Senv_i}{A_i}$ for all $i \in I$ and some $C_i$. Further,
  $\ConstrRewFull{\Senv}{\EmptyEnv}{\MedConstrAnd\nolimits_{i \in I} C_i}{d}{A}$
  and there exists $\Tysubst'$ with
  $\SubstSolves{\Tysubst \cupdisjoint \Tysubst'}{d}$
  and
  $\Dom{\Tysubst'} = (A \cupdisjoint \medcupdisjoint\nolimits_{i \in I}A_i) \setminus \FreeTyVars{\Senv}$.
\end{lemma}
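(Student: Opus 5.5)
The plan is to apply \Cref{lem:completeness-algo-def-single} once per definition and then glue the results together with rules \Rule{rc-and} and disjoint unions of substitutions. Fix $i \in I$. From $\DefEnvGen{\DefSym_{i \in I}}{\Senv}$ we obtain, by inverting the last inference rule, $\DefEnvGen{\DefSym_i}{\Senv_i}$ with $\Senv_i \subseteq \Senv_{i \in I} = \Senv$. Together with the assumption $\DefOk{\Senv\Tysubst}{\DefSym_i}$, \Cref{lem:completeness-algo-def-single} yields:
\begin{itemize}
\item the derivations $\DefConstrGenV{\DefSym_i}{C_i}{\Senv_i}{A_i}$ and $\ConstrRewFull{\Senv}{\EmptyEnv}{C_i}{d_i}{A_i'}$;
\item a substitution $\Tysubst_i'$ with $\SubstSolves{\Tysubst \cupdisjoint \Tysubst_i'}{d_i}$ and $\Dom{\Tysubst_i'} = (A_i \cupdisjoint A_i') \setminus \FreeTyVars{\Senv_i}$.
\end{itemize}
This already gives the first claim.

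Next we combine the pieces. Folding rule \Rule{rc-and} over the conjunction gives $\ConstrRewFull{\Senv}{\EmptyEnv}{\MedConstrAnd_{i \in I} C_i}{d}{A}$ with $d = \MedConstrAnd_{i\in I} d_i$ and $A = \medcupdisjoint_{i\in I} A_i'$. We then set $\Tysubst' = \medcupdisjoint_{i\in I}\Tysubst_i'$.

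Checking that this is well defined and disjoint is the main obstacle, and it is pure variable bookkeeping. The sets $A_i, A_i'$ arise from fresh type variables in separate derivations, so we may assume (by freshness, renaming if needed) that they are pairwise disjoint across $i$ and disjoint from $\Dom{\Tysubst}$. The variables in $\FreeTyVars{\Senv}$ are exactly the fresh $\alpha$'s introduced for unannotated definitions (the annotated schemes are closed). For the unannotated $\DefSym_i$, the lemma's domain removes $\alpha_i = \FreeTyVars{\Senv_i}$. However, $\alpha_i$ is also a member of $A_i$ (by \Rule{c-def-no-annot}), and $A_i$ is disjoint from every $A_j$ with $j \neq i$, so no other $\Tysubst_j'$ binds $\alpha_i$. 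Hence $\Dom{\Tysubst'} = \medcup_i (A_i \cupdisjoint A_i')\setminus \FreeTyVars{\Senv_i} = (A \cupdisjoint \medcupdisjoint_i A_i)\setminus\FreeTyVars{\Senv}$. Here we use that each $\FreeTyVars{\Senv_j}$ is contained in $A_j$, so removing all of $\FreeTyVars{\Senv}$ only affects the component $A_j$ itself. One must also ensure $\alpha_i \in \Dom{\Tysubst}$ rather than clashing with $\Tysubst'$; this holds since $\Tysubst$ is applied to $\Senv$ in the hypothesis, and we read $\Tysubst$ as defined on $\FreeTyVars{\Senv}$.

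Finally, solvability: each $d_i$ mentions only variables from $\Dom{\Tysubst}$ together with $A_i \cup A_i'$ (by freshness), so $(\Tysubst \cupdisjoint \Tysubst')$ agrees with $\Tysubst \cupdisjoint \Tysubst_i'$ on the variables of $d_i$. Therefore $\SubstSolves{\Tysubst\cupdisjoint\Tysubst'}{d_i}$ holds for every $i$, and hence the conjunction $d$ is solved.
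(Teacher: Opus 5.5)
Your proposal is correct and follows the paper's idea. The paper argues by induction on $|I|$, generalizing the claim to an outer scheme environment containing $\Senv$ and invoking \Cref{lem:completeness-algo-def-single} in the induction step. You instead apply that lemma to each $\DefSym_i$ directly, with the full $\Senv$ as outer environment, and glue the results with \Rule{rc-and}; this makes the generalization unnecessary and spells out the bookkeeping the paper omits.

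One small tightening: the fact that $d_i$ only mentions variables in $\FreeTyVars{\Senv}\cup A_i\cup A_i'$ follows from \Cref{lem:free-tyvars-constr}(i) and (iii), not from freshness alone. Also, since no $\Tysubst_k'$ binds a variable of $\FreeTyVars{\Senv}$, you do not need the extra reading $\FreeTyVars{\Senv}\subseteq\Dom{\Tysubst}$.
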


\begin{proof}
  The proof is by induction on $|I|$.
  For the induction to go through, we have to generalize the claim:

  \begin{quote}\em
    Assume $\Senv = \Senv_{i \in I}$ and $\DefEnvGen{\DefSym_{i \in I}}{\Senv}$ and
    $\DefOk{\Senv'\Tysubst}{\DefSym_i}$ for all $i \in I$ and some $\Senv'$ with $\Senv \subseteq \Senv'$.
    Then $\DefConstrGenV{\DefSym_i}{C_i}{\Senv_i}{A_i}$ for all $i \in I$ and some $C_i$. Further,
    $\ConstrRewFull{\Senv'}{\EmptyEnv}{\MedConstrAnd\nolimits_{i \in I} C_i}{d}{A}$
    and there exists $\Tysubst'$ with
    $\SubstSolves{\Tysubst \cupdisjoint \Tysubst'}{d}$
    and
    $\Dom{\Tysubst'} = (A \cupdisjoint \medcupdisjoint\nolimits_{i \in I}A_i) \setminus \FreeTyVars{\Senv}$.
  \end{quote}
  The base case with $|I| = 0$ holds trivially. For the induction step, assume $|I| = n$ and
  that the generalized claim already holds for $n - 1$. We then prove the claim for $n$
  using \Cref{lem:completeness-algo-def-single}. We omit the tedious but simple reasoning. \qed
\end{proof}

\begin{definition}[Relations between type schemes and scheme environments]~
  \begin{itemize}
  \item We say that type scheme $\sigma_1$ is more general than type scheme $\sigma_2$, written
    $\sigma_1 \MoreGeneral \sigma_2$, iff for all $t_2 \in \Inst{\sigma_2}$ there exists
    $t_1 \in \Inst{\sigma_1}$ with $t_1 \IsSubty t_2$.

  \item We say that a scheme environment $\Senv_1$ is more general than a scheme environment
    $\Senv_2$, written $\Senv_1 \MoreGeneral \Senv_2$, iff $\Dom{\Senv_1} = \Dom{\Senv_2}$
    and $\Senv_1(x) \MoreGeneral \Senv_2(x)$ for all $x \in \Dom{\Senv_1}$.
  \end{itemize}
\end{definition}

\begin{property}[Properties of $\TallySym$]\label{lem:props-tally}~
  \begin{EnumThm}
  \item If $\Tysubst \in \Tally{c}$ then $\SubstSolves{\Tysubst}{c}$.
  \item If $\SubstSolves{\Tysubst'}{c}$ then there exists $\Tysubst \in \Tally{c}$ and $\Tysubst''$
    with $\alpha \Tysubst' \TyEquiv \alpha\Tysubst\Tysubst''$ for all $\alpha \notin \FreeTyVars{\Tysubst}$.
  \item If $\Tysubst \in \Tally{c}$ then $\Dom{\Tysubst} = \FreeTyVars{c}$ and
    $\FreeTyVars{\Tysubst}$ are fresh.
  \end{EnumThm}
\end{property}

\begin{proof}
  See \citet[Theorem~A.35 on page~39]{journals/corr/CastagnaP016} and
  \citet{conf/popl/Castagna0XA15}. The generalization to constraints with disjunctions is straightforward.
\end{proof}
\begin{lemma}
  \label{lem:tally-better}
  If $\SubstSolves{\Tysubst}{c}$ and $\alpha \in \Dom{\Tysubst}$, then exists
  $\Tysubst' \in \Tally{c}$ such that $\Gen{\alpha\Tysubst'} \MoreGeneral \Gen{\alpha\Tysubst}$.
\end{lemma}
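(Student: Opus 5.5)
We obtain a tally solution from \Cref{lem:props-tally}(ii) and show that its generalization is more general, by exhibiting, for each instance of $\Gen{\alpha\Tysubst}$, a smaller instance of $\Gen{\alpha\Tysubst'}$.

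Let $\SubstSolves{\Tysubst}{c}$ and $\alpha \in \Dom{\Tysubst}$. By \Cref{lem:props-tally}(ii) there exist $\Tysubst' \in \Tally{c}$ and $\Tysubst''$ with $\beta\Tysubst \TyEquiv \beta\Tysubst'\Tysubst''$ for every $\beta \notin \FreeTyVars{\Tysubst'}$. By \Cref{lem:props-tally}(iii) the type variables in $\FreeTyVars{\Tysubst'}$ are fresh. Since $\alpha$ occurs in $c$, it is not one of these fresh variables, so $\alpha\Tysubst \TyEquiv (\alpha\Tysubst')\Tysubst''$.

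Next we unfold the definitions. We have $\Gen{\alpha\Tysubst'} = \TyScm{A'}{\alpha\Tysubst'}$ with $A' = \FreeTyVars{\alpha\Tysubst'}$, and $\Gen{\alpha\Tysubst} = \TyScm{A}{\alpha\Tysubst}$ with $A = \FreeTyVars{\alpha\Tysubst}$. Fix an instance $t_2 = (\alpha\Tysubst)\Tysubst_2 \in \Inst{\Gen{\alpha\Tysubst}}$ with $\Dom{\Tysubst_2} = A$. We must find $\Tysubst_1$ with $\Dom{\Tysubst_1} = A'$ such that $(\alpha\Tysubst')\Tysubst_1 \IsSubty t_2$.

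The natural choice is the composition $\Tysubst_1 = \RestrictDomain{(\Tysubst''\Tysubst_2)}{A'}$, that is, $\beta \mapsto (\beta\Tysubst'')\Tysubst_2$ for $\beta \in A'$. Since every free variable of $\alpha\Tysubst'$ lies in $A'$, we get $(\alpha\Tysubst')\Tysubst_1 = ((\alpha\Tysubst')\Tysubst'')\Tysubst_2$. From $\alpha\Tysubst \TyEquiv (\alpha\Tysubst')\Tysubst''$ and \Cref{prop:subty}(iv) (substitution preserves subtyping in both directions, hence equivalence) this yields $(\alpha\Tysubst')\Tysubst_1 \TyEquiv (\alpha\Tysubst)\Tysubst_2 = t_2$, and in particular $\IsSubty$. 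Hence $\Gen{\alpha\Tysubst'} \MoreGeneral \Gen{\alpha\Tysubst}$.

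The main obstacle is hygiene of type variables. First, we need that $\alpha$ is not among the fresh variables of $\FreeTyVars{\Tysubst'}$; this uses $\alpha \in \Dom{\Tysubst}$, and we should check that this implies $\alpha \in \FreeTyVars{c}$. Otherwise $\Tysubst$ may be wlog restricted to $\FreeTyVars{c}$, or, if $\alpha \notin \FreeTyVars{c}$, we take any tally solution, in which $\alpha$ is left unconstrained and so $\Gen{\alpha}$ is maximally general. Second, $\Tysubst_2$ may also touch variables of $\Tysubst''$'s range that are outside $A$. Since $A = \FreeTyVars{\alpha\Tysubst}$ and $\alpha\Tysubst$ is equivalent to $(\alpha\Tysubst')\Tysubst''$, we may also need to identify free variables up to type equivalence (meaningful variables), and argue that the variables applied by $\Tysubst_2$ outside the relevant set are irrelevant.
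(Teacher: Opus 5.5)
Your proof is correct and follows the paper's route. You obtain $\Tysubst'\in\Tally{c}$ and $\Tysubst''$ from \Cref{lem:props-tally}(ii) and (iii) with $\alpha\Tysubst\TyEquiv\alpha\Tysubst'\Tysubst''$, then instantiate $\Gen{\alpha\Tysubst'}$ by $\Tysubst''$ followed by the given instantiation; your explicit composition restricted to $A'$, combined with \Cref{prop:subty}(iv), is in fact a tidier version of the paper's $\Tysubst''\cupdisjoint\TysubstStar$ step. The worries in your last paragraph are unnecessary: freshness in \Cref{lem:props-tally}(iii) already excludes $\alpha$ (and your case split on $\alpha\in\FreeTyVars{c}$ covers this anyway), and since \Cref{prop:subty}(iv) holds for arbitrary substitutions, variables of $\alpha\Tysubst'\Tysubst''$ outside $A$ cause no trouble.
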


\begin{proof}
  \renewcommand\LabelQualifier{lem:tally-better}
  By \Cref{lem:props-tally}, we get $\Tysubst' \in \Tally{c}$ with $\FreeTyVars{\Tysubst'}$ fresh
  (so $\alpha \notin \FreeTyVars{\Tysubst'}$)
  and some $\Tysubst''$ such that
  \begin{igather}
    \alpha\Tysubst \TyEquiv \alpha\Tysubst'\Tysubst'' \QLabel{eq1}
  \end{igather}
  Suppose $A = \FreeTyVars{\alpha\Tysubst}$. Then $\Gen{\alpha\Tysubst} = \forall A . \alpha\Tysubst$.
  \Wlog{}, $A$ fresh and $A \cap \Dom{\Tysubst''} = \emptyset$.
  Suppose $u \in \Inst{\Gen{\alpha\Tysubst}}$. Hence, there exists $\TysubstStar$
  with $\Dom{\TysubstStar} = A$ and $u = \alpha\Tysubst\TysubstStar$. Then
  \begin{igather}
    u = \alpha\Tysubst\TysubstStar \ReasonAbove{\TyEquiv}{\QRef{eq1}}
    \alpha\Tysubst'\Tysubst''\TysubstStar
    \ReasonAbove{=}{A \cap \Dom{\Tysubst''} = \emptyset}
    \alpha\Tysubst'(\Tysubst'' \cupdisjoint \TysubstStar)
  \end{igather}
  Hence, $u \in \Inst{\Gen{\alpha\Tysubst'}}$, so
  $\Gen{\alpha\Tysubst'} \MoreGeneral \Gen{\alpha\Tysubst}$. \qed
\end{proof}

\begin{lemma}[Constraint rewriting with more general environments]
  \label{lem:rew-more-general}
  Assume $\ConstrRewFull{\Senv}{\Venv}{C}{d}{}$ and
  $\Senv' \MoreGeneral \Senv$ for some $\Senv'$.
  Then $\ConstrRewFull{\Senv'}{\Venv}{C}{d'}{A}$ for some $d'$.
  Further, if $\SubstSolves{\Tysubst}{d}$ for some $\Tysubst$ then
  $\SubstSolves{\Tysubst \cupdisjoint \Tysubst'}{d'}$
  for some $\Tysubst'$ with $\Dom{\Tysubst'} = A$.
\end{lemma}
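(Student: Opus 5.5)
The proof goes by induction on the derivation of $\ConstrRewFull{\Senv}{\Venv}{C}{d}{}$, i.e.\ structurally on $C$. The rewriting rules consult $\Senv$ only in \Rule{rc-var-poly} and in the side condition $x \notin \Dom{\Senv}$ of \Rule{rc-var}. Since $\Senv' \MoreGeneral \Senv$ implies $\Dom{\Senv'} = \Dom{\Senv}$, every rule application in the given derivation can be replayed with $\Senv'$ in place of $\Senv$. This produces $d'$ which agrees with $d$ everywhere except at leaves coming from \Rule{rc-var-poly}. Fresh type variable sets are generated anew, and we use the convention that the freshly introduced variables of $d'$ avoid $\Dom{\Tysubst}$ and $\FreeTyVars{\Tysubst}$. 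The domain bookkeeping $\Dom{\Tysubst'} = A$ then follows by taking disjoint unions of the substitutions built for the subderivations, as in \Cref{lem:completeness-algo-exp}.

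The cases \Rule{rc-subty} and \Rule{rc-var} are immediate: $d' = d$, $A = \emptyset$, and $\Tysubst'$ is the identity. For \Rule{rc-and}, \Rule{rc-def} and \Rule{rc-case} we apply the \IH{} to each premise; note that \Rule{rc-def} and \Rule{rc-case} only extend $\Venv$, and the statement is uniform in $\Venv$. We obtain substitutions $\Tysubst'_k$ on pairwise disjoint fresh sets and set $\Tysubst' = \medcupdisjoint_k \Tysubst'_k$. A solution of a conjunction solves both conjuncts. For a disjunction $\hat{d}_i \ConstrOr d_i$ we pick whichever disjunct $\Tysubst$ solves and use the corresponding \IH{}. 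The substitutions for the other disjunct live on fresh variables and do not matter.

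The interesting case is \Rule{rc-var-poly}. Here $\Senv(x) = \TyScm{B}{u}$ and $d = \SubtyConstr{u}{t}$; the variables in $B$ are the fresh variables of this step and lie in $\Dom{\Tysubst}$. Let $\Senv'(x) = \TyScm{B'}{u'}$ with $B'$ fresh, so $d' = \SubtyConstr{u'}{t}$ and $A = B'$. From $\SubstSolves{\Tysubst}{d}$ we have $u\Tysubst \IsSubty t\Tysubst$. Now $u\Tysubst$ is an instance of $\Senv(x)$ when $\Tysubst$ fixes the free variables of the scheme, which we expect because $\Senv$ comes from $\Gen{\cdot}$ and is closed. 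By $\Senv'(x) \MoreGeneral \Senv(x)$ there is $\Tysubst'$ with $\Dom{\Tysubst'} = B'$ and $u'\Tysubst' \IsSubty u\Tysubst$. Since $B'$ is fresh, $u'(\Tysubst \cupdisjoint \Tysubst') = u'\Tysubst'$ and $t(\Tysubst \cupdisjoint \Tysubst') = t\Tysubst$. Transitivity then gives $\SubstSolves{\Tysubst \cupdisjoint \Tysubst'}{d'}$.

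The main obstacle is this last case when $\Senv$ or $\Senv'$ has free type variables, because then $u\Tysubst$ need not lie in $\Inst{\Senv(x)}$. I would first try to discharge it by restricting attention to closed scheme environments. That restriction covers the only use of the lemma, rule \Rule{rc-prog} with $\Gen{\Senv\TysubstStar}$, and there $\Tysubst$ acts trivially on $u$ apart from the bound variables $B$. If needed, the definition of $\MoreGeneral$ can be read modulo instantiation of the free variables by $\Tysubst$.
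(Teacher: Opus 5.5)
Your overall structure matches the paper's proof: induction on the rewriting derivation, replaying it under $\Senv'$ using $\Dom{\Senv'} = \Dom{\Senv}$, and in \Rule{rc-case} picking the disjunct that $\Tysubst$ solves while padding the other substitution on fresh variables. The \Rule{rc-var-poly} case, however, has a genuine gap. You take $u\Tysubst$ as an instance of $\Senv(x) = \TyScm{B}{u}$, which is legitimate only when $\Tysubst$ fixes the free variables of the scheme. Your step $u'(\Tysubst \cupdisjoint \Tysubst') = u'\Tysubst'$ likewise needs $\Tysubst$ to fix the free variables of $\Senv'(x) = \TyScm{B'}{u'}$. What you actually prove is therefore a weaker lemma, either for closed environments or for a modified $\MoreGeneral$. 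The lemma as stated has no such hypothesis. The weakened version would suffice for the single use in the completeness proof, but that changes the statement rather than proving it.

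The missing idea is to instantiate the scheme with the identity rather than with $\Tysubst$.
\begin{itemize}
\item Since $u \in \Inst{\TyScm{B}{u}}$, the assumption $\Senv'(x) \MoreGeneral \Senv(x)$ gives $\TysubstStar$ with $\Dom{\TysubstStar} = B'$ and $u'\TysubstStar \IsSubty u$.
\item Subtyping is stable under substitution (\Cref{prop:subty}(iv)), so $u'\TysubstStar\Tysubst \IsSubty u\Tysubst \IsSubty t\Tysubst$.
\item Define $\Tysubst' = [\alpha\TysubstStar\Tysubst/\alpha \mid \alpha \in B']$, choosing $B'$ disjoint from $\Dom{\Tysubst}$ and $\FreeTyVars{t}$ by $\alpha$-renaming.
\item Checking variable by variable, $u'(\Tysubst \cupdisjoint \Tysubst') = u'\TysubstStar\Tysubst$. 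Free variables of $u'$ outside $B'$ are sent through $\Tysubst$ on both sides.
\item Also $t(\Tysubst \cupdisjoint \Tysubst') = t\Tysubst$, since $B'$ avoids $\FreeTyVars{t}$.
\end{itemize}
Hence $\SubstSolves{\Tysubst \cupdisjoint \Tysubst'}{\SubtyConstr{u'}{t}}$, with no closedness assumption on either environment. This is exactly how the paper discharges the case, and it makes your proposed restriction unnecessary.
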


\begin{proof}
  By induction on the derivation of
  $\ConstrRewFull{\Senv}{\Venv}{C}{d}{}$.
  \begin{CaseDistinction}{on the last rule of the derivation}
    \CdCase{Rule \Rule{rc-and}} Follows from the IH{}
    \CdCase{Rule \Rule{rc-subty}} Obvious.
    \CdCase{Rule \Rule{rc-var}} Simple because $\Senv' \MoreGeneral \Senv$ implies
    $\Dom{\Senv'} = \Dom{\Senv}$.
    \CdCase{Rule \Rule{rc-var-poly}}
    We have by inverting the rule:
    \begin{igather}
      x \notin \Dom{\Venv}\\
      \Senv(x) = \forall B.t\\
      C = \SubtyConstr{x}{t'}\\
      d = \SubtyConstr{t}{t'}
    \end{igather}
    We have $\Senv' \MoreGeneral \Senv$, so $\Senv'(x) = \forall A . u$.
    Then by rule \Rule{rc-var-poly}
    \begin{igather}
      \ConstrRewFull{\Senv'}{\Venv}{C}{\BraceBelow{\SubtyConstr{u}{t'}}{{}=: d'}}{A}
    \end{igather}
    Now assume $\SubstSolves{\Tysubst}{d}$. Then
    \begin{igather}
      t\Tysubst \IsSubty t'\Tysubst \QLabel{t-sub-tp}
    \end{igather}
    Noting that
    $t \in \Inst{\Senv(x)}$, so with $\Senv' \MoreGeneral \Senv$, we have some
    $\TysubstStar$ with $\Dom{\TysubstStar} = A$ and
    \begin{igather}
      u\TysubstStar \IsSubty t \QLabel{ustar-sub-t}
    \end{igather}
    Assume $A = \{\alpha_i \mid i \in I\}$ and \WlogLower{} $A \cap \Dom{\Tysubst} = \emptyset$.
    Let $\TysubstStar = [ t_i/\alpha_i \mid i \in I ]$ for some $t_i$.
    Define $\Tysubst' = [ t_i\Tysubst/\alpha_i \mid i \in I ]$.
    Then $\Dom{\Tysubst'} = A$ and $\Dom{\Tysubst} \cap \Dom{\Tysubst'} = \emptyset$ and we have
    \begin{igather}
      u(\Tysubst \cupdisjoint \Tysubst') = u\TysubstStar\Tysubst
      \ReasonAbove{\IsSubty}{\QRef{ustar-sub-t} ~\textrm{and}~ \textrm{\Cref{prop:subty}}} t\Tysubst
      \ReasonAbove{\IsSubty}{\QRef{t-sub-tp}} t'\Tysubst
      \ReasonAbove{=}{A~\textrm{fresh}} t'(\Tysubst \cupdisjoint \Tysubst')
    \end{igather}
    Hence, $\SubstSolves{\Tysubst \cupdisjoint \Tysubst'}{d'}$.
    \CdCase{Rule \Rule{rc-def}} Straightforward from the IH{}
    \CdCase{Rule \Rule{rc-case}}
    We get by inverting the rule:
    \begin{igather}
      \inferrule{
        \ConstrRewV{\Venv}{C'}{c'}{}\\
        \Forall{i \in I}~ \ConstrRewV{\Venv}{\hat{C_i}}{\hat{c_i}}{}\\
        \ConstrRewV{\Venv \InterEnv \Venv_i}{C_i}{c_i}{}
      }{
        \ConstrRewV{\Venv}{
          \BraceBelow{\CaseConstr{C'}{\Multi{(\InConstr{\Venv_i}{C_i}{\hat{C_i}})}}}{{}= C}
        }{
          \BraceBelow{c' \ConstrAnd \MedConstrAnd\nolimits_{i \in I}(\hat{c_i} \ConstrOr c_i)}{{}= d}
        }{}
      }
    \end{igather}
    We get from the \IH{}:
    \begin{igather}
      \ConstrRewFull{\Senv'}{\Venv}{C'}{d''}{B} \\
      \ConstrRewFull{\Senv'}{\Venv}{\hat{C_i}}{\hat{d_i}}{\hat{B_i}} \\
      \ConstrRewFull{\Senv'}{\Venv \InterEnv \Venv_i}{C_i}{d_i}{B_i}
    \end{igather}
    We then have by rule \Rule{rc-case}
    \begin{igather}
      \ConstrRewFull{\Senv'}{\Venv}{C}{d'}{A}\\
      d' = d'' \ConstrAnd \MedConstrAnd\nolimits_{i \in I}(\hat{d_i} \ConstrOr d_i)\\
      A = B \cupdisjoint \medcupdisjoint\nolimits_{i \in I}(B_i \cupdisjoint \hat{B_i})
    \end{igather}
    If we additionally assume $\SubstSolves{\Tysubst}{d}$, then
    $\SubstSolves{\Tysubst}{c'}$ and for all $i \in I$ either
    $\SubstSolves{\Tysubst}{\hat{c_i}}$ or $\SubstSolves{\Tysubst}{c_i}$.
    Then we have also from the \IH{} for some $\TysubstStar$ that
    \begin{igather}
      \SubstSolves{\Tysubst \cupdisjoint \TysubstStar}{d''}\\
      \Dom{\TysubstStar} = B
    \end{igather}
    For $i \in I$ we have from the \IH{} for some $\Tysubst_i$ either
    $\SubstSolves{\Tysubst_i}{\hat{d_i}}$ with $\Dom{\Tysubst_i} = \hat{B_i}$
    or $\SubstSolves{\Tysubst_i}{d_i}$ with $\Dom{\Tysubst_i} = B_i$.
    \Wlog{}, we can extend the domain of $\Tysubst_i$ such that
    $\Dom{\Tysubst_i} = B_i \cupdisjoint \hat{B_i}$.
    Define $\Tysubst' = \TysubstStar \cupdisjoint \medcupdisjoint\nolimits_{i \in I}\Tysubst_i$.
    Then $\Dom{\Tysubst'} = A$ and $\SubstSolves{\Tysubst \cupdisjoint \Tysubst'}{d'}$ as required.
  \end{CaseDistinction} \qed
\end{proof}

\begin{lemma}
  \label{lem:solves-equiv}
  If $\Dom{\Tysubst} \cap \FreeTyVars{\Tysubst} = \emptyset$ then
  $\SubstSolves{\Tysubst}{\Equiv{\Tysubst}}$.
\end{lemma}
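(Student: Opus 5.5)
The proof is a direct unfolding of definitions. By definition, $\Equiv{\Tysubst} = \MedConstrAnd_{\Tyvar \in \Dom{\Tysubst}} (\SubtyConstr{\Tyvar}{\Tyvar\Tysubst} \ConstrAnd \SubtyConstr{\Tyvar\Tysubst}{\Tyvar})$. A conjunction of simple constraints is solved by $\Tysubst$ iff each conjunct is. So it suffices to fix an arbitrary $\Tyvar \in \Dom{\Tysubst}$ and show that $\Tysubst$ solves both $\SubtyConstr{\Tyvar}{\Tyvar\Tysubst}$ and $\SubtyConstr{\Tyvar\Tysubst}{\Tyvar}$. Unfolding $\SubstSolves{\Tysubst}{\SubtyConstr{t_1}{t_2}}$ as $t_1\Tysubst \IsSubty t_2\Tysubst$, the two goals are $\Tyvar\Tysubst \IsSubty (\Tyvar\Tysubst)\Tysubst$ and $(\Tyvar\Tysubst)\Tysubst \IsSubty \Tyvar\Tysubst$.

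The key step is to show $(\Tyvar\Tysubst)\Tysubst = \Tyvar\Tysubst$, i.e.\ that $\Tysubst$ is idempotent on its own range. We have $\FreeTyVars{\Tyvar\Tysubst} \subseteq \FreeTyVars{\Tysubst}$ by the definition of $\FreeTyVars{\Tysubst}$ as the union of the free variables of the range. By the hypothesis $\Dom{\Tysubst} \cap \FreeTyVars{\Tysubst} = \emptyset$, no free variable of $\Tyvar\Tysubst$ lies in $\Dom{\Tysubst}$. Applying a substitution to a type none of whose free variables lie in its domain leaves the type unchanged.

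This is the only point needing care, because types are regular coinductive trees and $\FreeTyVars{}$ is defined coinductively (\Cref{def:free-tyvars}). I would state it as an auxiliary fact: if $\FreeTyVars{t} \cap \Dom{\Tysubst} = \emptyset$ then $t\Tysubst = t$. The proof is by coinduction, exhibiting the relation $\{(t\Tysubst, t) \mid \FreeTyVars{t} \cap \Dom{\Tysubst} = \emptyset\}$ as a bisimulation on trees. At a variable leaf $\TyvarAux$ we have $\TyvarAux \notin \Dom{\Tysubst}$, so $\TyvarAux\Tysubst = \TyvarAux$. Base types are untouched. For the constructors $\Union$, $\Neg$, $\to$, $\times$, substitution commutes with the constructor, and the children inherit the disjointness hypothesis because their free variables are contained in those of the parent.

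With this equality in hand, both goals reduce to $\Tyvar\Tysubst \IsSubty \Tyvar\Tysubst$, which holds by reflexivity of subtyping (set inclusion in the model). Hence every conjunct is solved, and so $\SubstSolves{\Tysubst}{\Equiv{\Tysubst}}$. If the coinductive detail is considered too heavy, a fallback is to prove only the weaker $(\Tyvar\Tysubst)\Tysubst \TyEquiv \Tyvar\Tysubst$, which is all the two subtyping goals require; the same bisimulation argument gives it.
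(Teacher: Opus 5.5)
Your proof is correct and follows the same route as the paper: unfold $\Equiv{\Tysubst}$, use the hypothesis $\Dom{\Tysubst} \cap \FreeTyVars{\Tysubst} = \emptyset$ to get $\Tyvar\Tysubst\Tysubst = \Tyvar\Tysubst$ for each $\Tyvar \in \Dom{\Tysubst}$, and conclude both subtyping constraints by reflexivity. The paper simply asserts the idempotence step. Your coinductive argument that $t\Tysubst = t$ whenever $\FreeTyVars{t} \cap \Dom{\Tysubst} = \emptyset$ adds detail but is not a different approach.
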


\begin{proof}
  Assume $\alpha \in \Dom{\Tysubst}$.
  From $\Dom{\Tysubst} \cap \FreeTyVars{\Tysubst} = \emptyset$ we get
  $\alpha\Tysubst\Tysubst = \alpha\Tysubst$, from which
  $\SubstSolves{\Tysubst}{\Equiv{\Tysubst}}$ follows directly. \qed
\end{proof}

\begin{lemma}[Free type variables in constraint generation and rewriting]\label{lem:free-tyvars-constr}~
  \begin{EnumThm}
  \item
    If $\ConstrRewV{\Venv}{C}{d}{A}$ then
    $\FreeTyVars{d} \subseteq \FreeTyVars{\Senv} \cup \FreeTyVars{\Venv} \cup A \cup \FreeTyVars{C}$
  \item
    If $\ConstrGenV{e}{t}{C}{A}$ then $\FreeTyVars{C} \subseteq A \cup \FreeTyVars{t}$.
  \item
    If $\DefConstrGenV{\DefSym}{C}{\Senv}{A}$ then $\FreeTyVars{C} \subseteq A$.
  \end{EnumThm}
\end{lemma}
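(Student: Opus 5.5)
The three claims are about free type variables, and I will prove them by structural induction in the order (ii), then (iii), then (i). Claim (iii) depends on (ii), and (i) is independent of both. Throughout I use \Cref{lem:gp-props}(v): the types $\Env{g}$, $\PgUpperTy{\Pg}$ and $\PgLowerTy{\Pg}$ are closed, and $\FreeTyVars{\PatTy{p}{\Venv}} \subseteq \FreeTyVars{\Venv}$. For a constraint, $\FreeTyVarsName$ is read as the union of the free type variables of all types occurring in it, including those in embedded environments $\Venv_i$ of $\kw{def}$ and $\kw{case}$ constraints.

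For (ii) I induct on the derivation of $\ConstrGenV{e}{t}{C}{A}$.
\begin{itemize}
\item Rules \Rule{c-var} and \Rule{c-const} are immediate, since $\TyOfConst{\Const}$ is closed.
\item For \Rule{c-abs}, the induction hypothesis gives $\FreeTyVars{C} \subseteq A' \cup \{\beta\}$. The environment $\{x:\alpha\}$ and the constraint $\SubtyConstr{\Tyvar\to\TyvarAux}{t}$ add only $\alpha$, $\beta$ and $\FreeTyVars{t}$, all of which lie in $A \cup \FreeTyVars{t}$.
\item Rules \Rule{c-app} and \Rule{c-pair} are handled in the same way.
\end{itemize}
The only real work is \Rule{c-case}, which needs an auxiliary statement about pattern environments: if $\PatTyEnvConstrV{\Monoty}{\Pg}{\Constr}{\Venv}{\TyvarSet}$, then $\FreeTyVars{C} \cup \FreeTyVars{\Venv} \subseteq A \cup \FreeTyVars{t}$. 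I prove this by induction on $p$. In \Rule{c-pair-env} the fresh $\alpha_1,\alpha_2$ are in $A$. In \Rule{c-pg-env} the guard environment $\Env{g}$ is closed and $\SubtyConstr{x}{\ErlTop}$ contributes nothing.

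Back in \Rule{c-case}, each $t_i$ has free variables only $\alpha$, because the accepting and potential types are closed, and $\alpha \in A$. The variables of $C_i$ and $\Venv_i$ therefore lie in $A_i \cup \{\alpha\}$. The constraints $D_i$ lie in $A_i' \cup \{\beta\}$ by the induction hypothesis. The unless-types $u_i$ are closed. Finally $\SubtyConstr{\TyvarAux}{\Monoty}$ contributes $\FreeTyVars{t}$, which gives the claim.

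For (iii) I split on the definition rule.
\begin{itemize}
\item In \Rule{c-def-no-annot}, (ii) gives $\FreeTyVars{C} \subseteq A' \cup \{\alpha\} = A$.
\item In \Rule{c-def-annot}, (ii) gives $\FreeTyVars{C_i} \subseteq A_i \cup \FreeTyVars{t_i}$, and the $\kw{def}$ adds $\FreeTyVars{t_i'}$. The types $t_i, t_i'$ have free variables only in $A'$, since $\sigma$ is closed. The resulting constraint is therefore covered by $A' \cup \bigcup_i A_i = A$.
\end{itemize}

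For (i) I induct on the rewriting derivation.
\begin{itemize}
\item \Rule{rc-subty} and \Rule{rc-and} are immediate.
\item In \Rule{rc-var}, $t' = \Venv(x)$ contributes $\FreeTyVars{\Venv}$ and $t$ is in $C$.
\item In \Rule{rc-var-poly}, $\FreeTyVars{\Monoty'} \subseteq \FreeTyVars{\Senv(x)} \cup A$, since the generated set is exactly the quantified set.
\item In \Rule{rc-def}, the induction hypothesis under $\Venv,\Venv'$ yields $\FreeTyVars{\Venv'}$, which is part of $\FreeTyVars{C}$ by the convention above.
\item In \Rule{rc-case}, the same holds for $\Venv \InterEnv \Venv_i$, whose free variables are at most $\FreeTyVars{\Venv} \cup \FreeTyVars{\Venv_i}$, and the latter is contained in $\FreeTyVars{C}$.
\end{itemize}

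The main obstacle I expect is bookkeeping rather than mathematics. Two points need care: the \Rule{c-case} case of (ii), which requires the auxiliary statement for pattern environments and the closedness of accepting and potential types, and stating precisely that $\FreeTyVars{C}$ includes types in embedded environments. Without that convention, claim (i) fails for \Rule{rc-def}.
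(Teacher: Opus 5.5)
Your proof is correct and follows the paper's approach: the paper proves (i) and (ii) by induction on the respective derivations and obtains (iii) from (ii). What you add is only what the paper leaves implicit, namely the auxiliary bound for pattern-environment generation used in the \Rule{c-case} step, and the convention that $\FreeTyVarsName$ of a constraint includes the types in its embedded environments.
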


\begin{proof}
  The proofs of (i) and (ii) are by induction on the respective derivations. Claim (iii) follows
  with (ii).
\end{proof}

We now restate and prove the completeness theorem for algorithmic typing from
\Cref{sec:algor-typing-rules}.

\paragraph{Theorem~\ref*{lem:completeness-algo-prog}
  \textnormal{(Completeness of algorithmic typing for programs)}}
\emph{
    Given program $\Prog$ and some type $t$.
    If $\ProgTy{\Prog}{t}$ then there exist program constraint $P$, simple constraint $d$,
    and a type substitution $\Tysubst$ such that
    $\ConstrGenV{\Prog}{t}{P}{}$ and $\ConstrRewProgV{P}{d}{}$ and $\SubstSolves{\Tysubst}{d}$.
}

\begin{proof}
  \renewcommand\LabelQualifier{lem:completeness-algo-prog}
  Let $\Prog = \Letrec{\DefSym_{i \in I}}{e}$. We get by inverting rule \Rule{d-prog}:
  \begin{igather}
    \Forall{i \in I}~\Envs{\DefSym_i} = \MetaPair{\Senv_i}{\Senv_i'} \QLabel{envs-defi}\\
    \DefOk{\Senv_{i \in I}}{\DefSym_i} \QLabel{defi-ok}\\
    \ExpSchemaTy{\Senv'_{i \in I}}{\EmptyEnv}{e}{t} \QLabel{type-e}
  \end{igather}
  Define $\Senv = \Senv_{i \in I}$ and $\Senv' = \Senv'_{i \in I}$. Further,
  assume $\Senv'' = \Senv''_{i \in I}$ such that
  \begin{igather}
    \DefEnvGen{\DefSym_{i \in I}}{\Senv_i''}\quad(\forall i \in I) \QLabel{defi-gen}
  \end{igather}
  Note that for $\DefSym_i = (x : \sigma = e)$ the premise $\FreeTyVars{\sigma} = \emptyset$
  required for $\DefEnvGen{\DefSym_{i \in I}}{\Senv_i''}$ follows from \QRef{envs-defi} by
  inverting rule \Rule{d-def-annot}.

  \Wlog{}, we may assume for $n = |I|$
  \[
    \begin{array}{r@{{}={}}l@{~\ldots,~}l@{~}c@{~\ldots,~}l}
      \DefSym_{i \in I} &    (x_1 = e_1),           & (x_k = e_k),            & (x_{k+1} : \sigma_{k+1} = e_{k+1}),   & (x_n: \sigma_n = e_n) \\
      \Senv''           & \{ x_1 = \alpha_1,        & x_k = \alpha_k,         & x_{k+1} = \sigma_{k+1},               & x_n = \sigma_n \} \\
      \Senv             & \{ x_1 = t_1,             & x_k = t_k,              & x_{k+1} = \sigma_{k+1},               & x_n = \sigma_n \} \\
      \Senv'            & \{ x_1 = \forall A_1.t_1, & x_k = \forall A_k.t_k,  & x_{k+1} = \sigma_{k+1},               & x_n = \sigma_n \}
    \end{array}
  \]
  with $\sigma_i = \forall A_i . t_i$ for $i = k+1,\ldots,n$. With \QRef{envs-defi} we have
  by inverting rules \Rule{d-envs-no-annot} and \Rule{d-envs-annot} that
  $A_i = \FreeTyVars{t_i}$ for $i = 1,\ldots,n$.
  Define $\TysubstTild := [ t_1/\alpha_1, \ldots, t_k/\alpha_k ]$.
  Then  $\Senv = \Senv'' \TysubstTild$, so with \QRef{defi-ok} then
  \begin{igather}
    \DefOk{\Senv''\TysubstTild}{\DefSym_i} \quad (\forall i \in I)
  \end{igather}
  With \QRef{defi-gen} and \Cref{lem:completeness-algo-def-multi} then
  \begin{igather}
    \DefConstrGenV{\DefSym_i}{C_i}{\Senv_i''}{B_i} \quad (\forall i \in I) \QLabel{gen-Ci-senvi}\\
    \ConstrRewFull{\Senv''}{\EmptyEnv}{\MedConstrAnd\nolimits_{i \in I} C_i}{\tilde{d}}{B}
    \QLabel{rew-Ci}\\
    \SubstSolves{\TysubstTild \cupdisjoint \TysubstStar}{\tilde{d}} \QLabel{solve-di}\\
    \Dom{\TysubstStar} = B \cupdisjoint \medcupdisjoint\nolimits_{i \in I} B_i \setminus
    \FreeTyVars{\Senv''} \QLabel{dom-theta-star}
  \end{igather}
  for some $\TysubstStar$ and $B, B_i$ fresh.
  We have $\ExpSchemaTy{\Senv'}{\EmptyEnv}{e}{t}$ with \QRef{type-e}, so with \Cref{lem:completeness-algo-exp}
  \begin{igather}
    \ConstrGenV{e}{t}{C'}{B'} \QLabel{gen-Cp} \\
    \ConstrRewFull{\Senv'}{\EmptyEnv}{C'}{d'}{B''} \QLabel{rew-Cp} \\
    \SubstSolves{\Tysubst'}{d'} \QLabel{solve-dp}
  \end{igather}
  for some $\Tysubst'$ with $\Dom{\Tysubst'} = B' \cupdisjoint B''$ and $B', B''$ fresh.
  From \QRef{solve-di} we have with \Cref{lem:props-tally} and \Cref{lem:tally-better}
  that there exists some $\TysubstTop$ such that
  \begin{igather}
    \TysubstTop \in \Tally{\tilde{d}} \QLabel{in-tally}\\
    \SubstSolves{\TysubstTop}{\tilde{d}} \\
    \Dom{\TysubstTop} \cap \FreeTyVars{\TysubstTop} = \emptyset \QLabel{dom-disjoint-free} \\
    \Gen{\alpha_i \TysubstTop} \MoreGeneral
    \Gen{\alpha_i (\TysubstTild \cupdisjoint \TysubstStar)}
    \ReasonAbove{=}{\Dom{\TysubstStar}~\textrm{fresh}}
    \Gen{\alpha_i \TysubstTild} \quad\textrm{for all}~i = 1,\ldots,k, 
  \end{igather}  
    Further, $\Senv' = \Gen{\Senv} = \Gen{\Senv''\TysubstTild}$.
  To show $\Gen{\Senv''\TysubstTop} \MoreGeneral \Senv'$, we note that both environments have the
  same domain $\{x_1,\ldots,x_n\}$ and
  $\Gen{\Senv''\TysubstTop}(x_j) = \sigma_j = \Senv'(x_j)$ for $j \in \{k+1,\ldots,n\}$ (note that $\sigma_j$ is closed) and
  $\Gen{\Senv''\TysubstTop}(x_i) = \Gen{\alpha_i \TysubstTop} \MoreGeneral \Gen{\alpha_i \TysubstTild} = \Senv'(x_i)$
  for $i \in \{1,\ldots,k\}$.

  Then with \Cref{lem:rew-more-general} and \QRef{rew-Cp}, \QRef{solve-dp}
  \begin{igather}
    \ConstrRewFull{\Gen{\Senv''\TysubstTop}}{\EmptyEnv}{C'}{d''}{B'''} \QLabel{rew-Cp2}\\
    \SubstSolves{\Tysubst' \cupdisjoint \Tysubst''}{d''} \QLabel{solve-dpp}\\
    \Dom{\Tysubst''} = B'''\QLabel{dom-thetapp}
  \end{igather}
  for some $\Tysubst''$.
  With \QRef{gen-Ci-senvi} and \QRef{gen-Cp} and rule \Rule{c-prog}:
  \begin{igather}
    \ConstrGenV{\Prog}{t}{P}{}\\
    P = \LetConstr{\MedConstrAnd_{i \in I}C_i}{\Senv''}{C'}
  \end{igather}
  With rule \Rule{rc-prog}, \QRef{rew-Ci}, \QRef{in-tally}, and \QRef{rew-Cp2} also
  \begin{igather}
    \ConstrRewProgV{P}{d}{} \\
    d = \Equiv{\TysubstTop} \ConstrAnd d''
  \end{igather}
  Define $\Tysubst = \TysubstTop \cupdisjoint \Tysubst' \cupdisjoint \Tysubst''$.
  For disjointness, we reason as follows:
  \begin{itemize}
  \item $\Dom{\TysubstTop}$ is disjoint from $\Dom{\Tysubst'}$:
    By \QRef{in-tally} and \Cref{lem:props-tally}(iii)
    \begin{igather}
      \Dom{\TysubstTop} = \FreeTyVars{\tilde{d}}
    \end{igather}
    From \QRef{rew-Ci} and \Cref{lem:free-tyvars-constr}(i) we have
    \begin{igather}
      \FreeTyVars{\tilde{d}} \subseteq \FreeTyVars{\Senv''} \cup B \medcup_{i \in I} \FreeTyVars{C_i}
    \end{igather}
    With \QRef{gen-Ci-senvi} and \Cref{lem:free-tyvars-constr}(iii) $\FreeTyVars{C_i} \subseteq B_i$ for
    all $i \in I$. Hence
    \begin{igather}
      \Dom{\TysubstTop} \subseteq \{\alpha_1,\ldots,\alpha_k\} \cup B \cup \medcup_{i \in I} B_i
    \end{igather}
    Noting that $\Dom{\Tysubst'} = B' \cupdisjoint B''$, where $B'$ and $B''$ are fresh
    with respect to
    $\alpha_1,\ldots,\alpha_k$ and $B$ and all $B_i$, proves that
    $\Dom{\TysubstTop} \cap \Dom{\Tysubst'} = \emptyset$.
  \item $\Dom{\TysubstTop}$ is disjoint from $\Dom{\Tysubst''}$ because $\Dom{\Tysubst''} = B'''$ is
    also fresh with respect to $\alpha_1,\ldots,\alpha_k$ and $B$ and all $B_i$.
  \item $\Dom{\Tysubst'}$ is disjoint from $\Dom{\Tysubst''}$ because of \QRef{solve-dpp}.
  \end{itemize}

  From \QRef{solve-dpp} we get $\SubstSolves{\Tysubst}{d''}$, and from
  \QRef{dom-disjoint-free} and \Cref{lem:solves-equiv} we have
  $\SubstSolves{\Tysubst}{\Equiv{\TysubstTop}}$. Hence
  $\SubstSolves{\Tysubst}{d}$ as required. \qed
\end{proof}

\FloatBarrier
\section{Empirical Analysis of Pattern Matching}
\label{sec:empirical-patterns}

\begin{table}[tpb]
  \center

\begin{tabular}{|l|r|r|r|r|rr|rr|}
\hline
\multicolumn{1}{|c|}{\textbf{project}} & \multicolumn{1}{c|}{\textbf{files}} & \multicolumn{1}{c|}{\textbf{\begin{tabular}[c]{@{}c@{}}pattern \\ matches\end{tabular}}} & \multicolumn{1}{c|}{\textbf{branches}} & \multicolumn{1}{c|}{\textbf{\begin{tabular}[c]{@{}c@{}}type \\ tests\end{tabular}}} & \multicolumn{2}{c|}{\textbf{\begin{tabular}[c]{@{}c@{}}type tests on\\ variables\end{tabular}}} & \multicolumn{2}{c|}{\textbf{\begin{tabular}[c]{@{}c@{}}type tests on\\ bound variables\end{tabular}}} \\
                                       & \multicolumn{1}{l|}{}               & \multicolumn{1}{l|}{}                                                                    & \multicolumn{1}{l|}{}                  & \multicolumn{1}{l|}{}                                                               & \multicolumn{1}{c|}{\textit{abs.}}              & \multicolumn{1}{c|}{\textit{\%}}              & \multicolumn{1}{c|}{\textit{abs.}}                 & \multicolumn{1}{c|}{\textit{\%}}                 \\ \hline\hline
antidote                               & 55                                  & 1635                                                                                     & 2310                                   & 4                                                                                   & \multicolumn{1}{r|}{4}                          & 100.0                                         & \multicolumn{1}{r|}{4}                             & 100.0                                            \\ \hline
etylizer                               & 45                                  & 1792                                                                                     & 3084                                   & 17                                                                                  & \multicolumn{1}{r|}{17}                         & 100.0                                         & \multicolumn{1}{r|}{15}                            & 88.2                                             \\ \hline
rabbitmq                               & 296                                 & 18886                                                                                    & 31305                                  & 1759                                                                                & \multicolumn{1}{r|}{1749}                       & 99.4                                          & \multicolumn{1}{r|}{1719}                          & 98.3                                             \\ \hline
riak\_core                             & 104                                 & 5572                                                                                     & 8034                                   & 268                                                                                 & \multicolumn{1}{r|}{268}                        & 100.0                                         & \multicolumn{1}{r|}{263}                           & 98.1                                             \\ \hline
stdlib                                 & 93                                  & 18482                                                                                    & 53841                                  & 3383                                                                                & \multicolumn{1}{r|}{3358}                       & 99.3                                          & \multicolumn{1}{r|}{3307}                          & 98.5                                             \\ \hline
\textbf{Total}                         & \textbf{593}                        & \textbf{46367}                                                                           & \textbf{98574}                         & \textbf{5431}                                                                       & \multicolumn{1}{r|}{\textbf{5396}}              & \textbf{99.4}                                 & \multicolumn{1}{r|}{\textbf{5308}}                 & \textbf{98.4}                                    \\ \hline
\end{tabular}
\vspace{2mm}
\caption{Empirical study on the use of type tests in pattern matching. \\
  The following versions of the projects were examined:
  antidote git commit 1482b81 from 2022-09-14;
  etylizer git commit 4822984 from 2025-05-28,
  rabbitmq git commit ee652cb from 2025-07-02;
  riak\_core git commit 7b04e2c from 2023-06-12;
  stdlib version 27.1.1.}
  \label{table:pattern-analysis}
\end{table}

To guide our design choices, we conducted an empirical study on the
use of type tests in pattern matching across six representative Erlang
projects. The results are shown
in \Cref{table:pattern-analysis}.

In total, we analyzed 593
source files containing 46,367 pattern matches with 5,431 type tests.
Our results show that type tests are almost
exclusively applied to variables (99.4\%), and predominantly to
variables bound within the pattern itself (98.4\%).
Consequently, our type system focuses on providing precise support for this class of type tests.


\FloatBarrier
\section{Empirical Analysis of Dynamic Functions}
\label{sec:empirical-dynamic-functions}

To assess the practical need for full value-dependent typing, we analyzed the usage of the \lstinline|erlang:element/2|, \lstinline|erlang:setelement/3| and functions 12 in the lists module that access a tuple dynamically (e.g. \lstinline|lists:ukeysort/2|) across a corpus of six Erlang projects. 
The goal was to determine how often the index argument is a statically known literal constant versus a dynamically computed value. 
Our hypothesis was that the majority of uses in functional code are with constant indices, which can be effectively handled by the type overlays in conjunction with singleton integer types.

The results, summarized in \Cref{table:dynamic-analysis} support this hypothesis.
We found 1820 total function calls to \lstinline|element/2|, \lstinline|setelement/3|, and the list module functions. 
A significant majority of these calls, 1386 (76\%), used a literal integer (e.g., \lstinline|element(2, Tuple)|) as the index. 
These cases are suited for our overloaded type approach, which provides a precise return type for each constant index. 
The remaining 434 calls (24\%) used a non-constant expression as the index.

It is noteworthy that the standard library accounts for nearly all of the non-constant index cases. 
This is expected, as stdlib contains generic utility functions designed to handle arbitrary, user-supplied inputs. 
When stdlib is excluded from the analysis, of the 743 non-constant index calls in the other five projects, 645 are verifiable with our current overlays. 
The amount of code that can be captured with our type system increases to 87\%.

\begin{table}[tpb]
  \centering
  \begin{minipage}{0.48\textwidth}
    \centering
    \begin{tabular}{|l|r|r|}
    \hline
    Function & Literal & Non-Lit \\
    \hline
    erlang:element & 741 & 326 \\
    \hline
    erlang:setelement & 30 & 68 \\
    \hline
    lists:keydelete & 58 & 3 \\
    \hline
    lists:keyfind & 268 & 12 \\
    \hline
    lists:keymap & 0 & 1 \\
    \hline
    lists:keymember & 59 & 1 \\
    \hline
    lists:keymerge & 2 & 0 \\
    \hline
    lists:keyreplace & 28 & 4 \\
    \hline
    lists:keysearch & 59 & 3 \\
    \hline
    lists:keysort & 59 & 11 \\
    \hline
    lists:keystore & 33 & 0 \\
    \hline
    lists:keytake & 24 & 3 \\
    \hline
    lists:ukeymerge & 5 & 0 \\
    \hline
    lists:ukeysort & 20 & 2 \\
    \hline
        \textbf{total} & \textbf{1386} & \textbf{434} \\
        \hline
    \end{tabular}
  \end{minipage}
  \hfill
  \begin{minipage}{0.48\textwidth}
    \centering
    \begin{tabular}{|c|c|c|c|}
    \hline
    Project & Files & Literal & Non-lit\\
    \hline
    antidote & 55 & 27 & 4 \\
    \hline
    etylizer & 45 & 6 & 20 \\
    \hline
    rabbitmq & 296 & 546 & 62 \\
    \hline
    riak\_core & 104 & 134 & 25 \\
    \hline
    stdlib & 93 & 673 & 321 \\
    \hline
    jsone & 4 & 0 & 2 \\
    \hline
    \textbf{total} & \textbf{597} & \textbf{1386} & \textbf{434} \\
    \hline
    \end{tabular}
  \end{minipage}
  
  \vspace{2mm}
  \caption{Empirical study on the use of dependent types in dynamic functions. \\
    The following versions of the projects were examined:
    antidote git commit 1482b81 from 2022-09-14;
    etylizer git commit 4822984 from 2025-05-28,
    rabbitmq git commit ee652cb from 2025-07-02;
    riak\_core git commit 7b04e2c from 2023-06-12;
    stdlib version 27.1.1;
    jsone git commit a1c481f from 2024-28-11.}
  \label{table:dynamic-analysis}
\end{table}

\end{document}